\definecolor{ForestGreen}{rgb}{0.1333,0.5451,0.1333}
\definecolor{DarkRed}{rgb}{0.65,0,0}
\definecolor{Red}{rgb}{1,0,0}
\declaretheorem[numberwithin=section]{theorem}
\declaretheorem[numberlike=theorem]{lemma}
\declaretheorem[numberlike=theorem]{proposition}
\declaretheorem[numberlike=theorem]{corollary}
\declaretheorem[numberlike=theorem]{claim}
\declaretheorem[numberlike=theorem]{observation}
\declaretheorem[numberlike=theorem]{assumption}
\crefname{algorithm}{Algorithm}{Algorithms}
\Crefname{algorithm}{Algorithm}{Algorithms}
    \newtheoremstyle{TheoremNum}
        {\topsep}{\topsep}              
        {\itshape}                      
        {}                              
        {\bfseries}                     
        {.}                             
        { }                             
        {\thmname{#1}\thmnote{ \bfseries #3}}
    \theoremstyle{TheoremNum}
\theoremstyle{definition}
\declaretheorem[numberlike=theorem]{definition}
\newcommand{\ot}{\tilde{O}}
\newcommand{\cbar}{c_T}
\newcommand{\roc}{reducing-or-covering\xspace}
\newcommand{\psetpair}{partition-set pair\xspace}
\newcommand{\ssetpair}{separators-set pair\xspace}
\newcommand{\Roc}{Reducing-or-Covering\xspace}
\newcommand{\Psetpair}{Partition-Set Pair\xspace}
\newcommand{\Ssetpair}{Separators-Set Pair\xspace}
\newcommand{\phibar}{\bar\phi}
\newcommand{\xorterminal}{expanders-or-terminal pair\xspace}
\newcommand{\Xorterminal}{Expanders-or-Terminal Pair\xspace}
\newcommand{\Gorig}{G^{\textrm{orig}}}
\newcommand{\calT}{\mathcal{T}\xspace}  
\newcommand{\calS}{\mathcal{S}\xspace}
\newcommand{\cL}{\mathcal{L}\xspace}
\newcommand{\poly}{\operatorname{poly}} 
\newcommand{\polylog}{\operatorname{polylog}}
\newcommand{\vol}{\operatorname{vol}}
\newcommand{\eat}[1]{}
\newcommand{\textinput}{\operatorname{orig}}   
\newcommand{\textdeg}{\operatorname{deg}}
\newcommand{\cl}{\operatorname{cl}}
\newcommand{\isolate}{\operatorname{isolate}}
\newcommand{\textlocal}{\operatorname{local}}
\newcommand{\calG}{\mathcal{G}}
\title{Deterministic Small Vertex Connectivity in Almost Linear Time} 
\author{
        Thatchaphol
        Saranurak\thanks{\texttt{thsa@umich.edu}. University of Michigan, USA}
        \and
        Sorrachai Yingchareonthawornchai\thanks{\texttt{sorrachai.yingchareonthawornchai@aalto.fi}. Aalto University, Finland}
}
\date{}   
\begin{document}

\maketitle
        \pagenumbering{gobble}
        \begin{abstract}
In the \emph{vertex connectivity} problem, given an undirected $n$-vertex
$m$-edge graph $G$, we need to compute the minimum number of vertices
that can disconnect $G$ after removing them. This problem is one
of the most well-studied graph problems. From 2019, a new line of
work {[}Nanongkai \emph{et al.}~STOC'19;SODA'20;STOC'21{]} has used
randomized techniques to break the quadratic-time barrier and, very
recently, culminated in an almost-linear time algorithm via the recently
announced maxflow algorithm by Chen \emph{et al}. In contrast, all known deterministic algorithms are much slower. The fastest algorithm {[}Gabow
FOCS'00{]} takes $O(m(n+\min\{c^{5/2},cn^{3/4}\}))$ time where $c$
is the vertex connectivity. It remains open whether there exists a subquadratic-time deterministic algorithm for any constant 
$c>3$.

In this paper, we give the first deterministic almost-linear time
vertex connectivity algorithm for all constants $c$. Our
running time is $m^{1+o(1)}2^{O(c^{2})}$ time, which is almost-linear
for all $c=o(\sqrt{\log n})$. This is the first deterministic algorithm
that breaks the $O(n^{2})$-time bound on sparse graphs where $m=O(n)$, which is known for more
than 50 years ago {[}Kleitman'69{]}.

Towards our result, we give a new reduction framework to \emph{vertex
expanders} which in turn exploits our new almost-linear time construction
of mimicking network for vertex connectivity. The
previous construction by Kratsch and Wahlstr\"{o}m {[}FOCS'12{]}
requires large polynomial time and is randomized. 

\end{abstract}
  
        \newpage        
    	\setcounter{tocdepth}{2}  
    	\tableofcontents        
        \newpage
        
        \pagenumbering{arabic}

        \section{Introduction}

Vertex connectivity is one of most fundamental measures for robustness
of graphs. For any $n$-vertex $m$-edge graph $G=(V,E)$, the vertex
connectivity $\kappa_{G}$ of $G$ is the minimum number of vertices
that can disconnect $G$ after removing them. Fast algorithms for
computing vertex connectivity has been extensively studied \cite{Kleitman1969methods,Podderyugin1973algorithm,EvenT75,Even75,Galil80,EsfahanianH84,Matula87,BeckerDDHKKMNRW82,LinialLW88,CheriyanR94,NagamochiI92,CheriyanT91,Henzinger97,HenzingerRG00,Gabow06,Censor-HillelGK14}. (See \cite{NanongkaiSY19} for a discussion.) Since
2019, a new line of work \cite{NanongkaiSY19,ForsterNYSY20,li2021vertex} has used randomized techniques to break
the long-standing quadratic-time barrier and, very recently, culminated
in an almost-linear time algorithm via the recently announced maxflow
algorithm by \cite{chen2022maximum}. However, these new algorithms
are all Monte-Carlo and, thus, can make errors.

In contrast, all known deterministic algorithms are much slower. Improved
upon previous deterministic algorithms \cite{even1975network,HenzingerRG00}, in 2000, Gabow \cite{Gabow06} gave the previously fastest deterministic algorithm with
$O(m(n+\min\{c^{5/2},cn^{3/4}\}))$ time where $c$ is the vertex
connectivity. While linear-time algorithms are known when $c\le3$
\cite{Tarjan72,HopcroftT73}, it remains open whether there exists a subquadratic-time deterministic algorithm for any constant $c>3$. 

In this paper, we give the first deterministic almost-linear time
algorithm for computing vertex connectivity for all constants $c$.
To state our result precisely, we need some terminology. We say $(L,S,R)$
is a \emph{vertex cut} if $L,S,R$ partition a vertex set $V$ but
there is no edge between $L$ and $R$. The size of a vertex cut $(L,S,R)$
is $|S|$. Note that a vertex cut of size less than $c$ certifies
that $\kappa_{G}<c$. If there is no such cut, then $\kappa_{G}\ge c$,
and we say that $G$ is $c$-connected. Our main result is as follows:
\begin{theorem}
\label{thm:main}There is a deterministic algorithm that, given an
undirected graph $G$, in $m^{1+o(1)}2^{O(c^{2})}$ time either outputs
a vertex cut of size less than $c$ or concludes that $G$ is $c$-connected.
\end{theorem}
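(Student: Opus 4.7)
The plan is to derive \Cref{thm:main} by composing the two new deterministic building blocks highlighted in the introduction: a reduction framework from general vertex connectivity to vertex expanders, and an almost-linear time deterministic mimicking network construction that preserves all vertex cuts of size less than $c$.

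First I would set up the overall framework as follows. On input $(G,c)$, invoke the deterministic vertex expander decomposition. This subroutine should either already output a vertex cut of size $<c$, which we return immediately, or partition $V(G)$ into clusters $V_1,\ldots,V_k$ with terminal boundaries $T_i \subseteq V_i$ such that each piece $G_i := G[V_i]$ is a vertex expander with respect to $T_i$, while $\sum_i |T_i|$ stays near-linear in $m$. This is precisely the \xorterminal framework, and its deterministic implementation in $m^{1+o(1)}$ time is, in my view, the main obstacle: essentially all existing almost-linear time vertex-expansion primitives are randomized, so I would expect the construction to adapt cut-matching or expander-pruning style arguments to vertex expansion with a carefully designed potential function ensuring termination, and to pay for derandomization via the same kind of hitting-set / sparsifier arguments used in the edge case.

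Second, on each expander piece $G_i$ I would invoke the new deterministic mimicking network construction to produce a replacement graph $H_i$ on vertex set $T_i$ plus a few auxiliary vertices, with $|V(H_i)| \le 2^{O(c^2)} \cdot |T_i|$, such that for every partition $T_i = A \sqcup B$ the minimum vertex cut of size less than $c$ separating $A$ from $B$ agrees in $G_i$ and $H_i$. Gluing the $H_i$ along their shared terminals yields a single graph $G'$ whose vertex cuts of size $<c$ are in bijection with those of $G$; the lift uses the expansion of each piece to show that any global small cut of $G$, when intersected with a piece, induces a small cut between terminal subsets, which the mimicking network preserves. The total size is bounded by $|V(G')| \le 2^{O(c^2)} \sum_i |T_i| = m^{1+o(1)} \cdot 2^{O(c^2)}$, and crucially, if the decomposition is quality enough, $G'$ is strictly smaller than $G$ by an $m^{o(1)}$ factor.

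Third, I would finish by recursing on $G'$, which has effective size shrunk by an $m^{o(1)}$ factor relative to $G$, so the recursion bottoms out after $\polylog(n)$ levels, each contributing $m^{1+o(1)} \cdot 2^{O(c^2)}$ work; alternatively, once the graph is small enough, one can invoke Gabow's deterministic $O(m(n+c^{5/2}))$ algorithm directly and absorb its cost into the stated bound. Correctness is immediate from the mimicking property combined with the guarantee that any $c$-cut found by any level of the recursion lifts canonically to a $c$-cut of $G$, and the $2^{O(c^2)}$ factor is traceable entirely to the per-piece mimicking network blowup.
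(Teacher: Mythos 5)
Your high-level pipeline is structurally different from the paper's, and the difference matters. The paper's reduction produces a collection $\calG$ of vertex expanders together with a \emph{single} terminal set $T \subseteq V(G)$, then \emph{directly checks $c$-connectivity of each expander in $\calG$} via deterministic local-flow (LocalVC) calls — this is cheap because small cuts in a $\phi$-expander must be very unbalanced — and separately builds a $(T,c)$-sparsifier of the \emph{original} graph $G$ (not of the pieces) to handle terminal connectivity. Your proposal instead sparsifies each expander piece $G_i$ into a mimicking network $H_i$ and glues. That route has a concrete gap: a $(T_i,c)$-mimicking network only promises to preserve $\mu_{G_i}(A,B)$ for $A,B\subseteq T_i$, so a small vertex cut $(L,S,R)$ entirely internal to a piece with $L\cap T_i=\emptyset$ (which is still a genuine cut of $G$, since $N_G(L)\subseteq V_i$) does not separate any terminal subsets and need not survive into $H_i$. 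Your own justification sentence — ``any global small cut of $G$, when intersected with a piece, induces a small cut between terminal subsets'' — is exactly what fails in this case. The paper avoids this entirely because it never needs to sparsify the expanders: internal cuts are caught by running LocalVC from every vertex of each expander in $\calG$, exploiting that in a $\phi$-expander any cut with $|S|<c$ has one side of size at most $c/\phi$.

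There is a second, smaller issue you are underselling: gluing vertex-connectivity mimicking networks along shared terminals is not the formality it is for edge-connectivity, because vertices (unlike edges) can themselves be cut and can be shared across pieces, so you must rule out the glued $G'$ acquiring spuriously small cuts. The paper sidesteps this by building one sparsifier of $G$ with respect to $T$, using a carefully maintained $c$-cut-recoverability invariant (every separator of size $<c$ in the sparsifier is a separator of $G$) rather than composing per-piece sparsifiers. Your proposal also never explains how the expanders in the decomposition relate to cuts of $G$ — in the paper the ``pieces'' in $\calG$ are not induced subgraphs $G[V_i]$ but $c$-left/$c$-right graphs obtained by recursively replacing one side of a sparse balanced cut with a clique of size $c$, which is precisely what makes them $c$-mincut-recoverable for $G$ — so even the direct-check variant would need this additional structure to lift a cut found in a piece back to $G$.
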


When focusing on sparse graphs where $m=O(n)$, the $O(n^{2})$-time
bound was shown for more than 50 years ago \cite{Kleitman1969methods}. \Cref{thm:main} is
the first that deterministically breaks this barrier.

\paragraph{Related Works.}
Significant effort has been devoted to  devising deterministic algorithms that is as fast as their randomized counterparts.
Examples include the line of work on deterministic minimum edge cut algorithm \cite{KawarabayashiT15,HenzingerRW17,saranurak2021simple,LiP20deterministic} where, recently, Li \cite{li2021deterministic} showed a deterministic almost-linear time algorithm for minimum edge cuts even in weighted graphs.
Another example are deterministic  expander decomposition algorithms and their applications to deterministic Laplacian solvers and approximate max flow \cite{chuzhoy2020deterministic}. 

\paragraph{Historical Note.}
%
The previous version of this paper  \cite{gao2019deterministic} contains two results: deterministic subquadratic-time algorithms for balanced cuts and vertex connectivity. The paper was split into two newer papers. The newer version of the balanced cut algorithms is \cite{chuzhoy2020deterministic}, which gave the first almost-linear time deterministic balanced cut algorithm. Our paper is the new version of the vertex connectivity algorithms from \cite{gao2019deterministic}.

\subsection{Our Approach}

Towards our main result (\Cref{thm:main}), we develop two main algorithmic
components. Below, we explain how their combination immediately implies
\Cref{thm:main}. Suppose our goal is just to check whether $G$ is
$c$-connected for $c=O(1)$.

\paragraph{``Almost'' Reduction to Vertex Expanders.}

Our starting point is the observation that $c$-connectivity can be
checked fast on \emph{vertex expanders}. For any vertex cut $(L,S,R)$
of $G$, its \emph{expansion} is $h(L,S,R)=\frac{|S|}{|S|+\min\{|L|,|R|\}}$.
When $h(L,S,R)$ is small, we say that the cut is \emph{sparse}. We
say $G$ is a $\phi$-vertex expander if $\min_{(L,S,R)}h(L,S,R)\geq\phi$
and we simply call it a vertex expander when $\phi\ge1/n^{o(1)}$.
Observe that for any vertex cut $(L,S,R)$ of size less than $c$
in a vertex expander must be very unbalanced, i.e., $\min\{|L|,|R|\}\le cn^{o(1)}$.  Suppose we are given a vertex $x\in L$ where $|L|\le cn^{o(1)}$,
the local flow algorithms  can find the cut of size less than
$c$ in $\text{poly}(cn^{o(1)})$ time \cite{NanongkaiSY19} or in $c^{O(c)}n^{o(1)}$  time \cite{chechik2017faster}. Thus, by simply calling the
local flow algorithm from every vertex, one can check $c$-connectivity
of a vertex expander in almost-linear total time. However, in general the graph
might contain a very sparse cut. This is exactly the hard instance
of all previous deterministic algorithms. 

Now, our new framework essentially allows us to assume that the input
graph is a vertex expander. More precisely, give any graph $G=(V,E)$,
our subroutine $\textsc{ExpandersOrTerminal}(G,c,\phi)$ computes
in $m^{1+o(1)}/\phi$ time a collection ${\cal G}$ of $\phi$-vertex
expanders and a terminal set $T\subset V$ with the following guarantee: 
\begin{enumerate}
\item $G$ is $c$-connected iff every $H\in{\cal G}$ is $c$-connected
and the \emph{Steiner connectivity} of $T$ is at least $c$ (i.e.
there is no vertex cut of size less than $c$ separating any terminal
pair $x,y\in T$). 
\item ${\cal G}$ has almost-linear total size, i.e.,~$\sum_{H\in{\cal G}}|V(H)|=n^{1+o(1)}$.
\item $|T|\le\phi n^{1+o(1)}$. 
\end{enumerate}
See details in \Cref{def:expander-or-terminal family} and \Cref{thm:fast xorterminal pair}. We will set  $\phi=1/n^{o(1)}$ so that $|T|\ll n$ is small.

We employ the reduction as follows. Given $G$, we call $\textsc{ExpandersOrTerminal}(G,c,\phi)$
and check $c$-connectivity of all vertex expanders $H\in{\cal G}$
in almost-linear total time, since their total size is almost-linear.
If any $H\in{\cal G}$ is not $c$-connected, then we are done. Otherwise,
we still need to check the Steiner connectivity of $T$. Therefore,
in almost-linear time, the framework allows us to ``focus'' on a
small terminal set $T$ and this is where our second algorithmic component
can help.

\paragraph{Mimicking Network for Vertex Connectivity.}

Given any $G=(V,E)$ and terminal set $T$, our second subroutine
$\ensuremath{\textsc{VertexSparsify}(G,T,c)}$ computes a small graph
$H$ of size proportional to $T$ and $H$ preserves \emph{all} minimum
vertex cuts between terminals $T$ up to size $c$. More precisely,
let $\mu_{G}(A,B)$ be the minimum number of vertices that, after
removing from $G$, there is no path from $A$ to $B$ left (we allow
removing vertices in $A$ and $B$). The guarantees of $H$ is
that, for any $A,B\subseteq T$, $\min\{\mu_{G}(A,B),c\}=\min\{\mu_{H}(A,B),c\}$.
We also require that $\kappa_{H}\ge\min\{\kappa_{G},c\}$ (otherwise,
there might be a new minimum vertex cut in $H$). We call $H$ a \emph{$c$-vertex
connectivity mimicking network} or, for short a \emph{$(T,c)$-sparsifier}
for $G$. Below, we show that $\textsc{VertexSparsify}$ can be implemented
in almost-linear time. See \Cref{def:mimicking network} and \Cref{thm:vertex sparsifier} for details.
\begin{theorem}
\label{thm:overview sparsifier}Our subroutine $\ensuremath{\textsc{VertexSparsify}(G,T,c)}$
computes a $(T,c)$-sparsifier $H$ for $G$ of size $|E(H)|=|T|2^{O(c^{2})}$
in $m^{1+o(1)}2^{O(c^{2})}$ time.
\end{theorem}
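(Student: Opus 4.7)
The plan is to combine our $\textsc{ExpandersOrTerminal}$ framework (as applied in the preceding overview) with local vertex-flow enumeration on the resulting expanders and a compact mimicking gadget in the style of Kratsch--Wahlström, applied recursively to strip away auxiliary terminals.

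First, I would call $\textsc{ExpandersOrTerminal}(G, c, \phi)$ with $\phi = 1/n^{o(1)}$, obtaining in $m^{1+o(1)}$ time a family $\mathcal{G}$ of $\phi$-vertex expanders of total size $n^{1+o(1)}$ and an auxiliary terminal set $T^*$ of size $\phi \cdot n^{1+o(1)}$, such that every vertex cut of $G$ of size $< c$ is either internal to some $H \in \mathcal{G}$ or is a Steiner cut of $T^*$. Setting $T^+ = T \cup T^*$, it suffices to mimic the vertex-cut structure of each $H$ on $T^+ \cap V(H)$ and then glue the local sparsifiers together along the shared terminals. Then, on each $H \in \mathcal{G}$, the expander property forces the small side of every vertex cut of size $< c$ to contain at most $c/\phi = c \cdot n^{o(1)}$ vertices; from each terminal $v \in T^+ \cap V(H)$ I would run a local vertex-flow algorithm (e.g., \cite{NanongkaiSY19} or \cite{chechik2017faster}) in $c^{O(c)} n^{o(1)}$ time to enumerate the family $\mathcal{F}_v$ of important minimum vertex cuts whose small side contains $v$. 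The aggregate $\mathcal{F} = \bigcup_v \mathcal{F}_v$ then determines $\min\{\mu_H(A,B), c\}$ for every $A, B \subseteq T^+ \cap V(H)$.

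From $\mathcal{F}$ I would construct a local mimicking gadget $H'$ of size $|T^+ \cap V(H)| \cdot 2^{O(c^2)}$, leveraging the Kratsch--Wahlström combinatorial bound that the number of distinct patterns of important vertex cuts of size $< c$ incident to a single terminal is $2^{O(c^2)}$. Disjoint union of the $H'$'s, identified along the shared terminals of $T^+$, yields a $(T^+, c)$-sparsifier of size $|T^+| \cdot 2^{O(c^2)}$. To remove $T^*$ and keep only $T$, I would apply the construction recursively on this smaller graph with terminal set $T$; because each level reduces the vertex count by a factor $n^{\Omega(1)} / 2^{O(c^2)}$, $O(1)$ levels suffice and the final size is $|T| \cdot 2^{O(c^2)}$. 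Accounting for the $m^{1+o(1)}$ reduction call and $n^{1+o(1)}$ local flows of cost $c^{O(c)} n^{o(1)} \le 2^{O(c^2)} n^{o(1)}$ each, the total running time is $m^{1+o(1)} 2^{O(c^2)}$.

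The hardest step is the deterministic, size-sensitive construction of the $2^{O(c^2)}$ local gadget: Kratsch--Wahlström produce it via matroid representative-set tools that cost $\poly(n)$, whereas here we need time proportional to the output. I would instead build it by a direct combinatorial encoding keyed on the enumerated small sides: since each important cut has a side of size at most $c \cdot n^{o(1)}$ and there are $|T^+ \cap V(H)| \cdot 2^{O(c^2)}$ distinct cut patterns in total, a ``terminal-fingerprint'' construction should write down the gadget explicitly from $\mathcal{F}$ in output-sensitive time. A secondary subtlety is the gluing across expanders, since a terminal in $T^*$ may sit in several $H$'s and cuts can straddle multiple pieces; this requires the $\textsc{ExpandersOrTerminal}$ interface to expose a clean Steiner structure on $T^*$ so that cross-piece cuts can be reassembled correctly from the local gadgets.
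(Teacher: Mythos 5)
Your proposal takes a genuinely different route from the paper, but it contains gaps that are serious enough that I don't think it closes.

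The paper's $\textsc{VertexSparsify}$ does not call $\textsc{ExpandersOrTerminal}$ at all; the two are independent subroutines invoked side by side in the top-level algorithm (\Cref{alg:main vertex conn alg}). Internally, the paper's sparsifier works by computing a global $(T,c)$-\emph{covering set} $Z$ of size $O(|T|2^{O(c^2)})$ --- a set that contains, for every $A,B\subseteq T$ with $\mu(A,B)\le c$, at least one minimum $(A,B)$-weak separator --- and then closing every vertex outside $T\cup Z$ (\Cref{lem:reduction tc covering}, implemented efficiently via the hypergraph closure oracle of \Cref{lem:static neighborhood oracle}). The covering set itself is built recursively via Steiner mincuts and good isolating cuts, with a potential argument (\Cref{lem:recursion tree: total number of internal nodes}) controlling the recursion; expander decomposition enters only as a preprocessing step inside that recursion (\Cref{sec:final reducing set}), not as the outer framework.

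Your plan instead builds local gadgets per expander and glues them, and this is where the gaps lie. First, ``run a local vertex-flow algorithm \ldots to enumerate the family $\mathcal{F}_v$ of important minimum vertex cuts'' is not something the $(c,\nu)$-LocalVC primitive gives you: by \Cref{def:localvc} it returns a single set $L$ with $|N_G(L)|<c$ or the symbol $\bot$, not an enumeration. Turning that into a deterministic, output-sensitive enumeration of \emph{all} relevant cuts is a separate and nontrivial problem --- and the paper explicitly flags (footnote in \Cref{sec:mimicking}) that the cut-enumeration approach that works for edge cuts fails for vertex cuts, because one must also list cuts $(L,S,R)$ with $G[L]$ disconnected, and no fast way to do that is known. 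Second, there is no ``Kratsch--Wahlstr\"{o}m combinatorial bound'' of $2^{O(c^2)}$ important-cut patterns per terminal to lean on: their construction gives size $O(|T|^3)$, cubic in $|T|$, via matroid representative sets, and is randomized and slow. A $|T|\cdot 2^{O(c^2)}$ bound is exactly what this theorem is establishing for the first time, so you cannot cite it as a black box. Third, you acknowledge but do not resolve the gluing across expanders --- a minimum vertex separator that lies entirely in the reducer/terminal part $T^*$ and separates two expanders is not ``internal to some $H\in\mathcal{G}$,'' and it is not clear the disjoint union of local gadgets preserves such cuts nor avoids creating spurious small separators (the $c$-cut-recoverable requirement of \Cref{def:mimicking network}). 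Finally, your recursion-depth claim of $O(1)$ levels is unsupported: with $\phi^{-1}=n^{o(1)}$ the auxiliary terminal set $T^*$ has size $n^{1+o(1)}\phi=n^{1-o(1)}$, and the paper's own analysis in \Cref{sec:main vc alg} only achieves a constant-factor shrink per round (by choosing $\phi^{-1}=10n^{o(1)}2^{O(c^2)}$), requiring $O(\log n)$ rounds, and even then does this recursion at the level of the connectivity algorithm, not inside the sparsifier. Until these four points are filled in, the proof does not go through.
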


How do we exploit this sparsifier? Let $T$ be the terminal set from
$\textsc{ExpandersOrTerminal}$ and $H=\ensuremath{\textsc{VertexSparsify}(G,T,c)}$.\footnote{For a technical reason, we actually need to compute $H=\ensuremath{\textsc{VertexSparsify}(G,T',c)}$ where $T' = \cup_{v\in T} N^{c}_G(v)$ and $N^{c}_G(v)$ is an arbitrary set of $c$ neighbors of $v$. See \Cref{sec:main vc alg} for details.}  
Since $H$ preserves all minimum vertex cuts between $T$ and also
does not introduce any new minimum cut, it suffices to check if $H$
is $c$-connected. By setting $\phi=1/2^{O(c^{2})}n^{o(1)}$, we have
$|T|\le n/2^{O(c^{2})}$ and so $|E(H)|\le n/2$. Thus, we have reduced
the vertex connectivity problem to a graph of half the size in almost-linear
time. By repeating this framework at most $O(\log n)$ rounds, 
we are done. That completes the explanation how the high-level components
fit together.

\subsection{New Mimicking Networks}

The new mimicking network in \Cref{thm:overview sparsifier} can be of
independent interest. 

Let us compare \Cref{thm:overview sparsifier} with related results
on mimicking network literature. Kratsch and Wahlstr\"{o}m \cite{kratsch2020representative} showed
an algorithm that computes a $(T,c)$-sparsifier (without the condition
that $\kappa_{H}\ge\min\{\kappa_{G},c\}$) of size $O(|T|^{3})$ in
some large polynomial time and their algorithm is randomized. In our applications it is crucial that the size is linear in $|T|$ and the
algorithm is deterministic. When we consider edge cuts instead of
vertex cuts, $c$-edge-connectivity mimicking networks of size $|T|\text{poly}(c)$
can be computed in  $m^{1+o(1)}c^{O(c)}$ time \cite{chalermsook2021vertex} or $n^{O(1)}$ time
\cite{liu2020vertex}.

\Cref{thm:overview sparsifier} can be viewed as an adaptation of
$c$-edge connectivity mimicking networks from \cite{liu2019vertex,chalermsook2021vertex} to $c$-vertex connectivity, but there are many technical obstacles we need to overcome.
Basically, this is because not all techniques for edge cuts generalize
to vertex cuts. Even when they do generalize, they require careful and
complicated definitions and arguments in order to carry out the approach.
For concrete examples, we observe that the ``cut enumeration'' approach
of \cite{chalermsook2021vertex} fails completely for vertex cuts.\footnote{In  \cite{chalermsook2021vertex}, they only show a fast algorithm for enumerating cuts $(S,V\setminus S)$ whose one side induced a connected graph (i.e.~$G[S]$ is connected), and then argue that this kind of cuts suffice for their construction. This is not true for vertex cuts. We would need to list vertex cuts $(L,S,R)$ where $G[L]$ is not connected too. While listing cut where $G[L \cup S]$ is connected suffices, it is not clear how to do it fast.}
So, instead, we take the approach based on covering sets and intersecting
sets of \cite{liu2019vertex}. To order to generalize this approach, we arrive at a
complicated and delicate definition of \emph{\roc} \psetpair (See \Cref{def:roc partition}). Nonetheless, while trying to overcome these technical obstacles in constructing mimicking networks, we believe that there is one technical lesson that might be useful beyond this work. 

\paragraph{Fast Closure Oracles via Hypergraphs.}
In our algorithm, we need to perform an operation that is analogous to contracting an edge $e$, but we do it on a vertex $v$. This operation is called \emph{neighborhood closure}, or just \emph{closure}, for short. The closure has been use as an important primitive in \cite{kratsch2020representative}.
Given a graph $G$ with a vertex $v$, \emph{closing} $v$ in $G$ is to add a clique between all neighbors of $v$, and remove $v$ from $G$. %
This operation clearly takes a lot of time when $v$ has large degree. 
We observe that if we ``lift'' the problem to hypergraphs, it turns out that the equivalent operation is as follows:
\emph{closing $v$ on a hypergraph} $H$ is to merge all hyperedges $e$ containing $v$ into one hyperedge, i.e.~insert $e' = \cup_{e \ni v} e$ and remove all $e$ that contains $v$. This simple equivalence is formalized in \Cref{obs:hypergraph clique equiv}.

Now, it turns out that the closure operation on hypergraph is very easy to (implicitly) implement, for example, by using the union-find data structure. 
More specifically, our algorithm in \Cref{sec:reducing set} will need to implement a data structure on a graph that handle closure operations in an online manner. The key technique that enables our algorithm to run in almost-linear time is to ``lifting'' the graph into a hypergraph and implementing closure operations on the hypergraph.

\subsection{Organization}
We describe necessary background and terminologies in \Cref{sec:prelim}. We describe the full vertex connectivity algorithm in \Cref{sec:main vc alg}.  Then, we explain in \Cref{sec:xorterminals} how to implement the reduction to vertex expanders, i.e.~the subroutine $\textsc{ExpandersOrTerminal}(G,c,\phi)$. For the rest of the paper,  we explain how to compute $c$-vertex connectivity mimicking network, i.e.~the subroutine
$\ensuremath{\textsc{VertexSparsify}(G,T,c)}$. In \Cref{sec:mimicking},  we explain a reduction from constructing mimicking networks to an object called \emph{covering set}. In \Cref{sec:covering set}, we show a reduction from constructing covering set to another object called \emph{\roc \psetpair}. We give fast implementations for computing a \roc \psetpair in \Cref{sec:reducing set}.  Finally, we discuss open problems in \Cref{sec:open problems}.

\section{Preliminaries} \label{sec:prelim}

Let $G = (V,E)$ be an undirected graph. Throughout this paper, we consider only  undirected graphs (unless stated otherwise).  For $x,y \in V$, we say that a path in $G$ is
$(x,y)$-\textit{path} if it starts with $x$ and ends with $y$. For $S, T \subseteq V$, we say that a path
$P$ is an $(S,T)$-\textit{path} in $G$ if it is an $(s,t)$-path for
some $s\in S, t \in T$. We also denote $E_G(S,T)$ to
be the set of edges whose one endpoint is in $S$ and the other
endpoint is in $T$. For any $x \in V$, we denote $N_G(x)$ to be the
set of neighbors of $x$ in graph $G$. For any $S \subseteq V$, we
denote $N_G(S)$ to be the set of neighbors of some vertex in $S$ that
is not in $S$. We also denote $N_G[S] = S \cup N_G(S)$ and $\vol_G(S)
= \sum_{x \in S}\textdeg_G(x)$ where $\textdeg_G(x)$ is the degree of
$x$ in $G$, which is the number edges incident to $x$.  For any $X
\subseteq V$, we denote $G[X]$ as subgraph of $G$ induced by $X$.  For set
notations, we denote set difference as $S - T$. We denote $G - S$ to
be the graph $G$ after removing all vertices in $S$.   We also denote
$V(G)$ and $E(G)$ to be the set of vertices in $G$ and the set of
edges in $G$, respectively. 

Let $S \subseteq V$ and $A ,B \subseteq V$. We say that $S$ is a \textit{separator} in $G$ if
$G - S$ is not connected.  We say that $S$ is an $(x,y)$-separator in $G$ for
some $x,y \in V$ if $G -S$ does not have an $(x,y)$-path, and $x,y \not
\in S$. We say that $S$ is an $(X,Y)$-\textit{separator} in $G$ for some disjoint sets $X, Y \subseteq V$ if $G - S$ does not have an $(X,Y)$-path and $X \cap S = Y \cap S = \emptyset$.  We denote $\kappa_G(x,y)$ to be the minimum size of $(x,y)$-separator in $G$ or $n-1$ if such a
separator does not exist. \textit{Vertex connectivity} of $G$ is $\kappa_G =
\min_{x,y \in V}\kappa_G(x,y)$.  We also say that $S$ is an $(A,B)$-\textit{weak separator} if
$G-S$ does not have an $(A,B)$-path (but possibly $A \cap S \neq
\emptyset$ and $B \cap S \neq \emptyset$).  We denote $\mu_G(A,B)$ to be the minimum
size of $(A,B)$-weak separator in $G$. Note that $\mu_G(A,B) \leq
\min\{|A|,|B|\}$ because $A$ and $B$ are trivial $(A,B)$-weak
separators. For the purpose of vertex connectivity, we can assume WLOG
that $G$ is simple (i.e., no-self loops,
and no parallel edges). 

A \textit{vertex cut} in $G = (V,E)$ is a triple $(L,S,R)$ that forms
a partition of $V$ such that $L \neq \emptyset, R\neq\emptyset$ and
$E_G(L,R) =\emptyset$.  A
\textit{vertex expansion} of $(L,S,R)$ is $h_G(L,S,R) =
\frac{|S|}{\min\{|L \cup S|, |S \cup R|\}}$. A \textit{vertex
  expansion} of a graph $G$ is $h(G) = \min_{(L,S,R)}h_G(L,S,R)$. Let $\phi \in (0,1)$. We
say that a vertex cut $(L,S,R)$ is $\phi$-\textit{sparse} if $h_G(L,S,R) < \phi$.
We say that $G$ is a $\phi$-\textit{vertex expander} if $G$ does not have a
$\phi$-sparse vertex cut, i.e., $h(G) \geq \phi$.

\begin{definition} \label{def:localvc}
  A $(c,\nu)$-\textit{LocalVC algorithm} takes as inputs an initial
  vertex $x$ in a graph $G = (V,E)$, and two parameters $\nu, c$ such
  that $c\cdot \nu \leq  |E|/\tau$ (for some constant $\tau \geq 1$) and outputs either:
  \begin{itemize}
  \item $\bot$ certifying that there is no vertex cut $(L,S,R)$ such
    that $x \in L, \vol_G(L) \leq \nu$ and $|S| < c$, or
  \item a vertex set $L$ such that $|N_G(L)| < c$. 
  \end{itemize}
\end{definition}

\begin{theorem}  [\cite{NanongkaiSY19, chechik2017faster}]
 There are deterministic $(c,\nu)$-LocalVC algorithms that takes $O(\nu c^{O(c)})$ 
 time and $O(\nu^2 c)$ time, respectively.  
\end{theorem}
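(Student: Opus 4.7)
The plan is to reduce both algorithms to local max-flow computations in the \emph{vertex-split} digraph $G^*$, obtained by replacing each $v \in V$ with a unit-capacity arc $(v_{\text{in}}, v_{\text{out}})$ and each undirected edge $\{u,v\}$ with two infinite-capacity arcs $(u_{\text{out}}, v_{\text{in}})$ and $(v_{\text{out}}, u_{\text{in}})$. Under this reduction, a vertex cut $(L,S,R)$ with $x \in L$ and $|S| < c$ corresponds exactly to an edge cut of capacity $< c$ in $G^*$ whose source side, pulled back to $G$, has volume at most $\nu$. Both algorithms attempt to push $c$ units of flow out of $x_{\text{out}}$ using only local exploration; success certifies $\bot$, and failure returns the set $L$ of vertices still residual-reachable from $x_{\text{out}}$, which satisfies $|N_G(L)| < c$ by flow-cut duality in $G^*$.

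For the $O(\nu^2 c)$ bound I would use $c$ rounds of DFS-based augmenting-path search in the residual graph of $G^*$. Each DFS starts at $x_{\text{out}}$ and is \emph{aborted} once it has traversed more than $\Theta(\nu)$ arcs; if it finds an unsaturated frontier vertex within this budget, one unit of flow is pushed and the next round begins. If some round aborts, the residual-reachable set defines the output $L$. Bounding the DFS work by $O(\nu)$ per round and accounting for the $O(\nu)$ reverse arcs that earlier augmentations may force a later DFS to revisit gives the claimed $O(\nu^2 c)$ bound. For the $O(\nu c^{O(c)})$ bound I would instead \emph{branch on separator vertices}: run a local BFS from $x$, and whenever the explored frontier cannot be ``closed off'' by fewer than $c$ boundary vertices, branch over the $O(c)$ candidate vertices the separator might hit next. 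Each branch removes one guessed separator vertex and recurses with the separator budget reduced by one, so the recursion tree has depth at most $c$ and branching factor $O(c)$, yielding at most $c^{O(c)}$ leaves, each doing $O(\nu)$ BFS work.

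The hard part in both cases is arguing that halting the local procedure actually certifies a valid vertex cut, as opposed to merely recording failure within the budget. Concretely, the algorithm must maintain the invariant that the residual-reachable region (or the BFS-explored region) has volume at most $\nu$ at all times; this is delicate because vertex-capacity backward arcs created during augmentation can cause the DFS to revisit arcs many times, and in the branching algorithm a badly chosen guess can blow up the explored volume. Making this invariant checkable in $O(1)$ amortized work per explored arc, so that the simple ``budget plus boundary output'' scheme really does yield a $(c,\nu)$-LocalVC algorithm with the stated running time, is the main bookkeeping challenge in both constructions.
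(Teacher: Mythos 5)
The paper does not supply a proof here; it cites \cite{NanongkaiSY19} for the $\tilde{O}(\nu^2 c)$ algorithm and an adaptation of \cite{chechik2017faster} for the $O(\nu c^{O(c)})$ algorithm, so there is no in-paper argument to compare against. Your high-level framework --- the vertex-split digraph $G^*$, local augmenting-path search from $x_{\text{out}}$, and extraction of the cut from the residual-reachable set --- is the right one for both algorithms.

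That said, the sketch has three concrete gaps. (1) The sink is never specified: $G^*$ has no target vertex, so ``push a unit of flow to an unsaturated frontier vertex'' is not yet a well-defined operation, and it is not clear why $c$ successes certify that no cut $(L,S,R)$ with $x\in L$, $\vol_G(L)\le\nu$, $|S|<c$ exists. The standard setup distributes sink capacity over edges or vertices (each absorbs a bounded amount) and then proves a lemma tying the resulting local max-flow to bounded-volume vertex cuts; you need to state and invoke such a lemma before ``flow-cut duality in $G^*$'' does any work. (2) Your $\tilde{O}(\nu^2 c)$ accounting does not close: you charge $O(\nu)$ arcs per round over $c$ rounds, plus another $O(\nu)$ per round for residual arcs, which sums to $O(\nu c)$, not $\nu^2 c$. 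You must say where the missing factor of $\nu$ comes from --- for instance, the number of augmentation rounds is $\Theta(\nu)$ rather than $c$ because the distributed sink has total capacity $\Theta(\nu)$ --- or the analysis as stated is for a different bound. (3) The $c^{O(c)}$ bound hinges on branching factor $O(c)$, which you assert but do not justify: the set of frontier vertices the separator ``might hit next'' is naively of size $\Theta(\nu)$, so naive branching gives $\nu^{O(c)}$, not $c^{O(c)}$. Showing that only $O(c)$ candidates ever need to be tried at each branch is precisely what the adaptation of Chechik et al.\ provides, and it is the crux of that construction; it is absent from your sketch. You correctly flag in your last paragraph that maintaining the volume invariant is delicate, but naming a difficulty is not the same as discharging it.
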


\begin{theorem}[\cite{NagamochiI92}] \label{thm:nagamochi}  
  Given a graph $G = (V,E_G)$ with $n$ vertices and $m$ edges and a parameter $c > 0$, there is a linear-time algorithm that outputs another graph $H = (V,E_H)$ satisfying
  the following properties: 
  \begin{itemize}
  \item $G$ is $c$-connected if and only if $H$ is $c$-connected. Furthermore, for all $S \subseteq V$ such that $|S| < c$, $S$ is a separator in $G$ if and only if $S$ is a separator in $H$, and 
  \item $H$ has arboricity $c$.  In particular, $|E_H| \leq nc$ and for all $S \subseteq V$, $|E_H(S,S)| \leq c|S|$. 
  \end{itemize}
\end{theorem}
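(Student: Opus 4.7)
The plan is to reproduce Nagamochi and Ibaraki's classical forest-decomposition construction, which is based on a \emph{maximum adjacency} (scan-first) ordering of $G$. First, I would produce an ordering $v_1, v_2, \ldots, v_n$ of the vertices such that, for each $j \geq 2$, $v_j$ maximizes $|N_G(v_j) \cap \{v_1, \ldots, v_{j-1}\}|$ among all unscanned vertices (ties broken arbitrarily). For each edge $e = (v_i, v_j)$ with $i < j$, define the rank $r(e)$ to be the position of $v_i$ in the list of $G$-neighbors of $v_j$ that precede $v_j$ in the scan-first sequence. Assign $e$ to forest $F_{r(e)}$ and set $E_H = F_1 \cup F_2 \cup \cdots \cup F_c$.

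The arboricity bounds are immediate from the construction. Each $F_k$ is a forest, because every $v_j$ contributes at most one edge of rank $k$ pointing to an earlier vertex, so $|F_k| \leq n - 1$. The same argument applied to any $S \subseteq V$ gives $|F_k \cap E_H(S,S)| \leq |S| - 1$, and summing over $k \leq c$ yields $|E_H| \leq cn$ and $|E_H(S,S)| \leq c|S|$.

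The substantive part is preservation of small vertex separators. Since $E_H \subseteq E_G$, any partition of $V \setminus S$ with no $G$-crossing edges also has no $H$-crossing edges, so every separator of $G$ of size less than $c$ is automatically a separator of $H$. For the converse, it suffices to prove the following key lemma: for every edge $(u,v) \in E_G$ and every $S \subseteq V$ with $|S| < c$ and $u, v \notin S$, the endpoints $u$ and $v$ lie in the same component of $H - S$. Given the lemma, any $G - S$ path can be transported edge by edge into a walk in $H - S$, so $G - S$ and $H - S$ share the same connected components on $V \setminus S$.

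I would prove the key lemma by strong induction on the larger index $j$, where $v = v_j$. If $r(e) \leq c$ then $e \in E_H$ and the claim is trivial. Otherwise $r(e) > c$, so by the definition of ranks $v$ has earlier $G$-neighbors $w_1, \ldots, w_c$ whose edges to $v$ occupy ranks $1, \ldots, c$ and therefore lie in $E_H$; since $|S| < c$, some $w_{k^\star} \notin S$, and $v$ is $H$-adjacent to $w_{k^\star}$ in $H - S$. The main obstacle, as I see it, is to link $u$ (the $r$-th predecessor, with $r > c$) to this $w_{k^\star}$ inside $H - S$. The crucial structural input is the defining property of maximum adjacency ordering: at the moment $u$ is scanned, the scan-first rule forces $u$ to have at least as many neighbors in the already-scanned set as $v$ (which is still unscanned) does, and $v$ already has $w_1, \ldots, w_c$ at that moment; so $u$ itself has at least $c$ $H$-edges into earlier vertices, which allows the induction hypothesis to route $u$ into the component of $w_{k^\star}$. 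Making this recursion airtight, ideally by strengthening the inductive statement to preserve \emph{all} components of $G[\{v_1,\ldots,v_j\}] - S$, is the delicate bookkeeping step. For the linear-time implementation I would use the bucket-based procedure for maximum adjacency ordering: each unscanned vertex's current count of edges into the scanned side is kept in an array of buckets that supports $O(1)$ amortized updates, and a final linear pass over the edges produces the ranks, giving total time $O(m + n)$.
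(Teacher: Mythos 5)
This theorem is cited from \cite{NagamochiI92} as a black box and the paper gives no proof of it, so the only question is whether your sketch is sound. Your construction (scan-first ordering, rank-indexed forests, $H = F_1 \cup \cdots \cup F_c$), the arboricity bounds, and the direction ``$S$ separates $G \Rightarrow S$ separates $H$'' are correct, and your key lemma is the right intermediate statement. The gap is in the inductive step, and it is not mere bookkeeping. After choosing $w_{k^\star} \notin S$ among $v_j$'s first $c$ earlier $H$-neighbors, and invoking the scan-first inequality to obtain $c$ low-rank $H$-edges from $u = v_i$ to earlier vertices with some endpoint $u_{k'} \notin S$, you still need $u_{k'}$ and $w_{k^\star}$ to lie in one component of $H - S$. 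Neither version of your induction hypothesis delivers this: $(u_{k'}, w_{k^\star})$ need not be a $G$-edge, and even the strengthened statement (that $G[\{v_1,\dots,v_{j-1}\}] - S$ and $H[\{v_1,\dots,v_{j-1}\}] - S$ share the same components) does not help, because removing $v_j$ may disconnect its two neighbors $u_{k'}$ and $w_{k^\star}$ in $G[\{v_1,\dots,v_{j-1}\}] - S$. So the recursion as you set it up does not close.

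What \cite{NagamochiI92} actually establish, and what makes this theorem go through, is the stronger statement that for every edge $(u,v) \in E_G \setminus E_H$ there are $c$ internally vertex-disjoint $u$--$v$ paths inside $H$; your key lemma is then immediate, since a set $S$ with $|S| < c$ cannot hit all $c$ paths. The proof of that disjoint-paths lemma is again an induction along the scan order, but it constructs the $c$ paths explicitly and controls where their internal vertices live relative to the ordering; that disjointness bookkeeping is the genuinely delicate part, and it cannot be recovered from the single-connected-component induction you propose. You should either cite that lemma directly or reproduce its proof rather than the weaker per-edge connectivity induction.
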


In the language of vertex sparsifier, the following terms will be use throughout the paper. 

\begin{definition} \label{def:tc equivalent}
  Let $G$ and $H$ be graphs that contain the same terminal set $T$ and $c > 0$ be an integer. We say that $G$ and $H$ are $(T,c)$-\textit{equivalent} if
for all pair $A,B \subseteq T$, we have $\min\{\mu_H(A,B),c\} =
\min\{\mu_G(A,B),c\}$. 
\end{definition}

\begin{definition}\label{def:cut recoverable}
  Let $H$ and $G$ be graphs and $c > 0$ be an integer. %
  \begin{itemize}
      \item $H$ is \textit{cut-recoverable} for $G$ if  $V(H) \subseteq V(G)$ and  every separator in $H$ is a separator in $G$,
      \item $H$ is $c$-\textit{cut-recoverable} for $G$ if  $V(H) \subseteq V(G)$ and  every separator of size $<c$ in $H$ is a separator in $G$, and
      \item $H$ is $c$-\textit{mincut-recoverable} for $G$ if $V(H) \subseteq V(G)$ and for all $s,t \in V(H)$ every min $(s,t)$-separator of size $<c$ in $H$ is an $(s,t)$-separator in $G$.
  \end{itemize}
\end{definition}
\begin{proposition} \label{pro:recoverable imply monotone}
If $H_1$ is cut-recoverable for $G$, then $\kappa_{H_1} \geq \kappa_G$.
If $H_2$ is $c$-cut-recoverable for $G$ or $c$-mincut-recoverable, then $\kappa_{H_2} \geq \min\{c, \kappa_G\}$. 
\end{proposition}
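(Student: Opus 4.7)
The plan is to handle all three cases with a single template: pick a minimum separator in the mimicking graph, invoke the relevant recoverability hypothesis to transfer it into a separator of $G$ of the same size, and then compare cardinalities. In each case the desired inequality $\kappa_H \ge \min\{c,\kappa_G\}$ (respectively $\kappa_{H_1}\ge\kappa_G$) drops out immediately.

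For the first claim, let $S$ be a minimum separator of $H_1$, so that $|S|=\kappa_{H_1}$. Cut-recoverability guarantees that $S$ is also a separator of $G$, and therefore $\kappa_G\le|S|=\kappa_{H_1}$. For the $c$-cut-recoverable case, split on whether $\kappa_{H_2}\ge c$: if so, the bound is immediate; otherwise a minimum separator $S$ of $H_2$ satisfies $|S|=\kappa_{H_2}<c$, so by hypothesis $S$ is a separator of $G$, giving $\kappa_G\le|S|=\kappa_{H_2}$ and hence $\kappa_{H_2}\ge\kappa_G\ge\min\{c,\kappa_G\}$.

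For the $c$-mincut-recoverable case the argument is the same once we observe that any minimum separator of $H_2$ is automatically a minimum $(s,t)$-separator for a suitable pair $s,t$. Concretely, take a minimum separator $S$ of $H_2$ and pick $s,t$ in distinct components of $H_2-S$; then any $(s,t)$-separator $S'$ is itself a separator of $H_2$, so $|S'|\ge\kappa_{H_2}=|S|$, showing that $S$ is a minimum $(s,t)$-separator. If $|S|<c$, the hypothesis makes $S$ an $(s,t)$-separator of $G$, so $\kappa_G\le\kappa_G(s,t)\le|S|=\kappa_{H_2}$; if instead $|S|\ge c$ we are done directly.

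I do not anticipate a serious obstacle here: the proposition is essentially a sanity check confirming that each recoverability notion precludes $H$ from having strictly smaller cuts than $G$ (up to the threshold $c$). The only small subtlety is the pairwise-minimality observation in the last paragraph, which is needed because $c$-mincut-recoverability is stated only for minimum $(s,t)$-separators rather than for arbitrary separators.
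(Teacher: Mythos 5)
Your proof is correct and follows essentially the same approach as the paper: take a minimum separator in the mimicking graph, transfer it to $G$ via the recoverability hypothesis, and compare sizes. Your third paragraph is actually slightly more careful than the paper's proof, which applies $c$-mincut-recoverability to a minimum separator $S^*$ without explicitly spelling out that $S^*$ is a minimum $(s,t)$-separator for a suitable pair $s,t$ — the observation you make is exactly the bridge the paper leaves implicit.
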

\begin{proof}
If $S^*$ be a min separator in $H_1$, then $S^*$ is a separator in $G$, and thus $\kappa_{H_1} = |S^*| \geq \kappa_G$.  Now consider $H_2$. If $\kappa_{H_2} \geq c$, then $\kappa_{H_2} \geq \min\{c,\kappa_G\}$, and we are done. Now, assume $\kappa_{H_2} < c$. Let $S^*$ be a min separator in $H$. Since $H_2$ is $c$-cut-recoverable (or $c$-mincut-recoverable) and $|S^*| < c$, $S^*$ is a separator in $G$, and thus $c > \kappa_{H_2} = |S^*| \geq \kappa_G$. Therefore, $\kappa_{H_2} \geq \kappa_G = \min\{c,\kappa_G\}$. %
\end{proof}

Observe that cut-recoverable property is transitive: If $A$ is cut-recoverable for $B$ and $B$ is cut-recoverable for $C$, then $A$ is cut-recoverable for $C$.  Using \Cref{def:tc equivalent} and \Cref{def:cut recoverable}, the algorithm in \Cref{thm:nagamochi} outputs a graph $H$ such that (1) $H$ and $G$ are $(V,c)$-equivalent, (2) $H$ is $c$-cut recoverable for $G$, and (3) $H$ has arboricity $c$.

\section{Vertex Connectivity Algorithm} \label{sec:main vc alg}

In this section, we prove \Cref{thm:main}. We first consider a warm up problem when the input graph is already a $\phi$-expander, and then discuss  our key tools to handle general graphs.  

\paragraph{Warm up.}  Suppose that $G$ is $\phi$-vertex expander. For intuitive purpose,  we can think of $\phi = 1/\log n$ (we
use $\phi = 1/n^{o(1)}$ in the real algorithm).   We show that we can decide $c$-connectivity in
$\ot(n\phi^{-2})$ time.  Since $h(G) \geq \phi$, any vertex cut
$(L,S,R)$ where $|S| < c$ must be such that $|L| \leq c\phi^{-1}$ or $|R|
\leq c\phi^{-1}$. Therefore, $G$ is $c$-connected if and only if there is
a set $L \subseteq V$ where $|L| \leq c\phi^{-1}$ and $|N(L)| < c$
where $N(L)$ is the neighbors of $L$. Observe that $\vol(L) \leq |L|^2
+ |L|c = O(c^2\phi^{-2})$.   This is exactly where the \emph{local vertex connectivity }(LocalVC)
algorithm introduced in \cite{NanongkaiSY19} can help us. This algorithm
works as follows: given a vertex $x$ in a graph $G$ and parameters
$\nu$ and $c$, either (1) certifies that there is no set $L \ni x$
where $\vol(L)\le\nu$ and $|N(L)|<c$, or (2) returns a set $L$ where
$|N(L)|<c$. See \Cref{def:localvc} for a formal definition. There are
currently two deterministic algorithms for this problem: a
$\tilde{O}(\nu^{2}c)$-time algorithm \cite{NanongkaiSY19} and a $O(\nu c^{O(c)})$-time algorithm by a slight adaptation of the algorithm by Chechik et al.~\cite{chechik2017faster}.
As $c=O(1)$, we will use the $O(\nu c^{O(c)})$-time algorithm here.
From the above observation about the set $L$, it is enough to run
the LocalVC algorithm from every vertex $x$ with a parameter
$\nu = O(c^{-2}\phi^{-2})$ to decide if such $L$ exists. This takes $O(n\phi^{-2})$
total time to decide $c$-connectivity of $G$ if $G$ is
$\phi$-expander. When $\phi^{-1} = \log n$, we have a $\ot(n)$ time
algorithm for deciding $c$-connectivity of  $\phi$-vertex
expanders. That is, we have the following. 
\begin{proposition} \label{pro:expander unbalanced}
 Given an $\phi$-vertex expander $G = (V,E)$, and parameters $c > 0$ and $\phi \in (0,1)$,   there is an $\ot(|V|\frac{ c^{O(c)}}{\phi^2})$-time algorithm that outputs either a min separator of size $<c$ or certifies that $G$ is $c$-connected. %
\end{proposition}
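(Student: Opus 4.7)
The plan follows the warm-up sketch preceding the proposition: in a $\phi$-vertex expander, every vertex cut of size less than $c$ must have a side of tiny volume, and such a cut can be detected from any vertex of that side by the deterministic LocalVC algorithm. First I would invoke Nagamochi–Ibaraki (\Cref{thm:nagamochi}) in $O(|E|)$ time to replace $G$ by an arboricity-$c$ subgraph $H$ on the same vertex set, with $|E(H)| \le nc$, preserving all vertex separators of size $<c$ in both directions; from then on everything is done on $H$, and any separator of size $<c$ found in $H$ is automatically a separator of $G$.

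Next I would turn the expansion hypothesis into a concrete volume bound. If $(L,S,R)$ is a vertex cut of $G$ with $|S|<c$ and $|L \cup S| \le |R \cup S|$, then $\phi \le |S|/|L \cup S|$ gives $|L| < c/\phi$, so every vertex of $L$ has degree at most $|L|-1+|S| = O(c/\phi)$ in the simple graph $G$. Hence $\vol_H(L) \le \vol_G(L) = O(c^2/\phi^2)$.

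Setting $\nu = \Theta(c^2/\phi^2)$ to majorize this bound, I would run the $O(\nu c^{O(c)})$-time deterministic $(c,\nu)$-LocalVC procedure of \cite{chechik2017faster} on $H$ starting from each vertex $x \in V$. By the specification in \Cref{def:localvc}, whenever a small-side vertex $x \in L$ is used as the seed, the call cannot return $\bot$, so it must output some $L'$ with $|N_H(L')| < c$; such an $L'$ yields a separator of $G$ of size less than $c$ via \Cref{thm:nagamochi}, which I report as the witness. If every call returns $\bot$, no vertex cut of size $<c$ exists in $G$ and I certify $c$-connectivity. Summing the per-vertex cost gives the claimed $\ot(|V|\, c^{O(c)}/\phi^2)$ running time.

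The only mildly delicate point I anticipate is verifying the LocalVC precondition $c\nu \le |E(H)|/\tau$: this holds automatically from the arboricity bound $|E(H)| \le nc$ whenever $n = \Omega(c^2/\phi^2)$, and in the residual tiny regime $n = O(c^2/\phi^2)$ the target budget already permits a brute-force enumeration over candidate separators of size $<c$ followed by connectivity checks. If the downstream application demands a genuine \emph{minimum} separator of size $<c$ rather than merely a witness of non-$c$-connectivity, I would additionally sweep $c' = 2, 3, \ldots, c$ through the same procedure at a cost absorbed by $\ot(\cdot)$.
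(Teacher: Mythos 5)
Your proposal matches the paper's intended argument (the paper supplies only the informal warm-up preceding the proposition, not a separate proof): in a $\phi$-vertex expander any cut of size $< c$ has a side $L$ with $|L| < c/\phi$ and hence $\vol(L) = O(c^2/\phi^2)$, so running the deterministic $(c,\nu)$-LocalVC routine of \cite{chechik2017faster} with $\nu = \Theta(c^2/\phi^2)$ from every vertex detects any such cut, and the claimed $\ot(|V|c^{O(c)}/\phi^2)$ bound follows directly. You additionally flag and handle two technicalities the paper glosses over: the LocalVC precondition $c\nu \le |E|/\tau$ (resolved by Nagamochi--Ibaraki sparsification together with a small-$n$ fallback), and the fact that the proposition demands a \emph{minimum} separator rather than merely some separator of size $< c$ (resolved by sweeping $c' = 2,\dots,c$). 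Those are genuine points; the ``min'' requirement in particular matters downstream, since the main algorithm relies on $c$-mincut-recoverability.

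One small inaccuracy in the fallback: in the regime $n = O(c^2/\phi^2)$ (or $O(c^3/\phi^2)$, which is the actual threshold forced by $c\nu \le (n-1)/\tau$), brute-force enumeration over all $\binom{n}{<c}$ candidate separators costs roughly $n^{c-1}\cdot m = c^{O(c)}/\phi^{2c}$, which exceeds the budget $\ot(nc^{O(c)}/\phi^2) = \ot(c^{O(c)}/\phi^4)$ once $c \ge 3$. Instead, in this regime one should fall back to a polynomial-time deterministic vertex-connectivity algorithm such as Gabow's, whose $O(m(n + \min\{c^{5/2}, cn^{3/4}\}))$ running time is $O(\operatorname{poly}(c)/\phi^4)$ under $m = O(nc)$ and $n = O(c^3/\phi^2)$, comfortably within $\ot(c^{O(c)}/\phi^4)$. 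With that replacement the proof is complete and aligns with the paper's approach.
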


\paragraph{General Graphs.} In general, $G$ is not necessarily a $\phi$-vertex expander. To handle the general case, we compute a $(c,\phi)$-\textit{\xorterminal} of $G$. %

\begin{definition} \label{def:expander-or-terminal family}
Given a graph $G = (V,E)$ and $c, \phi$ as parameters, and let $\calG$
be a set of graphs,  $T \subseteq V$ be  a set of vertices in $G$, we say
that a pair $(\calG,T)$ is a $(c,\phi)$-\textit{\xorterminal} for $G$ if every $H \in \calG$ is a $\phi$-vertex expander and $c$-mincut-recoverable for $G$ (\Cref{def:cut recoverable}) and the pair can
characterize $c$-connectivity of $G$: $G$ is $c$-connected if and only if
\begin{enumerate}
\item every graph $H \in \calG$ is $c$-connected, and
\item for all $x,y \in T$, $\kappa_G(x,y) \geq c$. %
\end{enumerate}
We also say that $T$ is a \textit{terminal} set of $G$.
\end{definition}

Our technical tool is that we can compute in
almost-linear time a \textit{small} $(c,\phi)$-\xorterminal in that the size of
the terminal set $T$ is small (e.g., $\leq n\phi$) and at the
same time the total size of $\phi$-vertex-expanders in $\calG$ is
almost linear. In \Cref{sec:xorterminals}, we prove the following:

\begin{theorem} [Fast Algorithm for \Xorterminal] \label{thm:fast xorterminal pair} 
  Given a graph $G = (V,E)$ where $m = |E|, n = |V|$ and
  $c,\phi$ as inputs where $\phi < 1/(2c\log^2 n)$ and min-degree of $G \geq c$, there is an
  $O(m^{1+o(1)}/\phi)$ time algorithm, denoted as $\textsc{ExpandersOrTerminal}(G,c,\phi)$, that outputs a separator of size $<c$, or  a $(c,\phi)$-\xorterminal  $(\calG,T)$ for $G$ such that 
  \begin{enumerate}
  \item $|T| \leq n^{1+o(1)}\phi$,  %
   \item $\sum_{H \in \calG} |V(H)| = O(n^{1+o(1)})$.%
  \end{enumerate}
\end{theorem}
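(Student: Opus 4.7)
}
The plan is to adapt the recursive most-balanced-sparsest-cut framework of the deterministic vertex expander decomposition of Chuzhoy \emph{et al.}~\cite{chuzhoy2020deterministic}. Maintain a work-list of subgraphs of $G$, initially $\{G\}$, together with a growing terminal set $T$ (initially empty) and a collection $\calG$ of expanders (initially empty). At each step, pop some subgraph $G'$ and invoke the deterministic most-balanced vertex sparse cut algorithm at threshold $\phi' = \phi \cdot n^{o(1)}$. The algorithm either (a)~certifies that $G'$ is a $\phi$-vertex expander, in which case we append $G'$ to $\calG$; or (b)~returns a vertex cut $(L,S,R)$ of $G'$ of vertex expansion $<\phi'$ that is within an $n^{o(1)}$ factor of the most-balanced such cut. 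In case (b), if $|S|<c$ and the cut is a true separator of $G$, we output it; otherwise we insert $S$ into $T$ and push the two side-subgraphs $G'[L\cup S]$ and $G'[R\cup S]$ back onto the work-list. The running time is $O(m^{1+o(1)}/\phi)$ because every edge is touched $n^{o(1)}$ times (balanced recursion depth is polylogarithmic) and each invocation of the sparse cut subroutine is itself $m^{1+o(1)}/\phi$.

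\paragraph{Size analysis.} Because we always find a cut that is balanced up to $n^{o(1)}$ factors, the recursion depth is $n^{o(1)}$, so $\sum_{H\in\calG}|V(H)|\le n\cdot n^{o(1)} = n^{1+o(1)}$, establishing the second guarantee. For the first guarantee, by a standard charging argument, each separator $S$ returned during the recursion satisfies $|S|\le \phi'\cdot \min(|L\cup S|,|R\cup S|)$, and summing across all recursion nodes a vertex is charged at most $n^{o(1)}$ times; hence $|T|\le \phi\cdot n\cdot n^{o(1)} = n^{1+o(1)}\phi$ as required. The hypothesis $\phi < 1/(2c\log^2 n)$ and the min-degree $\geq c$ assumption enter to ensure that the sparse cut subroutine always succeeds on the remaining pieces and that the balanced cut recursion terminates.

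\paragraph{Correctness of the \xorterminal property.} We need two things: every $H \in \calG$ is $c$-mincut-recoverable for $G$ (and a $\phi$-expander), and the pair $(\calG,T)$ characterizes $c$-connectivity in the sense of \Cref{def:expander-or-terminal family}. The $\phi$-expander property of each $H$ follows immediately from how we form $\calG$. For $c$-mincut-recoverability, note that each $H$ sits inside $G$ and is obtained by a chain of splits $G'[\text{side}\cup S]$; thus any vertex $s,t\in V(H)$ that becomes disconnected in $H$ by removing some $S^*\subseteq V(H)$ with $|S^*|<c$ must also be disconnected in $G$, because every path from $s$ to $t$ in $G$ that exits $V(H)$ must cross one of the boundary separators $S$ used in the recursion, and every such $S$ is contained in $V(H)$ (we added $S$ to both sides when we recursed). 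For the characterization property, if $G$ has a vertex cut $(A,B,C)$ of size $<c$, then by induction on the recursion either some $H\in\calG$ inherits this cut (when $A\cup B\cup C$ fits inside a recursion-leaf), or the cut is crossed by some recursion separator $S$, in which case $S\subseteq T$ and the two sides $A,C$ of the cut contain terminals witnessing $\kappa_G(x,y)<c$ for some $x,y\in T$.

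\paragraph{Main obstacle.} The most delicate step is proving $c$-mincut-recoverability of each expander $H\in\calG$ against the \emph{original} graph $G$, not against the local subgraph that generated it: routing paths through the rest of $G$ could in principle shortcut a small local separator. The plan is to handle this by always enlarging each side of a split with the separator $S$ (so $V(H)$ absorbs every boundary we have ever crossed on its path to the leaf) and by inserting all such $S$ into $T$ as well, so that crossing cuts are certified by terminal pairs. Making both guarantees compatible with the size bounds on $|T|$ and $\sum|V(H)|$ while staying inside almost-linear time is the main technical point and drives the requirement $\phi<1/(2c\log^2 n)$.
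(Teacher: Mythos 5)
Your top-level framework (recursive balanced sparse cut with the Long--Saranurak routine, pushing subgraphs onto a work-list, collecting separators into $T$ and expanders into $\calG$) matches the paper's algorithm, and your size and running-time accounting is in the right spirit. However, the central step is wrong: you recurse on the induced subgraphs $G'[L\cup S]$ and $G'[R\cup S]$, and the $c$-mincut-recoverability argument you sketch for this construction does not hold.

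The problem is exactly the one you flag as ``the main obstacle'' and then wave away. If $H$ is an induced subgraph sitting inside $G$, a small separator $S^*\subseteq V(H)$ with $|S^*|<c$ that disconnects $s$ from $t$ inside $H$ need \emph{not} disconnect them in $G$: a path in $G$ from $s$ to $t$ may leave $V(H)$ through one vertex of the boundary separator $S$, traverse the opposite side $R$, and re-enter through a different vertex of $S$, all while avoiding $S^*$. Your claim that ``every such $S$ is contained in $V(H)$'' does not help, because the path can cross $S$ at vertices outside $S^*$; a $\phi$-sparse cut $(L,S,R)$ typically has $|S|\gg c$, so there are plenty of bypass vertices. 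Including $S$ in $T$ certifies pairs $x,y\in S$ but says nothing about a local separator $S^*$ that lies deep inside $L$. So $\calG$ built this way is not $c$-mincut-recoverable for $G$, and the ``if and only if'' of \Cref{def:expander-or-terminal family} fails in the backward direction.

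The paper's fix is the $c$-left / $c$-right graph construction (\Cref{def:cleftrightgraphs}): rather than taking the bare induced subgraph $G[L\cup S]$, it replaces $R$ by a clique $K_R$ of $c$ dummy vertices and adds a biclique between $S$ and $K_R$. This clique simulates the connectivity that paths through $R$ would otherwise provide, so that any separator of size $<c$ in $G_L$ either separates two vertices of $K_R\cup S$ (impossible, since $K_R$ is a $c$-clique wired to all of $S$) or genuinely separates two vertices of $G$. That is precisely what \Cref{lem:backward} proves, and together with \Cref{lem:forward} it gives the divide-and-conquer equivalence of \Cref{lem:dacforvc}: $G$ is $c$-connected iff $\kappa_G(x,y)\ge c$ for all $x,y\in S$ and both $G_L,G_R$ are $c$-connected. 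Your proposal is missing this structural lemma, and without it the recursion does not decompose $c$-connectivity correctly. (A secondary gap: to get $O(\log n)$ recursion depth you implicitly need the ``balanced-or-one-side-is-already-an-expander'' guarantee of \Cref{lem:balancedorexpander}, plus \Cref{claim:GR still a expander} to conclude that the large unbalanced side is a $\phi$-expander after the clique modification; simply asserting polylogarithmic depth is not justified.)
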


By applying \Cref{thm:fast xorterminal pair} on a graph $G$ with
appropriate parameters $c$ and $\phi$, we obtain a
$(c,\phi)$-\xorterminal $(\calG,T)$ for $G$ such that $|T| \leq n^{1+o(1)}\phi$ and total size of $\phi$-vertex expanders is almost linear. Since every graph $H \in
\calG$ is a $\phi$-vertex expander, we can decide $c$-connectivity of
all graphs in $\calG$ by applying \Cref{pro:expander unbalanced} on
every $\phi$-expander in $\calG$. This takes $O(n^{1+o(1)}c^{O(c)}
\phi^{-2})$ time. By  \Cref{def:expander-or-terminal family}, the final step for deciding $c$-connectivity of $G$ is to verify if $\kappa_G(x,y) \geq
c$ for all $x,y \in T$.   However, it is not known how to compute $\min_{x,y \in T} \kappa_G(x,y)$ in \textit{deterministically} near-linear time\footnote{In fact, the problem generalizes vertex connectivity because when $T = V$ it becomes vertex connectivity. In our case, however, we make progress because we can guarantee that $|T| \leq n^{1+o(1)}\phi$.}.

Instead of computing $\kappa_G(x,y)$ for all $x,y \in T$, we compute
\emph{$c$-vertex connectivity mimicking network}.  That is, we want to sparsify the
graph while preserving small cuts with respect to terminal set
$T$. The intuition is that if $G$ is not $c$-connected because there
is a vertex cut that separates $x \in T$ from $y \in T$, then  there must
be some $A, B \subseteq T$ where $|A|,|B|\ge c$ such that $\mu_G(A,B) < c$. Therefore, it
is enough to compress $G$ into another graph $H$ so that the vertex
cut that corresponds to $\mu_G(A,B) < c$ is preserved and at the same time $H$ does not have a new vertex cut of size less than $c$. %

More precisely,  let $G$ and $H$ be graphs that contain the same terminal set $T$.  We say that $G$ and $H$ are $(T,c)$-\textit{equivalent} if
for all pair $A,B \subseteq T$, we have $\min\{\mu_H(A,B),c\} =
\min\{\mu_G(A,B),c\}$. We say that $H$ is $c$-\textit{cut-recoverable} for $G$ if  $V(H) \subseteq V(G)$ and  every separator of size $<c$ in $H$ is a separator in $G$. By \Cref{pro:recoverable imply monotone}, $\kappa_H \geq \min\{c,\kappa_G\}$. We now define $c$-connectivity mimicking
network.

\begin{definition} [$c$-Vertex-Connectivity Mimicking
  Network] \label{def:mimicking network}
  We say that a graph $H$ is a \textit{pairwise} $(T,c)$-\textit{sparsifier} for $G = (V,E)$ that contains a terminal set $T \subseteq V$  if 
  \begin{enumerate} [noitemsep,nolistsep]
  \item $G$ and $H$ are $(T,c)$-equivalent, and 
  \item $H$ is $c$-cut-recoverable for $G$. In particular, $\kappa_H \geq \min\{c,\kappa_G\}$. %
  \end{enumerate}
\end{definition}

This leads to our second new technical
tool. In \Cref{sec:mimicking}, we prove the following:  
\begin{theorem}\label{thm:vertex sparsifier}
  Let $G = (V,E)$ be an undirected graph with $n$ vertices and $m$
  edges and a terminal set $T
  \subseteq V$ and $c > 0$ be a parameter. There is an
  $O(m^{1+o(1)} 2^{O(c^2)})$-time algorithm, denoted as
  $\textsc{VertexSparsify}(G,T,c)$, that outputs a $(T,c)$-sparsifier $H$
  for $G$ where $|E(H)| \leq c|V(H)|$ and $|V(H)| = O(|T| 2^{O(c^2)})$. 
\end{theorem}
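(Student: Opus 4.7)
The plan is to prove \Cref{thm:vertex sparsifier} by a recursive compression scheme that repeatedly shrinks the graph while preserving all vertex cuts of size less than $c$ between terminal pairs. At each level of recursion, I would identify a small \emph{covering set} $C \subseteq V$ of $O(|T|\cdot 2^{O(c)})$ vertices with the property that every vertex cut $(L,S,R)$ of size less than $c$ that weakly separates two terminal subsets is ``captured'' by $C$, in the sense that $S \subseteq C$ and all nontrivial boundary information of $L, R$ lies within $C$ as well. Once such $C$ is in hand, I can safely discard (or, more precisely, \emph{close}) every vertex in $V \setminus (T \cup C)$: the resulting smaller graph $H'$ is $(T,c)$-equivalent to $G$ and $c$-cut-recoverable for it. Iterating this contraction $O(\log n)$ times (each round shrinking $|V|$ geometrically) yields the final mimicking network $H$ of size $|T|\cdot 2^{O(c^2)}$, after which one application of \Cref{thm:nagamochi} sparsifies its edge set down to $c|V(H)|$.

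Next I would make precise the bridge from covering sets to sparsifiers. Since we are working with vertex cuts, the natural ``contraction'' primitive is the \emph{closure} of a vertex: closing $v$ in $G$ replaces $v$ by a clique on $N_G(v)$. To avoid the quadratic blow-up this causes on high-degree vertices, I would lift $G$ to a hypergraph $\mathcal{H}$ where closure becomes the merger of all hyperedges containing $v$ into a single hyperedge, implementable by a union-find structure (this is the reformulation highlighted in \Cref{obs:hypergraph clique equiv}). I would show that closing all non-terminal, non-covering vertices preserves $\mu_G(A,B)$ for every $A,B \subseteq T$ up to threshold $c$, and creates no new separator of size $<c$. The projection from the resulting hypergraph back to a graph is done via \Cref{thm:nagamochi} applied to $\mathsf{Clique}(\mathcal{H})$, simulated implicitly to keep the runtime near-linear in the hypergraph size.

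The heart of the algorithm is the fast construction of the covering set, which proceeds through the \emph{reducing-or-covering partition-set pair} abstraction promised in \Cref{sec:reducing set}. The idea is that, at each step, we either (i) find a small vertex cut of size less than $c$ that ``reduces'' the problem (split $G$ along the cut and recurse on each side with the cut vertices added to both $T$ and $C$) or (ii) certify that a moderately small candidate set $C'$ already covers every remaining minimum weak separator among terminals, exploiting a vertex analogue of isolating-cut / Steiner-mincut enumeration. Each such cut-finding step invokes the LocalVC primitive and maximal isolating-cut computation on the current hypergraph, costing $m^{1+o(1)}$ per level; the $2^{O(c^2)}$ factor arises because we must enumerate essentially all $2^{O(c)}$ structurally distinct minimum vertex cuts across $O(c)$ phases to guarantee covering. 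Combined with the $O(\log n)$ outer rounds, the overall running time is $m^{1+o(1)} 2^{O(c^2)}$, matching the statement.

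The main obstacle, and the reason this adaptation from edge to vertex cuts is delicate, is in step (ii): for edge cuts one can restrict attention to connected-side cuts and enumerate them efficiently, but vertex cuts $(L,S,R)$ need not have $G[L]$ connected, so the cut-enumeration shortcut of \cite{chalermsook2021vertex} fails. My plan, following the framework sketched around \Cref{def:roc partition}, is to enforce connectivity only on $G[L \cup S]$ (which is the right relaxation for vertex cuts) and to manage the ensuing combinatorial overlaps via the \roc \psetpair structure, which simultaneously encodes the partition induced by many small separators and the set of separator vertices, enabling amortized near-linear updates through the hypergraph closure oracle. Verifying the invariants of this object across recursion levels, and confirming that the size and time recurrences close to give $|V(H)| = O(|T|\cdot 2^{O(c^2)})$ in $m^{1+o(1)}\cdot 2^{O(c^2)}$ total time, is what the subsequent sections carry out in detail.
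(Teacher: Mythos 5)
Your high-level blueprint matches the paper's: reduce sparsifier construction to a $(T,c)$-covering set, close every non-terminal non-covering vertex, implement closures via the hypergraph lift (\Cref{obs:hypergraph clique equiv}, \Cref{lem:static neighborhood oracle}), and finish with \Cref{thm:nagamochi} --- this is exactly the reduction of \Cref{lem:reduction tc covering}. The gap is in the crux, namely the construction of the covering set itself (\Cref{thm:fast covering set}), where your sketch is both thin and in one place contrary to what is possible.

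You propose to ``enforce connectivity only on $G[L \cup S]$'' and enumerate such cuts, calling this ``the right relaxation for vertex cuts.'' That is precisely what the paper's footnote warns does \emph{not} lead to a fast algorithm: cuts with $G[L\cup S]$ connected do suffice in principle, but no fast enumeration of them is known, which is the reason the cut-enumeration route of \cite{chalermsook2021vertex} is abandoned for vertex cuts. The construction in \Cref{sec:covering set,sec:reducing set} does not enumerate cuts at all. It runs a \emph{two-parameter} recursion: in addition to the cut-size budget $c$, there is a second budget $\cbar$ bounding the number of non-terminal vertices that a min weak separator may contain (see the $(T,c,\cbar)$-covering sets used in \Cref{alg:tcccovering}). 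The \roc notion of \Cref{def:roc partition} is engineered exactly so that a single call of \Cref{thm:reducing-or-covering set system} yields a partition-set pair with the guarantee that, for every $A,B\subseteq T$, either some min $(A,B)$-weak separator is already covered, or the subproblem on some $G[N[X_i]]$ strictly decreases one of the two budgets (\Cref{lem:a reduction to reducing set system}). Your single ``reduce-or-cover'' dichotomy lacks the $\cbar$ budget; without it you have no argument that the recursion terminates in depth $O(c)$ rather than $\Theta(n)$, and no route to the claimed size and time bounds. Relatedly, your explanation of the $2^{O(c^2)}$ factor (``enumerate all $2^{O(c)}$ structurally distinct minimum vertex cuts across $O(c)$ phases'') does not correspond to anything in the construction; in the paper it comes from the base case $|T|\le 4c^2$ in \Cref{alg:tcccovering}, where one stores a min weak separator for each pair $A,B\subseteq T$, combined with the recurrence analysis in \Cref{claim:boring induction recurrence}. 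The intermediate $O(|T|\,2^{O(c)})$-size covering set you invoke is not supported by that analysis and has no justification in the proposal.
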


Recall that the terminal set $T$ in $(c,\phi)$-\xorterminal is of size
at most $n^{1+o(1)}\phi$ (\Cref{thm:fast xorterminal pair}). After applying \Cref{thm:vertex sparsifier}
using parameters $(G,T,c)$, we obtain a $(T,c)$-sparsifier $H$ for
$G$ where $|E(H)| \leq c|V(H)|$ and $|V(H)| = O(|T| 2^{O(c^2)}) =
O(n^{1+o(1)} \phi 2^{O(c^2)})$. The key claim is that $G$ is
$c$-connected if and only if $H$ is $c$-connected.  Furthermore,  if
we set $\phi^{-1} = 10n^{o(1)}2^{O(c^2)}$, then we have that $|V(H)|
\leq n/10$ and $|E(H)| \leq nc/10$. Therefore, we can recurse on graph $H$ to decide
$c$-connectivity of $G$. This can repeat at most $O(\log n)$ time, and
by the choice of $\phi$, each iteration takes $O( m^{1+o(1)} +
n^{1+o(1)} c^{O(c)} + m^{1+o(1)} 2^{O(c^2)}) = O(m^{1+o(1)} 2^{O(c^2)})$.

We are now ready to prove \Cref{thm:main}. We describe the algorithm and analyze correctness and its running time. We start with the algorithm description. Recall that  $\textsc{ExpandersOrTerminal}$ is the algorithm in \Cref{thm:fast xorterminal pair}; and $\textsc{VertexSparsify}$ is the algorithm in \Cref{thm:vertex sparsifier}. We use $\phi^{-1} = 10n^{o(1)}2^{O(c^2)}$. If the graph is $\phi$-vertex expander, we can decide $c$-connectivity efficiently using \Cref{pro:expander unbalanced}. The algorithm is described in \Cref{alg:main vertex conn alg}.  %

\begin{algorithm}[H]
  \DontPrintSemicolon
  \KwIn{ A graph $G = (V,E)$, and connectivity parameter $c$} 
  \KwOut{ A separator of size $<c$ or $\bot$ certifying that $G$ is $c$-connected.}%
  \BlankLine
  \lIf{ $|V| \leq 100 c$}
    { output $\bot$ if $\kappa_G \geq c$, and output a separator of size $<c$ otherwise.} 
  \lIf{\normalfont{min}-degree of $G < c$} {\Return{\normalfont{the} set of neighbors of the min degree vertex}. \label{line:mindeg}} 
  Call $\textsc{ExpandersOrTerminal}(G,c,\phi)$ where
  $\phi^{-1}= 10n^{o(1)}2^{O(c^2)}$\;
  \lIf{\normalfont{a} separator $Z$ is returned} {\Return{$Z$.}} %
  Otherwise, $(c,\phi)$-expanders-or-terminal pair  $(\calG,T)$ is returned.\;
  \lIf{$\exists H \in \mathcal{G}$ \normalfont{s.t.} $H$ is not $c$-connected}
  {\Return{\normalfont{the} corresponding min separator.} \label{line:expander set}}
  $T \gets \bigcup_{v  \in T} N^{c}_G(v)$ where   $N^{c}_G(v)$ is an arbitrary set of $c$ neighbors of $v$ \;
  $H \gets \textsc{VertexSparsify}(G,T,c)$ \; 
  \Return{\normalfont{\textsc{Main}}$(H, c)$} 
\caption{\textsc{Main}$(G, c)$}
\label{alg:main vertex conn alg}
\end{algorithm}

\paragraph{Correctness.} We prove that \Cref{alg:main vertex conn alg}
outputs a separator of size $<c$ if and only if $G$ is not $c$-connected. We first prove that  if $G$ is
$c$-connected, then   \Cref{alg:main vertex conn alg} returns $\bot$. By \Cref{thm:fast xorterminal pair}, we have that 
$\textsc{ExpandersOrTerminal}(G,c,\phi)$ must return an
$(\phi,c)$-\xorterminal $(\calG,T)$ for $G$. By
\Cref{def:expander-or-terminal family}, every graph $H \in \calG$ is
$c$-connected.  By \Cref{thm:vertex sparsifier},
$\textsc{VertexSparsify}(G,T,c)$, that outputs a $(T,c)$-sparsifier
$H$ for $G$.  By \Cref{def:mimicking network}, $\kappa_H \geq
\min\{c,\kappa_G \} \geq c$. Therefore, $H$ is also
$c$-connected. \Cref{alg:main vertex conn alg} then recurses on
$H$. By the choice of $\phi$,  we have $|V(H)| \leq n/10$. The process
repeats for at most $O(\log n)$ time, and eventually the graph in the
base case is $c$-connected. Therefore, \Cref{alg:main vertex conn alg}
returns $\bot$.

Next, we prove that if $G$ is not $c$-connected, then the
\Cref{alg:main vertex conn alg} returns a separator of size $<c$. By \Cref{thm:fast xorterminal pair},  if
$\textsc{ExpandersOrTerminal}(G,c,\phi)$ returns a separator, then $G$ is not $c$-connected, and we are done.  Now assume that
$\textsc{ExpandersOrTerminal}(G,c,\phi)$ returns an
$(\phi,c)$-\xorterminal $(\calG,T)$ for $G$ (\Cref{def:expander-or-terminal family}).  If one of the graph in
$\calG$ is not $c$-connected, then the algorithm returns the corresponding min separator in $H$ of size $<c$. Since every graph in $\calG$ is $c$-mincut-recoverable (\Cref{def:cut recoverable}), the same separator is also a separator in $G$ and we
are done. Now assume that every graph in $\calG$ is $c$-connected.
Since  $(\calG,T)$ is a $(\phi,c)$-\xorterminal, there is $x,y \in T$
such that $\kappa_G(x,y) < c$.  By \Cref{thm:vertex sparsifier},
$\textsc{VertexSparsify}(G,T,c)$, that outputs a $(T,c)$-sparsifier
$H$ for $G$. Observe that, prior to running $\textsc{VertexSparsify}(G,T,c)$, $T$ is updated to be $\bigcup_{v \in T}N_G^c(v)$ where $N^{c}_G(v)$ be an arbitrary set of $c$ neighbors of $v$. 

\begin{claim}
 $H$ is not $c$-connected.   
\end{claim}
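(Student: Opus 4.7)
\emph{Plan.} The plan is to exploit the fact that some pair $x,y \in T$ must have $\kappa_G(x,y) < c$, to lift the corresponding vertex cut of $G$ to an $(A,B)$-weak separator between suitable subsets of the enlarged terminal set $T'$, and then to apply the $(T',c)$-equivalence of $H$ and $G$ together with the identity $|A|=|B|=c$ to extract an honest separator of $H$ of size less than $c$.

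First I would extract the offending cut in $G$. Since $(\calG,T)$ is a $(c,\phi)$-\xorterminal and, by assumption, every graph in $\calG$ is $c$-connected while $G$ is not, \Cref{def:expander-or-terminal family} forces the existence of two terminals $x,y \in T$ with $\kappa_G(x,y) < c$; let $(L,S,R)$ be the corresponding vertex cut, with $x \in L$, $y \in R$, and $|S| < c$. Because the algorithm has already handled the case where the minimum degree is less than $c$, every vertex has at least $c$ neighbors, so $A := N^c_G(x)$ and $B := N^c_G(y)$ both have exactly $c$ elements, and by construction $A, B \subseteq T'$. Since there are no edges between $L$ and $R$, we have $A \subseteq N_G(x) \subseteq L \cup S$ and symmetrically $B \subseteq R \cup S$; hence $A \setminus S \subseteq L$ and $B \setminus S \subseteq R$, so $G - S$ contains no $(A,B)$-path. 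Thus $S$ is an $(A,B)$-weak separator in $G$ of size less than $c$, and $\mu_G(A,B) \leq |S| < c$.

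Next I would transfer this bound to $H$ via \Cref{def:tc equivalent}: because $A, B \subseteq T'$ and $H$ is a $(T',c)$-sparsifier for $G$, $\min\{\mu_H(A,B),c\} = \min\{\mu_G(A,B),c\} < c$, which produces an $(A,B)$-weak separator $S_H$ in $H$ with $|S_H| < c$. The remaining task is to promote $S_H$ to a genuine vertex separator of $H$. Since $|A|=|B|=c>|S_H|$, both $A \setminus S_H$ and $B \setminus S_H$ are nonempty, and because $S_H$ kills every $(A,B)$-path in $H$ these two nonempty sets must lie in distinct connected components of $H - S_H$; therefore $H - S_H$ is disconnected and $\kappa_H \leq |S_H| < c$. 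The only step that is not entirely routine is this last promotion, and it is exactly what makes the enlargement $T \mapsto T' = \bigcup_{v \in T} N^c_G(v)$ essential: with the original $T$ the equivalence would only yield $\mu_H(\{x\},\{y\}) \leq 1$, which tells us nothing about $\kappa_H$.
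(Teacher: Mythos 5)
Your proof is correct and takes essentially the same approach as the paper's: lift the offending vertex cut $(L,S,R)$ to an $(A,B)$-weak separator for suitable $A,B\subseteq T'$, transfer this bound to $H$ via $(T',c)$-equivalence, and use that $|A|,|B|\ge c>|S_H|$ to conclude $H-S_H$ is disconnected. The only cosmetic difference is that you take $A=N^c_G(x)$ and $B=N^c_G(y)$ of size exactly $c$, whereas the paper takes $A=T'\cap(L\cup S)$ and $B=T'\cap(S\cup R)$ of size at least $c+1$; both work since $|S_H|\le c-1$ (and both implicitly rely on the easy fact that $A\setminus S_H$ and $B\setminus S_H$ are disjoint, since a common vertex would itself be a trivial $(A,B)$-path in $H-S_H$).
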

\begin{proof}
  Since $\kappa_G(x,y) < c$, there is a vertex cut $(L,S,R)$ in $G$
  where $x \in L\cap T$ and $y \in R \cap T$ and $|S| < c$. Since
  $N^c_G(x) \subseteq T$ and $N^c_G(y) \subseteq T$, we have that $|T
  \cap (L\cup S)| \geq c+1$ and $|T \cap (S \cup R)| \geq c+1$. Let $A
  = T \cap (L \cup S)$ and  $B = T \cap (S \cup R)$. We have that $S$ is an $(A,B)$-weak separator in $G$ and thus $\mu_G(A,B)
  \leq |S| < c$.   Since $H$
  and $G$ are $(T,c)$-equivalent, we have that
  $$ \min\{\mu_H(A,B),c\} = \min\{\mu_G(A,B),c\} = \mu_G(A,B) < c.$$
  Therefore, $\mu_H(A,B) < c$, and thus there is a separator $S'$ such
  that $H - S'$ does not have a path from $A$ to $B$ (note that $S'$ may contain $A$ and $B$).  Since $|S'| <
  c$, and $|A| \geq c+1, |B| \geq c+1$, $A$ is not contained entirely
  in $S'$, and also $B$ is not contained entirely in $S'$. Therefore,
  there are $x' \in A - S'$ and $y' \in B - S'$ such that there is no
  path from $x'$ to $y'$ in   $H - S'$. So, $S'$ is a separator of
  size $<c$ in $H$. 
\end{proof}
Since $H$ is not $c$-connected and the algorithm recurses on $H$, and
repeat for $O(\log n)$ time. It eventually hits the base case where
the input graph is not $c$-connected. The base case will return a separator $S$ of size $< c$. Since $H$ is a $(T,c)$-sparsifier for $G$, $H$ is $c$-cut-recoverable for $G$. Since $c$-cut-recoverable property is transitive (i.e., if $A$ is $c$-cut-recoverable for $B$ and $B$ is $c$-cut-recoverable for $C$, then $A$ is $c$-cut-recoverable for $C$), we conclude that $S$ is also a separator in $G$. Therefore, \Cref{alg:main vertex conn alg} must output a separator of size $<c$.

\paragraph{Running Time.} Let $n$ be the number of vertices and $m$ be the number of edges of the input graph $G$. By \Cref{line:mindeg}, we can assume that $m \geq nc$. By design, we set $\phi^{-1} = 10n^{o(1)} 2^{O(c^2)}$. By \Cref{thm:fast xorterminal pair}, $\textsc{ExpandersOrTerminal}(G,c,\phi)$ takes $O(m^{1+o(1)}/\phi) = O(m^{1+o(1)} 2^{O(c^2)})$ time. If a separator is returned, we are done. Now, assume $(c,\phi)$-\xorterminal $(\calG,T)$ of $G$ is returned. By \Cref{thm:fast xorterminal pair}, every graph in $\calG$ is $\phi$-vertex expander where $\sum_{H \in \calG} |V(H)| = O(n^{1+o(1)})$ and also $|T| \leq n^{1+o(1)}\phi$. By \Cref{pro:expander unbalanced}, we can solve $c$-connectivity of every $\phi$-expander in $\calG$ in $O( \sum_{H \in \calG} |V(H)| \cdot c^c \phi^{-2}) = O(n^{1+o(1)} 2^{O(c^2)})$ time. By \Cref{thm:vertex sparsifier}, $\textsc{VertexSparsify}(G,T,c)$, that outputs a $(T,c)$-sparsifier $H$ for $G$ in $O(m^{1+o(1)} 2^{O(c^2)})$ time. Finally, we prove that $|E(H)| \leq m/10$ and $|V(H)| \leq n/10$. By \Cref{thm:vertex sparsifier}, we have $|E(H)| \leq |V(H)|c$ and $|V(H)| = O(|T| 2^{O(c^2)})$. Since $|T| \leq n^{1+o(1)} \phi$, and $\phi^{-1} = 10n^{o(1)}2^{O(c^2)}$ (with appropriate parameters), we have that $|V(H)| \leq n/10$, and also $|E(H)| \leq |V(H)|c \leq m/10$. Therefore, we repeat at most $O(\log n)$ time where each time takes total $O(m^{1+o(1)}2^{O(c^2)})$.

\section{Expanders or Terminals} \label{sec:xorterminals}

This section is devoted to proving \Cref{thm:fast xorterminal pair} which is the basis of our approach.   We organize the proof into four subsections. We give an overview of the proof in \Cref{sec:overview}. In \Cref{sec:recursive vertex cuts}, we prove structural lemma for vertex cuts. In \Cref{sec:slow xor alg}, we show a recursive algorithm that computes the object from \Cref{def:expander-or-terminal family}. In \Cref{sec:fast implementaion of xor}, we show that the algorithm can be implemented in almost linear-time using the notion of most balanced sparse cuts.

\subsection{Overview} \label{sec:overview}

 Our new insight for proving \Cref{thm:fast xorterminal pair} is the following structural lemma about vertex cuts: 
 
 \begin{lemma} [Informal] \label{lem:structure vc glgr informal}
  For any vertex cut $(L,S,R)$ of $G$, consider forming the graph $G_L$ by 
  \begin{enumerate}[noitemsep, nolistsep]
      \item  removing all vertices of $R$,
      \item replacing $R$ with a clique $K_R$ of size $c$, and 
      \item adding a biclique between $S$ and $K_R$,
  \end{enumerate}
  as well as $G_R$ symmetrically. Then, $G$ is $c$-connected if and only if 
  \begin{enumerate} [noitemsep,nolistsep]
      \item for all pair $x,y \in S$, $\kappa(x,y) \geq c$, and  %
      \item both $G_L$ and $G_R$ are $c$-connected. 
  \end{enumerate}
 \end{lemma}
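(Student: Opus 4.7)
I would prove both directions of the biconditional by the same ``cut-lifting'' correspondence: a small vertex cut on one of $G, G_L, G_R$ induces a small vertex cut on the others, with condition (1) playing the role of ruling out cuts that slice through $S$. The forward direction is straightforward: (1) follows immediately from $\kappa_G \geq c$, and for (2) I would argue by contrapositive. Given any cut $(A',B',C')$ of $G_L$ with $|B'|<c$, the two key facts are (a)~$K_R$ is a clique of size exactly $c$, so $K_R\setminus B'$ is nonempty and lies entirely on one side, WLOG $\subseteq A'$, and (b)~the biclique between $S$ and $K_R$ drags $S\setminus B'\subseteq A'$ as well, forcing $C'\subseteq L$. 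The triple $\bigl((A'\cap(L\cup S))\cup R,\; B'\setminus K_R,\; C'\bigr)$ is then a vertex cut of $G$ of size $\leq|B'|<c$: the only edge check is that $L$-vertices have no $G$-edges to $R$ (by the hypothesis on $(L,S,R)$) and that $G[L\cup S]=G_L[L\cup S]$, so edges out of $C'$ still avoid $A'$ in $G$. The argument for $G_R$ is symmetric.

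\textbf{Backward direction.} Assuming (1) and (2), suppose for contradiction $G$ has a cut $(A,B,C)$ with $|B|<c$. Condition (1) implies $B$ cannot separate any two vertices of $S\setminus B$, so $S\setminus B$ lies on one side, WLOG $S\setminus B\subseteq A$. If $S\setminus B=\emptyset$, then $|S|\leq|B|<c$, and $(L,S,K_R)$ is already a vertex cut of $G_L$ of size $<c$, contradicting (2). Otherwise $C\cap S=\emptyset$, hence $C\subseteq L\cup R$, and I would case-split on whether $C$ meets $L$ or $R$. If $C\cap L\neq\emptyset$, I would transfer the cut to $G_L$ via $\bigl((A\cap(L\cup S))\cup K_R,\; B\cap(L\cup S),\; C\cap L\bigr)$; the edge check again reduces to the fact that $L$-vertices have no $G_L$-edges to $K_R$ and that edges inside $L\cup S$ are inherited from $G$, so $N_{G_L}(C\cap L)\subseteq N_G(C)\cap(L\cup S)$ stays inside $B_L\cup C_L$. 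The case $C\cap R\neq\emptyset$ is symmetric in $G_R$. Either way, one of the graphs in (2) has a cut of size $<c$, a contradiction.

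\textbf{Main obstacle.} The subtle structural point is calibrating the gadget ``clique $K_R$ of size $c$ joined to $S$ by a biclique'' to have exactly the right strength: it must be large enough that no separator of size $<c$ can shatter $K_R$ (hence size $=c$), and it must pin $S$ on the same side as $K_R$ in any small cut of $G_L$ (hence the full biclique). Together these force any small cut of $G_L$ to arise from a cut of $G$ supported on the $L$-side, which is what makes the lifting clean; weakening either ingredient would permit cuts of $G_L$ that fail to project back to $G$. The remaining work is routine bookkeeping: verifying that the constructed triples are valid partitions into nonempty parts, and handling degenerate cases such as $S\subseteq B$ or an empty separator after projection, each of which only strengthens the contradiction with $c$-connectivity.
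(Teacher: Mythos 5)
Your proof is correct and mirrors the paper's argument for the formal version (\Cref{lem:dacforvc}, established via \Cref{lem:forward} and \Cref{lem:backward}): both directions project a small separator between $G$, $G_L$, $G_R$, using the size-$c$ clique $K_R$ together with the $S$--$K_R$ biclique to pin the separator on one side, and the observation that $S\setminus B$ cannot straddle a small $G$-cut when (1) holds. The one small divergence is in lifting a small $G_L$-cut $(A',B',C')$ back to $G$: you subtract $K_R$ from the separator (taking $B'\setminus K_R$ as the new, possibly smaller $G$-separator), which works for an \emph{arbitrary} small cut of $G_L$; the paper instead fixes a \emph{minimum} $(s,t)$-separator $S'$ and uses minimality to show $K_R\cap S'=\emptyset$ so that $S'$ itself transports, because the formal lemma also asserts $c$-mincut-recoverability of $G_L$ and $G_R$, a slightly stronger guarantee that your argument does not give but that the informal iff you were asked to prove does not require.
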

 
  See \Cref{sec:recursive vertex cuts} for the formal statement and its proof. The structural lemma suggests a divide-and-conquer algorithm for computing $(c,\phi)$-\xorterminal for $G$. In the base case, if $G$ is already a $\phi$-expander, then we return $(\{G\},\{\})$. Otherwise, $G$ has $\phi$-sparse cuts. Among of all $\phi$-sparse cuts, we select the one $(L,S,R)$ which maximizes $\min\{|L|, |R|\}+|S|$ with some $n^{o(1)}$ approximation factor from the most balanced one.  From an approximate most balanced cut $(L,S,R)$, we construct $G_L$ and $G_R$ as defined in \Cref{lem:structure vc glgr informal},
  and recurse on $G_L$ and $G_R$. 
  Let $(\calG_L,T_L)$ be a $(c,\phi)$-\xorterminal for $G_L$ and let $(\calG_R,T_R)$ be a $(c,\phi)$-\xorterminal for $G_R$. Then, by applying \Cref{lem:structure vc glgr informal}, we can prove that $(\calG_L \cup \calG_R, T_L \cup T_R \cup S)$ is also a $(c,\phi)$-\xorterminal for $G$. We refer to \Cref{sec:slow xor alg} for more details.  To bound the running time, suppose that we can compute $n^{o(1)}$-approximate most balanced cut in $O(m^{1+o(1)}/\phi)$ time (see \Cref{lem:balancedorexpander}). 
  Given this subroutine, we can show that the recursion depth is $O(\log n)$ using a standard argument, and therefore the total running time is $O(m^{1+o(1)}/\phi)$. Furthermore, we should expect the size of terminal set to be $n^{1+o(1)}\phi$ because we only add the new terminals from  $\phi$-sparse cuts. 

\subsection{A Divide-and-Conquer Lemma for Vertex Cuts} \label{sec:recursive vertex cuts}

\begin{definition} [$c$-Left and $c$-Right Graphs] \label{def:cleftrightgraphs}
  Let $(L,S,R)$ be a vertex cut in graph $G = (V,E)$ and $c >0$ be a
  parameter such that $|R| \geq |L| > c$. We define
  $c$-\textit{left graph}  $G_L$ (w.r.t. $(L,S,R)$) as the graph $G$
  after the following operations:
  \begin{enumerate} [noitemsep,nolistsep]
  \item Let $K_R \subseteq R$ be an arbitrary subset of $c$ vertices
    of $R$, 
  \item remove vertices in $R - K_R$, 
  \item for all pair $x,y \in K_R$, add edge $(x,y)$, 
  \item for all $x \in S$ and $y \in K_R$, add edge $(x,y)$, 
  \item (optional) remove all edges in $E(S,S)$. 
  \end{enumerate}
  Intuitively, $G_L$ is $G$ after replacing $R$ with a clique of size
  $c$ followed by adding edges between the clique and
  $S$. Furthermore, each vertex $K_R$ corresponds to one of the $c$
  distinct vertices in $R$.
  
  Similarly, we define $c$-\textit{right graph} $G_R$
  (w.r.t. $(L,S,R)$) as the graph $G$ after the following operations:
  \begin{enumerate} [noitemsep,nolistsep]
  \item Let $K_L \subseteq L$ be an arbitrary subset of $c$ vertices
    of $L$, 
  \item remove vertices in $L - K_L$, 
  \item for all pair $x,y \in K_L$, add edge $(x,y)$, 
  \item for all $x \in S$ and $y \in K_L$, add edge $(x,y)$, 
  \item (optional) remove all edges in $E(S,S)$. 
  \end{enumerate}
  Intuitively, $G_R$ is $G$ after replacing $L$ with a clique  of size
  $c$ followed by adding edges between the clique and
  $S$. Furthermore, each vertex $K_L$ corresponds to one of the $c$
  distinct vertices in $L$. 
\end{definition}

Observe that, by definition, $V(G_L) \subseteq V(G)$ and $V(G_R) \subseteq V(G)$.

\begin{lemma} [Divide-and-Conquer Lemma for Vertex Cuts] \label{lem:dacforvc}
  Given a vertex cut $(L,S,R)$ in $G = (V,E)$ and a parameter $c >0 $
  where $|R| \geq |L| \geq c$, 
  $G$ is $c$-connected if and only if it satisfies the following
  properties: %
  \begin{enumerate} 
  \item $\kappa_G(x,y) \geq c$ for all $x,y \in S$, and 
  \item Both of its $c$-left and $c$-right graphs (i.e., $G_L$ and $G_R$) are $c$-connected.
  \end{enumerate}
  Furthermore, $G_L$ and $G_R$ are $c$-mincut-recoverable for $G$. 
\end{lemma}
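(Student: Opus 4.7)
The plan is to prove both directions of the equivalence while simultaneously establishing the $c$-mincut-recoverability of $G_L$ and $G_R$. The forward direction of item (1) is immediate, and the forward direction of item (2) will follow from $c$-mincut-recoverability, so the main technical work is a single \emph{lifting claim}: for every $s,t \in V(G_L)$ and every $(s,t)$-separator $S''$ in $G_L$ with $|S''|<c$, the same set $S''$ is an $(s,t)$-separator in $G$ (and symmetrically for $G_R$). This is slightly stronger than what the definition of $c$-mincut-recoverable demands, and it also yields the forward direction of (2) immediately: any separator of size less than $c$ in $G_L$ lifts to one of the same size in $G$.

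To prove the lifting claim I would argue by contradiction. Suppose $S''$ separates $s,t$ in $G_L$ but there exists an $(s,t)$-path $P$ in $G-S''$. Since $|K_R|=c>|S''|$, I fix a vertex $w \in K_R \setminus S''$ to serve as a universal detour. I then extract the subsequence $s=u_0,u_1,\ldots,u_\ell=t$ consisting of the vertices of $P$ that lie in $V(G_L)=L \cup S \cup K_R$; by construction the intermediate vertices of $P$ between consecutive $u_i,u_{i+1}$ all lie in $R \setminus K_R$. Because $E_G(L,R)=\emptyset$, whenever this segment has intermediate vertices neither endpoint $u_i,u_{i+1}$ can lie in $L$, so both lie in $S \cup K_R$; by the biclique $S \times K_R$ and the clique on $K_R$ added in the construction of $G_L$, either the direct edge $u_iu_{i+1}$ is already present in $G_L$ or the two-edge walk from $u_i$ to $u_{i+1}$ through $w$ is. When there are no intermediate vertices, $u_iu_{i+1}$ is an edge of $G$ and belongs to $G_L$ unless it is an $E(S,S)$ edge that was optionally removed, in which case both $u_i,u_{i+1} \in S$ and the same detour through $w$ works. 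Concatenating these short walks produces an $(s,t)$-walk in $G_L-S''$, contradicting that $S''$ separates $s$ from $t$.

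For the backward direction I argue the contrapositive. Suppose $G$ has a separator $S^*$ with $|S^*|<c$, and pick two distinct components $C_1,C_2$ of $G-S^*$. If both components meet $S$, then one vertex chosen from each witnesses a pair $x,y \in S$ with $\kappa_G(x,y)<c$, violating (1). Otherwise, say $C_2 \cap S = \emptyset$; then $C_2 \subseteq L \cup R$, and since $E_G(L,R)=\emptyset$ forces every connected subgraph of $G[L \cup R]$ to lie entirely in $L$ or entirely in $R$, either $C_2 \subseteq L$ or $C_2 \subseteq R$. In the first case (the second being symmetric, using $G_R$), pick any $x \in C_2$ and any $w \in K_R \setminus S^*$, which exists because $|K_R|=c>|S^*|$; I claim $S^* \cap V(G_L)$ is an $(x,w)$-separator in $G_L$. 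Otherwise, a path from $x$ to $w$ in $G_L-S^*$ must eventually leave $L$, and since $G_L$ introduces no edges between $L$ and $K_R$, its first vertex $v$ outside $L$ must lie in $S$; the prefix of the path up to $v$ uses only edges of $G[L \cup \{v\}]$, which are preserved in $G_L$, and therefore witnesses a path from $x \in C_2$ to $v \in S \setminus S^*$ in $G-S^*$, contradicting $C_2 \cap S=\emptyset$.

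The main obstacle I anticipate is handling the optional removal of $E(S,S)$ edges in the lifting construction: since direct $S$-to-$S$ edges of $G$ may be absent from $G_L$, every replacement walk must be routed through some vertex of $K_R$, and it is essential that $|K_R|=c$ strictly exceeds $|S''|$ so that such a routing vertex survives. A minor technicality is that $s$ or $t$ may themselves lie in $K_R$ rather than in $L \cup S$, but this is handled uniformly by defining the extracted subsequence as the restriction of $P$ to $V(G_L)$, so the endpoints are treated like any other $K_R$-vertex encountered along the path.
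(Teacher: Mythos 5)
Your proposal is correct, and your lifting argument takes a genuinely different route from the paper's. The paper proves $c$-mincut-recoverability by analyzing the cut structure: given a \emph{minimum} $(s,t)$-separator $S'$ with a witnessing partition $(L',S',R')$ in $G_L$, it shows $K_R \subseteq L'$ or $K_R \subseteq R'$ (the clique can't be split, and minimality rules out $K_R \cap S' \neq \emptyset$), so replacing $K_R$ with $R$ preserves the cut. Minimality is essential there, and the optional removal of $E(S,S)$ edges requires a separate claim that $S$ also stays on one side. Your path-based argument replaces this entirely: you directly contradict the separator by routing any $(s,t)$-path of $G-S''$ through a single ``universal detour'' vertex $w \in K_R \setminus S''$, which exists precisely because $|K_R|=c>|S''|$. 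This buys you a strictly stronger statement than the definition demands (\emph{every} $(s,t)$-separator of size $<c$ lifts, not just minimum ones), handles the optional $E(S,S)$ removal uniformly with no extra casework, and immediately gives the forward direction of item (2) without appealing to the cut-recoverability indirection. Your backward direction is closer in spirit to the paper's: both reduce to showing that a small separator of $G$ restricts to a separator of $G_L$ or $G_R$ when it misses $S$ on one side. The paper explicitly exhibits the resulting vertex cut $((L^*-R)\cup K_R, S^*, R^*-R)$ in $G_L$; you argue by contradiction via the ``first vertex of a putative $(x,w)$-path that leaves $L$'' must be in $S$, which lands in $C_2$ and contradicts $C_2 \cap S = \emptyset$. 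Both are valid; yours has the slight advantage of dispensing with the separate case analysis on $|S| < c$ versus $|S| \ge c$ that the paper uses at the start of its forward proof.
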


We divide the proof of \Cref{lem:dacforvc} into two lemmas. Let
$(L,S,R)$ be a vertex cut in $G = (V,E)$,  $G_L$ be a $c$-left graph,
and $G_R$ be a $c$-right graph of $G$ w.r.t. $(L,S,R)$, respectively.

\begin{lemma} \label{lem:forward}
  If $G$ is not $c$-connected, then at least one of the properties in \Cref{lem:dacforvc} is false. 
\end{lemma}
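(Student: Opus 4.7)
The plan is to prove the contrapositive: starting from a vertex cut $(L^*, S^*, R^*)$ of $G$ with $|S^*| < c$, I will show that either property 1 fails or property 2 fails. The first step is a WLOG reduction using property 1. If both $S \cap L^*$ and $S \cap R^*$ are non-empty, then choosing $x \in S \cap L^*$ and $y \in S \cap R^*$ exhibits $S^*$ as an $(x,y)$-separator of size $<c$, witnessing $\kappa_G(x,y) < c$ and violating property 1. So I may assume, after possibly swapping $L^*$ and $R^*$, that $S \cap R^* = \emptyset$, equivalently $S \subseteq L^* \cup S^*$.

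The argument then splits on whether $R^* \cap L$ is non-empty. If $R^* \cap L \neq \emptyset$, I build a cut in $G_L$ by setting $L^*_L = (L^* \cap V(G_L)) \cup (K_R \setminus S^*)$, $S^*_L = S^* \cap V(G_L)$, and $R^*_L = V(G_L) \setminus L^*_L \setminus S^*_L$. Using $V(G_L) = L \cup S \cup K_R$ and the partition $V = L^* \cup S^* \cup R^*$, a short set-theoretic calculation collapses $R^*_L$ to $L \cap R^*$: the $K_R$ part is swallowed by $L^*_L \cup S^*_L$, and the assumption $S \cap R^* = \emptyset$ eliminates the $S$ part. Hence $R^*_L$ is non-empty by the case assumption, $L^*_L$ is non-empty because $|K_R \setminus S^*| \geq c - |S^*| > 0$, and $|S^*_L| \leq |S^*| < c$.

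In the complementary case $R^* \cap L = \emptyset$, combined with $R^* \cap S = \emptyset$ this forces $R^* \subseteq R$. I mirror the construction in $G_R$: set $L^*_R = (L^* \cap V(G_R)) \cup (K_L \setminus S^*)$, $S^*_R = S^* \cap V(G_R)$, and $R^*_R = V(G_R) \setminus L^*_R \setminus S^*_R$. The analogous calculation gives $R^*_R = R^*$, non-empty by hypothesis, while $L^*_R \supseteq K_L \setminus S^*$ is non-empty and $|S^*_R| \leq |S^*| < c$.

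The main obstacle is verifying that neither construction admits a crossing edge, since $G_L$ and $G_R$ carry added clique edges on $K_R$ (resp.\ $K_L$) and added biclique edges between $S$ and $K_R$ (resp.\ $K_L$). The construction is tuned to neutralise both additions: placing all of $K_R \setminus S^*$ on the $L^*_L$-side kills any clique edge across the cut, and the WLOG reduction $S \cap R^* = \emptyset$ forces $R^*_L \subseteq L$ to be disjoint from $S$, so no biclique edge crosses either. Original $G$-edges are ruled out by the two hypotheses that $(L^*, S^*, R^*)$ and $(L, S, R)$ are both vertex cuts of $G$: edges from $L^* \cap (L \cup S)$ to $L \cap R^*$ would contradict the first cut, and edges from $K_R \subseteq R$ to $L \cap R^* \subseteq L$ would contradict the second. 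The $G_R$-analysis is symmetric. This yields a cut in $G_L$ or $G_R$ of size $\leq |S^*| < c$, so property 2 fails.
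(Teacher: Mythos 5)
Your proof is correct and follows essentially the same route as the paper: use the hypothesis $\kappa_G(x,y)\ge c$ on pairs in $S$ to reduce to $S\cap R^*=\emptyset$, then case on whether $R^*$ meets $L$ or lies in $R$, and transplant the small cut $(L^*,S^*,R^*)$ into $G_L$ or $G_R$ by placing the clique vertices on one side. Your bookkeeping (taking $S^*_L=S^*\cap V(G_L)$ and putting $K_R\cap S^*$ into the separator side, and absorbing the $|S|<c$ case into the main construction rather than treating it separately) is a slightly tighter rendering of the paper's stated partition $((L^*-R)\cup K_R,\,S^*,\,R^*-R)$, but it is a refinement rather than a different argument.
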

\begin{lemma} \label{lem:backward}
  $G_L$ and $G_R$ are $c$-mincut-recoverable for $G$.  Therefore, if at least one of the properties in \Cref{lem:dacforvc} is false,  then $G$ is not $c$-connected.   %
\end{lemma}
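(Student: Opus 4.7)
The plan splits into two parts. First I will prove the assertion that $G_L$ (and symmetrically $G_R$) is $c$-mincut-recoverable for $G$. Second, I will derive the ``therefore'' clause by a short contrapositive: if property~1 fails there is already an $(x,y)$-separator of size $<c$ in $G$ witnessing $\kappa_G < c$; and if, say, $G_L$ is not $c$-connected, then any minimum $(s,t)$-separator $S'$ in $G_L$ of size less than $c$ for the witnessing pair lifts by $c$-mincut-recoverability to an $(s,t)$-separator of the same size in $G$, so again $\kappa_G < c$. So the bulk of the work is the first part.

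For the first part, I will fix any pair $s,t \in V(G_L) = L \cup S \cup K_R$ and any minimum $(s,t)$-separator $S'$ in $G_L$ with $|S'| < c$. The decisive structural fact is that $K_R$ is a clique of size exactly $c$ in $G_L$, so $S'$ must miss at least one vertex; I pick $r^* \in K_R \setminus S'$, which will act as a universal router. I then assume for contradiction the existence of an $(s,t)$-path $P = v_0 v_1 \cdots v_k$ in $G - S'$ and transform it into a walk in $G_L - S'$ from $s$ to $t$, contradicting the choice of $S'$.

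The transformation is the projection $T(v) := v$ if $v \in L \cup S \cup K_R$ and $T(v) := r^*$ if $v \in R \setminus K_R$; note that $T$ never lands in $S'$ because the vertices of $P$ avoid $S'$ and $r^* \notin S'$, and $T$ fixes both $s$ and $t$ since they lie in $V(G_L)$. A short case analysis on each edge $(v_i, v_{i+1})$ of $P$ then suffices: within $L$, or between $L$ and $S$, the edge survives in $G_L$ unchanged; edges touching $R \setminus K_R$ collapse one or both endpoints to $r^*$, and the required step lies in the added clique on $K_R$ or the added biclique $S \times K_R$; edges inside $S$ that may have been deleted by the optional removal of $E(S,S)$ are bridged by inserting $r^*$, once again via the biclique. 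After deleting immediate repetitions, the result is a walk in $G_L - S'$ from $s$ to $t$, which is the desired contradiction.

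I expect the main technical obstacle to be the combination of two subtleties: the optional removal of $E(S,S)$ in the construction of $G_L$, and the possibility that $s$ or $t$ itself lies in $K_R \subseteq R$. Both are dispatched by the same universal-router trick, since $r^*$ is adjacent in $G_L$ to every other vertex of $K_R$ and to every vertex of $S$; so any ``missing'' step in the projected sequence can be replaced by a two-hop detour through $r^*$, while $s$ and $t$ remain fixed by $T$. The argument for $G_R$ is entirely symmetric, using some $\ell^* \in K_L \setminus S'$ in the role of $r^*$.
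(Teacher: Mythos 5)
Your proof is correct but takes a genuinely different route from the paper's. The paper argues cut-side: it fixes a \emph{minimum} $(s,t)$-cut $(L',S',R')$ in $G_L$, first shows $K_R$ lies entirely in $L'$ or entirely in $R'$ (here the minimality of $S'$ is used, to rule out $K_R\cap S'\neq\emptyset$ by pushing such a vertex out of $S'$ and obtaining a smaller separator), and then ``unfolds'' $K_R$ back into $R$ to exhibit a vertex cut of $G$ with the same separator $S'$; a separate small claim ($K_R\subseteq L'\Rightarrow S\subseteq L'\cup S'$) is needed to dispose of the optional removal of $E(S,S)$. You argue path-side: you assume an $(s,t)$-path in $G-S'$ and project it into $G_L-S'$ by collapsing $R\setminus K_R$ onto a single router $r^*\in K_R\setminus S'$, whose existence follows merely from $|K_R|=c>|S'|$; your case analysis on edge types is complete, and the $r^*$ detour uniformly handles both $R\setminus K_R$ and the possibly-deleted $E(S,S)$ edges. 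A small dividend of your approach is that it never invokes minimality of $S'$, so it actually establishes the stronger fact that \emph{every} $(s,t)$-separator of size $<c$ in $G_L$ (not only minimum ones) lifts to $G$ — i.e.\ pairwise $c$-cut-recoverability, which subsumes the $c$-mincut-recoverability demanded by the lemma. The paper's cut-side argument, in exchange, produces the structural information about which side of $(L',S',R')$ contains $K_R$, which it reuses for the remark about removing $E(S,S)$; your argument reaches the same conclusion without that detour. Both are sound.
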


We now prove each lemma in turn. 
\begin{proof}[Proof of \Cref{lem:forward}]
  First, observe that we can assume $|S| \geq c$ and $\kappa_G(x,y) \geq c$  for all $x,y\in S$. This is because if $|S| < c$, then $G_L$ and $G_R$ are not $c$-connected (since $S$ is a separator in both $G_L$ and $G_R$) and we are done. 
  Also, if there is a pair of
  vertices $x,y \in S$ such that  $\kappa_G(x,y) < c$, then the first property of \Cref{lem:dacforvc} is false and we are done too.
  
  Now, since $G$ is not $c$-connected, $G$ has a vertex cut $(L^*,S^*, R^*)$ where $|S^*| < c$. 
  We will prove that $S^*$ is a separator in $G_L$ or $G_R$.
  \begin{claim} \label{claim:scaplstar}
  We have $S \cap L^* = \emptyset$  or $S \cap R^* = \emptyset$.  Also, the two sets $S \cap L^*$ and $S \cap R^*$ cannot be both empty.%
  \end{claim}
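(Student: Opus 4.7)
The plan is to use the two contextual facts already available at this point in the proof of \Cref{lem:forward}: (i) $|S| \geq c$, and (ii) $\kappa_G(x,y) \geq c$ for all $x,y \in S$. Both parts of the claim should follow by short contradiction arguments, with $(L^*,S^*,R^*)$ playing the role of a "forbidden" small separator.

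For the first part (at least one of $S \cap L^*$, $S \cap R^*$ is empty), I would argue by contradiction: suppose there exist $x \in S \cap L^*$ and $y \in S \cap R^*$. Since $L^*, S^*, R^*$ partition $V$ with no edges between $L^*$ and $R^*$, removing $S^*$ leaves no path from $x$ to $y$; also $x, y \notin S^*$ since they lie in $L^*$ and $R^*$ respectively. Hence $S^*$ is an $(x,y)$-separator of size $|S^*| < c$, contradicting $\kappa_G(x,y) \geq c$ (which applies since both $x,y \in S$).

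For the second part (not both empty), I would assume for contradiction that $S \cap L^* = \emptyset$ and $S \cap R^* = \emptyset$. Then $S \subseteq S^*$, so $|S| \leq |S^*| < c$, contradicting $|S| \geq c$.

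I don't foresee any real obstacle: the claim is essentially a bookkeeping consequence of the two assumptions we reduced to at the start of the proof of \Cref{lem:forward}. The only mildly delicate point is making sure the chosen witnesses $x,y$ actually lie outside $S^*$ (so that ``$(x,y)$-separator'' is the right notion rather than ``weak separator''), and this is immediate from the fact that $L^*, S^*, R^*$ form a partition.
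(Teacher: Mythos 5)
Your proposal is correct and follows essentially the same two contradiction arguments as the paper: for the first part, witnesses in $S\cap L^*$ and $S\cap R^*$ force $\kappa_G(x,y)\le|S^*|<c$ against the reduction $\kappa_G(x,y)\ge c$ for $x,y\in S$; for the second part, $S\subseteq S^*$ forces $c\le|S|\le|S^*|<c$. Your extra remark that $x,y\notin S^*$ (so the separator notion is the right one) is a small clarification of a step the paper leaves implicit.
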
 
  \begin{proof}
  We prove the first statement. Suppose the two sets are both non-empty. There are $u \in S \cap L^*$ and $v \in S \cap R^*$, and
   thus $u \in S, v \in S$. So, $\kappa_G(u,v) \geq c$. Since $u \in
   L^*$ and $v \in R^*$, we have $\kappa_G(u,v) \leq |S^*| < c$, a
   contradiction.  Next, we prove the second statement. Suppose the two sets are both
   empty. Then, $S \subseteq S^*$, and thus $c \leq |S| \leq |S^*| < c$, a contradiction. 
 \end{proof}
  We assume WLOG that $S \cap R^* = \emptyset$ and $S \cap L^* \neq
  \emptyset$. The case where $S \cap L^* =
  \emptyset$ is symmetric.  Since $S \cap R^* = \emptyset$, we have $L
  \cap R^* \neq \emptyset$ or $R \cap R^* \neq \emptyset$.  It remains
  to prove the following claim, see \Cref{fig:crossing-diagram2} for
  illustration.
  
\begin{figure}[!h]
\centering
\includegraphics[width=0.75\linewidth]{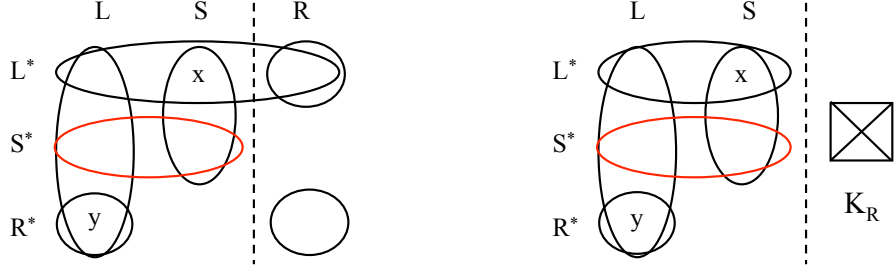}
  \caption{A crossing-diagram for two vertex cuts $(L,S,R)$ and
    $(L^*,S^*,R^*)$ before and after transformation from $G$ to
    $G_L$. If $L \cap R^* \neq \emptyset$, we show that $S^*$ is a vertex cut in $G_L$}
  \label{fig:crossing-diagram2}
\end{figure}

  \begin{claim}
    If $L \cap R^* \neq \emptyset$, then $S^*$ is a separator in
    $G_L$, and if $ R \cap R^* \neq \emptyset,$ then $S^*$ is a
    separator in $G_R$. 
  \end{claim}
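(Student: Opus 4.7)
Plan. My plan is to prove both subcases by the same strategy: exhibit in $G_L - S^*$ (respectively $G_R - S^*$) a nonempty vertex set that is a union of connected components, so that its complement, which I will also show to be nonempty, certifies that $S^*$ is a separator. Concretely, for the subcase $L \cap R^* \neq \emptyset$ I will take $U := L \cap R^*$ and argue inside $G_L$; for the subcase $R \cap R^* \neq \emptyset$ I will take $U' := R \cap R^*$ and argue inside $G_R$.

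Focusing on the first subcase, I will fix $u \in U$ and enumerate the possible neighbors $v$ of $u$ in $G_L - S^*$ by case-splitting on $v \in L$, $v \in S$, or $v \in K_R$. Edges of $G_L$ internal to $L \cup S$ are inherited from $G$, so since $(L^*, S^*, R^*)$ is a vertex cut of $G$ and $u \in R^*$, any such neighbor must lie in $R^* \cup S^*$; after removing $S^*$ this pins $v \in L \cap R^* = U$ when $v \in L$, and it yields a contradiction with $S \cap R^* = \emptyset$ when $v \in S$. For $v \in K_R$ I will invoke the construction of $G_L$: the only new edges incident to $K_R$ are the $K_R$-clique and the $S$-$K_R$ biclique, neither of which touches $L$, so $u$ has no neighbor in $K_R$. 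This shows $U$ is a union of components of $G_L - S^*$. For non-emptiness of the complement I will use $S \cap L^* \subseteq V(G_L) \setminus S^* \setminus U$, which was shown to be nonempty in \Cref{claim:scaplstar}. The second subcase is completely symmetric, with the one extra observation that when $S \cap L^*$ does not suffice I can use $K_L \setminus S^* \neq \emptyset$, which holds because $|K_L| = c > |S^*|$.

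The main obstacle. The subtlety is that $G_L$ contains brand new edges not present in $G$, namely the clique on $K_R$ and the biclique between $S$ and $K_R$, and these can actually create edges from the $L^*$ side to the $R^*$ side inside $G_L$ (for instance, whenever $K_R$ happens to contain some vertex of $R^*$ while $S \cap L^* \neq \emptyset$). So one cannot hope to argue that the whole partition $(L^*, S^*, R^*)$ restricts to a vertex cut of $G_L$. The way around this will be to choose a smaller witness for the separation: rather than separating all of $L^* \cap V(G_L)$ from all of $R^* \cap V(G_L)$, I only need to isolate the single block $L \cap R^*$, and this block is immune to the newly added edges precisely because $G_L$ has no edges at all between $L$ and $K_R$ by construction.
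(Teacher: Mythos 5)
Your proof is correct and takes essentially the same route as the paper's: both hinge on (i) edges of $G_L$ inherited from $G$ respecting the original cut $(L^*,S^*,R^*)$, (ii) the WLOG assumption $S\cap R^*=\emptyset$ ruling out neighbors in $S$, and (iii) the structural fact that $G_L$ has no edges between $L$ and $K_R$. The paper packages the argument by exhibiting the explicit tripartition $\bigl((L^*-R)\cup K_R,\; S^*,\; R^*-R\bigr)$ of $V(G_L)$, whereas you show $L\cap R^*$ is a union of connected components of $G_L-S^*$ with a nonempty complement (witnessed by $S\cap L^*$, which exists by \Cref{claim:scaplstar}); these are two phrasings of the identical observation.
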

  \begin{proof}
   We prove the case $L \cap R^* \neq \emptyset$.  Let $x \in S\cap
   L^*$ and $y \in L \cap R^*$. Recall that $K_R$
   is a clique that is added to $G_L$, and denote $K_R$ as the set
   of vertices of $K_R$. Observe that $( (L^* - R) \cup
   K_R, S^*, R^* - R)$ forms a partition of vertex set in $G_L$. It
   remains to prove that there is no edge from  $(L^* - R) \cup
   K_R$ to $R^* - R$ in $G_L$. If true, then the partition is a
   vertex cut in $G_L$ and we are done. Indeed,  since $(L^*,S^*, R^*)$ is a vertex cut in $G$, there is no edge from  $(L^* -R, R^*- R)$ in $G_L$. We next prove that there is no edge
  from $K_R$ to $R^* -R$.
  Let $v$ be an arbitrary vertex in $K_R$. By \Cref{def:cleftrightgraphs}, $N_{G_L}(v) \subseteq S
  \cup K_R$, which is disjoint from $L$.  But observe that $R^* -R \subseteq L$.%
  Therefore, there is no edge from $K_R$ to $R^* - R$, and we are done. The proof
  for the case $R \cap R^* \neq \emptyset$ is symmetric. 
  \end{proof}
  \textbf{Remark.} The same proof goes through if we further remove
  edges in $E_{G_L}(S,S)$ in $G_L$ (and also $E_{G_R}(S,S)$ in $G_R$). 
 \end{proof}

 \begin{proof}[Proof of \Cref{lem:backward}]
   We prove the result for $G_L$ (the argument for $G_R$ is symmetric). Fix $s,t \in V$ such that $\kappa_{G_L}(s,t) < c$.  Let $(L',S',R')$ be a vertex cut in $G_L$ where $|S'| = \kappa_{G_L}(s,t) < c, s \in L', t \in R'$. Recall
    that $K_R$ is the clique in $G_L$. %
    \begin{claim}
     $K_R \subseteq L'$ or $K_R \subseteq R'$.
   \end{claim}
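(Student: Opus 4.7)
The plan is to exploit the clique structure of $K_R$ in $G_L$ together with the size gap $|K_R| = c > |S'|$. Since $K_R$ is a clique (\Cref{def:cleftrightgraphs}) and the vertex cut $(L',S',R')$ has no edges between $L'$ and $R'$, no pair of $K_R$-vertices can be split across the two sides, while $|S'| < c = |K_R|$ prevents $K_R$ from being entirely absorbed into $S'$. Together, these should pin $K_R$ onto a single side of the cut.

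For the core step, I would argue by contradiction. Suppose both $K_R \cap L' \neq \emptyset$ and $K_R \cap R' \neq \emptyset$, and pick $u \in K_R \cap L'$ and $v \in K_R \cap R'$. Because $K_R$ is a clique, the edge $(u,v)$ lies in $E(G_L)$, contradicting $E_{G_L}(L',R') = \emptyset$. Hence $K_R \cap L' = \emptyset$ or $K_R \cap R' = \emptyset$. Since $|K_R| = c > |S'|$ also gives $K_R \not\subseteq S'$, at least one vertex of $K_R$ lies in $L' \cup R'$, and by the previous dichotomy all such vertices lie on a single side --- say $L'$ --- so $K_R \subseteq L' \cup S'$ with $K_R \cap L' \neq \emptyset$.

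To upgrade this to $K_R \subseteq L'$ as stated, I would argue $K_R \cap S' = \emptyset$ using the minimality of $S' = \kappa_{G_L}(s,t)$. Any candidate $v \in K_R \cap S'$ has $N_{G_L}(v) \subseteq S \cup K_R$ with its $K_R$-neighbors all on the $L'$-side or in $S'$; by choosing $(L',S',R')$ to be the canonical source-side minimum $(s,t)$-cut, the standard uncrossing argument for vertex cuts reassigns such a $v$ into $L'$ without enlarging $S'$, contradicting the minimality. The main technical obstacle is exactly this uncrossing --- ruling out interference from neighbors of $v$ in $S \cap R'$ --- which I expect to handle using the restricted neighborhood $N_{G_L}(v) \subseteq S \cup K_R$ of clique vertices together with the minimality of $|S'|$. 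Once the cleanup is in place, the dichotomy $K_R \subseteq L'$ or $K_R \subseteq R'$ follows.
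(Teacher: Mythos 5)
Your skeleton matches the paper's: the clique on $K_R$ forbids $K_R$ from meeting both $L'$ and $R'$, and $|K_R| = c > |S'|$ forbids $K_R \subseteq S'$, so (say) $K_R \subseteq L' \cup S'$ with $K_R \cap L' \neq \emptyset$. The remaining work is to show $K_R \cap S' = \emptyset$, and this is where your proposal stalls. You say you will ``handle'' the interference from neighbors of $v \in K_R \cap S'$ lying in $S \cap R'$ via a canonical source-side minimum cut and the minimality of $|S'|$, but neither tool addresses it: moving $v$ into $L'$ is not a valid cut modification whenever $v$ has a neighbor in $R'$, no matter which minimum $(s,t)$-cut you fixed, and moving $v$ into $R'$ is also blocked since $v$ is adjacent to the vertex in $K_R \cap L'$. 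The obstacle is correctly identified but genuinely unresolved.

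What actually closes the gap is the biclique $E_{G_L}(S, K_R)$ from \Cref{def:cleftrightgraphs}, which your proposal never invokes. Fix $u \in K_R \cap L'$. Every $w \in S$ is adjacent to $u$, so $w \in R'$ would yield an edge between $L'$ and $R'$, a contradiction; hence $S \cap R' = \emptyset$, i.e.\ $S \subseteq L' \cup S'$. Now for any $v \in K_R \cap S'$, $N_{G_L}(v) \subseteq (K_R \setminus \{v\}) \cup S \subseteq L' \cup S'$, so $(L' \cup \{v\}, S' \setminus \{v\}, R')$ is a strictly smaller $(s,t)$-cut, contradicting minimality of $S'$. This biclique observation (stated in \Cref{claim:krkl} in the symmetric form used for the remark) is the missing ingredient, and it works for an arbitrary minimum $(s,t)$-cut --- no canonical choice is needed.
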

   \begin{proof}
    We first show that $K_R \subseteq L'
    \cup S'$ or $K_R \subseteq S' \cup R'$. Since
    $|K_R| = c > |S'|$, we have $K_R \not \subseteq
    S'$. Since $K_R$ is a clique, $K_R$ cannot be in both
    $L'$ and $R'$. Therefore, $K_R \subseteq L'  \cup S'$ or $K_R \subseteq S' \cup R'$.

     Next we show that $K_R \cap S' = \emptyset$. Suppose
     otherwise. Let $v$ be an arbitrary vertex in $S' \cap
     K_R$. Since $K_R \subseteq L'  \cup S'$ or $K_R
     \subseteq S' \cup R'$, we have that  $N(v)
    \subseteq L' \cup S'$ or $N(v) \subseteq R' \cup  S'$.  If $N(v)
    \subseteq L' \cup S'$, then $(L' \cup \{v\}, S' - \{v\}, R')$ is a
     vertex cut. If $N(v) \subseteq R'
    \cup S'$, then $(L', S'- \{v\},R' \cup \{v\})$ is a vertex
    cut. Either way, $S' -\{v\}$ is an $(s,t)$-separator which is smaller than
    min $(s,t)$-separator, a contradiction.
  \end{proof}

      Since    $K_R \subseteq L'$ or $K_R \subseteq R'$, by
      replacing $K_R$ with the vertex set $R$, we have that $S'$ is
      also a separator in $G$. Furthermore, for any $x \in L'$  and $y \in R'$, we have that $S'$ is also
      $(x,y)$-separator in $G$.  This completes the proof of \Cref{lem:backward}.

      \textbf{Remark.} In addition, the same proof goes through even if we further remove
 edges in $E_{G_L}(S,S)$ in $G_L$ (and also $E_{G_R}(S,S)$ in $G_R$) because of the following claim: 
      \begin{claim} \label{claim:krkl}
    If $K_R \subseteq L'$, then $S \subseteq L' \cup S'$.
    Similarly, if $K_R \subseteq R'$, then $S \subseteq R' \cup S'$. 
  \end{claim}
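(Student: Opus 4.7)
The plan is a short proof by contradiction that exploits the single feature of the construction that matters here: step 4 of \Cref{def:cleftrightgraphs} puts a complete bipartite graph between $S$ and the clique $K_R$ in $G_L$. Assuming $K_R \subseteq L'$, I would suppose for contradiction that some $v \in S$ lies in $R'$. Then for any $u \in K_R$ we have $u \in L'$ (by hypothesis) and $(u,v) \in E(G_L)$ (by the biclique between $S$ and $K_R$), yielding an edge from $L'$ to $R'$ that contradicts the assumption that $(L',S',R')$ is a vertex cut of $G_L$. Hence $S \cap R' = \emptyset$, i.e.\ $S \subseteq L' \cup S'$, as required. The case $K_R \subseteq R'$ is completely symmetric, with the roles of $L'$ and $R'$ exchanged.

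Next I would spell out why this claim is exactly what the remark after \Cref{lem:backward} needs. The only step in the proof of \Cref{lem:backward} that could potentially fail when the edges $E_{G_L}(S,S)$ are deleted from $G_L$ is the lift from a separator in $G_L$ to a separator in $G$: an $S$-$S$ edge present in $G$ but absent from $G_L$ could a priori bridge the two sides of $(L',S',R')$ after we re-expand $K_R$ back to $R$. The claim rules this out precisely, because both endpoints of any such $S$-$S$ edge lie in $S \subseteq L' \cup S'$ (respectively $S \subseteq R' \cup S'$ in the symmetric case), so no $S$-$S$ edge of $G$ can cross the lifted partition. Thus $S'$ remains a separator in $G$, and the backward direction of \Cref{lem:dacforvc} goes through even for the variant of $G_L$ that deletes $E_{G_L}(S,S)$.

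There is no meaningful obstacle here: once one notices the biclique between $S$ and $K_R$, the argument is essentially a one-liner. The only care needed is to invoke step 4 of the construction (the biclique) rather than step 3 (the internal clique on $K_R$), and to recall the hypothesis $K_R \subseteq L'$ or $K_R \subseteq R'$, which was already established in the preceding part of \Cref{lem:backward} from $|K_R| = c > |S'|$ plus the fact that $K_R$ is a clique.
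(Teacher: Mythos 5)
Your proof of the claim is correct and matches the paper's argument essentially line for line: both derive a contradiction from the biclique edges between $S$ and $K_R$ (step 4 of \Cref{def:cleftrightgraphs}), which would yield an $L'$--$R'$ edge in $G_L$ if some vertex of $S$ fell in $R'$ while $K_R \subseteq L'$. Your surrounding discussion of why the claim is exactly what the remark after \Cref{lem:backward} needs is also accurate and consistent with the paper.
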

  \begin{proof}
    We prove the first statement. The second statement is similar. Suppose $S \not \subseteq L' \cup S'$. There is a vertex $r \in S \cap R'$. Since there are biclique edges between $S$ and $K_R$, there is an edge between $r$ and a vertex in $K_R$. Since $r \in R'$ and $K_R \subseteq L'$, there is an edge between $L'$ and $R'$ in $G_L$, a contradiction. %
  \end{proof}
    Observe that $G$ can be obtained from $G_L$ by replacing $K_R$ with the vertex set $R$ and  adding original edges $E'$ inside $S$. By \Cref{claim:krkl}, every edge $(u,v) \in E'$ does not end with $L'$ and $R'$ simultaneously.  Using the same argument as above, $S'$ is  a separator in $G$ (even we add edges inside $S$ back). %
\end{proof}

\subsection{An Algorithm for \Xorterminal} \label{sec:slow xor alg}

Observe that if $G$ is a $\phi$-vertex expander, then $(\{G\},\{\})$ is
$(c,\phi)$-\xorterminal of $G$. If $G$ is not a
$\phi$-expander, there is a $\phi$-sparse cut $(L,S,R)$. Then, we can
recurse on the $c$-left and $c$-right graphs of $G$
w.r.t. $(L,S,R)$ (\Cref{def:cleftrightgraphs}). At the end of recursion, we obtain a set of
$\phi$-expanders and a set of $\phi$-sparse cuts. By \Cref{lem:dacforvc}, we can
show the pair of sets above form a $(c,\phi)$-expander-or-terminal
family.  \Cref{alg:expander or terminal family} shows the detail of
the algorithm. %

\begin{algorithm}[H]
  \DontPrintSemicolon
  \KwIn{ A graph $G = (V,E)$, and connectivity parameter $c$ and $\phi
  \in (0,1/10), \bar \phi \geq \phi$ where $\phi \leq (2c)^{-1}$.} 
  \KwOut{A $(c,\phi)$-\xorterminal of $G$. }
  \BlankLine
   \lIf{$G$ \normalfont{is} a $\phi$-vertex expander}{
     \Return{$(\{G\},\{\})$}.}
   Let $(L,S,R)$ be a $\bar \phi$-sparse cut  in $G$.\; %
    Let $G_L$ and $G_R$ be $c$-left and $c$-right graphs of $G$
    w.r.t. $(L,S,R)$.\;
      $(\calG_L, T_L) \gets
      \textsc{SlowExpandersOrTerminals}(G_L,c,\phi,\bar \phi)$\;
       $(\calG_R, T_R) \gets
       \textsc{SlowExpandersOrTerminals}(G_R,c,\phi, \bar \phi)$\;
        \Return{$(\calG_L \cup \calG_R,  T_L \cup T_R  \cup  S )$}.  
\caption{\textsc{SlowExpandersOrTerminals}$(G,c,\phi,\bar \phi )$}
\label{alg:expander or terminal family}
\end{algorithm}

\begin{lemma} \label{lem:correctness expander or terminal}
  \Cref{alg:expander or terminal family} takes a graph $G = (V,E)$ where $m = |E|, n = |V|$ and
  $c,\phi$ as inputs and outputs a $(c,\phi)$-\xorterminal $(\calG,T)$ of
  $G$.   %
\end{lemma}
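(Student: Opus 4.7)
My plan is to proceed by induction on the recursion depth (equivalently, on $|V(G)|$ since each recursive call strictly shrinks the graph thanks to removing $R-K_R$ or $L-K_L$). The base case is trivial: when $G$ is a $\phi$-vertex expander, the pair $(\{G\},\emptyset)$ clearly satisfies \Cref{def:expander-or-terminal family} because $G$ is $c$-mincut-recoverable for itself, the terminal condition ``$\kappa_G(x,y)\ge c$ for all $x,y\in T$'' is vacuous, and the $c$-connectivity characterization reduces to the tautology ``$G$ is $c$-connected iff $G$ is $c$-connected''.

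For the inductive step, let $(L,S,R)$ be the $\bar\phi$-sparse cut chosen by the algorithm and let $G_L,G_R$ be the $c$-left and $c$-right graphs. By the inductive hypothesis, $(\calG_L,T_L)$ and $(\calG_R,T_R)$ are $(c,\phi)$-\xorterminal of $G_L$ and $G_R$ respectively. Since being a $\phi$-vertex expander is an intrinsic property of each $H$, the expander requirement on $\calG_L\cup\calG_R$ is immediate from the IH. To get $c$-mincut-recoverability of each $H\in\calG_L$ for $G$ (and symmetrically for $\calG_R$), I would observe that the proof of \Cref{lem:backward} actually shows the stronger $c$-cut-recoverable property — any vertex cut of size $<c$ in $G_L$ survives (after replacing $K_R$ by $R$) as a vertex cut in $G$ — and this stronger property composes transitively. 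Hence if $S^{*}$ is a min $(s,t)$-separator of size $<c$ in $H\in\calG_L$, then by IH it is an $(s,t)$-separator in $G_L$ of size $<c$, and by \Cref{lem:dacforvc} it is an $(s,t)$-separator in $G$, which gives the required $c$-mincut-recoverability for $G$.

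It remains to verify the iff of \Cref{def:expander-or-terminal family}. For the forward direction, if $G$ is $c$-connected then \Cref{lem:dacforvc} gives that $G_L$ and $G_R$ are $c$-connected; the IH then forces every $H\in\calG_L\cup\calG_R$ to be $c$-connected, while $\kappa_G(x,y)\ge c$ holds trivially for every pair, so in particular for $x,y\in T_L\cup T_R\cup S$. For the backward direction, assume every $H\in\calG_L\cup\calG_R$ is $c$-connected and $\kappa_G(x,y)\ge c$ for all $x,y\in T_L\cup T_R\cup S$. The condition on $S$ gives the first hypothesis of \Cref{lem:dacforvc}. To obtain the second, I need $G_L$ to be $c$-connected, and symmetrically $G_R$. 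By the IH applied to $G_L$, this reduces to showing $\kappa_{G_L}(x,y)\ge c$ for all $x,y\in T_L$. But $G_L$ is $c$-mincut-recoverable for $G$ (\Cref{lem:backward}): any $(x,y)$-separator of size $<c$ in $G_L$ is an $(x,y)$-separator in $G$, so $\kappa_{G_L}(x,y)<c$ would force $\kappa_G(x,y)<c$, contradicting the assumption on $T_L$. This closes the induction.

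The main obstacle I anticipate is the bookkeeping around the transitivity of $c$-mincut-recoverability: the definition is phrased only for \emph{minimum} $(s,t)$-separators, which on its face need not compose (a min separator in $H$ may not remain a min separator in $G_L$). I plan to bypass this by invoking the slightly stronger conclusion actually proved in \Cref{lem:backward} — namely that \emph{every} vertex cut of size $<c$ in $G_L$ yields one in $G$ — and likewise interpreting the IH via this stronger form. All other steps are direct applications of the divide-and-conquer lemma together with the inductive characterization of the two sub-instances.
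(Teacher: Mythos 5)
Your proof follows the paper's own inductive argument — induction on $|V(G)|$, \Cref{lem:dacforvc} in both directions, then the IH on the two recursive calls — so the core route is identical. The one place you go beyond the paper's write-up is the explicit verification that each $H\in\calG_L\cup\calG_R$ is $c$-mincut-recoverable for $G$ (required by \Cref{def:expander-or-terminal family} but not re-checked in the paper's inductive step, which only re-verifies the $\phi$-expander property), and you correctly observe that $c$-mincut-recoverability is not transitive on its face, so one cannot simply chain two applications of the definition. That is a genuine point of care worth raising.

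Your proposed fix is sound in spirit but slightly misattributed. You write that ``the proof of \Cref{lem:backward} actually shows the stronger $c$-cut-recoverable property.'' As written, that proof does invoke minimality: it rules out $K_R\cap S'\neq\emptyset$ by arguing that moving a vertex of $K_R\cap S'$ into $L'$ or $R'$ would produce a strictly smaller $(s,t)$-separator — a contradiction only if $S'$ was already minimum. So the stronger form (\emph{every}, not just every minimum, $(s,t)$-separator of size $<c$ in $G_L$ survives as an $(s,t)$-separator in $G$) is not established by the cited proof verbatim. Fortunately, the stronger conclusion is true and the fix is short: once one has $K_R\subseteq L'\cup S'$ (or symmetrically $K_R\subseteq S'\cup R'$), the bound $|K_R|=c>|S'|$ forces $K_R\cap L'\neq\emptyset$, and then the $S\times K_R$ biclique already forces $S\cap R'=\emptyset$, i.e.\ $S\subseteq L'\cup S'$, with no appeal to $K_R\cap S'=\emptyset$. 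From $S\subseteq L'\cup S'$ one gets $R'\subseteq L$, and $\bigl(L'\cup(R-K_R),\,S',\,R'\bigr)$ is then a vertex cut in $G$ separating $s$ from $t$ (no $G$-edge leaves $R'\subseteq L$ toward $R$ since $(L,S,R)$ is a cut, and no $G$-edge leaves $R'$ toward $L'\cap(L\cup S)$ since those edges already exist in $G_L$; edges re-inserted inside $S$ stay within $L'\cup S'$). With that small re-derivation in place of the citation, your chain — the IH gives that a min $(s,t)$-separator $S^*$ of $H$ of size $<c$ is an $(s,t)$-separator in $G_L$; the strengthened statement then lifts $S^*$ to an $(s,t)$-separator in $G$ — closes cleanly. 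So: same approach as the paper with one useful extra level of care, but replace the appeal to \Cref{lem:backward}'s written proof with the direct two-line argument above.
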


\begin{proof}
  We prove that \Cref{alg:expander or terminal
  family} outputs a $(c,\phi)$-\xorterminal of every
connected graph $G$ by induction on number of vertices $N$. Observe that
the parameters $c,\phi,\phibar$ are fixed throughout the algorithm. If $|V(G)| \leq 1/(2c)$, then $G$ is always a
$\phi$-expander since $\phi \leq 1/(2c)$, and thus the pair
$(\{G\},\{\})$ is $(c,\phi)$-\xorterminal of $G$.   For
all $N \geq 1/(2c)$, we now assume that \Cref{alg:expander or terminal family} outputs a
$(c,\phi)$-\xorterminal of every connected graph of
at most $N-1$ vertices, and prove that \Cref{alg:expander or terminal family} outputs a
$(c,\phi)$-\xorterminal of every connected graph of
at most $N$ vertices.

Let $G$ be a graph with $N$ vertices. If $G$ is a $\phi$-expander, we
are done. Otherwise, we are given a $\phibar$-sparse cut
$(L,S,R)$. Let $G_L$ and $G_R$ be the $c$-left and $c$-right graphs of
$G$ w.r.t. $(L,S,R)$, respectively.  We prove that $n_{G_L} \leq N-1$
and $n_{G_R} \leq N-1$.  Indeed, since $(L,S,R)$ is $\phibar$-sparse, $|L|+1
\geq \phibar^{-1} \geq 2c$, and thus $|R| \geq |L| \geq 2c -1 >
c$. Therefore, $n_{G_L} = c + |S|+|R| \leq |L|-1+|S|+|R| =
N-1$. Similarly, we have $n_{G_R} = |L|+|S|+c \leq |L|+|S|+|R|-1 =
N-1$.  By inductive hypothesis,  $(\calG_L,T_L)$ is
$(c,\phi)$-\xorterminal for $G_L$, and similarly
$(\calG_R,T_R)$ is $(c,\phi)$-\xorterminal for $G_R$. Thus, every
graph in $\calG_L$ and in $\calG_R$ is a $\phi$-vertex expander.  

It remains to prove that $(\calG_L \cup \calG_R, T_L \cup T_R \cup S)$
is a $(c,\phi)$-\xorterminal of $G$.  We first prove that if $G$ is $c$-connected, then both conditions in
\Cref{def:expander-or-terminal family} hold. Since $G$ is
$c$-connected, we have that for all $x,y \in V(G)$, $\kappa_G(x,y) \geq c$. In
particular, for all $x,y \in T_L \cup T_R \cup S$, $\kappa_G(x,y) \geq
c$. Next, we prove that every graph in $\calG_L \cup \calG_R$ is
$\phi$-vertex expander and $c$-connected.  Since $G$ is $c$-connected, \Cref{lem:dacforvc} implies
that $G_L$ and $G_R$ are $c$-connected and $|S| \geq c$.  Since  $(\calG_L,T_L)$ is
$(c,\phi)$-\xorterminal for $G_L$, and similarly
$(\calG_R,T_R)$ is $(c,\phi)$-\xorterminal for
$G_R$,  every graph $H \in \calG_R \cup \calG_L$ is $\phi$-expander, and $c$-connected.

Finally, we prove that if $G$ is not $c$-connected, then at least one of the
conditions \Cref{def:expander-or-terminal family} is violated for
$(\calG_L \cup \calG_R, T_L \cup T_R \cup S)$.  If $\kappa_G(x,y) <c$
for some $x,y \in S$, then we are done. Now assume that for all $x,y \in S, \kappa_G(x,y) \geq
c$.  So, \Cref{lem:dacforvc} implies that  $G_L$ is not
$c$-connected or $G_R$ is not $c$-connected.  Now, we assume WLOG that
$G_L$ is not $c$-connected.   Since  $(\calG_L,T_L)$ is $(c,\phi)$-\xorterminal of $G_L$ and $G_L$ is not
$c$-connected, at least the one of the conditions \Cref{def:expander-or-terminal family} is violated for
$(\calG_L, T_L)$ for $G_L$. If there is $H \in \calG_L$ that is not
$c$-connected, then we are done.  Now, assume that there is a pair
$x,y \in T_L$ such that $\kappa_{G_L}(x,y) < c$. Since $G_L$ is $c$-mincut recoverable,  we have
$ \kappa_G(x,y) < c$, and this completes the proof.
\end{proof}

\subsection{A Fast Implementation of \Cref{alg:expander or terminal
    family}} \label{sec:fast implementaion of xor}
We start with an important primitive for computing sparse cut. 

\begin{lemma} [\cite{LongS22}] \label{lem:balancedorexpander}
Let $G=(V,E)$ be an $n$-vertex $m$-edge graph. Given a parameter
$0<\phibar\le1/10$ and $1\le r\le\left\lfloor \log_{20}n\right\rfloor $,
there is a deterministic algorithm, $\textsc{BalancedCutOrExpander}(G,\phibar,r)$, that computes a vertex cut $(L,S,R)$
(but possible $L=S=\emptyset$) of $G$ such that $|S|\le\phibar|L\cup
S|$ which further satisfies 
\begin{itemize}
\item either $|L\cup S|,|R\cup S|\ge n/3$; or
\item $|R\cup S|\ge n/2$ and $G[R\cup S]$ is a $\phi$-vertex expander
for some $\phi=\phibar/\log^{O(r^{5})}m$.
\end{itemize}
The running time of this algorithm is
$O(m^{1+o(1)+O(1/r)}\log^{O(r^{4})}(m)/\phi)$. %
\end{lemma}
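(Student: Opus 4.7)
The plan is to follow the recursive cut-matching framework adapted to vertex expansion, using the parameter $r$ to trade polylogarithmic overhead in the approximation against polynomial slack $O(1/r)$ in the runtime exponent.

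First, I would establish a single-shot primitive analogous to the Khandekar--Rao--Vazirani cut-matching game, but for vertex expansion. The primitive takes $H$ and a target $\phibar$ and, in roughly $m^{1+o(1)}/\phibar$ time, returns either (a) a vertex cut $(L,S,R)$ with $|S|\le\phibar|L\cup S|$ and $|R\cup S|\ge |V(H)|/2$, or (b) a certificate that $H$ is a $\phibar/\polylog(m)$-vertex expander. The construction runs $O(\log^2 m)$ rounds of a cut player against a matching player that uses an approximate vertex max-flow; if at any round the max-flow yields an unbalanced sparse cut, we extract it, otherwise the union of the embedded vertex-disjoint matchings certifies expansion by the KRV potential argument, transferred to the vertex setting via the standard in/out vertex-splitting reduction.

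Second, to make the large side $R\cup S$ itself certifiably expanding (rather than just certifying expansion of the whole current graph), I would combine this primitive with a recursive trimming routine. Start with $H:=G$ and apply the primitive. If both sides of the returned cut are at least $n/3$, output it and terminate with the first bullet. Otherwise the large side $R\cup S$ contains at least $n/2$ vertices and is only \emph{approximately} expanding; recurse on $G[R\cup S]$, treating the accumulated $S$ as boundary terminals that are frozen for the remainder of the recursion. Each recursive call either yields a balanced-enough cut of the original $G$ (after lifting through the boundary) or shrinks the candidate expander side while degrading the expansion parameter by a polylogarithmic factor. Capping the recursion at depth $O(r)$ gives exponent $\log^{O(r^5)}m$ in the final $\phi$; the $O(1/r)$ slack in the runtime exponent comes from the approximate max-flow oracle invoked in each matching round, whose polylog overhead compounds into $\log^{O(r^4)}m$ over all levels.

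The main obstacle is the vertex-capacitated cut-matching primitive. Unlike the edge case, expansion certificates must come from genuinely vertex-disjoint matchings, which are brittle under the standard flow rounding; one needs a vertex-splitting reduction together with a careful accounting of how ``boundary'' vertices (those previously placed in some $S$) interact with the matching player's potential function. The second delicate point is the lifting step: a sparse vertex cut discovered deep in the recursion must translate back to a sparse cut of the original $G$ without losing more than a constant factor in balance, which is what forces the frozen-boundary treatment of each $S$ across recursion levels and drives the specific $\log^{O(r^5)}$ vs.\ $\log^{O(r^4)}$ bookkeeping in the statement.
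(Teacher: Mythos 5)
This lemma is imported verbatim from \cite{LongS22}; the paper you are comparing against contains no proof of it, only the citation, so the relevant benchmark is whether your sketch could plausibly reconstruct the cited result. Your high-level plan (a cut-matching primitive for vertex expansion plus recursive trimming of the large side, with the parameter $r$ trading approximation quality against a $O(1/r)$ loss in the runtime exponent) does match the architecture of \cite{LongS22} and its predecessor \cite{chuzhoy2020deterministic}.

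However, there is a genuine gap at the heart of your sketch: you invoke ``$O(\log^2 m)$ rounds of a cut player against a matching player'' and ``the KRV potential argument,'' but the Khandekar--Rao--Vazirani cut player is inherently randomized (it needs a random Gaussian projection to pick the bipartition on which the matching player plays), whereas the statement you are proving explicitly demands a \emph{deterministic} algorithm. Derandomizing the cut player is precisely the central technical contribution of \cite{chuzhoy2020deterministic} and of \cite{LongS22} in the vertex setting, and it is nontrivial: it replaces the random projection by a recursive self-reduction (a ``fake-cut'' bootstrapping on graphs of size $n^{1-\Theta(1/r)}$), which is also where the mysterious-looking $O(1/r)$ term in the runtime exponent and the $r$-dependent polylog tower in the approximation actually come from. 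Your attribution of the $O(1/r)$ slack to ``the approximate max-flow oracle invoked in each matching round'' is therefore mistaken; approximate flow only contributes a fixed $m^{o(1)}$ overhead independent of $r$. Without specifying a deterministic replacement for the cut player, the proposal does not establish the lemma; it proves (at best) a Monte-Carlo version, which would break the determinism guarantee that the enclosing paper relies on in Theorem~\ref{thm:main}. The vertex-splitting reduction and the frozen-boundary recursion you describe are fine ideas and are indeed present in \cite{LongS22}, but they are the easier half of the argument.
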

 
When $G[S \cup R]$ is a $2\phi$-vertex expander, our $c$-right graph $G_R$ is still an $\phi$-vertex expander. 

\begin{claim} \label{claim:GR still a expander}
  If $G[S \cup R]$ is a  $2\phi$-vertex expander, then the $c$-right
  graph $G_R$ is $\phi$-vertex expander. 
\end{claim}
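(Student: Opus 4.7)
The plan is to fix an arbitrary vertex cut $(L',S',R')$ of $G_R$ and show $|S'|/\min(|L'\cup S'|,|S'\cup R'|)\geq\phi$. The structural feature to exploit is that $G_R$ is obtained from $G[S\cup R]$ by attaching a $c$-clique $K_L$ via a biclique to $S$. Because $K_L$ is a clique, it cannot be split between $L'$ and $R'$, so without loss of generality $K_L\subseteq L'\cup S'$.

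The main step will be to project the cut down to $S\cup R$: setting $L''=L'\cap(S\cup R)$, $S''=S'\cap(S\cup R)$, and $R''=R'$, every edge of $G[S\cup R]$ is preserved in $G_R$, so when both $L''$ and $R''$ are non-empty $(L'',S'',R'')$ is a vertex cut of $G[S\cup R]$. The $2\phi$-expander hypothesis then gives $|S''|\geq 2\phi\min(|L''\cup S''|,|S''\cup R''|)$. Combining this with the identity $|L'\cup S'|=|L''\cup S''|+c$ and the trivial bound $|S'|\geq|S''|$, the factor of $2$ in the expander hypothesis absorbs the additive penalty from the $c$-vertex bubble whenever $|L''\cup S''|\geq c$, delivering the required $\phi$-expansion in $G_R$.

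The main obstacle will be the degenerate regime where either $L''=\emptyset$ or $|L''\cup S''|<c$. If $L''=\emptyset$ then $L'\subseteq K_L$, so the cut only isolates part of the bubble; since each $K_L$-vertex $v$ is adjacent in $G_R$ to all of $(K_L\setminus\{v\})\cup S$, the separator must satisfy $S'\supseteq(K_L\setminus L')\cup S$, yielding expansion at least $(c-|L'|+|S|)/(c+|S|)$, which beats $\phi$ under the mild lower bound on $|S|$ granted by the parameter regime $\phi\ll 1/c$. If instead $L''\neq\emptyset$ but $|L''\cup S''|<c$, then the biclique forces $S\subseteq L''\cup S''$ whenever $K_L\cap L'\neq\emptyset$, so $|S|<c$, meaning the original cut $(L,S,R)$ already certifies $\kappa_G<c$, a degenerate case that the outer algorithm handles by returning the small separator directly. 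A symmetric case analysis covers $K_L\cap L'=\emptyset$ (i.e., $K_L\subseteq S'\cup R'$), again using the biclique to control how $S$ distributes across the cut. The most delicate bookkeeping will be tracking the correction terms $|S'\cap K_L|$ and $|L'\cap K_L|$ when transferring the expansion bound, particularly in the skew regime where the smaller side in $G[S\cup R]$ does not correspond to the smaller side in $G_R$ after the $c$-vertex bubble is added.
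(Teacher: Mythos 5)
Your core approach matches the paper's: project the candidate sparse cut $(L',S',R')$ of $G_R$ down to $G[S\cup R]$ by deleting $K_L$, then use the factor-of-$2$ slack in the expander hypothesis to absorb the $c$ extra vertices. The paper's own one-line computation $\frac{|S''|}{|L''|+|S''|}\le \frac{|S'|}{|L'|+|S'|-c}\le 2\frac{|S'|}{|L'|+|S'|}$ silently assumes $|L'|+|S'|\ge 2c$ and that the projection is a genuine cut, and you are right to flag those as the places needing attention.

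However, your handling of the regime $L''\neq\emptyset$ but $|L''\cup S''|<c$ is not a proof. You observe that the biclique forces $S\subseteq L''\cup S''$ so $|S|<c$, and then conclude ``a degenerate case that the outer algorithm handles by returning the small separator directly.'' The claim is a pure graph statement --- if $G[S\cup R]$ is a $2\phi$-expander then $G_R$ is a $\phi$-expander --- and must hold regardless of what any outer algorithm does with it; an appeal to the caller cannot discharge a case inside the proof of the lemma. Fortunately the case is actually easy: $|L''\cup S''|<c$ means $|L'\cup S'|=|L''\cup S''|+c<2c$, so with $|S'|\ge 1$ (a nontrivial separator of a connected graph is nonempty; $G_R$ is connected since $G[S\cup R]$ is an expander and $|S|\ge 1$) one gets
\[
  h_{G_R}(L',S',R') \;=\; \frac{|S'|}{|L'\cup S'|} \;>\; \frac{1}{2c} \;\ge\; \phi,
\]
using the standing parameter constraint $\phi\le (2c)^{-1}$ from \Cref{alg:fast expander or terminal family}. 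This same observation also subsumes your $L''=\emptyset$ case (there $|L'|\le c$, so $h\ge |S'|/(c+|S'|)\ge 1/(c+1)>\phi$), avoiding the bookkeeping with $(c-|L'|+|S|)/(c+|S|)$. One more small point to be wary of: after fixing the WLOG ``$|L'\cup S'|\le |S'\cup R'|$'' you cannot independently WLOG ``$K_L\subseteq L'\cup S'$''; both placements of $K_L$ need to be checked, though a short computation shows both give $\min(|L''\cup S''|,|S''\cup R''|)\ge |L'\cup S'|-c$, so the same bound goes through.
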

\begin{proof}
    Suppose there is a $\phi$-sparse cut $(L',S',R')$ in $G_R$. By
    removing the clique in $G_R$, we obtain another cut $(L'',S'',R'')$
    in $G$ where $L'' = L' -V(K_L)$ $S'' = S' - V(K_L)$ and $R'' = R -
    V(K_L)$. The expansion $h(L'',S'',R'') = \frac{|S''|}{|L''|+|S''|}
    \leq \frac{|S'|}{|L'|+|S'| -c} \leq 2\frac{|S'|}{|L'|+|S'|} <
    2\phi$. So, $(L'',S'',R'')$ is $2\phi$-sparse in $G$, a  contradiction. 
\end{proof}

The full algorithm is described in \Cref{alg:fast expander or terminal family}. Here, we denote $G^{\textrm{orig}}$ as the original input graph. We assume that the min-degree of $G^{\textrm{orig}}$ is $\geq c$. Thus, \Cref{def:cleftrightgraphs} implies that  any $c$-left and $c$-right graph have min-degree at least $c$. Therefore, we can assume that min-degree is at least $c$. During the recursion, if we obtain a balanced sparse cut $(L,S,R)$ such that $|S| < c$, then  we know that the original input graph $G^{\textrm{orig}}$ is not $c$-connected. Furthermore, if we want to output the corresponding cut,  we use the fact that $G_L$ and $G_R$ are $c$-mincut-recoverable. Therefore, it is enough to compute one min $s,t$ separator where $s$ and $t$ are on the different side of $(L,S,R)$ for an arbitrary $s \in L$ and $t \in R$. This takes $O(mc)$ time by Fold-Fulkerson max-flow algorithm. %

\begin{algorithm}[H]
  \DontPrintSemicolon
  \KwGlobalVar{the original input graph $G^{\textrm{orig}}$ whose min degree is $\geq c$.} %
  \KwIn{ A graph $G = (V,E)$, and connectivity parameter $c$ and $\phi
  \in (0,1)$ where $\phi \leq (2c)^{-1}$.} 
  \KwOut{A $(c,\phi)$-\xorterminal of $G$. }
  \BlankLine
  $r \gets \log \log |V|$\;
  $\phibar \gets 2\cdot \phi \cdot \log^{O(r^5)}|E|$\;
  $(L,S,R) \gets \textsc{BalancedCutOrExpander}(G, \phibar,r)$\;
  \If{$L = S = \emptyset$}{
     By \Cref{lem:balancedorexpander}, $G$ is a $2\phi$-vertex expander. \;
    \Return{$(\{G\},\{\})$}
  }
  \If{$|S| < c$} { \textbf{terminate}  and output min $(s,t)$-separator for an arbitrary $s \in L$ and $t \in R$. \label{line:declare not connected}} %

    By \Cref{lem:balancedorexpander}, $(L,S,R)$ is $\phibar$-sparse.\;
       Let $G_L$ and $G_R$ be $c$-left and $c$-right graphs of $G$
       w.r.t. $(L,S,R)$.\;
       Remove the edges in $E_{G_L}(S,S)$ from $G_L$. \label{line:remove edges in S}\;
             $(\calG_L, T_L) \gets   \textsc{FastExpandersOrTerminals}(G_L,c,\phi)$\;
    \If{$\min\{|L\cup S|, |R \cup S| \} \geq |V|/3$}  {
      Remove the edges in $E_{G_R}(S,S)$ from $G_R$. \label{line:remove edges in S R}\;
         $(\calG_R, T_R) \gets
         \textsc{FastExpandersOrTerminals}(G_R,c,\phi)$\;
         \Return{$(\calG_L \cup \calG_R,  T_L \cup T_R  \cup  S )$}.  
       }\Else{
         By \Cref{claim:GR still a expander}, $G_R$ is a $\phi$-vertex expander.\;
        \Return{$(\calG_L \cup \{G_R\}, T_L \cup S)$.}   
      }

\caption{\textsc{FastExpandersOrTerminals}$(G,c,\phi)$}
\label{alg:fast expander or terminal family}
\end{algorithm}

\begin{lemma}  \label{lem:gorig expander or terminal}
  Let $G^{\textrm{orig}}$ be the original input graph with $n$
  vertices and $m$ edges, and $c, \phi$ be parameters where $G^{\textrm{orig}}$ has min degree $\geq c$. Then, \Cref{alg:fast expander or terminal family}, when applying with the
  inputs $(G^{\textrm{orig}}, c,\phi)$ either
 outputs a separator of size $<c$ in $G^{\textrm{orig}}$ or a
$(c,\phi)$-\xorterminal $(\calG, T)$ for $G^{\textrm{orig}}$  satisfying:
\begin{enumerate}
\item $\sum_{H \in \calG} |V(H)| \leq n(1+3\phibar)^{O(\log n)}$, and
\item $|T| \leq \phibar n(1+3\phibar)^{O(\log n)}$,
\end{enumerate}
 where $\phibar = \phi \cdot m^{o(1)}$. The algorithm runs in $
 m^{1+o(1)}(1+4\phibar)^{O(\log n)} \cdot \phi^{-1}$
 time. 
\end{lemma}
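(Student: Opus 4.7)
The plan is to establish correctness, the size bounds on $\calG$ and $T$, and the running-time bound by tracing the recursion tree of \Cref{alg:fast expander or terminal family}, essentially combining the inductive argument of \Cref{lem:correctness expander or terminal} with a geometric-series analysis across levels enabled by the guarantees of \Cref{lem:balancedorexpander}.

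For correctness I would split on which branch of the algorithm executes. If $L=S=\emptyset$, then \Cref{lem:balancedorexpander} certifies that $G$ itself is a $2\phi$-expander, so $(\{G\},\emptyset)$ is trivially a $(c,\phi)$-\xorterminal. If $|S|<c$, then $(L,S,R)$ is already a separator of size $<c$ in the current recursive graph; by transitivity of $c$-mincut-recoverability along the recursion---each $c$-left/right graph is $c$-mincut-recoverable for its parent by \Cref{lem:backward}, and the property composes---a single $(s,t)$ min-separator computation via Ford--Fulkerson in $O(mc)$ time recovers a genuine separator of size $<c$ in $\origG$. Otherwise the algorithm recurses on $G_L$ and, in the balanced branch, on $G_R$, and combining the returned pairs together with $S$ reproduces the argument of \Cref{lem:correctness expander or terminal}; in the unbalanced branch $G_R$ is added directly to $\calG$ as a $\phi$-expander using \Cref{claim:GR still a expander} rather than recursed upon. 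The deletion of $E_{G_L}(S,S)$ and $E_{G_R}(S,S)$ at \Cref{line:remove edges in S,line:remove edges in S R} is permitted by the remarks at the end of the proofs of \Cref{lem:forward,lem:backward}.

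For the size bounds I would set $n' = |V(G)|$ for the current call and use $|S|\le \phibar n'$ from \Cref{lem:balancedorexpander}. In the balanced branch $|V(G_L)|+|V(G_R)| = n'+|S|+2c \le (1+\phibar)n'+2c$, and since $|L\cup S|,|R\cup S|\ge n'/3$, the larger child has at most $(\tfrac{2}{3}+\phibar)n'+c$ vertices. In the unbalanced branch the sole recursive child $G_L$ has $|L|+|S|+c \le (\tfrac{1}{2}+\phibar)n'+c$ vertices, while the non-recursive $G_R$ contributes at most $n'+2c$ vertices to $\calG$. In both cases the recursion depth is therefore $O(\log n)$, and a level-by-level multiplication yields $\sum_{H\in\calG}|V(H)|\le n(1+3\phibar)^{O(\log n)}$. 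For the terminal set, each internal node of the recursion tree contributes $|S|\le \phibar n'$, and telescoping along the tree gives $|T|\le \phibar n(1+3\phibar)^{O(\log n)}$.

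For the running time, each recursive call invokes $\textsc{BalancedCutOrExpander}$ in time $m'^{1+o(1)+O(1/r)}\log^{O(r^4)}(m')/\phi$ on the current graph; setting $r=\log\log n$ makes both the $O(1/r)$ and $\log^{O(r^4)} m$ factors fit inside $m^{o(1)}$. The edge counts along the recursion grow by the same $(1+O(\phibar))$ factor per level as the vertex counts, since each $c$-left/right construction introduces at most $c|S|+\binom{c}{2}$ new edges beyond those inherited from the parent, so summing over the $O(\log n)$ levels yields total running time $m^{1+o(1)}(1+4\phibar)^{O(\log n)}/\phi$; the occasional $O(mc)$ Ford--Fulkerson at termination is absorbed. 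The main bookkeeping subtlety to watch is maintaining the edge-count bound through the tree and verifying that the optional deletion of $E(S,S)$ edges does not spoil $c$-mincut-recoverability, but both points are taken care of by the remarks following \Cref{lem:forward,lem:backward}.
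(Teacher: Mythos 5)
Your proposal is correct and follows essentially the same approach as the paper: reduce recursive correctness to \Cref{lem:correctness expander or terminal}, use $|S|\ge c$ and min-degree $\ge c$ to turn the additive $O(|S|c+c^2)$ growth of vertices and edges per split into a multiplicative $(1+O(\phibar))$ factor, and compound over the $O(\log n)$-depth recursion. One small caution on your handling of the $|S|<c$ branch: $c$-mincut-recoverability does not literally compose as a relation (a min $(s,t)$-separator in a child graph is merely an $(s,t)$-separator in the parent, not necessarily a \emph{min} one there, so one cannot iterate \Cref{def:cut recoverable} directly); what does hold, and what suffices here, is that the chain of recoverabilities along the recursion path certifies $\kappa_{\origG}(s,t)<c$, after which a single Ford--Fulkerson computation in $\origG$ itself (rather than in the subproblem graph) yields the separator that is output.
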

 
\Cref{lem:gorig expander or terminal} with appropriate parameters immediately implies \Cref{thm:fast xorterminal pair}. The rest of this section is
devoted to proving \Cref{lem:gorig expander or terminal}. 

  \paragraph{Correctness.}   Consider the execution of  \Cref{alg:fast
    expander or terminal family} on $(G^{\textrm{orig}},  c,\phi)$. First we prove that if the condition in \Cref{line:declare not connected}
  is true at some point, then $G^{\textrm{orig}}$ is indeed not
  $c$-connected. Consider the recursion tree $\calT$ starting with
  $G^{\textrm{orig}}$. Each internal node can be represented as a
  graph $G$ along with a $\phibar$-sparse cut $(L,S,R)$, and we
  recurse on left subproblem $G_L$ and right subproblem $G_R$. The
  leaf nodes represent the instance to be a $\phi$-expander.   Let $G'$ be the first subproblem in the
  recursion tree that \Cref{line:declare not connected} is
  executed. Since $|S| < c$, we have that $G'$ is not
  $c$-connected. Let $P$ be a path from the root node to the
  subproblem $G'$ in the recursion tree $\calT$.   By applying
  \Cref{lem:dacforvc} along the path $P$, we derive that
  $G^{\textrm{orig}}$ is not $c$-connected, and we are done. From now, we assume that \Cref{line:declare not connected} is always
  false (i.e., $|S| \geq c$ every time). By the description of
  \Cref{alg:fast expander or terminal family} along with \Cref{lem:balancedorexpander,claim:GR still a expander} , the execution
  of \Cref{alg:fast expander or terminal family} is identical to that
  of \Cref{alg:expander or terminal family} (where depending on the
  conditions, $G_L$ and $G_R$ may or may not have edges inside
  $S$). By \Cref{lem:correctness expander or terminal}, \Cref{alg:fast
    expander or terminal family} outputs a $(c,\phi)$-\xorterminal for
  $G^{\textrm{orig}}$. This completes the correctness proof. 

  Next, we bound the size of the $(c,\phi)$-\xorterminal and the
  running time. We first show that the recursion depth is $O(\log
  n)$. 
  
  \begin{claim} \label{claim:logn recursion depth}
  The depth of the recursion tree $\calT$ is $O(\log n)$.  
  \end{claim}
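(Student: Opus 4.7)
The plan is a standard size-reduction argument. Consider any internal node $G$ in the recursion tree $\calT$ with $N := |V(G)|$, and let $(L,S,R)$ be the $\phibar$-sparse cut returned by $\textsc{BalancedCutOrExpander}$ on $G$. The goal is to show that every graph on which the algorithm recurses has at most $\alpha N + c$ vertices, for some fixed constant $\alpha < 1$, so that after $O(\log n)$ levels the vertex count drops below the base-case threshold and the recursion halts.

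I would split on the two cases of \Cref{lem:balancedorexpander}. In the balanced case, $|L\cup S|,|R\cup S|\ge N/3$, so in particular $|L|\le N - |R\cup S|\le 2N/3$ and symmetrically $|R|\le 2N/3$. By \Cref{def:cleftrightgraphs}, $|V(G_L)| = |L|+|S|+c$ and $|V(G_R)| = c+|S|+|R|$, while the $\phibar$-sparsity of $(L,S,R)$ gives $|S|\le\phibar|L\cup S|\le\phibar N$. Hence both children satisfy
\[
|V(G_L)|,\,|V(G_R)| \;\le\; \tfrac{2}{3}N + \phibar N + c.
\]
In the unbalanced case the algorithm only recurses on $G_L$; since $|R\cup S|\ge N/2$, we have $|L|\le N/2$ and thus $|V(G_L)|\le (\tfrac{1}{2}+\phibar)N+c$. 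In either case, the per-level shrinkage is at least a factor of $\tfrac{2}{3}+\phibar+o(1)$.

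Because $\phi\le(2c)^{-1}$ and the blowup from \Cref{lem:balancedorexpander} is $\polylog n$, the effective $\phibar$ is at most a small constant (indeed $o(1)$) for all sufficiently large $n$, so $\tfrac{2}{3}+\phibar \le 3/4$, say. Thus every recursive subproblem satisfies $N' \le \tfrac{3}{4}N + c$, which for $N$ above some constant threshold gives $N' \le \tfrac{4}{5}N$; once $N$ drops below that threshold, \Cref{lem:balancedorexpander} must return $L=S=\emptyset$ and the branch terminates. Iterating the contraction $\tfrac{4}{5}$ starting from $n$, the depth of $\calT$ is bounded by $O(\log_{5/4} n) = O(\log n)$, as claimed. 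The only delicate point is making sure the additive $|S|+c$ term does not destroy the geometric decrease, which is handled by the $\phibar$-sparsity bound on $|S|$ together with the assumption that $\phi$ (and hence $\phibar$) is small; this is a routine observation rather than a real obstacle.
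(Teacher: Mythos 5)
Your proof is correct and follows essentially the same approach as the paper's: split on whether the cut from \Cref{lem:balancedorexpander} is balanced or not, use the $\phibar$-sparsity bound $|S|\le\phibar n$ together with the balance condition to show the recursed subgraphs shrink by a constant factor, and conclude $O(\log n)$ depth. Your constants are handled slightly more carefully than the paper's terse writeup (which nominally claims $n_L\le 2n_0/3$ while in fact only $n_L\le 2n_0/3+|S|+c$ holds), but this is a cosmetic difference, not a different argument.
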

  \begin{proof}
  Let $G$ be any internal node of the recursion tree $\calT$. Let $n_0,n_L,n_R$ be the number
  of vertices in $G,G_L$ and $G_R$, respectively. We prove that the
  size is reduced by a constant factor each time.  Let $(L,S,R)$ be the $\phibar$-sparse
  cut in $G$. If $\min\{|L \cup S|, |R\cup S|\} \geq n_0/3$, then we
  have $n_L \leq 2n_0/3$ and $n_R \leq 2n_0/3$.  Otherwise, we have that
  $G_R$ is a  $\phi$-expander and its right  subproblem becomes a leaf
  node. It remains to prove that $n_L \leq 2n_0/3$ in this
  case. Since $|S| \leq \phibar n_0$ and $|R \cup S| \geq n_0/2$, we
  have $|R| \geq n_0/4$, and thus $|L \cup S| = |L|+|S| \leq
  3n_0/4+\phibar n_0 \leq 0.85n_0$.  
\end{proof}

  \paragraph{Sizes.} If the pair $(\calG, T)$ is returned, then  \Cref{line:declare not connected} is always
  false (i.e., $|S| \geq c$ every time). We now bound the total number of vertices in all graphs in
  $\calG$ and  the size of the
  terminal set $T$.  Let $\calT$ be the recursion tree of starting with
  $G^{\textrm{orig}}$ where the leaf nodes represent $\phi$-expanders,
  and internal nodes are represented by a graph $G$ and its $\phibar$-sparse cut
  $(L,S,R)$. Fix an arbitrary internal node in the recursion tree $G$
  and $(L,S,R)$. Let $G_L$ and $G_R$ be $c$-left and $c$-right graphs
  of $G$ w.r.t. $(L,S,R)$. Let $n_0, n_L$ and $n_R$ be the number of
  vertices in $G, G_L$ and $G_R$, respectively. By definition of $G_L$
  and $G_R$, $$ n_L +
  n_R = n_0+|S|+2c \leq n_0+3|S| \leq n_0(1+3\phibar).$$ The first
  inequality follows because $n_L = |L|+|S| + c$, and $n_R =
  c+|S|+|R|$. The last inequality follows because $(L,S,R)$ is $\phibar$-sparse. Therefore, at
  level $i$, the total number of vertices is at most
  $n(1+3\phibar)^{i-1}$. Let $\ell$ be the recursion depth. We have that
  the total number of vertices at the last level is at most
  $O(n(1+3\phibar)^{\ell+1})$.  To bound the number of terminals,
  observe that it is at most $\phibar$ times total number of vertices
  generated by the recursion. Thus, $|T| = O( \phibar
  n(1+3\phibar)^{\ell+1})$. The results follow by \Cref{claim:logn
    recursion depth}. 
  
  \paragraph{Running Time.}  Consider the same recursion tree $\calT$.   We first bound the total number of edges over
  all the algorithm.  Let $m_0, m_L$ and $m_R$ be the number of
  edges in $G, G_L$ and $G_R$, respectively. We assume that  $\min\{|L \cup S|,
  |R\cup S|\} \geq n_0/3$ because otherwise $G_R$ is a $\phi$-vertex
  expander, and we only go for the left recursion. By \Cref{line:remove edges in S} and \Cref{line:remove edges in S R}, both $G_L$ and $G_R$ do not have edges
  in $E(S,S)$. Therefore,
  \begin{align*}
      m_L + m_R \leq m_0 +2(|S|c + c^2) \leq m_0 +  4|S|c \leq m_0+ 4c
    n_0\phibar \leq m_0(1 +4 \phi\bar).  
  \end{align*}
   The first inequality follows because the set of new edges are those
   in the clique in both $G_L$ and $G_R$ and biclique edges. Note that
   there is no edge inside $S$ for both graphs. The second inequality
   follows because $|S| \geq c$. The third inequality follows because
   $(L,S,R)$ is $\phibar$-sparse. The last inequality follows because
   min-degrees of $G_L$ and $G_R$ are at least $c$.

   Therefore, at level $i$, the total number of edges is at most
   $O(m(1+4\phibar)^{i-1})$.   Summing over all levels, we have that
   the total number of edges is $O(m(1+4\phibar)^{\ell+1})$ where
   $\ell$ is the recursion depth.  By \Cref{claim:logn recursion depth}, the recursion depth
   $\ell = O(\log n)$. Finally, the running time follows because each
   node $v$ in the recursion tree takes $O(m'^{1+o(1)}/\phi)$ time
   where $m'$ is the  number of edges of the graph at node $v$.

        \section{$c$-Vertex Connectivity Mimicking Networks} \label{sec:mimicking}

This section and the rest  are devoted to proving \Cref{thm:vertex sparsifier}. Let $G = (V,E)$ be a graph and $T \subseteq V$ be a terminal set. Our approach for computing a $(T,c)$-sparsifier for $G$ is to compute a $(T,c)$-\textit{covering set}.  

\begin{definition}
  A vertex set $Z \subseteq V$ is $(T,c)$-\textit{covering} if
  for all $A,B \subseteq T$ such that   $\mu(A,B) \leq c$,  $Z$ contains some min $(A,B)$-weak  separator.  
\end{definition}

Our technical result is the following. 

\begin{theorem} \label{thm:fast covering set}
Given $G = (V,E), T \subseteq V$, and a parameter $c > 0$, we can
compute $(T,c)$-covering set of size $O(|T| 2^{O(c^2)})$ in
$O(m^{1+o(1)} 2^{O(c^2)})$ time. 
\end{theorem}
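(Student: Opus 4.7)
The plan is to construct the covering set $Z$ iteratively, reducing the construction to the primitive called a \roc \psetpair (to be defined in \Cref{sec:covering set}) whose fast implementation is deferred to \Cref{sec:reducing set}. The algorithm maintains a partition $\mathcal{P}$ of $T$ together with a growing set $Z \subseteq V$, starting from the trivial partition $\mathcal{P}_0 = \{T\}$ with $Z = \emptyset$. Each iteration invokes the \roc subroutine on $(\mathcal{P}, Z)$, which has two outcomes: in the \emph{covering} case it certifies that $Z$ already contains a min $(A,B)$-weak separator whenever $A$ and $B$ are unions of blocks of $\mathcal{P}$ with $\mu_G(A,B) \leq c$; in the \emph{reducing} case it returns a proper refinement $\mathcal{P}'$ of $\mathcal{P}$ together with new min weak separators to add to $Z$. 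Iterating until the covering case is triggered yields a $(T,c)$-covering set, since any pair $A, B \subseteq T$ with $\mu_G(A,B) \leq c$ can always be expressed as a union of blocks once $\mathcal{P}$ is refined sufficiently.

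For the size bound, each reducing step should add only $2^{O(c^2)}$ vertices to $Z$: once a separator has size at most $c$, there are only $2^{O(c^2)}$ distinct ``shapes'' describing how it splits the $O(c)$ neighboring terminals and which terminals lie inside it, and one representative per shape suffices to cover all $(A,B)$ that are compatible with that shape. Since $\mathcal{P}$ has at most $|T|$ blocks, the reducing case can fire at most $O(|T|)$ times, giving $|Z| = O(|T| \cdot 2^{O(c^2)})$. For the running time, a single \roc invocation takes $m^{1+o(1)} 2^{O(c^2)}$ time by the main result of \Cref{sec:reducing set}; amortizing across the refinements (for example by invoking \roc globally once and reading off all refinements from a single pass, or by charging to a refinement tree whose leaves are in bijection with terminals) keeps the total work at $m^{1+o(1)} 2^{O(c^2)}$.

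The main obstacle is defining the \roc \psetpair so that the reducing case makes measurable progress toward the finest partition while keeping the covering case both informative and efficiently checkable. In the edge-connectivity analogue of \cite{liu2019vertex,chalermsook2021vertex} this is handled by relatively clean covering/intersecting-set machinery, but vertex cuts introduce a genuine wrinkle: the separator $S$ in a min weak separator may itself contain terminals, so the notion of which pairs $(A,B)$ of block unions a fixed separator ``covers'' is delicate, as the same triple $(L,S,R)$ can simultaneously serve as an $(A,B)$-weak separator for many different $(A,B)$ depending on how terminals inside $S$ are assigned. Carefully encoding this interaction---so that the potential on $\mathcal{P}$ strictly decreases at each reducing step and so that the covering case still admits a fast algorithmic certificate built on local isolating-cut routines---is the crux of the argument to be carried out in \Cref{sec:covering set}.
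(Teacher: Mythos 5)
Your high-level framing---reduce the construction of a covering set to an iteratively applied primitive called a \roc \psetpair, with a ``covers'' branch and a ``make progress'' branch---does echo the paper, but the concrete architecture you propose is different from the paper's, and a few of its load-bearing steps do not hold up.

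First, the partition in the paper's \roc \psetpair (\Cref{def:roc partition}) is a partition $\{Z,X_1,\ldots,X_\ell\}$ of the \emph{entire vertex set} $V$, not of the terminal set $T$, and the recursion is a divide-and-conquer on the induced subgraphs $G_i = G[N[X_i]]$ with new terminal sets $T_i = (T\cap X_i)\cup N_G(X_i)$. Nowhere does the paper express $A,B$ as unions of blocks of a terminal partition; your sentence ``any pair $A,B\subseteq T$ with $\mu_G(A,B)\le c$ can always be expressed as a union of blocks once $\mathcal{P}$ is refined sufficiently'' is trivially true at the singleton refinement but is not used by, and does not by itself connect to, a covers-or-refines oracle of the kind you posit. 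You have not said what such an oracle would actually verify, nor why refining $\mathcal{P}$ corresponds to making progress on covering min weak separators.

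Second, and more importantly, your progress measure is wrong. You bound the number of reducing steps by the number of blocks of $\mathcal{P}$, i.e.\ by $|T|$. The paper instead uses a \emph{two-parameter} recursion on $(c,\cbar)$, where $\cbar$ tracks the size of the non-terminal part $|S-T|$ of a brittle min weak separator. This second parameter is not an optimization; it is forced by the ``hit'' case of \Cref{def:roc partition}: a min weak separator $S$ may survive entirely inside one piece $N[X_i]$ with the same size, and the only guaranteed progress is that a non-terminal vertex of $S$ has moved into the boundary $N(X_i)\subseteq T_i$, so $|S-T_i| \le |S-T|-1$. Your proposal has no analogue of this, so the ``reduce'' branch could stall on a separator that is neither covered nor split. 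The $2^{O(c^2)}$ factor in the paper then arises from a recurrence over the double recursion of depth $O(c+\cbar)$ with branching (\Cref{claim:boring induction recurrence}), not from enumerating $2^{O(c^2)}$ ``shapes'' of a cut; your shape-counting sketch is not what drives the bound and is not made precise.

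Third, the running time claim does not go through as stated. The paper's \roc construction (\Cref{thm:reducing-or-covering set system}) costs $O(m^{1+o(1)}\phi^{-4}2^{O(c^2)})$, with $\phi$ chosen as small as $\phi^{-1} = \Theta(p(n)g(c))$ so that the $n\phi\,g(c)$ term is dominated by $n/10$, and then the whole covering-set construction is repeated for $O(\log n)$ outer rounds. Each recursive call genuinely re-runs \roc on a fresh induced subgraph; there is no ``single global pass'' from which the refinements can be read off. To make your approach rigorous you would need to (i) replace the terminal-partition refinement by a vertex partition and a recursion on $G[N[X_i]]$, (ii) introduce the auxiliary parameter $\cbar=\mu^T(A,B)$ and show it strictly decreases in the hit case, and (iii) replace the amortization heuristic with an explicit recurrence on $(m,k,c,\cbar)$; in effect, you would be reconstructing the paper's \Cref{lem:a reduction to reducing set system} and \Cref{alg:tcccovering}.
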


We prove \Cref{thm:fast covering set} in \Cref{sec:covering set}. Once we have a $(T,c)$-covering set $Z$, we show that we can compute a $(T,c)$-sparsifier for $G$ of small size  efficiently.  

 \begin{lemma} \label{lem:reduction tc covering}
  Given a $(T,c)$-covering set $Z$ of $G$ and an integer $c > 0$, there is an $O(mc)$-time
  algorithm that outputs a $(T,c)$-sparsifier for $G$  where the
  number of vertices is $|T \cup Z|$ and the number of edges is at most $c|T \cup Z|$.  %
\end{lemma}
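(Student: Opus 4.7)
Set $W := T \cup Z$. The plan is to first build an intermediate graph $H^\star$ on $W$ that already satisfies the two requirements in \Cref{def:mimicking network}, and then thin $H^\star$ with Nagamochi--Ibaraki (\Cref{thm:nagamochi}) so that its edge count drops to at most $c\,|W|$. Define $H^\star$ to have vertex set $W$, all edges of $G[W]$, and, for every connected component $C$ of the induced subgraph $G[V \setminus W]$, a clique on $N_G(C) \cap W$. This is exactly what one obtains by iteratively performing the Kratsch--Wahlstr\"om closure operation at every vertex of $V \setminus W$, so $V(H^\star) = W$.

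I would show that $H^\star$ is already a $(T,c)$-sparsifier for $G$ via a pair of \emph{walk-lifting} claims. \emph{Direction one:} for any $S \subseteq W$, every $u$--$v$ path in $G - S$ with $u,v \in W \setminus S$ can be turned into a $u$--$v$ walk in $H^\star - S$, because each maximal sub-path lying inside a component $C$ of $G[V \setminus W]$ starts and ends in $N_G(C) \cap W$, which are joined by a clique edge of $H^\star$. This immediately shows that $H^\star$ is $c$-cut-recoverable for $G$ and yields $\mu_G(A,B) \le \mu_{H^\star}(A,B)$ for every $A,B \subseteq T$. \emph{Direction two:} every path in $H^\star - S$ with $S \subseteq W$ lifts to a path in $G - S$, because each clique edge coming from component $C$ is witnessed by a $G$-path through $C$, and $C$ is disjoint from $S$ since $C \cap W = \emptyset$. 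Direction two only helps when a minimum separator actually sits inside $W$, and this is precisely what the covering property of $Z$ guarantees: whenever $\mu_G(A,B) \le c$, $Z$ contains some minimum $(A,B)$-weak separator $S^\star$, which lies in $W$; lifting shows $S^\star$ remains an $(A,B)$-weak separator in $H^\star$, so $\mu_{H^\star}(A,B) \le \mu_G(A,B)$. When $\mu_G(A,B) > c$, Direction one already forces $\mu_{H^\star}(A,B) > c$. Combining the two cases gives $\min\{\mu_{H^\star}(A,B),c\} = \min\{\mu_G(A,B),c\}$ for all $A,B \subseteq T$.

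To finish, I would feed $H^\star$ to \Cref{thm:nagamochi} to obtain $H$ with $V(H) = W$ and at most $c\,|W|$ edges. Nagamochi--Ibaraki preserves, for every $|S| < c$, the partition of $V \setminus S$ into components (this follows by applying the ``separator iff separator'' statement to the refined cut $(L_1, S, L_2 \cup R)$ one gets by splitting either side), so $H$ is $(W,c)$-equivalent to $H^\star$ and $c$-cut-recoverable for $H^\star$; by transitivity $H$ is $c$-cut-recoverable for $G$ and $(T,c)$-equivalent to $G$. The main obstacle is the \emph{running time}: the cliques in $H^\star$ can have $\sum_C |N_G(C)\cap W|^2$ edges, which can blow up to $\Omega(n^2)$, well beyond the $O(mc)$ budget. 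I would circumvent this by borrowing the hypergraph viewpoint emphasized by the paper: represent $H^\star$ implicitly as a hypergraph on $W$ with one hyperedge $N_G(C) \cap W$ per component $C$ of $G[V \setminus W]$, together with the $2$-element hyperedges for $E(G[W])$. Its total size is $\sum_C |N_G(C)\cap W| \le |E_G(V \setminus W, W)| = O(m)$; identifying the components and building this hypergraph takes $O(m+n)$ time via BFS, and the final $c|W|$-edge graph is produced by the hypergraph-to-graph Nagamochi--Ibaraki sparsifier used elsewhere in the paper, yielding the claimed $O(mc)$ running time.
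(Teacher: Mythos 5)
Your proposal matches the paper's argument almost step for step: you form $H^\star = \cl(G, V\setminus(T\cup Z))$, prove $(T,c)$-equivalence by combining monotonicity of closure (your Direction one / the paper's \Cref{lem:monotonic vertex closure}) with the covering property of $Z$ (your Direction two), and then sparsify to $O(c|T\cup Z|)$ edges without ever materializing the cliques — exactly what \Cref{lem:static neighborhood oracle} does via $c$-partial cliques plus Nagamochi--Ibaraki. The one cosmetic difference is that you invoke a ``hypergraph-to-graph Nagamochi--Ibaraki sparsifier,'' which is not stated as a separate result in the paper; the paper achieves the same effect more directly by adding a $c$-partial clique on $N_G(Y_i)$ for each component $Y_i$ of $G[X]$ (using $O(c\cdot\vol_G(Y_i))$ time per component) and then running ordinary Nagamochi--Ibaraki on the resulting $O(mc)$-edge graph.
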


 \Cref{thm:fast covering set,lem:reduction tc covering} imply \Cref{thm:vertex sparsifier}. Next, we prepare necessary tools for proving \Cref{lem:reduction tc covering}. We start with the key primitive operation for constructing $(T,c)$-sparsifier for $G$, which we call \textit{vertex closure operation} in \Cref{sec:vertex closure}. Then, we show a fast offline vertex closure oracle in \Cref{sec:offline closure oracle}, which is of independent interest. Finally, we prove \Cref{lem:reduction tc covering} in \Cref{sec:proof of reduction tc covering}.   

\subsection{Vertex Closure Operation}
\label{sec:vertex closure}

We start with the definition. 
\begin{definition}
\textit{Vertex closure} operation for a vertex $v$ in graph $G$ is defined as follows: add  all edges between vertices in $N_G(v)$, and remove $v$ from $G$.  We denote $\cl(G, v)$ as the graph $G$ after closing a vertex $v$. 
\end{definition}

Intuitively, one can view vertex closure operation as an analogue of edge contraction operation.
If $x$ and $y$ are adjacent to each other via an edge $e$, after contracting $e$, there is no edge cut separating  $x$ and $y$ as they become the same vertex. 
Similarly, if $x$ and $y$ share a common neighbor via a vertex $v$, after closing $v$, there is no vertex cut separating $x$ and $y$ as they are directly connected by an edge now. 

We prove that vertex closure operation does not decrease the vertex connectivity.  
\begin{lemma} [Monotonicity Property of Vertex Closure Operation] \label{lem:monotonic vertex closure}
Let $v$ be an arbitrary vertex in $G$. Then, $\cl(G,v)$ is \textit{cut-recoverable}, i.e., if $S$ is a separator in $\cl(G,v)$, then $S$ is also a separator in $G$. In particular, $\kappa_{\cl(G,v)} \geq \kappa_G$.%
\end{lemma}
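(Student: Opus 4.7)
The plan is to prove the contrapositive form: given a separator $S$ of $\cl(G,v)$, exhibit a pair of vertices that it separates in $G$. Concretely, suppose $S \subseteq V(\cl(G,v)) = V(G)\setminus\{v\}$ is a separator of $\cl(G,v)$, and fix $x,y \in V(\cl(G,v)) \setminus S$ that lie in distinct connected components of $\cl(G,v) - S$. I will show that $x$ and $y$ also lie in distinct components of $G - S$, which immediately gives that $S$ is a separator of $G$ (noting that $x, y \neq v$ and $v \notin S$, so both are legitimate vertices of $G - S$).

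Suppose for contradiction that there is an $(x,y)$-path $P = (x = u_0, u_1, \ldots, u_k = y)$ in $G - S$. The key case analysis is whether $v$ appears on $P$. If $v$ does not appear on $P$, then every edge of $P$ is also an edge of $G$ and thus of $\cl(G,v)$ (since closure only adds edges and deletes $v$), so $P$ is a path in $\cl(G,v) - S$, contradicting the choice of $x,y$. If $v$ does appear on $P$, then since $P$ is simple it uses $v$ exactly once, say $u_i = v$ for some $0 < i < k$; then $u_{i-1}, u_{i+1} \in N_G(v)$, and since $P$ avoids $S$, both $u_{i-1}, u_{i+1} \notin S$. By the definition of vertex closure, $\{u_{i-1}, u_{i+1}\}$ is an edge of $\cl(G,v)$. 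Replacing the subpath $u_{i-1}, v, u_{i+1}$ by the single edge $(u_{i-1}, u_{i+1})$ yields a walk, and hence a path, from $x$ to $y$ in $\cl(G,v) - S$, again contradicting the choice of $x,y$. Either way we reach a contradiction, so $S$ must separate $x$ and $y$ in $G$.

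For the ``in particular'' claim, let $S^*$ be a minimum separator of $\cl(G,v)$, so that $|S^*| = \kappa_{\cl(G,v)}$. The first part of the lemma shows that $S^*$ is a separator of $G$, hence $|S^*| \geq \kappa_G$, giving $\kappa_{\cl(G,v)} \geq \kappa_G$. (If $\cl(G,v)$ admits no separator at all, the inequality is vacuous under the convention $\kappa = n-1$.)

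I do not expect a real obstacle here: the proof is essentially a one-step shortcutting argument enabled precisely by the clique inserted on $N_G(v)$. The only mild care needed is in ruling out the degenerate cases in which an endpoint equals $v$, or in which $v$ would be forced into $S$, both of which are excluded by the fact that $v \notin V(\cl(G,v))$, so that any candidate separator of $\cl(G,v)$ is automatically a subset of $V(G)\setminus\{v\}$ and the ``witness'' vertices $x,y$ in the disconnected components are likewise different from $v$.
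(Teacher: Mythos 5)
Your proof is correct, but it takes a different route from the paper's. The paper argues directly at the level of vertex-cut triples: it observes that if $(L,S,R)$ is a vertex cut in $\cl(G,v)$, then $(L, S\cup\{v\}, R)$ is a vertex cut in $G$, shows that $N_G(v)$ cannot meet both $L$ and $R$ (else the clique on $N_G(v)$ would put an edge between $L$ and $R$ in $\cl(G,v)$), and then migrates $v$ into the side of the cut that contains no neighbor of $v$, yielding the cut $(L,S,R\cup\{v\})$ or $(L\cup\{v\},S,R)$ in $G$. You instead prove the contrapositive by a path-shortcutting argument: any $(x,y)$-path in $G-S$ that uses $v$ can have the two steps through $v$ replaced by the clique edge $(u_{i-1},u_{i+1})$ added by the closure, producing a path in $\cl(G,v)-S$. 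Both proofs are sound and short. The paper's version has the small advantage of explicitly exhibiting the recovered vertex cut in $G$ (a pattern reused in later, more delicate lemmas such as their Claim on explicit cuts after batch closures); your version is arguably the more elementary and self-contained argument, and it also transparently handles the degenerate cases (endpoints equal to $v$, or $v$ being absent) by noting $v\notin V(\cl(G,v))$. One minor imprecision to tighten: in the case where $P$ avoids $v$, the reason the edges of $P$ survive in $\cl(G,v)$ is not just that closure ``only adds edges and deletes $v$'' — closure also deletes all edges incident to $v$ — but rather that none of $P$'s edges are incident to $v$, so none are deleted.
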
  
\begin{proof}
  Suppose there is a vertex cut $(L,S,R)$ in $\cl(G,v)$. To undo contraction, we add vertex $v$ and remove edges in $\cl(G,v)$ that do not exist in $G$. Hence, $(L,S \cup \{v\},R)$ is a vertex cut in $G$.   Next, we prove that $N_G(v) \cap L = \emptyset$ or $N_G(v) \cap R = \emptyset$. Suppose otherwise. By definition of vertex contraction, there is an edge between $L$ and $R$ in $\cl(G,v)$, contradicting to the fact that $(L,S,R)$ is a vertex cut in $\cl(G,v)$.  Assume WLOG that $N_G(v) \cap L = \emptyset$ (the case $N_G(v) \cap R = \emptyset$ is similar). Since  $(L,S \cup \{v\},R)$ is a vertex cut in $G$ and $N_G(v) \cap L = \emptyset$, we have that $(L,S, R \cup \{v\})$ is a vertex cut in $G$.%
\end{proof}

It is convenient to consider a batch version of vertex closure operations. For any subset of vertices $X \subset V$, we denote $\cl(G,X)$ to be the graph $G$ after closing all vertices in $X$. Note that the resulting graph is the same regardless of the ordering of closure operations. 

\begin{proposition} \label{pro:batch closure}
For any subset of vertices $X \subset V$ in $G$, $\cl(G,X)$ is the same graph as the following transformation on $G$.  Let $S_1, \ldots, S_{\ell}$ be connected component of $G[X]$. For each $i$, we add a clique on $N_G(S_i)$, i.e., adding an edge $(u,v)$ for all pairs $(u,v)$ of vertices in $S_i$. Then, we remove $X$ from $G$.
\end{proposition}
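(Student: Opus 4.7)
The plan is to prove the two graphs coincide by checking their vertex sets and edge sets agree. Both graphs have vertex set $V \setminus X$: the iterated closure removes each vertex of $X$ one by one, and the explicit construction removes $X$ directly. So the task reduces to matching edge sets, and for this I would first establish the following characterization of the edges of $\cl(G, X)$:
\[
(u,v) \in E(\cl(G,X)) \iff (u,v) \in E(G) \text{ or some } u\text{--}v \text{ path in } G \text{ has all internal vertices in } X,
\]
where of course $u, v \in V \setminus X$.

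Given this characterization, the proposition follows by a short structural observation. A $u$--$v$ path in $G$ whose internal vertices lie in $X$ has those internal vertices inducing a connected subgraph of $G[X]$, hence contained in a single connected component $S_i$ of $G[X]$; thus $u$ and $v$ both belong to $N_G(S_i)$. Conversely, any two distinct vertices $u, v \in N_G(S_i)$ with $S_i$ connected in $G[X]$ are joined by a path $u \to s_u \to \cdots \to s_v \to v$ whose internal part lies in $S_i \subseteq X$. So the extra edges predicted by the characterization are exactly the clique edges on each $N_G(S_i)$, matching the explicit construction.

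To prove the characterization I would use induction on $|X|$. The base case $|X| = 1$ is immediate from the definition of vertex closure. For the inductive step, pick any $v \in X$ and set $X' = X \setminus \{v\}$. I would first verify that $\cl(G, X) = \cl(\cl(G, v), X')$, i.e.\ that iterated closures are order-independent; this requires a brief check since later closures see edge sets modified by earlier ones, but the argument is symmetric. Then apply induction to $\cl(G, v)$ and $X'$: an edge $(u, w)$ of $\cl(\cl(G, v), X')$ arises either from an edge of $\cl(G, v)$ or from a $u$--$w$ path in $\cl(G, v)$ whose internal vertices lie in $X'$. Any use of a new edge $(a, b) \in E(\cl(G, v)) \setminus E(G)$ is expanded into the two-step subpath $a\text{--}v\text{--}b$ in $G$; this turns a $u$--$w$ path through $X'$ in $\cl(G, v)$ into a $u$--$w$ path through $X' \cup \{v\} = X$ in $G$, matching the desired characterization. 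The converse direction is analogous: given a $u$--$w$ path in $G$ through $X$, one contracts each occurrence of $v$ using the edge newly present in $\cl(G, v)$ to produce a $u$--$w$ path in $\cl(G, v)$ through $X'$, which induction then turns into an edge of $\cl(\cl(G, v), X')$.

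The main obstacle I expect is precisely this bookkeeping of expanding and contracting edges introduced by earlier closures: making sure each such transformation preserves the property that all internal vertices lie in $X$ (resp.\ in $X'$), and justifying order-independence of closures cleanly. Once that is handled, the rest of the proof is a direct unwinding of definitions.
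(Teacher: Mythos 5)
Your proposal is correct and rests on the same core idea as the paper's proof, namely that a $u$-$v$ path with all internal vertices in $X$ collapses to an edge under closure, and that such paths necessarily live in a single connected component of $G[X]$. Where you go further is in formulating this as an explicit iff-characterization of the edges of $\cl(G,X)$ and proving it by induction on $|X|$ while simultaneously handling order-independence of closures. The paper's proof is considerably terser: it fixes one connected component $S$, exhibits a path from $u$ to $v$ through $S$ for each pair $u,v \in N_G(S)$, and tracks that closing internal vertices of that path eventually produces the edge $(u,v)$ -- but it does not explicitly verify the converse inclusion (that no edges other than the clique edges appear), and it treats commutativity of closures as given (it was asserted without proof just before the proposition). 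Your two-sided characterization handles that converse automatically, and the inductive bookkeeping with expanding/contracting the subpath $a$-$v$-$b$ is exactly what is needed to make the order-independence and the converse inclusion airtight. One small point to watch in a full write-up: expanding several new edges of $\cl(G,v)$ along a path can make $v$ appear more than once, producing a walk rather than a simple path; this is harmless since a $u$-$w$ walk with internal vertices in $X$ contains a $u$-$w$ path with internal vertices in $X$, but it is worth saying explicitly.
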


\begin{proof}
It is enough to prove for individual connected component. Fix a connected component $S$ of $G[X]$. We prove that there is an edge $(u,v)$ for every pair of vertices $u$ and $v$ in $N_G(S)$ in $\cl(G,S)$. Fix $u,v \in N_G(S)$. Since $S$ is a connected component, there is a path $P$ from $u$ to $v$ using only vertices from $S \cup \{u,v\}$. For any sequence of closure operation in $S$, whenever an internal vertex of $P$ is closed,  we obtain a path from $u$ to $v$ using only vertices $S \cup \{u ,v\}$.  After closing the last remaining internal vertex of the path $P$, we will add an edge from $u$ to $v$. 
\end{proof}

\begin{lemma} [Offline Closure Oracle] \label{lem:static neighborhood oracle} 
We are given a graph $G = (V,E)$ with $n$ vertices and $m$ edges and a set of vertices $X \subseteq V$ and an integer $c > 0$. Let $V' = V -X$. Then, there is an $O(mc)$-time deterministic algorithm that outputs  a $(V',c)$-sparsifier for $\cl(G,X)$ with $|V'|$ vertices and at most $c|V'|$ edges.  
\end{lemma}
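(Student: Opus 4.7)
The plan is to avoid constructing $\cl(G,X)$ explicitly, since the cliques introduced by closure could produce $\Theta(\sum_i |N_G(S_i)|^2) = \Omega(m^2)$ edges in the worst case. Instead, I would exploit the fact that on a hypergraph representation the ``closed'' object has size proportional to the input, and only at the very end pay the $O(mc)$ cost to sparsify down to a regular graph.

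Concretely, first compute the connected components $S_1,\dots,S_\ell$ of $G[X]$ by BFS/DFS in $O(n+m)$ time. For each $S_i$, a single pass over the edges with one endpoint in $S_i$ yields $N_i := N_G(S_i) \cap V'$; the total work is $O(m)$ because each edge is visited $O(1)$ times. By \Cref{pro:batch closure},
\[
\cl(G,X) \;=\; \clique(H),
\]
where $H$ is the hypergraph on vertex set $V'$ whose hyperedges are the sets $N_1,\dots,N_\ell$ together with the (binary) edges of $G[V']$, and $\clique(\cdot)$ replaces each hyperedge by a clique on the same vertex set. The total size of $H$ is $\sum_i |N_i| + |E(G[V'])| = O(m)$, so we have packaged $\cl(G,X)$ into an object of size linear in $m$ without ever materialising the cliques.

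Next, I would invoke a hypergraph-to-graph vertex-cut sparsifier (the Nagamochi--Ibaraki-style primitive alluded to in the introduction's discussion of closure oracles via hypergraphs) on $H$: in $O(mc)$ time, it outputs a graph $G'$ on vertex set $V'$ with at most $c|V'|$ edges such that $G'$ is $(V',c)$-equivalent to $\clique(H)$ and every separator of size $<c$ in $G'$ is a separator in $\clique(H)$. These two properties together are exactly the definition of a $(V',c)$-sparsifier for $\cl(G,X)$, giving the claimed output.

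The main obstacle is the sparsifier step: we need to run a Nagamochi--Ibaraki-style forest-packing directly on $H$ rather than on $\clique(H)$, so that processing a hyperedge of size $k$ costs $O(k c)$ instead of the $\Theta(k^2)$ one would pay after materialising the clique. This is precisely where the lifting to hypergraphs pays off, and it constitutes the technical core of the hypergraph-to-graph sparsifier; once that primitive is available, correctness of $G'$ reduces to the standard Nagamochi--Ibaraki guarantee (\Cref{thm:nagamochi}) applied to $\clique(H)$, and the running time is $O(m) + O(mc) = O(mc)$ as claimed.
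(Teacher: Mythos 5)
Your setup is correct: lifting to the hypergraph $H$ with hyperedges $N_1,\dots,N_\ell$ plus the ordinary edges of $G[V']$ indeed gives $\cl(G,X)=\clique(H)$ by \Cref{pro:batch closure}, and $H$ has size $O(m)$. But you then leave the entire remaining task — producing, in $O(mc)$ time, a graph on $V'$ with $\le c|V'|$ edges that is $(V',c)$-equivalent to $\clique(H)$ and $c$-cut-recoverable for it — as a black box, citing ``a Nagamochi--Ibaraki-style primitive alluded to in the introduction.'' No such hypergraph-to-graph sparsifier is stated or proved anywhere in the paper; the introduction's discussion of hypergraphs concerns only how to implement closure fast via union-find, not a hypergraph forest-packing, and a Nagamochi--Ibaraki scan-first-search on hypergraphs preserving \emph{vertex} cuts is not an off-the-shelf tool. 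You yourself call this step ``the technical core'' of the argument, so omitting it means the proposal does not actually contain a proof.

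The paper fills this gap with a more elementary construction that sidesteps hypergraph NI entirely: for each component neighborhood $N_i$, it adds a \emph{$c$-partial clique} — pick $c$ arbitrary hub vertices of $N_i$ and connect each hub to every vertex of $N_i$ — using only $O(c|N_i|)$ edges. Summing over components this gives an ordinary graph $G'$ with $O(mc)$ edges, to which \Cref{thm:nagamochi} is applied directly. The nontrivial content is proving that $G'$ and $\cl(G,X)$ have the same separators of size $<c$; the key observation (\Cref{claim:explicit cut}) is a pigeonhole argument: any separator $S$ with $|S|<c$ must miss one of the $c$ hubs of each partial clique, and that surviving hub is adjacent to all of $N_i$, forcing $N_i$ to lie entirely on one side of the cut. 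You should either supply the $c$-partial-clique construction plus this pigeonhole lemma, or give an actual construction and proof of the hypergraph sparsifier you are invoking; as written there is a missing step.
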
 
We prove \Cref{lem:static neighborhood oracle} in the next section. 

\subsection{Proof of \Cref{lem:static neighborhood oracle}} \label{sec:offline closure oracle}
  We describe the algorithm, analyze running time, and argue its correctness. For any subset of vertex $S \subseteq V$, we define a $c$-\textit{partial clique} on $S$ as follows. If $|S| < c$, then we add an edge to every pair of vertices in $S$. Otherwise, we select an arbitrary $c$ vertices in $S$; for each selected vertex, we add an edge to every other vertex in $S$. 
  
  \paragraph{Algorithm.} The inputs include $G$, $X$ and $c$ as defined in the statement.
  \begin{enumerate}
      \item Starting with $G$. Let $Y_1,\ldots, Y_\ell$ be connected components of $G[X]$. For each connected component $Y_i$, we add a $c$-partial clique on $N_G(Y_i)$, and remove $Y_i$. We call the new graph $G'$.  
      \item Apply \Cref{thm:nagamochi} on $G'$ using $c$ as a parameter, and return the resulting graph. 
  \end{enumerate}
  
  \paragraph{Running Time.} It takes $O(m)$ time to find a set of connected components of $G[X]$. The running time for adding partial clique can be computed as follows. For each connected component $Y_i$, it takes $O(c|N_G(Y_i)|) = O(c \vol_G(Y_i))$ time to add a $c$-paritial clique. Therefore, the total time in this step is $O(\sum_{i} c|N_G(Y_i)|) = O(c \sum_i\vol(Y_i)) = O(mc)$. The last equality follows since $Y_1,\ldots Y_\ell$ are pairwise disjoint. The new graph has $O(mc)$ edges, and then we apply \Cref{thm:nagamochi} on the new graph which takes $O(mc)$ time.
  
  \paragraph{Correctness.}   Let $H = \cl(G,X)$.  Observe that $H$ and $G'$ have the same vertex set $V'$. 
  \begin{claim}  \label{claim:G' is V'c sparsifier}
  $G'$ is a $(V',c)$-sparsifier for $H$.
  \end{claim}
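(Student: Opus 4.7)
The plan is to first invoke the batch closure description (\Cref{pro:batch closure}) to rewrite $H = \cl(G, X)$ as the graph obtained from $G$ by, for each connected component $Y_i$ of $G[X]$, adding a full clique on $N_G(Y_i)$ and then deleting $X$. Since the first step of the algorithm differs only in using a $c$-partial clique on each $N_G(Y_i)$ in place of the full clique, we have $V(G') = V(H) = V'$ and $E(G') \subseteq E(H)$; the only extra edges of $H$ are the ``missing'' clique edges inside each $N_G(Y_i)$.

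The structural observation I would isolate is this: if $K_i$ denotes the $c$-partial clique added on $N_G(Y_i)$, then for any vertex subset $S$ with $|S| < c$, the induced subgraph $K_i - S$ is connected on $V(K_i) - S$. Indeed, the $c$-partial clique has at least $\min(c, |N_G(Y_i)|)$ designated ``central'' vertices each adjacent to every other vertex of $K_i$; when $|S| < c$ at least one central vertex survives in $K_i - S$, and every other surviving vertex is adjacent to it. From this I derive the key subclaim: for every $S \subseteq V'$ with $|S| < c$ and every $u, w \in V' - S$, there is a $(u, w)$-path in $G' - S$ if and only if there is one in $H - S$. The forward direction is immediate since $E(G') \subseteq E(H)$. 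For the reverse, I take any $(u, w)$-path in $H - S$ and replace each edge in $E(H) \setminus E(G')$, whose endpoints both lie in some $N_G(Y_i) - S$, by a short detour inside $K_i - S$. Concatenating yields a $(u, w)$-walk in $G' - S$, hence a path.

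From the subclaim the two conditions of \Cref{def:mimicking network} fall out. For $c$-cut-recoverability, if $T$ is a separator in $G'$ with $|T| < c$, there exist $u, w \in V' - T$ with no $(u, w)$-path in $G' - T$; the subclaim transfers this non-reachability to $H - T$, so $T$ is a separator in $H$. For $(V', c)$-equivalence, $E(G') \subseteq E(H)$ gives $\mu_{G'}(A, B) \leq \mu_H(A, B)$ for any $A, B \subseteq V'$. Conversely, if $T$ is a minimum $(A, B)$-weak separator in $G'$ with $|T| < c$, applying the subclaim with $S = T$ shows that every $a \in A - T$ is unreachable from every $b \in B - T$ in $H - T$, giving $\mu_H(A, B) \leq |T| = \mu_{G'}(A, B)$. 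Hence $\mu_{G'}(A, B) = \mu_H(A, B)$ whenever either value is ${<}c$, so $\min\{\mu_{G'}(A, B), c\} = \min\{\mu_H(A, B), c\}$ in all cases. The main obstacle is the structural observation: one must treat the $|N_G(Y_i)| < c$ and $|N_G(Y_i)| \geq c$ regimes carefully so that a central vertex is guaranteed to survive any $({<}c)$-sized removal; after that, the path-replacement and double inequality are routine.
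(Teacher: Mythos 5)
Your proof is correct, and it reaches the conclusion by a somewhat different route than the paper. Both arguments ultimately hinge on the same elementary fact about a $c$-partial clique: after removing fewer than $c$ vertices, at least one of the designated central vertices remains and is adjacent to every surviving vertex. However, you deploy this fact to prove a single \emph{reachability-equivalence} subclaim (for every $S$ with $|S|<c$ and every $u,w\in V'-S$, a $(u,w)$-path exists in $G'-S$ iff it exists in $H-S$), established by replacing each extra edge of $H$ with a detour inside the corresponding $K_i-S$. From this one statement, both conditions of the sparsifier definition fall out cleanly: $c$-cut-recoverability for $H$ is the forward transfer of nonreachability, and the $(V',c)$-equivalence follows from the subgraph inequality $\mu_{G'}\le\mu_H$ together with the reverse inequality for min separators of size ${<}c$. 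The paper instead argues through an explicit cut-structural lemma (\Cref{claim:explicit cut}): it induct\-ively tracks a vertex cut $(L,S,R)$ backward across the sequence $G_\ell,\dots,G_0$ of partial-clique insertions, showing each $N_G(Y_i)$ falls entirely on one side of the cut, and then derives \Cref{claim:v'c equivalent,claim:monotone v'} as corollaries. Your path-replacement framing is arguably more direct for the $\mu$-value equivalence (the paper's \Cref{claim:v'c equivalent} is stated in terms of separators, and the lift to $\mu$-values for arbitrary $A,B\subseteq V'$ is left slightly implicit), while the paper's explicit-cut version has the side benefit of also yielding \Cref{claim:monotone v'} ($c$-cut-recoverability for the original $G$, not just for $H$) in one pass. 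One small phrasing caveat in your writeup: when $|N_G(Y_i)|<c$ the ``central vertex survives'' statement should be read as holding whenever $V(K_i)-S$ is nonempty (which is the only case you actually invoke, since you only need a detour when both endpoints $a,b$ survive), otherwise the claim as literally stated could fail vacuously.
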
 
  
  Therefore, after applying  \Cref{thm:nagamochi} on $G'$ with parameter $c$, we obtain a $(V',c)$-sparsifier for $H$ where the set of vertices is $V'$ and the number of edges is at most $|V'|c$. It remains to prove \Cref{claim:G' is V'c sparsifier}, which follows from the following two claims. 
  \begin{claim} \label{claim:v'c equivalent}
  For every $S \subseteq V'$ such that $|S| < c$, $S$ is a separator in $G'$ if and only if $S$ is a separator in $H$.  Therefore, $G'$ and $H$ are $(V',c)$-equivalent. 
  \end{claim}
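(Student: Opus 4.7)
The plan is to compare $G'$ with $H = \cl(G,X)$ using the batch-closure description from \Cref{pro:batch closure}. By that proposition, $H$ is obtained from $G$ by replacing each connected component $Y_i$ of $G[X]$ by a \emph{full} clique on $N_G(Y_i)$ and deleting $Y_i$; by contrast, $G'$ is obtained by adding only a $c$-partial clique on each $N_G(Y_i)$. Hence $G'$ is a spanning subgraph of $H$ on the common vertex set $V'$, and the only edges in $E(H)\setminus E(G')$ lie entirely within some $N_G(Y_i)$.

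For the ``if'' direction (every separator of $H$ is a separator of $G'$) I would simply observe that $E(G')\subseteq E(H)$: removing $S$ from $G'$ leaves no more edges than removing it from $H$, so if $H-S$ is disconnected, so is $G'-S$. The real content is in the ``only if'' direction. Assume $S\subseteq V'$ with $|S|<c$ is a separator of $G'$. The key property of the $c$-partial clique construction is that, when $|N_G(Y_i)|\ge c$, we selected $c$ ``hub'' vertices each adjacent to every other vertex of $N_G(Y_i)$; when $|N_G(Y_i)|<c$ the partial clique already equals the full clique. Since $|S|<c$, in the former case at least one hub $h_i\in N_G(Y_i)\setminus S$ survives, and $h_i$ is adjacent in $G'-S$ to every vertex of $N_G(Y_i)\setminus S$. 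Thus $N_G(Y_i)\setminus S$ lies entirely inside one connected component of $G'-S$. All missing edges of $H-S$ are internal to some $N_G(Y_i)\setminus S$, so inserting them cannot merge components of $G'-S$; therefore $H-S$ is disconnected as well, establishing the biconditional.

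For the ``therefore'' part, namely $(V',c)$-equivalence, I would promote the same reasoning to weak separators. For $A,B\subseteq V'$ the inclusion $E(G')\subseteq E(H)$ immediately gives $\mu_{G'}(A,B)\le\mu_H(A,B)$, hence $\min\{\mu_{G'}(A,B),c\}\le\min\{\mu_H(A,B),c\}$. For the reverse, suppose $S$ is an $(A,B)$-weak separator of $G'$ with $|S|<c$, and take any $(a,b)$-path $P$ in $H-S$ with $a\in A\setminus S$, $b\in B\setminus S$. Each edge of $P$ is either already in $G'-S$, or is an extra clique edge in some $N_G(Y_i)\setminus S$; in the latter case both endpoints are co-component with the surviving hub $h_i$, and may be reconnected in $G'-S$ through $h_i$. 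Concatenating these local detours converts $P$ into an $(a,b)$-walk in $G'-S$, contradicting the assumption that $S$ weakly separates $A$ from $B$ in $G'$. Thus any $<c$-sized weak separator of $G'$ is also one of $H$, and capping at $c$ yields the required equality.

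The only delicate bookkeeping is that $(A,B)$-weak separators may intersect $A$ or $B$, which is why one must argue about $A\setminus S$ and $B\setminus S$ rather than $A$ and $B$ when rerouting paths; but once the hub property of the $c$-partial clique is in hand this is straightforward. The main obstacle is really just formalizing the hub observation cleanly — everything else reduces to it.
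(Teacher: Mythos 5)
Your proof is correct and rests on the same key observation the paper uses: because $|S|<c$, each $c$-partial clique on $N_G(Y_i)$ retains at least one hub vertex outside $S$, so $N_G(Y_i)\setminus S$ stays in a single component of $G'-S$ and the extra full-clique edges of $H$ cannot merge any components. The route is organized differently, though. The paper factors the hub argument into \Cref{claim:explicit cut}, which lifts a vertex cut $(L,S,R)$ of $G'$ to a vertex cut $(L\cup L',S,R\cup R')$ of the \emph{original} graph $G$, and only then argues that the closure $H=\cl(G,X)$ respects the cut. You compare $G'$ and $H$ directly, bypassing $G$ entirely. The paper's detour through $G$ pays off because \Cref{claim:explicit cut} is shared with the proof of \Cref{claim:monotone v'} ($c$-cut-recoverability of $G'$ for $G$), which your direct comparison does not yield for free. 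Conversely, your handling of the ``therefore'' step is more careful than the paper's: the biconditional ``$S$ separates $G'$ iff $S$ separates $H$'' does not by itself imply $(V',c)$-equivalence, which concerns $(A,B)$-weak separators for all $A,B\subseteq V'$; the paper's proof in fact establishes the stronger statement that the vertex cut $(L,S,R)$ itself transfers, and leaves the passage to $\mu$-equivalence implicit, whereas your explicit path-rerouting through the surviving hubs handles the weak-separator case (including when $S$ intersects $A$ or $B$) directly.
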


  \begin{claim} \label{claim:monotone v'}
   If $S \subseteq V'$ is a separator in $G'$ such that $|S| < c$, then $S$ is a separator in $G$. Therefore, $G'$ is $c$-cut-recoverable for $G$.%
  \end{claim}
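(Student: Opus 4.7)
The plan is to derive Claim \ref{claim:monotone v'} by composing the (assumed) Claim \ref{claim:v'c equivalent} with the monotonicity of vertex closure. The key observation is that cut-recoverability is transitive, so it suffices to establish that $H = \cl(G, X)$ is cut-recoverable for $G$; once we have this, any separator $S \subseteq V'$ of $G'$ with $|S| < c$ is, by Claim \ref{claim:v'c equivalent}, a separator of $H$, and hence by cut-recoverability of $H$ for $G$, also a separator of $G$, which is exactly Claim \ref{claim:monotone v'}.

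To show $H$ is cut-recoverable for $G$, I will iterate Lemma \ref{lem:monotonic vertex closure}. Enumerate $X = \{x_1, \ldots, x_k\}$ in an arbitrary order and set $G_0 = G$ and $G_i = \cl(G_{i-1}, x_i)$. Lemma \ref{lem:monotonic vertex closure} says that each $G_i$ is cut-recoverable for $G_{i-1}$, and cut-recoverability is transitive (as noted immediately after Proposition \ref{pro:recoverable imply monotone}), so $G_k$ is cut-recoverable for $G$. Proposition \ref{pro:batch closure} ensures that $G_k = \cl(G, X) = H$ independent of the chosen ordering, yielding the desired property.

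The ``therefore $G'$ is $c$-cut-recoverable for $G$'' portion of the statement then follows directly from the definition of $c$-cut-recoverability, using that $V(G') = V' \subseteq V(G)$. I do not foresee a real obstacle: the entire argument is a clean composition of previously established pieces. The only minor subtlety is to verify that transitivity applies across the inclusions $V(G') \subseteq V(H) \subseteq V(G)$, which is automatic because each intermediate closure merely removes a vertex, so any separator confined to $V'$ is a legitimate subset of every ambient vertex set along the chain.
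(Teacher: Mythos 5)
Your proof is correct but takes a genuinely different route from the paper. The paper derives Claim~\ref{claim:monotone v'} in one line as an immediate corollary of Claim~\ref{claim:explicit cut}: that claim hands you an explicit vertex cut $(L\cup L', S, R\cup R')$ of $G$ whenever $(L,S,R)$ is a vertex cut of $G'$ with $|S|<c$, so $S$ is directly exhibited as a separator of $G$ with no detour through $H=\cl(G,X)$. You instead factor through $H$: Claim~\ref{claim:v'c equivalent} transfers the small separator from $G'$ to $H$, and then iterated Lemma~\ref{lem:monotonic vertex closure} (composed via transitivity of cut-recoverability, with Proposition~\ref{pro:batch closure} guaranteeing the iterated closures agree with $\cl(G,X)$) lands it in $G$. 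This is logically sound because the paper's proof of Claim~\ref{claim:v'c equivalent} does not rely on Claim~\ref{claim:monotone v'}; note, however, that Claim~\ref{claim:v'c equivalent} is itself proved via Claim~\ref{claim:explicit cut}, so your argument still rests on that structural claim further up the chain. The paper's route is shorter and more self-contained (one claim yielding both corollaries), while yours makes explicit that, once $(V',c)$-equivalence of $G'$ and $H$ is known, $c$-cut-recoverability of $G'$ for $G$ is simply inherited from the monotonicity of the closure operation.
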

   The two claims are corollaries of the following claim. 
  \begin{claim} \label{claim:explicit cut}
   If $(L,S,R)$ is a vertex cut in $G'$ where $|S| < c$, then $(L \cup L', S, R \cup R')$ is a vertex cut in $G$ where $L' = \bigcup_{i \in U} Y_i$, and $R' = \bigcup_i Y_i - L'$ for some $U \subseteq \{1,\ldots,\ell\}$. 
  \end{claim}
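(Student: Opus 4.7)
The plan is to show that each connected component $Y_i$ of $G[X]$ lies entirely on one side of the cut $(L,S,R)$, and to assemble these assignments into the required cut of $G$.

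The key observation is that for every $i \in \{1,\dots,\ell\}$, the set $N_G(Y_i) \setminus S$ is contained entirely in $L$ or entirely in $R$. This is where I would use the structure of the $c$-partial clique installed on $N_G(Y_i)$ when constructing $G'$. I would split on cases: if $|N_G(Y_i)| < c$, the partial clique is the full clique on $N_G(Y_i)$, so any two surviving vertices in $N_G(Y_i) \setminus S$ are joined by an edge that lies in $G' \setminus S$; if $|N_G(Y_i)| \ge c$, then since only $|S| < c$ selected vertices can be removed, at least one of the $c$ selected vertices $w$ survives, and by construction $w$ is adjacent in $G'$ to every other vertex of $N_G(Y_i)$, so $N_G(Y_i) \setminus S$ is connected in $G'\setminus S$ through $w$. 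Either way, all of $N_G(Y_i)\setminus S$ sits in the same connected component of $G'\setminus S$, hence on the same side of $(L,S,R)$.

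Given this, I would define $U$ as follows: put $i \in U$ if $N_G(Y_i) \setminus S \subseteq L$, put $i \notin U$ if $N_G(Y_i)\setminus S \subseteq R$, and for components with $N_G(Y_i) \subseteq S$ (including the degenerate case $N_G(Y_i)=\emptyset$) place $i$ in $U$ by convention. Set $L' = \bigcup_{i\in U} Y_i$ and $R' = \bigcup_{i\notin U} Y_i$, so that $L'\cup R' = X$ and $(L\cup L') \cup S \cup (R\cup R') = V'\cup X = V$ is a partition. Non-emptiness of $L\cup L'$ and $R\cup R'$ follows immediately from non-emptiness of $L$ and $R$.

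It then remains to verify that $G$ has no edge between $L\cup L'$ and $R\cup R'$. I would case-split on the endpoints. Edges with both endpoints in $V'$ are present in $G$ exactly when they are present in $G'$ restricted to original edges, so any such edge between $L$ and $R$ would contradict the fact that $(L,S,R)$ is a cut in $G'$. Edges with both endpoints in $X$ lie within a single $Y_i$ (because the $Y_i$ are the connected components of $G[X]$), hence in the same side. Finally, an edge from $y\in Y_i$ to $w\in V'$ must have $w\in N_G(Y_i)$; if $w\in S$ there is nothing to check, and otherwise $w\in N_G(Y_i)\setminus S$ lies on the side assigned to $Y_i$ by our rule, so the edge does not cross. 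The main technical point to get right is the partial-clique connectivity argument handling both the $|N_G(Y_i)|<c$ and $|N_G(Y_i)|\ge c$ regimes uniformly; the rest is a straightforward case analysis of edge types.
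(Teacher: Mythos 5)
Your proof is correct and is essentially the same argument as the paper's, with the core lemma identical: the $c$-partial clique on $N_G(Y_i)$ together with $|S| < c$ forces all of $N_G(Y_i) \setminus S$ onto a single side of $(L,S,R)$. The only cosmetic difference is that the paper structures this as an induction that reverses the closure/partial-clique step for one component $Y_i$ at a time, whereas you assign all components to sides simultaneously; both versions amount to the same case analysis.
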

  \begin{proof}
  Let $G_i$ be the graph after $i$ iterations of the first step in the algorithm. By design, we have $G_0 = G$, and $G_{\ell} = G'$. 
  We prove the following for all $i \leq \ell$:   At the end of iteration $i$, if $S$ is a separator in $G_{i}$ of size less than $c$, then $S$ is a separator in $G_{i-1}$. Let $(L,S,R)$ be a vertex cut in $G_i$ where $|S| < c$. Since we only add edges between $N(Y_i)$ in $G_{i-1}$ to get $G_i$, we have that $(L, S \cup Y_i,R)$ is a vertex cut in $G_{i-1}$. The key claim is $N(Y_i) \subseteq L \cup S$ or $N(Y_i) \subseteq S \cup R$. Indeed, suppose otherwise. Since $|S| < c$, there is one vertex $x$ (out of $c$ vertices) in $c$-partial clique such that $x \not \in S$ and we add an edge from $x$ to every vertex in $N(Y_i)$ (if $|N(Y_i)| < c$, then $x$ can be any vertex outside $S$). WLOG, assume that $x \in L$. By design, we  connect $x$ to every vertex in $N(Y_i)$ including a vertex in $R$. This implies that there is an edge between $L$ and $R$ in $G_{i-1}$, contradicting to the fact that $(L,S,R)$ is a vertex cut in $G_{i-1}$. Since $N(Y_i) \subseteq L \cup S$ or $N(Y_i) \subseteq S \cup R$, we have that either $(L \cup Y_i,S,R)$ or $(L,S, R \cup Y_i)$ is a vertex cut in $G_{i-1}$. Therefore, $S$ is a separator in $G_{i-1}$. 
  \end{proof}
  
  \Cref{claim:monotone v'} follows immediately from \Cref{claim:explicit cut}. It remains to derive \Cref{claim:v'c equivalent}.
  \begin{proof}[Proof of \Cref{claim:v'c equivalent}]
   By \Cref{pro:batch closure}, $G'$ is a subgraph of $H$, and thus a separator in $H$ is a separator in $G'$. It remains to prove that if $S$ where $|S| < c$ is a separator in $G'$, then $S$ is also a separator in $H$. Let $(L,S,R)$ be a vertex cut in $G'$ where $|S| < c$. By \Cref{claim:explicit cut},  $(L \cup L', S, R \cup R')$ is a vertex cut in $G$ where $L' = \bigcup_{i \in U} Y_i$, and $R' = \bigcup_i Y_i - L'$ for some $U \subseteq \{1,\ldots,\ell \}$. Observe that $\cl(G,X)$ will add only edges in the neighbors of $Y_i$. Since $Y_i \subseteq L'$ or $Y_i \subseteq R'$ for all $i$, $\cl(G,X)$ does not add an edge between $L$ and $R$, and thus $(L,S,R)$ is a vertex cut in $H$.
  \end{proof}
\subsection{Proof of \Cref{lem:reduction tc covering}} \label{sec:proof of reduction tc covering}
  Let $X = V - (Z \cup T)$ be a set of vertices to be closed, and let $H =
  \cl(G,X) = (V_{H},E_{H})$. First, observe  that $\kappa_H \ge \kappa_G$ by \Cref{lem:monotonic vertex closure}. Next, we claim that $H$ is $(T,c)$-equivalent
  to $G$.  If true, then we apply \Cref{lem:static neighborhood oracle} using the graph $G$ and the
  closure set $X$ to obtain a $(V_{H},c)$-sparsifier for $H$ whose number of
  vertices is $|T\cup Z|$ and number of edges is at most $c|T \cup Z|$. Since $T
  \subseteq V_H$, the sparsifier is also a $(T,c)$-sparsifier for $G$ as
  desired. By \Cref{lem:static neighborhood oracle}, the algorithm takes $O(mc)$ time.  %

  We now prove the claim.  We show that for all pair $A,B \subseteq
  T$,  we have that $\min\{c,\mu_G(A,B)\} = \min\{c, \mu_{\cl(G,X)}(A,B)\}$. Fix a pair $A,B \subseteq T$. If
  $\mu_G(A,B) \geq c$, then $\mu_{\cl(G,X)}(A,B) \geq \mu_G(A,B) \geq c$ (by 
  monotonicity property of closure operations, \Cref{lem:monotonic vertex closure}),
  and we are done.  Otherwise,  $\mu_G(A,B) < c$.  In this case, we prove that $\cl(G,X)$ contains a min
  $(A,B)$-weak separator in $G$. 
  %
  $(T,c)$-covering set. Hence, $S$ is never closed, i.e., $S \cap X =
  \emptyset$. Since $S$ and $T$ are not closed, we conclude that $S$ is an $(A,B)$-weak separator in
  $\cl(G,X)$. In fact, $S$ is a min $(A,B)$-weak separator by
  monotonicity property of closure operations.  Therefore, $c > \mu_G(A,B)
  = |S| = \mu_{\cl(G,X)}(A,B)$. 

\section{Computing a $(T,c)$-Covering Sets} \label{sec:covering set}

This section is devoted to proving \Cref{thm:fast covering set}. We show a reduction from $(T,c)$-covering sets to $(T,c)$-\textit{\roc} \textit{\psetpair}. We set up notations.  Denote $\mu(A,B)$ to be the size of a minimum $(A,B)$-weak separator. Let $\mathcal{S}_{A,B}$ be the set of all min $(A,B)$-weak separators. Define $\mu^T(A,B) = \min_{S \in \mathcal{S}_{A,B}} |S -
T|$.  %
We say that a set $S$ \textit{covers}
another set $T$ if $T \subseteq S$. 

\begin{definition} \label{def:roc partition}
Given a graph $G = (V,E)$ with terminal set $T$, and parameter $c >0$, a \psetpair  $(\Pi = \{Z,X_1,\ldots,X_\ell\}, C \subseteq V)$ is $(T,c)$-\textit{\roc} if $\Pi$ is a partition of $V$ such that $Z$ is an $(X_i,X_j)$-separator for all $i \neq j$ and for all $A,B \subseteq T$ if $\mu(A,B) \leq c$, then one of the followings is true for some min $(A,B)$-weak separator $S$:
\begin{enumerate}
    \item $C$ covers $S$, %
    \item $\Pi$ \textit{splits} $S$, i.e., $|S \cap N_G[X_i]| \leq |S| -1$ for all $i$,
        \item $\Pi$ $T$-\textit{hits} $S$, i.e., $| (S\cap X_i) - T| \leq \mu^T(A,B)-1$ for all $i$. %
\end{enumerate}
We also say that $Z$ is the \textit{reducer} of the partition $\Pi$, and $X_1,\ldots,X_{\ell}$ is the \textit{non-reducer sequence} of the partition $\Pi$.
\end{definition}

We first mention here that there exists an almost-linear time algorithm for computing a $(T,c)$-reducing-or-covering $(\Pi,C)$ such that $|Z|,|C|$ and the total size of boundaries $\sum_i |N(X_i)|$ are small. This is formalized in  \Cref{thm:reducing-or-covering set system} below and will be proved in \Cref{sec:reducing set}.
We will use it as a key subroutine for constructing a $(T,c)$-covering set in this section.

\begin{theorem}  \label{thm:reducing-or-covering set system}
  Given a graph $G$ with terminal set $T$ and two parameters $c > 0,\phi \in (0,1)$ where $G$ has arboricity $c$ and $k = |T|$, there is an $O(m^{1+o(1)}\phi^{-4} \cdot
  2^{O(c^2)})$-time  algorithm that outputs  a
 $(T,c)$-\textit{\roc \psetpair} $(\{Z
 ,X_1,\ldots, X_\ell\},C)$ of $G$ such that
  \begin{itemize}
   \item $|Z| = O( (k + \phi n^{1+o(1)})c^2)$,
   \item  $|C| = O((k+\phi n^{1+o(1)})2^{O(c^2)})$, and
   \item  $\sum_{i=i}^\ell |N(X_i)| = O((k+\phi n^{1+o(1)})c^2)$.
   \end{itemize}
\end{theorem}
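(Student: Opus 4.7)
The plan is to build the \psetpair $(\Pi, C)$ by layering the expander decomposition from \Cref{thm:fast xorterminal pair} with local vertex-connectivity enumerations inside each expander. First I would invoke $\textsc{ExpandersOrTerminal}(G, c, \phi')$ with an appropriately scaled $\phi' = \Theta(\phi)$, obtaining a collection $\calG$ of $\phi'$-vertex expanders of total vertex count $n^{1+o(1)}$ together with a boundary terminal set $T_b$ of size $O(\phi' n^{1+o(1)})$. The reducer is set as $Z := T \cup T_b$, augmented if necessary with a thin layer of boundary vertices so that distinct components of $G - Z$ are genuinely disconnected, and the non-reducer parts $X_1, \ldots, X_\ell$ are defined as the resulting connected components of $G - Z$. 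By construction $Z$ is an $(X_i, X_j)$-separator for every $i \neq j$. The arboricity-$c$ hypothesis then gives $\sum_i |N(X_i)| \leq c \cdot |Z| = O((k + \phi n^{1+o(1)}) c^2)$, matching the required bound.

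Next I would build $C$ by running local vertex-cut enumerations inside each expander. For every $H \in \calG$ and every vertex $x \in V(H)$ adjacent to $Z$, I would invoke the $(c, \nu)$-LocalVC algorithm with volume parameter $\nu$ proportional to $c/\phi'^{2}$; whenever it returns a set $L$ with $|N_H(L)| < c$, I add $L \cup N_H(L)$ to $C$. Because $H$ is a $\phi'$-expander, any vertex cut of size below $c$ is unbalanced with one side of volume at most $c/\phi'$, so this enumeration discovers every such local cut. Summed over $H \in \calG$ and across calls, the vertex count added to $C$ is bounded by $|T_b| \cdot 2^{O(c^2)}/\phi'^{O(1)}$, which upon absorbing $\phi'$ into the overall bound gives $|C| = O((k + \phi n^{1+o(1)}) 2^{O(c^2)})$.

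The heart of the proof, and the principal obstacle, is verifying the \roc condition of \Cref{def:roc partition}. Fix a pair $A, B \subseteq T$ with $\mu(A,B) \leq c$ and let $S$ be a min $(A,B)$-weak separator that minimizes $|S - T|$. Three cases arise depending on how $S$ sits relative to the partition. If $S$ meets the closed neighborhoods $N_G[X_i]$ and $N_G[X_j]$ of two distinct non-reducer parts, then $\Pi$ splits $S$ by definition. If $S \subseteq Z$, then $(S \cap X_i) - T = \emptyset$ for every $i$ and the $T$-hit condition holds vacuously. The delicate remaining case is when $S$ is contained in a single closed neighborhood $N_G[X_i]$ and contains non-terminal vertices of $X_i$; here the expander-local flow invocations from the terminal-adjacent vertices of $X_i$ should have recovered $S$, placing it in $C$. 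The technical subtlety is that $A, B$ are arbitrary subsets of $T$ rather than single vertices, so one must project the weak-separator problem onto the terminals incident to $X_i$ and argue that the local flow captures every mincut scenario, possibly at several parameter scales; this projection-and-enumeration step is where I expect the main difficulty and where the $\phi^{-4}$ factor in the running time enters.

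Throughout the construction, Nagamochi-Ibaraki sparsification (\Cref{thm:nagamochi}) is applied to keep arboricity at $c$ so that local flow calls remain cheap, and the whole procedure may need to be iterated a logarithmic number of rounds so that the minimality-of-$|S-T|$ argument can be invoked to rule out pathological separators that escape both the covering and splitting cases. Combining the vertex-count bound on $\calG$, the local-flow work per expander, and the sparsification overhead yields the stated $O(m^{1+o(1)} \phi^{-4} \cdot 2^{O(c^2)})$ running time.
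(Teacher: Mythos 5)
Your proposal has several genuine gaps; it does not establish the theorem.

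First, the decomposition is the wrong tool. You invoke $\textsc{ExpandersOrTerminal}$ (which builds new graphs $H$ by replacing one side of a cut with a clique of size $c$ — these are not subgraphs of $G$, so their vertex sets do not partition $V(G)$), whereas what is needed here is a \emph{vertex expander decomposition} of $G$ itself (a partition $\{Z, X_1, \ldots, X_\ell\}$ of $V$ together with expanding supersets $Y_i \supseteq N[X_i]$, as in \Cref{thm:vertex expander decomp}). The distinction matters because the \roc definition is stated in terms of $N_G[X_i]$ for parts of a partition of $V(G)$.

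Second, the boundary bound is wrong. Arboricity $c$ does not give $\sum_i |N(X_i)| \leq c|Z|$: arboricity controls edge counts of induced subgraphs, not how many distinct components a vertex of $Z$ can be adjacent to, so a single high-degree vertex in $Z$ could appear in $N(X_i)$ for many $i$. In the paper the control over $\sum_i|N(X_i)|$ comes from the expander decomposition guarantee $\sum_i |Y_i - X_i| \le \phi n^{1+o(1)}$ combined with a potential-function argument over the recursive Steiner-cut refinement (\Cref{claim:small total boundaries}), not from arboricity.

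Third, and most importantly, the construction of $C$ via LocalVC calls from boundary-adjacent vertices does not produce a $(T,c)$-\roc \psetpair. The \roc property (\Cref{def:roc partition}) requires that for every pair $A, B \subseteq T$ with $\mu(A,B) \le c$, some min $(A,B)$-\emph{weak separator} is either covered by $C$, split by $\Pi$, or $T$-hit by $\Pi$. A LocalVC call only returns a single unbalanced cut $(L, N(L))$ for one seed vertex; it does not enumerate the (exponentially many) pairs $A,B$, and a min $(A,B)$-weak separator need not be the neighborhood of any small-volume set at all. You flag this as "the main difficulty" but then do not supply a mechanism; in the paper this is resolved by a different engine entirely — a recursive algorithm on min \emph{Steiner} cuts that produces left/right closure-based subgraphs (\Cref{def:left right graphs}), with a delicate divide-and-conquer correctness lemma (\Cref{lem:closure recursion lemma}) and a potential-function argument (\Cref{lem:recursion tree: total number of internal nodes}) bounding the number of Steiner cuts at $O(kc)$. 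The base case is an explicit enumeration of min $(A,B)$-weak separators over all $A,B \subseteq T$ when $|T| \le 4c^2$, which is where $2^{O(c^2)}$ comes from — not from LocalVC repetitions. You are also missing the composition step (\Cref{lem:roc composable}) that glues per-expander \roc pairs back into one for $G$, and the hypergraph representation that makes the repeated clique-additions affordable. Without these pieces the proof does not go through.
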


\Cref{def:roc partition} might not be very intuitive at first.
Let us explain the terms ``split'' and ``$T$-hits'' more illustratively here.  Let $\Pi$ be a partition of $V$ from \Cref{def:roc partition}.  Let $S$ be an $(A,B)$-weak separator in $G$ of size $\leq c$ for some $A,B \subseteq T$. Observe that $\Pi$ splits $S$ if and only if there exists $i$ and $x,y \in S$ such that $N(X_i)$ is $(x,y)$-separator.  If $\Pi$ does not split $S$, then $S \subseteq N[X_i]$ for some $i$. 
In this case, note that $(S\cap X_i) - T = S - T - N(X_i)$. So $\Pi$ $T$-hits $S$  means that $N(X_i)$ contains enough number of non-terminal vertices of $S$ so that $|S\cap X_i - T| = |S - T - N(X_i)| \le \mu^T(A,B)-1$. In particular, $N(X_i)$ must contain at least one vertex from $S-T$, otherwise $|S - T - N(X_i)|\ge \mu^T(A,B)$.

Now, we explain why we say that $(\Pi,C)$ is ``reducing-or-covering''. 
For $i \leq \ell$, define $G_i = G[N[X_i]]$ and $T_i = N(X_i) \cup (X_i \cap T)$. 
If the set $C$ does not cover $S$, then the partition $\Pi$ must ``reduces'' $S$ in the following sense: either $S$ is split into different $G_i$ so that $|S \cap N[X_i]| \leq c-1$ for all $i$ or there is $i$ such that $S \subseteq N[X_i]$ and $|S- T_i| \leq \mu^T_G(A,B)-1$.  In other words, $S$ either becomes a smaller cut in $G_i$ or $S$ has the same size in one smaller graph $G_i$ but contains fewer non-terminal vertices in some $G_i$ with the new terminal set $T_i$. 

The above discussion suggests a recursive algorithm for constructing a $(T,c)$-covering set from a $(T,c)$-reducing-or-covering $(\Pi,C)$ where the algorithm recurses on each $G_i$ such that, for every $A,B\subseteq T$, some $(A,B)$-weak separator is ``reduced'' in $G_i$ in the above sense.
The correctness of this recursion scheme is captured by \Cref{lem:a reduction to reducing set system} below. To state it, we need to define a notion of $(T,c,\cbar)$-\textit{covering set}. 

\begin{definition}
Given a graph $G = (V,E)$ with terminal set $T$ and parameter $c > 0$, a vertex set $Z \subseteq V$ is $(T,c,\cbar)$-\textit{covering} if
  for all $A,B \subseteq T$ such that $\mu^T(A,B) \leq \cbar$ and
  $\mu(A,B) \leq c$,  $Z$ covers some min $(A,B)$-weak separator. We
  say that $Z$ is $(T,c)$-\textit{covering} if it is $(T,c,c)$-covering. 
\end{definition}

Observe that if $\cbar > c$, then an empty set is $(T,c,\cbar)$-covering.  Also, the terminal set $T$ is a $(T,c,0)$-covering set. Another trivial $(T,c)$-covering set is $V$. 
The structural lemma below is the key for our recursive algorithm.
\begin{lemma} \label{lem:a reduction to reducing set system}
Given a graph $G$ with terminal set $T$ and $c \geq \cbar > 0$, let $(\Pi = \{Z ,X_1,\ldots, X_\ell\},C)$ be a  $(T,c)$-\roc \psetpair of $G$.  For each $i \in
 [\ell]$, define $G_i = G[N[X_i]]$, $T_i = (T \cap X_i) \cup N_G(X_i)$
 and let $Y_i$ be a $(T_i,c-1,\cbar)$-covering set in $G_i$, $\bar Y_i$
 be a $(T_i,c,\cbar-1)$-covering set in $G_i$. Then, $Z \cup C \cup \bigcup_{i \leq
   \ell} (Y_i \cup \bar Y_i) \cup T$ is $(T,c,\cbar)$-covering set in $G$.  
\end{lemma}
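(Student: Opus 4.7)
The plan is to verify that for every $A, B \subseteq T$ with $\mu_G(A,B) \le c$ and $\mu^T_G(A,B) \le \cbar$, the union $W := Z \cup C \cup \bigcup_i (Y_i \cup \bar Y_i) \cup T$ covers some minimum $(A,B)$-weak separator of $G$. Applying the $(T,c)$-\roc property to $(A,B)$, I obtain a minimum $(A,B)$-weak separator $S$ satisfying one of the three structural conditions. If $C$ covers $S$ (Case 1), then $S \subseteq C \subseteq W$ and we are done immediately.

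For the remaining two cases, I decompose $S$ along $\Pi$: let $S_i := S \cap N[X_i]$. Because $Z$ is an $(X_i, X_j)$-separator for $j \ne i$, we have $N(X_i) \subseteq Z$, so $S_i$ splits as $(S \cap X_i) \cup (S \cap N(X_i))$ with $S \cap N(X_i) \subseteq S \cap Z$. I next transfer the problem to $G_i$ by defining induced terminal pairs $A_i := (A \cap X_i) \cup (V_A \cap N(X_i))$ and $B_i := (B \cap X_i) \cup (V_B \cap N(X_i))$, where $V_A, V_B$ are the components of $G - S$ containing $A$ and $B$ respectively; both are subsets of $T_i$. A direct reachability argument shows $S_i$ is an $(A_i, B_i)$-weak separator in $G_i$ (any $(A_i, B_i)$-path in $G_i - S_i$ would extend to a $V_A$-$V_B$ path in $G - S$), and an exchange argument shows $S_i$ is in fact a minimum such separator: substituting any strictly smaller $(A_i, B_i)$-weak separator $S_i'$ back into $S$ yields an $(A,B)$-weak separator in $G$ of size $<|S|$, contradicting minimality of $S$ (the key sub-claim being that any hypothetical bad $(A,B)$-path in $G$ contains an $N[X_i]$-excursion whose entry and exit lie in $V_A \cap N(X_i) \subseteq A_i$ and $V_B \cap N(X_i) \subseteq B_i$). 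Consequently $\mu_{G_i}(A_i, B_i) = |S_i|$ and $\mu^{T_i}_{G_i}(A_i, B_i) \le |S_i - T_i| = |(S \cap X_i) - T|$.

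In Case 2 ($\Pi$ splits $S$), $|S_i| \le c - 1$, and by initially choosing the \roc-witnessing $S$ to also minimize $|S - T|$, we obtain $|(S \cap X_i) - T| \le |S - T| \le \mu^T(A,B) \le \cbar$, so the $(T_i, c-1, \cbar)$-covering property of $Y_i$ supplies a minimum $(A_i, B_i)$-weak separator $S_i^* \subseteq Y_i$ in $G_i$. Case 3 ($\Pi$ $T$-hits $S$) is parallel and sharper: condition 3 directly gives $|(S \cap X_i) - T| \le \mu^T(A, B) - 1 \le \cbar - 1$, so the $(T_i, c, \cbar-1)$-covering property of $\bar Y_i$ supplies $S_i^* \subseteq \bar Y_i$. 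The gluing step takes $S^* := (S \cap Z) \cup \bigcup_i (S_i^* \setminus N(X_i))$; the same excursion analysis used in the exchange argument verifies $S^*$ is an $(A,B)$-weak separator in $G$, and a size accounting gives $|S^*| \le |S|$, which by minimality of $S$ is an equality, placing the minimum $(A,B)$-weak separator $S^* \subseteq Z \cup \bigcup_i (Y_i \cup \bar Y_i) \subseteq W$.

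The main obstacle is the gluing correctness: one must carefully argue that in $G - S^*$ any hypothetical $(A,B)$-path is forced through $S \cap Z \subseteq S^*$ whenever it crosses between distinct $X_j$'s (exploiting that $Z$ is the inter-region separator), and that each maximal $N[X_i]$-excursion of such a path has endpoints in $A_i$ and $B_i$ by tracing reachability in $G - S$ along the inter-region portions. A secondary delicate point is justifying, in Case 2, that the \roc-witnessing $S$ can be chosen to simultaneously minimize $|S - T|$, so that the $T_i$-excess hypothesis $\mu^{T_i}(A_i, B_i) \le \cbar$ of $Y_i$ is met; this likely requires combining the \roc property with a light uncrossing step that preserves the split condition.
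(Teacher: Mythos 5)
Your overall framework --- invoke the \roc property on $(A,B)$, reduce to subproblems on each $G_i$, then lift a union of covered separators back to $G$ --- matches the paper's, but your lift-back step is genuinely different and, as sketched, has a gap. The paper replaces $S_i$ by a covered separator \emph{one component at a time} via the Swapping Lemma (\Cref{lem:swapping Si}): the proxy terminals $\partial A_i,\partial B_i\subseteq N(X_i)$ are defined there by reachability from $A\setminus A_i$ (resp.\ $B\setminus B_i$) in $G-S_{-i}$ \emph{avoiding} $X_i$, where $S_{-i}=S\setminus S_i$ (so the reachability graph deliberately keeps $S_i$ in), and after each swap the argument continues with the updated separator, reordering the components so that $X_1$ maximizes $|S\cap X_i - T|$. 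You instead freeze the induced terminal pairs $(A_i,B_i)$ once and for all via the components $V_A,V_B$ of $G-S$, and then glue all $S_i^*$ simultaneously.

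The simultaneous gluing does not close. Consider a hypothetical $(A,B)$-path $P$ in $G-S^*$. Its $Z$-segments do avoid $S$ because $S\cap Z\subseteq S^*$, but its $X_j$-excursions are free to pass through $S_j\setminus S_j^*\subseteq S$. Consequently the prefix of $P$ from $a$ to the entry $u\in N(X_i)$ of an $X_i$-excursion is in general \emph{not} a path in $G-S$, so $u$ need not lie in $V_A$, and symmetrically the exit need not lie in $V_B$. Tracing $G-S$-components along the $Z$-segments only produces an excursion whose two endpoints lie in \emph{different} $G-S$-components, not one in $V_A$ and the other in $V_B$, so the intended contradiction with $S_i^*$ being an $(A_i,B_i)$-weak separator does not arise. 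A milder version of the same issue already appears in your exchange argument that $S_i$ is a \emph{minimum} $(A_i,B_i)$-weak separator in $G_i$: even with only one $S_i$ replaced, the color change you detect is from $V_A$ to ``not $V_A$'', not into $V_B$. The paper sidesteps all of this by defining proxies with respect to $G-S_{-i}$ reachability (so that the \emph{outside} portions of the witness path certify proxy membership regardless of what happens inside $X_i$) and by swapping sequentially so that there is only ever one ``active'' component at a time. Your secondary concern about choosing a $T$-brittle witnessing $S$ is also present in the paper (it is what makes $|S\cap X_i - T|\le \cbar-1$ hold for $i>1$ after the reordering, cf.\ \Cref{eq:small non terminal Si}) and is consistent with the brittle phrasing used in the companion \ssetpair definition (\Cref{def:roc sspair}); but it is the lesser of the two difficulties.
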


In \Cref{sec:proof of fast covering set}, we will show the recursive algorithm based on \Cref{lem:a reduction to reducing set system} for constructing a $(T,c)$-covering set from the subroutine from \Cref{thm:reducing-or-covering set system} for constructing a $(T,c)$-reducing-or-covering partition-set pair. This would prove \Cref{thm:fast covering set}, the main theorem of this section. 
Then, we will present the proof of \Cref{lem:a reduction to reducing set system} in \Cref{sec:proof of a reduction to reducing set system}. %

\subsection{Proof of \Cref{thm:fast covering set}} \label{sec:proof of fast covering set}

\paragraph{Algorithm.} We describe the algorithm for computing
$(T,c,\cbar)$-covering set in $G$ in \Cref{alg:tcccovering}.  Let
$\phi$ be a parameter to be selected. 

\begin{algorithm}[H]
  \DontPrintSemicolon
  \KwIn{ A graph $G = (V,E)$, a terminal set $T \subseteq V$ and parameters $c,\cbar, \phi \in (0,1)$} 
  \KwOut{ A $(T,c,\cbar)$-covering set in $G$.}
  \BlankLine
  \lIf{$\cbar > c$ \normalfont{or }$c = 0$}{\Return{$\{\}$.}}
  \lIf{$\cbar = 0$}{\Return{$T$.}}
  Apply \Cref{thm:nagamochi} on $G$ to obtain a $(V,c)$-sparsifier for $G$
  with arboricity $c$. \label{line:tccnagamochi}\;
  Let $(\{Z ,X_1,\ldots, X_\ell\},C)$ be the
  $(T,c)$-\roc \psetpair of $G$ obtained by applying
  \Cref{thm:reducing-or-covering set system} using $G,T,c,\phi$ as
  inputs.\;
   \For{$i \in [\ell]$}
  { $T_i \gets (T \cap X_i) \cup  N_G(X_i)$\;
    $Y_i \gets \textsc{CoveringSet}(G[N_G[X_i]],T_i, c-1, \cbar,\phi)$\;
    $\bar Y_i \gets \textsc{CoveringSet}(G[N_G[X_i]],T_i, c, \cbar -1 ,\phi)$\;  
   }
  \Return{ $Z' = Z \cup C \cup \bigcup_{i \leq \ell} (Y_i \cup \bar Y_i)   \cup T $. }
\caption{\textsc{CoveringSet}$(G,T, c,\cbar,\phi)$}
\label{alg:tcccovering}
\end{algorithm}

\paragraph{Correctness.} We prove that the output $Z'$ is a
$(T,c,\cbar)$-covering set for $G$ by induction on $c$ and
$\cbar$. For the base case, if $\cbar = 0$, then the terminal set is a
covering set by definition. If $c = 0$ or $\cbar > c$, then an
emptyset is a covering set by definition. Now, suppose $Y_i$ is a
$(T_i,c-1,\cbar)$-covering set for $G_i = G[N[X_i]]$, and $\bar Y_i$
is a $(T_i,c,\cbar -1)$-covering set for $G_i$. Combining with the
fact that $(\{Z,X_1,\ldots, X_{\ell}\},C)$ is a
$(T,c)$-reducing-or-covering set of $G$, \Cref{lem:a reduction to
  reducing set system} implies that $Z'$ is a $(T,c,\cbar)$-covering set in $G$.

\paragraph{Size.} We next
bound the size of $Z'$. %
Let $\tau$ be the constant in the exponent of
$2^{O(c^2)}$, and $p(n)$ be a subpolynomial factor in
\Cref{thm:reducing-or-covering set system}, respectively.  Let
$s(n,k,c,\cbar)$ be the size of output of \Cref{alg:tcccovering} where
$n$ is the number of vertices, $k$ is the number of terminals,
and $c,\cbar$ are the parameters as inputs of
\Cref{alg:tcccovering}.
For each $i \leq \ell$, denote $n_i = |N[X_i]|$ and $k_i = |T_i|$.  By
definitions, we have $s(n,k,c,0) = k$ and $s(n,k,c,\cbar) = 0$ if $c < \cbar$ or $c = 0$.    Otherwise, by \Cref{thm:reducing-or-covering set system},  $s(n,k,c,\cbar)$ satisfies
the following recurrence relation. 
\begin{align} \label{eq:recurrence snkc}
   s(n,k,c,\cbar) \leq (k+\phi n \cdot p(n)) \cdot 2^{\tau c^2} +
  \sum_{i \leq \ell} s(n_i,k_i,c-1,\cbar) + \sum_{i\leq \ell} s(n_i,k_i,c,\cbar-1),
\end{align}
where
\begin{align} \label{eq:sum ki and sum ni}
\sum_{i \leq \ell}k_i \leq (k+ n\phi \cdot p(n))c^2 \mbox{ and }
  \sum_{i \leq \ell} n_i \leq  n + \sum_{i \leq \ell} k_i.
\end{align}

By careful inductive arguments, we prove the following claim in \Cref{sec:boring induction recurrence} that

\begin{claim} \label{claim:boring induction recurrence}
  $s(n,k,c,\cbar) \leq  (k+n \phi \cdot p(n))  (4+c+\cbar)^{3(c+\cbar)} \cdot (1+\cbar) \cdot  2^{(\tau   c^2+\cbar)}$.
\end{claim}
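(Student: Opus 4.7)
}

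The plan is to induct on $c+\cbar$. To streamline the notation, set $K := k + n\phi \cdot p(n)$, so the claimed bound becomes $s(n,k,c,\cbar) \le K\cdot B(c,\cbar)$ where
\[
B(c,\cbar) \;:=\; (4+c+\cbar)^{3(c+\cbar)}\,(1+\cbar)\,2^{\tau c^2+\cbar}.
\]
The base cases $c=0$, $\cbar>c$, and $\cbar=0$ are immediate from $s(n,k,c,0)=k\le K\le K\,B(c,0)$ and $B(c,\cbar)\ge 1$. For the inductive step I assume $1\le\cbar\le c$ and that the bound holds at $(c-1,\cbar)$ and $(c,\cbar-1)$ for all inputs.

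The first technical step is to bound the total ``effective weight'' of the subproblems,
\[
\sum_i K_i \;=\; \sum_i \bigl(k_i + n_i \phi\,p(n_i)\bigr).
\]
Using monotonicity of the subpolynomial factor $p(\cdot)$ together with the constraints from \eqref{eq:sum ki and sum ni},
\[
\sum_i K_i \;\le\; \sum_i k_i + \phi\,p(n)\sum_i n_i \;\le\; K c^2 + \phi p(n)\bigl(n + K c^2\bigr) \;\le\; K c^2\bigl(1+\phi p(n)\bigr) + K,
\]
where I used $n\phi p(n)\le K$. Under the mild assumption $\phi p(n)\le 1$ (which is consistent with how $\phi$ is eventually chosen in the main algorithm), this yields $\sum_i K_i \le 3Kc^2$ for $c\ge 1$.

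Plugging the inductive hypothesis into the recurrence \eqref{eq:recurrence snkc} gives
\[
s(n,k,c,\cbar) \;\le\; K\cdot 2^{\tau c^2} + 3Kc^2\bigl[B(c-1,\cbar) + B(c,\cbar-1)\bigr],
\]
so it suffices to prove $2^{\tau c^2} + 3c^2\bigl[B(c-1,\cbar)+B(c,\cbar-1)\bigr] \le B(c,\cbar)$. Writing $N:=c+\cbar$ and comparing the three expressions term-by-term,
\[
\frac{B(c-1,\cbar)}{B(c,\cbar)} \;\le\; \frac{1}{(4+N)^{3}}\cdot 2^{-\tau(2c-1)}, \qquad \frac{B(c,\cbar-1)}{B(c,\cbar)} \;\le\; \frac{1}{2(4+N)^{3}},
\]
using $(3+N)^{3N-3}\le (4+N)^{3N-3}$ and $\cbar/(1+\cbar)\le 1$. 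Since $3c^2/(4+c)^3$ is bounded by a small absolute constant (its maximum over $c\ge 1$ is less than $1/8$), both normalized error terms are tiny; moreover $2^{\tau c^2}/B(c,\cbar)\le 1/\bigl((4+N)^{3N}(1+\cbar)2^\cbar\bigr)\le 1/125$ for $N\ge 1$, $\tau\ge 1$. Adding these contributions gives a ratio strictly less than $1$, closing the induction.

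The main obstacle is the factor $\sum_i K_i$: if one only kept the crude bound $\sum_i K_i \le K\cdot\text{poly}(c)$, the $\phi p(n)$ contribution through the $n_i$'s could in principle amplify by a factor depending on $n$. The resolution is the observation that $\sum_i n_i$ is controlled by $n + \sum_i k_i$, so all the ``new'' volume created by the partition is already paid for by $K c^2$, and the resulting factor $3c^2$ is absorbed by the shrinking ratio $3c^2/(4+N)^3$ coming from the $(4+c+\cbar)^{3(c+\cbar)}$ term in $B$. This explains why the rather baroque form of the bound in \Cref{claim:boring induction recurrence} is what makes the induction go through.
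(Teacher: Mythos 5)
Your proof is correct and follows essentially the same inductive strategy as the paper: induct on $(c,\cbar)$, bound $\sum_i(k_i+\phi n_i\,p(n_i))$ by $3Kc^2$ using \eqref{eq:sum ki and sum ni}, and absorb the resulting factor $3c^2$ into the growth of $(4+c+\cbar)^{3(c+\cbar)}$. Your execution—normalizing the three terms by $B(c,\cbar)$ and checking their ratios sum to less than $1$—is a cleaner presentation of the same algebra than the paper's sequential-absorption chain, and it makes explicit where the extra $(k+\phi n\,p(n))2^{\tau c^2}$ term gets soaked up, a step the paper glosses over.
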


When $\cbar = c$, \Cref{claim:boring induction recurrence} implies
that $s(n,k,c,c) \leq k \cdot 2^{O(c^2)}+n \phi  \cdot p(n) \cdot
g(c)$  for some function of $c$.  Therefore, by setting $\phi =
(10\cdot p(n) \cdot g(c))^{-1}$, we have that $s(n,k,c,c) \leq k \cdot
2^{O(c^2)} + n/10$. Now, we repeat the same algorithm for $O(\log n)$
time, we obtain the final covering set of size $O(k\cdot
2^{O(c^2)})$. 

\paragraph{Running Time.} By \Cref{line:tccnagamochi}, we can assume that $m \leq
nc$ and that $G$ has arboricity $c$. That is, for all $S \subseteq V$,
we have $|E_G(S,S)| \leq c|S|$. We next bound the total size of
subproblems. Let $m$ be the size of the input
graph. For all $i$, let $m_i$ be the size of $G[N[X_i]]$. Denote
$\tilde X_i = N[X_i]$, we have $m_i = |E_G(\tilde X_i, \tilde X_i)|
\leq c|N_G[X_i]|$.  Therefore,
\begin{align}
  \sum_i m_i \leq c \sum_i |N[X_i]| \leq c(\sum_i|X_i| +
  \sum_i|N(X_i)|) \leq cn + cn^{1+o(1)}\phi. 
\end{align}

Next, we establish the recurrence relation of the running  time. By \Cref{thm:reducing-or-covering set system}, it takes $\ot(m^{1+o(1)} \phi^{-4} 2^{O(c^2)})$ time to
compute a $(T,c)$-reducing-or-covering set system.  Let
$f(m,\phi,c,\cbar)$ be the running time of \Cref{alg:tcccovering}. We
have that  $f(n,k,c,\cbar)$ satisfies
the following recurrence relation. 
\begin{align}
   f(m,k,c,\cbar) \leq m \cdot p(m) \phi^{-4} 2^{\tau c^2} + \sum_i
  f(m_i,\phi,c-1,\cbar) + \sum_i f(m_i,\phi,c,\cbar -1), 
\end{align}
where $p(m)$ is a subpolynomial factor, $\sum_i m_i \leq m(1+\phi
p(m))$, and the base cases take linear time. By the choice of $\phi$,
and the similar inductive arguments, we can show that $f(n,k,c,\cbar) =
\ot(m^{1+o(1)}2^{O(c^2)})$. 

\subsection{Proof of \Cref{lem:a reduction to reducing set system}} \label{sec:proof of a reduction to reducing set system}

This section is devoted to proving \Cref{lem:a reduction to reducing
  set system}. For all $A, B \subseteq T$
such that $\mu(A,B) \leq c, \mu^T(A,B) \leq \cbar$, if $C$ covers some $(A,B)$-min weak separator, then we are done. Otherwise, \Cref{thm:reducing-or-covering set system} implies
that there is a min $(A,B)$-weak separator $S$ in $G$ such that $\Pi$ $T$-hits $S$ or $\Pi$ splits $S$.  For $i \in [\ell]$, define $S_i = S \cap
N[X_i], A_i = A \cap N[X_i]$ and $B_i = B \cap N[X_i]$.  We say that
$S$ is $(T,c,\cbar)$-\textit{small} in component $i$ if (1) $|S_i| \leq
c-1$, or (2) $|S_i| = c$ but $|(S \cap X_i) - T| \leq \cbar-1$.  By definition, $S$ is $(T,c,\cbar)$-small for every component $i$.

Intuitively, for all $i$, we should expect $S_i$ to be covered by either
$Y_i$ or $\bar Y_i$ since $Y_i$ and $\bar Y_i$ are
$(T,c-1,\cbar)$-covering and $(T,c,\cbar-1)$-covering in $G_i$,
respectively. However, $Y_i$ or $\bar Y_i$ may cover a different set $S'_i \neq S_i$ that has the same key properties as $S_i$ for our purpose. We show that we can combine $S'_i$ from each component to obtain a single cut $S'$ such that $S'$ is covered by $Y_i$ and $\bar Y_i$, and at the same time $S'$ is a min $(A,B)$-weak separator in $G$. We next formalize the intuition. We start with the following lemma. %

\begin{lemma} [Swapping Lemma]  \label{lem:swapping Si}
  If $S$ is a min $(A,B)$-separator in $G$, and $S$ is
  $(T,c,\cbar)$-small in component $i$, then there exists a separator $S'_i$ in
  $G_i$ that is covered by $Y_i \cup \bar Y_i$ and $(S - S_i) \cup S_i'$ is a min $(A,B)$-separator in $G$. 
\end{lemma}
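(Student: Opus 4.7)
The plan is to introduce a carefully chosen local terminal pair $(A^*_i, B^*_i) \subseteq T_i$ for which $S_i$ is a minimum $(A^*_i, B^*_i)$-weak separator in $G_i$, then invoke the covering property of $Y_i$ or $\bar Y_i$ to extract a same-size replacement $S'_i \subseteq Y_i \cup \bar Y_i$, and finally verify that $(S \setminus S_i) \cup S'_i$ is a minimum $(A,B)$-weak separator in $G$. I first fix the canonical sides of $S$: let $L_S$ be the set of vertices reachable from $A \setminus S$ in $G - S$ and let $R_S = V \setminus (L_S \cup S)$. I then define
\[
A^*_i := A_i \cup (N(X_i) \setminus R_S), \qquad B^*_i := B_i \cup (N(X_i) \setminus L_S),
\]
both subsets of $T_i$. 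The deliberate overlap $A^*_i \cap B^*_i \supseteq N(X_i) \cap S$ forces every $(A^*_i, B^*_i)$-weak separator in $G_i$ to contain $N(X_i) \cap S$, a fact essential in the replacement step. That $S_i$ is itself an $(A^*_i, B^*_i)$-weak separator in $G_i$ follows from $G_i - S_i \subseteq G - S$ together with $A^*_i \setminus S_i \subseteq L_S$ and $B^*_i \setminus S_i \subseteq R_S$.

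The technical heart of the proof is a replacement claim: for every $(A^*_i, B^*_i)$-weak separator $\tilde S_i$ in $G_i$ with $|\tilde S_i| \le |S_i|$, the set $(S \setminus S_i) \cup \tilde S_i$ is an $(A,B)$-weak separator in $G$. This simultaneously shows $S_i$ is minimum in $G_i$ (otherwise we would beat $|S|$ globally) and, applied to the covered $S'_i$, shows that $(S \setminus S_i) \cup S'_i$ is a minimum $(A,B)$-weak separator. To prove the claim I assume towards contradiction an $(A,B)$-path $P$ in $G$ avoiding $(S \setminus S_i) \cup \tilde S_i$. Since $\tilde S_i \supseteq N(X_i) \cap S$, the only $S$-vertices available to $P$ lie in $S_i \setminus \tilde S_i \subseteq X_i \cap S$, so $P$ must enter $X_i$; moreover every transition of $P$ between $V \setminus N[X_i]$ and $N[X_i]$ happens at a vertex of $N(X_i) \setminus S \subseteq (L_S \cup R_S) \cap N(X_i)$, labeling each transition as $A^*_i$-side or $B^*_i$-side. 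Because the global $L_S$-to-$R_S$ crossing of $P$ must traverse a vertex of $X_i \cap S \subseteq N[X_i]$, this crossing is confined to a single maximal subpath of $P$ in $N[X_i]$; that subpath is then an $(A^*_i, B^*_i)$-path in $G_i$ avoiding $\tilde S_i$, contradicting $\tilde S_i$'s separator property.

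To finish, I apply the covering sets by a case analysis mirroring $(T,c,\cbar)$-smallness. Now that $S_i$ is minimum, $\mu_{G_i}(A^*_i, B^*_i) = |S_i|$, and $|S_i \setminus T_i| = |(S \cap X_i) \setminus T|$ so $\mu^{T_i}_{G_i}(A^*_i, B^*_i) \le |(S \cap X_i) \setminus T|$. When $|S_i| \le c-1$ and $|(S \cap X_i)\setminus T| \le \cbar$ (the latter being supplied by the outer reducing-or-covering step, either directly from $\Pi$ $T$-hitting $S$, or by choosing the $\Pi$-splitting witness $S$ appropriately), the $(T_i, c-1, \cbar)$-covering set $Y_i$ contains a min $(A^*_i, B^*_i)$-weak separator $S'_i$; when $|S_i| = c$ and $|(S \cap X_i)\setminus T| \le \cbar-1$ one has $\mu^{T_i}_{G_i}(A^*_i, B^*_i) \le \cbar-1$ and the $(T_i, c, \cbar-1)$-covering set $\bar Y_i$ similarly supplies $S'_i$. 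In either case $|S'_i| = |S_i|$, so $(S \setminus S_i) \cup S'_i$ has size $|S|$ and, by the replacement claim, is an $(A,B)$-weak separator in $G$, hence minimum.

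The main obstacle I anticipate is the replacement claim itself, specifically handling paths $P$ that enter and leave $N[X_i]$ multiple times while interacting with $N(X_i) \cap S$ in nontrivial ways. The design principle that unlocks the argument is placing $N(X_i) \cap S$ inside $A^*_i \cap B^*_i$: this forces any competing local separator $\tilde S_i$ to contain those interface vertices, prevents $\tilde S_i$ from short-circuiting $P$ through $N(X_i) \cap S$, and reduces the global $L_S$-to-$R_S$ crossing of $P$ to a purely local $(A^*_i, B^*_i)$-separation question inside $G_i$.
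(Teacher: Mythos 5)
Your proposal follows the same high-level template as the paper's proof — lift the pair $(A,B)$ to a local terminal pair inside $G_i$, show that $S_i$ is a minimum local weak separator, apply the covering sets $Y_i,\bar Y_i$ to get a covered replacement $S'_i$, and argue the swap preserves global minimality — but the way you build the local terminal pair is genuinely different, and I think the difference is instructive. The paper defines $A'=A_i\cup\partial A_i$ via ``proxy'' boundary vertices $v$ that admit an $(A\setminus A_i,v)$-path in $G-S_{-i}$ avoiding $X_i$; you instead fix the canonical bipartition $L_S,R_S$ of $V\setminus S$ and set $A^*_i=A_i\cup(N(X_i)\setminus R_S)$, $B^*_i=B_i\cup(N(X_i)\setminus L_S)$. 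Your construction buys a clean structural invariant that the paper does not have explicitly: since $N(X_i)\cap S\subseteq A^*_i\cap B^*_i$, any local weak separator is forced to contain the interface vertices $N(X_i)\cap S$, so the replaced cut $(S\setminus S_i)\cup\tilde S_i$ automatically contains $S\setminus X_i$ and all the reachability/crossing reasoning can be phrased purely in terms of $L_S/R_S$ and $N[X_i]$. The paper instead pays for this by concatenating proxy paths $P_a,P,P_b$ to manufacture a global $(A,B)$-path. Both routes work; yours arguably isolates the one combinatorial fact that makes the swap go through.

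Two small things worth tightening in a final write-up. First, in the replacement claim the phrase ``this crossing is confined to a single maximal subpath of $P$ in $N[X_i]$'' is not literally right: the maximal subpath containing the first $L_S$-to-$R_S$ transition may re-exit into $L_S$, so its two transition endpoints can both land on the $L_S$ side. What is true, and what you need, is that among the maximal subpaths $Q_1,\dots,Q_k$ of $P$ in $N[X_i]$, the entry of $Q_1$ is in $A^*_i$ (either $a$ itself, or a transition vertex lying in $L_S$ because the prefix of $P$ outside $N[X_i]$ avoids $S$), the exit of $Q_k$ is in $B^*_i$, and each exit of $Q_j$ shares the same $L_S/R_S$ side as the entry of $Q_{j+1}$; so by a telescoping argument some $Q_j$ has entry in $A^*_i$ and exit in $B^*_i$. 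This is the same level of informality as the paper's own ``every $(A,B)$-path contains a subpath $P'$ that uses $A'$ and $B'$ in $G_i$,'' so it is a polish issue rather than a gap. Second, your parenthetical about $\lvert (S\cap X_i)\setminus T\rvert\le\cbar$ being ``supplied by the outer reducing-or-covering step'' is a correct and somewhat sharper observation than the paper makes explicit: in the case $\lvert S_i\rvert\le c-1$, the covering set $Y_i$ only bites if $\mu^{T_i}_{G_i}(A^*_i,B^*_i)\le\cbar$, and that bound does not follow from $(T,c,\cbar)$-smallness alone — it uses that the $S$ actually fed into the lemma is brittle (so $\lvert S-T\rvert\le\cbar$, hence $\lvert(S\cap X_i)\setminus T\rvert\le\cbar$), which is indeed the situation in the calling lemma. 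Making that hypothesis explicit in the lemma statement would be a small improvement to the paper.
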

 
We defer the proof of \Cref{lem:swapping Si} to the end of this section.  Our proof strategy is to apply \Cref{lem:swapping Si} for each component. %
 We are now ready to prove  \Cref{lem:a reduction to reducing set system}.

\begin{proof} [Proof of \Cref{lem:a reduction to reducing set system}] 
 Let $S$ be the min $(A,B)$-weak separator in $G$ such that $S$ is $(T,c,c_T)$-small for every component $i$ (as discussed above).  We assume WLOG that $S$ is not covered by $Z \cup C \cup \bigcup_{i \leq
   \ell} (Y_i \cup \bar Y_i) \cup T$.  We reorder the indices so that $S_1$ is $S_i$ such that $|S \cap
 X_i - T|$ is maximized over all $i$. Observe that $|S \cap X_1 - T| >
 0$ because otherwise $S$ must have been covered by $Z \cup \bigcup_{i \leq
   \ell} (Y_i \cup \bar Y_i) \cup T$. Also,
 \begin{align} \label{eq:small non terminal Si}
   \text{for all } i > 1,  |S \cap X_i - T| \leq \cbar-1.
 \end{align}
 Observe that $S$ is $(T,c,\cbar)$-small in component 1. Since $S$ is a min $(A,B)$-weak
 separator in $G$ and  $S$ is $(T,c,\cbar)$-small in
 component 1, \Cref{lem:swapping Si} implies that there exists $S_1'$ that is covered by
 $Y_1 \cup \bar Y_1$ and $S \cup S'_1 - S_1$ is a min $(A,B)$-weak separator in $G$.   %
 Therefore, we update $S \gets (S- S_1) \cup S'_1$. Observe that $S'_1$
 is covered by $Y_1 \cup \bar Y_1$, and $S$ is a min $(A,B)$-weak
 separator.

 Next, we repeat the same process for $i = 2, \ldots, \ell$ . That is,
 for each $i \in \{2,\ldots,\ell\}$, we set $S \gets (S- S_i) \cup S'_i$ where
 $S'_i$ be the separator as stated in \Cref{lem:swapping Si}. If such
 $S'_i$ always exists, then at the end of the iteration $S$ is a min
 $(A,B)$-cut that is covered by $Z \cup \bigcup_{i \leq
   \ell} (Y_i \cup \bar Y_i)$ and we are done. It
 remains to verify that at the beginning of each iteration $i$, $S$ is a min
 $(A,B)$-weak separator and $S$ is $(T,c,\cbar)$-small in component
 $i$. We use induction on the number of iterations. The first
 iteration where $i = 2$ follows from the above discussion. Now assume that at
 iteration $i$ where $2 \leq i \leq \ell-1$,  $S$ is a min
 $(A,B)$-weak separator and $S$ is $(T,c,\cbar)$-small in component
 $i$. Therefore, \Cref{lem:swapping Si} implies that there exists $S_i'$ that is covered by
 $Y_i \cup \bar Y_i$ and $(S- S_i) \cup S'_i$ is min $(A,B)$-weak
 separator. So, we update $S \gets (S-S_i) \cup S'_i$. Observe that for all $j < i$, $S_j$ can be different due to the change in the boundary vertices $\partial X_j := N_G(X_j)$, but $S_j$ is still covered by $Y_j\cup \bar Y_j \cup Z$. It remains to show that $S$ is $(T,c,\cbar)$-small in
 component $i+1$.   Since $\mu(A,B) \leq c$, we have $|S_{i+1}| \leq |S| = \mu(A,B)
 \leq c$. Also,  $|S \cap X_{i+1} - T| \leq \cbar -1$
 because of \Cref{eq:small non terminal Si} and the fact that the
 component $X_{i+1}$ has not been touched yet.\footnote{This is the reason why we showed a separate argument for component 1, otherwise it is less convenient to show that \Cref{eq:small non terminal Si} holds for component $i+1$.} Therefore, $S$ is $(T,c,\cbar)$-small in
 component $i+1$. This completes the proof. 
 
\end{proof}

It remains to prove \Cref{lem:swapping Si}. 
\begin{proof} [Proof of \Cref{lem:swapping Si}]
    The lemma is trivial if $S_i = \emptyset$. We now assume $S_i \neq
    \emptyset$. Let $S_{-i}= S - S_i$. Denote $\partial X_i = N_G(X_i)$ as a set of boundary
    vertices, and we say that $v \in V_{G_i}$ is a boundary vertex if
    $v \in \partial X_i$.  In $G_i$, we
    say that a boundary vertex $v \in \partial X_i$ is an
    $A$-\textit{proxy} if $A - A_i \neq \emptyset$, and there is an
    $(A - A_i,v)$-path in $G  - S_{-i}$ that does not use $X_i$.
    Similarly, $v$ is a $B$-\textit{proxy}  if $B - B_i \neq \emptyset$ and there is an $(B -
    B_i,v)$-path in $G  - S_{-i}$ that does not use $X_i$.  Let
    $\partial A_i$ be the set of $A$-proxy boundary vertices, and
    $\partial B_i$ be the set of $B$-proxy boundary vertices.

  Intuitively, we can think of $\partial A_i$ as a set of proxy nodes that
  represents all paths originated from $A$ outside the component $X_i$
  that could enter $X_i$ in $G - S_{-i}$, and similarly for
  $\partial B_i$.

  \begin{claim} \label{claim:new terminal set}
    $\partial A_i \cup A_i \neq \emptyset$ and  $\partial B_i \cup B_i
    \neq \emptyset$ and $S_i$ is $(\partial A_i \cup A_i,  \partial B_i \cup B_i)$-weak
    separator in $G_i$. 
  \end{claim}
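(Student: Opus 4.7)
The plan is to verify the three parts of the claim in order. The two non-emptiness statements will follow from a short minimality argument on $S$, while the weak-separator statement --- the technically substantial part --- will be proved by contradiction via a path-gluing argument.

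For the non-emptiness of $\partial A_i \cup A_i$ (the $\partial B_i \cup B_i$ case is symmetric), I would fix any $s \in S_i$, which exists since $S_i \neq \emptyset$. Because $|S| = \mu(A,B)$, the smaller set $S \setminus \{s\}$ cannot be an $(A,B)$-weak separator, so there exists an $(A,B)$-path $P^\star$ in $G - (S \setminus \{s\})$ that necessarily uses $s$. Walking along $P^\star$ from its $A$-endpoint $a$, let $w$ be the last vertex of $\partial X_i$ visited before the path first enters $X_i$ (when $s \in X_i$), or take $w = s$ itself (when $s \in \partial X_i$ and the prefix never entered $X_i$ before reaching $s$). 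In both cases the prefix of $P^\star$ up to $w$ avoids $X_i$ and avoids $S_{-i}$, so $w$ is an $A$-proxy whenever $a \in A - A_i$; if instead $a \in A_i$, then $A_i$ itself is nonempty.

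For the weak-separator statement, I would argue by contradiction. Suppose there is a path $P$ in $G_i - S_i$ from some $u \in (\partial A_i \cup A_i) - S_i$ to some $v \in (\partial B_i \cup B_i) - S_i$. Fix witness paths $Q_A$ from some $a \in A - A_i$ to $u$ and $Q_B$ from some $b \in B - B_i$ to $v$, each living in $G - S_{-i}$ and avoiding $X_i$ (using trivial one-vertex witnesses when $u \in A_i$ or $v \in B_i$). Concatenating $Q_A \cdot P \cdot Q_B$ gives an $A$-to-$B$ walk in $G$. The middle piece $P$ uses only vertices of $N[X_i] - S_i$, which is disjoint from $S_{-i}$ because $S_{-i}$ lies outside $N[X_i]$ by definition; hence $P$ already sits inside $G - S$. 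Thus, provided the outer pieces $Q_A$ and $Q_B$ can be chosen to also avoid $S$, extracting a simple path from the walk yields an $(A,B)$-path in $G - S$, contradicting that $S$ is an $(A,B)$-weak separator.

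The main obstacle is to guarantee that the witnesses $Q_A$ and $Q_B$ can be taken $S$-avoiding. By definition they only avoid $S_{-i}$ and could a priori traverse vertices of $S_i \cap \partial X_i$. I plan to handle this by selecting $u$ together with a witness $Q_A$ that is shortest among all witnesses --- measured, say, by the number of $S_i$-crossings --- and arguing that any remaining crossing at some $z \in S_i \cap \partial X_i$ can be short-circuited: the prefix $Q_A[a, z]$ would itself certify $z \in \partial A_i$, and splicing the suffix $Q_A[z, u]$ into $P$ inside $G_i - (S_i \setminus \{z\})$ would yield an alternative route from $z$ to $v$ that contradicts either $|S| = \mu(A,B)$ or the minimality of $Q_A$. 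Iterating this shortening strips every $S_i$-crossing from $Q_A$, and the symmetric argument handles $Q_B$, closing the contradiction. This rerouting bookkeeping is the delicate step of the proof; everything else reduces to careful tracking of the proxy definition and the location of $S_{-i}$ relative to $N[X_i]$.
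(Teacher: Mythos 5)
Your non-emptiness argument and the basic path-gluing mirror the paper's, and the concern you raise --- that the witness paths $Q_A$ and $Q_B$ are constrained only to avoid $S_{-i}$ and $X_i$, and may therefore pass through $S_i \cap \partial X_i$ --- is a genuine gap that the paper's proof does not address: the paper concatenates $P_a \rightarrow P \rightarrow P_b$ and simply asserts it is an $(A,B)$-path in $G - S$, which does not follow from the definition of an $A$-proxy alone.

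Your proposed repair, however, does not close the gap, and I believe the claim is in fact false at the stated level of generality. The spliced walk $Q_A[z,u] \rightarrow P$ need not lie in $G_i$, since $Q_A[z,u]$ is only constrained to avoid $X_i$, not to stay inside $N[X_i]$, and the claimed contradiction with $|S| = \mu(A,B)$ or with the minimality of $Q_A$ is never actually derived. For a concrete counterexample, take $V = \{a, s, p, x, q, b\}$ with edges $a$--$s$, $s$--$p$, $s$--$x$, $p$--$x$, $q$--$x$, $q$--$b$; set $A = \{a\}$, $B = \{b\}$, $X_i = \{x\}$ (so $\partial X_i = \{s,p,q\}$), and $S = \{s\}$, a minimum $(A,B)$-weak separator with $S_i = \{s\}$ and $S_{-i} = \emptyset$. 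Then $p$ is an $A$-proxy, but only via the witness $a$--$s$--$p$ which passes through $s \in S_i \cap \partial X_i$; $q$ is a $B$-proxy; and $p$--$x$--$q$ is a path in $G_i - S_i$ from $\partial A_i \cup A_i$ to $\partial B_i \cup B_i$, so $S_i$ is not a weak separator for these sets. No rerouting of witnesses can fix this, since $a$--$s$--$p$ is the only witness for $p$. What the argument seems to require is additional structure on $S$ that is available in the surrounding proof of \Cref{lem:a reduction to reducing set system} (e.g.\ that $S$ is not contained in $Z \cup T$) but is not recorded in the hypotheses of the claim or of \Cref{lem:swapping Si}.
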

  \begin{proof}
    We first prove that $\partial A_i \cup A_i \neq \emptyset$ and  $\partial B_i \cup B_i
    \neq \emptyset$. Since $S$ is a min $(A,B)$-weak separator, there must be
    an $(A,B)$-path $P$ in $G - S_{-i}$. Furthermore, every $(A,B)$-path
    in $G - S_{-i}$ must use $S_i$. By definition of
    $\partial A_i, \partial B_i$, $P$ must use
    either $A_i$ or $\partial A_i$, and $B_i$ or $\partial B_i$ in $G_i$.

    We prove the next claim. Suppose $S_i$ is not  a $(\partial A_i \cup A_i, \partial B_i \cup B_i)$-weak
    separator in $G_i$.  Let $P$ be an $(\partial A_i \cup A_i, \partial B_i
    \cup B_i)$-path in $G_i - S_i$ starting with a vertex $a$ and
    ending with a vertex $b$ (it is possible that $P$ is simply a
    vertex where $a = b$). %
    We define two paths $P_a, P_b$ as
    follows.  If $a \in \partial A_i$, then $P_a$ is an $(A -
    A_i,a)$-path in $G - S_{-i}$ that does not use $X_i$. Otherwise, $P_a$ is the vertex $a$.   If $b \in \partial B_i$, then $P_b$ is an $(B -
    B_i,a)$-path in $G - S_{-i}$ that does not use $X_i$. Otherwise, $P_b$ is the vertex $b$.
    Therefore, the path $P_a \rightarrow P \rightarrow P_b$ is an
    $(A,B)$-path in $G - S$, a contradiction. 
  \end{proof}

   Denote $A'_i = \partial A_i \cup A_i$ and $B'_i =  \partial B_i \cup B_i$. 
   
   \begin{proposition} \label{pro:mugi a'b'}
    $\mu_{G_i}(A',B') \leq c -1$ or  $\mu_{G_i}(A',B') =
   c$ and $\mu_{G_i}^{T_i}(A',B') \leq \cbar -1$
   \end{proposition}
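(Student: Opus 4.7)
The plan is to split on the two cases in the definition of $S$ being $(T,c,\cbar)$-small in component $i$: either (i) $|S_i|\le c-1$, or (ii) $|S_i|=c$ and $|(S\cap X_i)-T|\le \cbar-1$. Claim~\ref{claim:new terminal set} already gives that $S_i$ is an $(A'_i,B'_i)$-weak separator in $G_i$, so in both cases the upper bound $\mu_{G_i}(A',B')\le |S_i|$ is immediate. In case (i), this yields $\mu_{G_i}(A',B')\le c-1$ directly. The work is therefore concentrated in case (ii), where I must show the two conjuncts $\mu_{G_i}(A',B')=c$ and $\mu_{G_i}^{T_i}(A',B')\le \cbar-1$.

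In case (ii), since $|S|=\mu(A,B)\le c$ and $|S_i|\le|S|$, the assumption $|S_i|=c$ forces $S=S_i\subseteq N[X_i]$ and $S_{-i}=\emptyset$. The upper bound $\mu_{G_i}(A',B')\le c$ is handled as above, so the main task is the matching lower bound. Once it is established that $\mu_{G_i}(A',B')=c$, the separator $S_i$ is itself a minimum $(A',B')$-weak separator in $G_i$, so
\[
\mu_{G_i}^{T_i}(A',B')\le |S_i-T_i| = |S\cap X_i - T| \le \cbar-1,
\]
using the elementary identity $S_i-T_i = (S\cap N[X_i])-((T\cap X_i)\cup N_G(X_i)) = (S\cap X_i)-T$.

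The crux, then, is the lower bound $\mu_{G_i}(A',B')\ge c$. I plan a contradiction argument: suppose $S^{*}$ is an $(A',B')$-weak separator in $G_i$ with $|S^{*}|<c$, and lift it to an $(A,B)$-weak separator in $G$, contradicting $\mu(A,B)=|S|=c$. Given any $(A,B)$-path $P$ in $G$, the assumption that $S=S_i\subseteq N[X_i]$ blocks $P$ forces $P$ to visit $N[X_i]$; let $p_1$ and $p_2$ be the first and last vertices of $P$ in $N[X_i]$. The key observation to push through is that $p_1\in A'_i = A_i\cup \partial A_i$: if $p_1$ is the starting vertex then $p_1\in A\cap N[X_i]=A_i$; otherwise the predecessor of $p_1$ on $P$ lies outside $N[X_i]$, so $p_1$ cannot be in $X_i$ (every neighbor of a vertex in $X_i$ is in $N[X_i]$), hence $p_1\in\partial X_i$, and the prefix of $P$ up to $p_1$ is an $(A-A_i,p_1)$-path in $G=G-S_{-i}$ avoiding $X_i$, certifying that $p_1$ is an $A$-proxy. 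A symmetric argument gives $p_2\in B'_i$, and the subpath of $P$ from $p_1$ to $p_2$ lies entirely in $G_i$, so it must meet $S^{*}$, hence $P$ meets $S^{*}$.

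The main obstacle is exactly this boundary-to-proxy identification, but it reduces to the structural fact that to enter $X_i$ from outside $N[X_i]$ one must pass through $\partial X_i$; with the $S_{-i}=\emptyset$ simplification available in case (ii), the definition of proxy unfolds cleanly. After these steps, the proposition follows by combining the two cases.
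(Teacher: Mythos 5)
Your case split, the identity $S_i - T_i = (S\cap X_i) - T$, and the reasoning \emph{after} the equality $\mu_{G_i}(A',B')=c$ is granted are all fine. But the centerpiece of your case (ii) — proving the matching lower bound $\mu_{G_i}(A',B')\ge c$ — is both unnecessary and, as written, incorrect.

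It is unnecessary because the proposition is a disjunction. In case (ii) you already know $\mu_{G_i}(A',B')\le |S_i| = c$. If in fact $\mu_{G_i}(A',B')\le c-1$, the first disjunct of the proposition holds and you are done; you do not need to rule this out. Only in the remaining subcase $\mu_{G_i}(A',B')=c$ do you need the $\cbar-1$ bound, and there $S_i$ is automatically a \emph{minimum} $(A',B')$-weak separator (it has size $c$ and is a weak separator by \Cref{claim:new terminal set}), which gives $\mu_{G_i}^{T_i}(A',B')\le |S_i-T_i|\le\cbar-1$ directly. The paper's proof is exactly this: invoke \Cref{claim:new terminal set} to get $\mu_{G_i}(A',B')\le|S_i|$, then read off the two cases of $(T,c,\cbar)$-smallness.

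The argument you give for the lower bound also has a genuine gap. Your choice of $p_1$ and $p_2$ as the \emph{first} and \emph{last} vertices of $P$ in $N[X_i]$ does correctly force $p_1\in A'$ and $p_2\in B'$, but the claim that ``the subpath of $P$ from $p_1$ to $p_2$ lies entirely in $G_i$'' is false in general: $P$ may exit $N[X_i]$ after $p_1$ and re-enter later, so the segment between $p_1$ and $p_2$ need not stay inside $N[X_i]=V(G_i)$, and hence $S^*\subseteq V(G_i)$ has no reason to hit it. To repair this one would have to work with a single \emph{maximal} excursion of $P$ inside $N[X_i]$ that meets $S_i$; but then the prefix of $P$ leading to that excursion's entry vertex may itself pass through $X_i$ (in earlier excursions), and the definition of an $A$-proxy requires a path avoiding $X_i$, so the entry vertex is no longer obviously in $A'$. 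In short, the $p_1$/$p_2$ endpoints and the ``stays in $G_i$'' property pull in opposite directions, and no straightforward patch reconciles them. Fortunately, as noted above, none of this is needed to prove the proposition.
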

   \begin{proof}
   Since $S$ is $(T,c,\cbar)$-small in component $i$, we
   have $|S_i| \leq c-1$ or $|S_i| = c$ but $|X_i \cap S - T| \leq
   \cbar -1$.  By \Cref{claim:new terminal set}, $S$ is a min $(A',B')$-weak separator in $G_i$. Therefore, the result follows. 
   \end{proof}
    We remark that \Cref{pro:mugi a'b'} is the key to motivate \Cref{def:roc partition}  to make this lemma work.  Since $Y_i$ is $(T_i,\cbar,c-1)$-covering set in $G_i$, and
   $\bar Y_i$ is $(T_i,\cbar-1,c)$-covering set in $G_i$, there is a
   min $(A',B')$-separator $S'_i$ in $G_i$ that is covered by $Y_i \cup
   \bar Y_i$.  It remains to prove
   the following. 

  \begin{claim} \label{claim:finalize swapping}
   $(S - S_i) \cup S'_i$ is a min $(A,B)$-weak separator in $G$. %
  \end{claim}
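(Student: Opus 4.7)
The plan is to prove the claim in two steps. First I bound $|\hat S| \le |S|$, where $\hat S := (S-S_i) \cup S'_i$. Since $S-S_i \subseteq V-V(G_i)$ is disjoint from $S'_i \subseteq V(G_i)$, we have $|\hat S| = |S-S_i| + |S'_i|$; \Cref{claim:new terminal set} shows that $S_i$ is an $(A',B')$-weak separator in $G_i$, so by the minimality of $S'_i$ we have $|S'_i| \le |S_i|$, giving $|\hat S| \le |S| = \mu_G(A,B)$. Thus, once I show $\hat S$ is an $(A,B)$-weak separator in $G$, the minimality of $S$ forces $|\hat S| = |S|$, making $\hat S$ a minimum $(A,B)$-weak separator and proving the claim.

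The substantive step is showing that $\hat S$ separates $A$ from $B$ in $G$. I argue by contradiction: suppose there is an $(A,B)$-path $P$ in $G - \hat S$. Because $S = S_{-i} \cup S_i$ is an $(A,B)$-weak separator in $G$ and $P$ avoids $S_{-i}$, the path $P$ must enter $V(G_i)$. Let $u_1$ (respectively $v_k$) be the first (respectively last) vertex of $P$ in $V(G_i)$, and let $I_1,\dots,I_k$ be the maximal sub-paths of $P$ inside $V(G_i)$, with $I_j$ running from $u_j$ to $v_j$; each $I_j$ is a path in $G_i - S'_i$. A prefix/suffix analysis identical to the one in the proof of \Cref{claim:new terminal set} yields $u_1 \in A'$ and $v_k \in B'$. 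The plan is to extract from the $I_j$'s an $(A',B')$-path in $G_i - S'_i$, contradicting the fact that $S'_i$ is an $(A',B')$-weak separator of $G_i$.

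The main obstacle lies in the detours---the portions of $P$ between $I_j$ and $I_{j+1}$, which live in $G - S_{-i} - X_i$ with endpoints in $\partial X_i$. To handle them I use the connected components of $G - S_{-i} - X_i$: each detour $D_j$ lies inside a single such component $C_j$, and if $C_j$ meets $A - A_i$ (respectively $B - B_i$) then both $v_j$ and $u_{j+1}$ lie in $\partial A_i \subseteq A'$ (respectively $\partial B_i \subseteq B'$). Moreover, $\partial A_i \cap \partial B_i = \emptyset$: otherwise a shared vertex together with its two defining external paths yields an $(A,B)$-path in $G - S$, contradicting the minimality of $S$. Using these facts and walking $P$ from $u_1 \in A'$ to $v_k \in B'$, a case analysis (equivalently, an induction on $k$) on the types of the components $C_j$ encountered produces a segment $I_j$ with one endpoint in $A'$ and the other in $B'$, giving an $(A',B')$-path in $G_i - S'_i$ and the desired contradiction. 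The most delicate case is a ``neutral'' detour whose $C_j$ meets neither $A - A_i$ nor $B - B_i$; this is handled by continuing along $P$ until a non-neutral detour (or the $B'$-endpoint $v_k$) is reached, which must happen because $v_k \in B'$.
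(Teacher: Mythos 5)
Your overall structure mirrors the paper's: first bound $|(S-S_i)\cup S'_i|\le|S|$ via \Cref{claim:new terminal set} and the minimality of $S'_i$, then argue that $(S-S_i)\cup S'_i$ is still an $(A,B)$-weak separator. The paper disposes of the second step in a single unjustified sentence (``every $(A,B)$-path contains a subpath $P'$ that uses $A'$ and $B'$''); you try to make that sentence rigorous, and the gap in your argument is exactly at that point.

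The gap is in the neutral-detour case, and your proposed fix does not close it. You assert that the walk along $P$ eventually produces a segment $I_j$ with one endpoint in $A'$ and the other in $B'$, but this need not happen. If $u_1\in A'$, $v_k\in B'$, and \emph{every} intermediate detour $D_1,\dots,D_{k-1}$ lies in a neutral component of $G-S_{-i}-X_i$, then none of $v_1,u_2,\dots,v_{k-1},u_k$ need lie in $A'\cup B'$: each middle $I_j$ has both endpoints outside $A'\cup B'$, $I_1$ has only its start in $A'$, and $I_k$ has only its end in $B'$. ``Continuing until a non-neutral detour or $v_k$'' does not help, because $P$ can leave the $A'$-side of $G_i-S'_i$ through a neutral external component and reenter $G_i$ on the $B'$-side, so no single $I_j$ spans both sides. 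This is not vacuous: since $G[X_i]$ need not be connected, $G_i$ can be disconnected with $A'$ and $B'$ in distinct components (so that even $S'_i=\emptyset$), while a neutral component outside $G_i$ bridges the two. A secondary flaw: the claim $\partial A_i\cap\partial B_i=\emptyset$ is false, and the justification is wrong---the concatenation of $P_a$ and $P_b$ avoids $X_i$ and $S_{-i}$ but may pass through $S_i\cap\partial X_i$ (indeed it must, since $S$ separates $A$ from $B$), so it is not an $(A,B)$-path in $G-S$. That claim happens not to be load-bearing for your plan (a detour whose component meets both $A-A_i$ and $B-B_i$ just lands its endpoints in $A'\cap B'$, which helps), but it illustrates the care the detour bookkeeping requires, and the neutral case remains a genuine hole.
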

  \begin{proof}
  By \Cref{claim:new terminal set}, $S_i$ is a $(A',B')$-weak separator in $G_i$. Since $S'_i$ is a min $(A',B')$-weak separator in $G_i$, $|S'_i| \leq
  |S_i|$. Therefore, $|(S- S_i) \cup S'_i| \leq |S| = \mu(A,B)$. It remains to prove that $(S - S_i) \cup S'_i$ is an
  $(A,B)$-weak separator in $G$.   Since $S$ is a min $(A,B)$-weak
  separator in $G$, there is an $(A,B)$-path in $G -
  S_{-i}$. Furthermore, every $(A,B)$-path  contains a subpath $P'$ that uses
  $A'$ and $B'$ in $G[X_i]$. Since $S_i'$ is an $(A',B')$-weak separator
  in $G_i$, $P'$ must contain some vertex in $S_i'$. Therefore, $G -
  ((S- S_i) \cup S_i')$ has no $(A,B)$-path.%
   \end{proof}
\end{proof}

         \section{Computing a $(T,c)$-\Roc \Psetpair}%

\label{sec:reducing set}

 This section is devoted to proving \Cref{thm:reducing-or-covering set system}. Given a graph $G$ with terminal set $T$,  we say that a separator $T$-\textit{reduces} (or simply \textit{reduces}) the non-terminal part of another separator $S$ if it contains a non-terminal vertex of $S$. We also say that a separator \textit{splits} another separator $S$ if it is an $(x,y)$-separator in $G$ for some $x,y \in S$.  Let $\mathcal{S}$ be a set of separators in $G$.  We say that $\mathcal{S}$ $T$-\textit{reduces} (or simply \textit{reduces}) the non-terminal part of a separator $S$ if it contains a separator that reduces the non-terminal part of $S$. We also say that $\mathcal{S}$ \textit{splits} a separator $S$ if it contains a separator that splits $S$. %
 If $S$ is a min $(A,B)$-weak separator such that $|S - T| =
\mu^T(A,B)$, we say that $S$ is \emph{$T$-brittle} or just \emph{brittle} for short.

\begin{definition} \label{def:roc sspair}
 Given a graph $G$ with terminal set $T$ and $c >0$, a \ssetpair $(\mathcal{S},C \subseteq V)$, where  $\mathcal{S}$ is a set of separators in $G$, is $(T,c)$-\textit{\roc} if for all $A, B \subseteq T$ if $\mu_G(A,B) \leq c$, one of the followings is true:
 \begin{enumerate}
    \item $C$ covers some min $(A,B)$-weak separator,
    \item $\calS$ reduces the non-terminal part of some brittle min $(A,B)$-weak separator, or
    \item $\calS$ splits some brittle min $(A,B)$-weak separator. %
 \end{enumerate}
\end{definition}

\paragraph{Organization of \Cref{sec:reducing set}.} %
We first explain a divide and conquer lemmas for computing a $(T,c)$-\roc \ssetpair in \Cref{sec:divide and conquer reducing set}. In \Cref{sec:alg reducing set}, we show that divide-and-conquer lemmas naturally induce an algorithm that computes a $(T,c)$-\roc \ssetpair $(\calS,C)$ such that  total size of all separators in $\calS$ is $O(|T|c^2)$ and $|C| = O(|T|2^{O(c^2)})$.  In \Cref{sec:keeping track of reducing sets}, we explain how to modify the algorithm of \Cref{sec:alg reducing set} to output a $(T,c)$-\roc \psetpair with the properties as described in \Cref{thm:reducing-or-covering set system}. In \Cref{sec:fast impl few termminals}, we explain how to implement the algorithm in \Cref{sec:keeping track of reducing sets} in $\ot(|T|^3mc +|T| m 2^{O(c^2)})$ time, which is fast when $|T| \ll n$.  In \Cref{sec:fast impl expanders}, we show that if the input graph is a $\phi$-vertex expander, then the algorithm in \Cref{sec:keeping track of reducing sets} can be implemented in $\ot(mc\phi^{-1} + \phi^{-5}|T| 2^{O(c^2)})$; we will use the algorithm in \Cref{sec:fast impl few termminals} as a subroutine. We explain the implementation details in \Cref{sec:impl detials}.   Finally, in \Cref{sec:final reducing set},  we apply vertex expander decomposition as preprocessing and apply the algorithm in \Cref{sec:fast impl expanders} on each expander in the decomposition to compute the desired $(T,c)$-\roc \psetpair  with the properties as described in \Cref{thm:reducing-or-covering set system} in $\ot(m^{1+o(1)} \phi^{-4} \cdot 2^{O(c^2)})$ time.

\subsection{Divide-and-Conquer Lemmas} \label{sec:divide and conquer reducing set}

We prove divide-and-conquer lemma in this section. We recall the definition of Steiner cuts: A \textit{Steiner cut} $(L,S,R)$ is a vertex cut such that $L \cap T \neq \emptyset$ and $R \cap T \neq \emptyset$. It is minimum when $|S|$ is minimized.  Before stating the recursion lemma, we introduce the notion of \textit{left} and \textit{right} graphs of $G$.

\begin{definition} [Left and Right Graphs] \label{def:left right graphs}
Let $G$ be a graph with terminal set $T$ and  $(L,S,R)$ be a min
Steiner cut.   We define \textit{left graph} $G_L$ of $G$  and \textit{right
  graph} $G_R$ of $G$ as follows. We define $G_L$ as $G$ after adding clique edges to the
neighbors of $\hat R =  R \cup (S-T)$, followed by contracting\footnote{Equivalently, one can view it as replacing $\hat R$ with $t_R$ and add an edge to $t_R$ to every vertex in $N(\hat R)$.}  $\hat R$ into an arbitrary vertex $t_R \in R \cap T$. Symmetrically, we define $G_R$ as $G$ after adding clique edges to the neighbors of $\hat L = L
\cup (S -T)$, followed by contracting $\hat L$ into an arbitrary vertex
$t_L \in L \cap T$.  Let $T_L = (T \cap (L \cup S)) \cup \{t_R\}$ be the \textit{left
  terminal set}, and $T_R = (T \cap (S \cup R)) \cup \{t_L\}$ be the \textit{right terminal set}.  We define \textit{strictly left graph} $\check G_L$ as $\cl(G_L,t_R)$ and \textit{strictly right graph} $\check G_R$ as $\cl(G_R,t_L)$, and \textit{strictly left terminal set}  $\check T_L =  T \cap (L \cup S)$ and \textit{strictly right terminal set} $\check T_R = T \cap (S\cup R)$.  
\end{definition}

The divide-and-conquer lemma is the following. 
\begin{lemma} [Vertex Closure Recursion Lemma] \label{lem:closure recursion lemma} 
  Let $G$ be a graph and $T$ be a terminal set. Let $(L,S,R)$ be min Steiner cut. Define $G_L,T_L,G_R,T_R$ as in \Cref{def:left right graphs}.  If $(\calS_L,C_L)$ is a $(T_L,c)$-reducing-or-covering pair in $G_L$, and $(\calS_R,C_R)$ is a $(T_R ,c)$-reducing-or-covering pair in $G_R$, then, $(\calS_L \cup \{S\} \cup \calS_R,C_L \cup C_R)$ is a $(T,c)$-reducing-or-covering pair in
$G$.   %
\end{lemma}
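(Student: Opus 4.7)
The plan is to fix an arbitrary $A, B \subseteq T$ with $\mu_G(A, B) \le c$ and verify that one of the three conditions of \Cref{def:roc sspair} holds for the pair $(\calS_L \cup \{S\} \cup \calS_R, C_L \cup C_R)$ in $G$. The proof will proceed by a case analysis on how $A \cup B$ distributes across the min Steiner cut $(L, S, R)$, using a lifting correspondence to reduce to the recursive hypotheses on $G_L$ and $G_R$ whenever possible, and using $S$ itself otherwise. Throughout, write $\hat R := R \cup (S - T)$ and $\hat L := L \cup (S - T)$ as in \Cref{def:left right graphs}.

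First I treat the ``same-side'' case where $A \cup B \subseteq L \cup (S \cap T)$ (and, by symmetry, $A \cup B \subseteq R \cup (S \cap T)$): here I plan to invoke the hypothesis on $G_L$. The key sub-claim is a size- and non-terminal-preserving correspondence between brittle min $(A, B)$-weak separators of $G$ and of $G_L$: send $S^* \subseteq V(G)$ to $S^*_L$ defined as $S^* - \hat R$, union-ed with $\{t_R\}$ whenever $S^* \cap \hat R \ne \emptyset$. The image will be a weak separator in $G_L$ of no greater size, because the clique on $N_G(\hat R)$ in $G_L$ simulates any $G$-path that crosses $\hat R$; in the reverse direction, weak separators in $G_L$ lift back to $G$ by cut-recoverability-style arguments in the spirit of \Cref{lem:monotonic vertex closure}, since $G_L$ is essentially $\cl(G, \hat R)$ augmented with the terminal $t_R$. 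Because $t_R \in T_L$, the non-terminal vertex set $S^* - T = S^*_L - T_L$ is preserved, so brittleness and the notions of reducing, splitting, and covering all transfer, and any condition satisfied by $(\calS_L, C_L)$ for $(A, B)$ in $G_L$ will translate to the corresponding condition for $(\calS_L \cup \{S\} \cup \calS_R, C_L \cup C_R)$ in $G$.

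Next I handle the case where $A$ or $B$ has terminals on both sides of the Steiner cut. If only one of them straddles, say $A \cap L \ne \emptyset$ and $A \cap R \ne \emptyset$ while $B \subseteq L \cup (S \cap T)$, I enrich terminals by using $A_L := (A \cap (L \cup (S \cap T))) \cup \{t_R\}$ and $B_L := B$ in $G_L$, so that $t_R$ serves as the proxy for the part of $A$ in $R$; the same correspondence argument then applies. If both $A$ and $B$ straddle, or if the terminals of $A$ and of $B$ lie strictly on opposite sides of $S$, then $S$ itself is an $(A, B)$-weak separator, and by Steiner-minimality of $(L, S, R)$ it is in fact a min $(A, B)$-weak separator. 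An uncrossing argument between $S$ and any brittle min $(A, B)$-weak separator $S^*$, using submodularity of vertex cuts together with Steiner-minimality, then shows that $S$ must either share a non-terminal vertex with $S^*$ (giving condition (2)) or be an $(x, y)$-separator for some $x, y \in S^*$ (giving condition (3)). This is precisely the role of including $\{S\}$ in the separator set of the conclusion.

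The hard part will be establishing the size- and non-terminal-preserving correspondence between min weak separators of $G$ and of $G_L$ (respectively $G_R$). The construction of $G_L$---clique on $N_G(\hat R)$ plus contraction of $\hat R$ to $t_R$---is tailored to enable this correspondence, but formalizing it will require care with weak separators that may contain terminals, with multiple connected components of $G[\hat R]$ (in which case $G_L$'s single clique on $N_G(\hat R)$ is larger than the union of $\cl(G, \hat R)$'s per-component cliques, and the correspondence must ensure that no separator in $G_L$ exploits this discrepancy to appear artificially small), and with the dual role of $t_R$ as both a new terminal in $T_L$ and a proxy for arbitrary original terminals of $G$ in $R$. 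The uncrossing argument in the straddling case will be standard submodular reasoning for vertex cuts, slightly subtle only because the weak-separator definition permits $A$ or $B$ to lie inside the separator.
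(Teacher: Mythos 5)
The proposal pivots on the wrong object, and this produces concrete failures. The paper's proof does not case-analyze on where $A$ and $B$ live; it fixes a brittle min $(A,B)$-weak separator $S'$ and analyzes where $S'$ lies relative to $(L,S,R)$. After ruling out that $S$ itself reduces or splits $S'$ (and that $C_L \cup C_R$ covers it, and that $S$ is an $(A,B)$-weak separator), one gets the crucial structural facts $S' \cap S \subseteq T$ and $S' \subseteq L \cup S$ (or symmetrically $S \cup R$). These facts are what make the lift to $G_L$ work cleanly. Your case analysis on $A\cup B$ gives no control over where $S'$ sits, and that is exactly where your argument breaks.

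Two concrete gaps. First, your Case 3 asserts that if both $A$ and $B$ straddle $(L,S,R)$ then $S$ is an $(A,B)$-weak separator. That is false: if $A \cap L \neq \emptyset$ and $B \cap L \neq \emptyset$, removing $S$ leaves $A$ and $B$ both meeting the $L$-side, which in general remains connected, so $S$ is not an $(A,B)$-weak separator and the subsequent uncrossing step has nothing to uncross against. The paper does treat ``$S$ is an $(A,B)$-weak separator'' as one of the easy-exit conditions, but it is a condition to be checked, not a structural consequence of $A,B$ straddling; when it fails the paper falls back to the $S'$-based lift with the modified terminal sets. Second, your claimed ``size- and non-terminal-preserving'' map $S^* \mapsto (S^*-\hat R)\cup\{t_R\}$ is not non-terminal preserving in general: if $S^*$ contains a non-terminal of $\hat R = R \cup (S-T)$, that non-terminal is collapsed onto the terminal $t_R$, so $|S^*_L - T_L| < |S^*-T|$ and brittleness does not transfer. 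The paper sidesteps this precisely by working with an $S'$ satisfying $S'\cap S\subseteq T$ and $S'\subseteq L\cup S$, which forces $S'\cap\hat R = \emptyset$ and makes the map a bijection on $S'$. Relatedly, the paper's lifted terminal sets are $A' = (A-R)\cup(S\cap S')\cup\{t_R\}$ and $B'=(B-R)\cup(S\cap S')$; the inclusion of $S\cap S'$ (which depends on $S'$) is what makes ``any $(A',B')$-weak separator in $G_L$ is an $(A,B)$-weak separator in $G$'' go through when $A$ or $B$ has mass in $R$. Your proposed terminal enrichment uses only $t_R$ and not $S\cap S'$, so the back-lift claim (your Claim 7 analogue) does not hold in that generality. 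Finally, the ``standard submodular uncrossing'' you invoke is not available for vertex cuts in the clean form it takes for edge cuts, and the paper does not use it; you would need to supply a different argument.
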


By using a similar argument in the proof of \Cref{lem:closure
  recursion lemma}, we also have the following. 
\begin{lemma} \label{lem:edge case of isolating cut}
 Let $G$ be a graph and $T$ be a terminal set. Let $(L,S,R)$ be min Steiner cut. Define  strictly left graphs, strictly left terminal sets, strictly right graphs and strictly right terminal sets $\check G_L, \check T_L, \check G_R, \check T_R$ as in \Cref{def:left right graphs}.   Let $(\check \calS_L, \check C_L)$ be a  $(\check T_L, c)$-reducing-or-covering pair in $\hat G_L$, and $(\check \calS_R,\check C_R)$ be a $(\check T_R,c)$-reducing-or-covering pair in $\check G_R$, respectively. If  $S \subseteq T$, then   $(\check \calS_L \cup \{S\} \cup \check \calS_R, \check C_L \cup \check C_R)$ is a $(T,c)$-reducing-or-covering pair in $G$. 
\end{lemma}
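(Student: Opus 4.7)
The plan is to mirror the proof of \Cref{lem:closure recursion lemma} and leverage the hypothesis $S \subseteq T$ to justify additionally closing the contraction vertices $t_R$ and $t_L$. I would begin by unpacking the construction under the edge-case assumption: since $S - T = \emptyset$, we have $\hat R = R$ and $\hat L = L$, so $G_L$ is $G$ after placing a clique on $N_G(R) = S$ and contracting $R$ into $t_R$; the strictly left graph $\check G_L = \cl(G_L, t_R)$ is therefore (isomorphic to) $G[L \cup S]$ together with a clique on $S$, and the strictly left terminal set satisfies $\check T_L = T \cap (L \cup S) \supseteq S$. The symmetric statements hold on the right. These explicit descriptions will be used throughout.

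Next, fix $A, B \subseteq T$ with $\mu_G(A, B) \leq c$ and let $S^\star$ be any (brittle) min $(A,B)$-weak separator in $G$. Following the case analysis in \Cref{lem:closure recursion lemma}, I would split according to how $S^\star$ crosses $(L, S, R)$. If $S^\star$ crosses the Steiner cut, then a standard submodularity/uncrossing argument shows that $S$ itself either splits or $T$-reduces the non-terminal part of some brittle min $(A,B)$-weak separator; since $\{S\}$ is already included in our output collection we are done. Otherwise $S^\star$ lies entirely on one side, say $S^\star \cap R = \emptyset$, so that $S^\star \subseteq L \cup S$. Defining $A' = (A \cap (L \cup S)) \cup \partial A$ and $B' = (B \cap (L \cup S)) \cup \partial B$ via the proxy-vertex construction into $S$ (exactly as in the swapping argument of \Cref{lem:swapping Si}), the fact that $S \subseteq T$ guarantees $A', B' \subseteq \check T_L$; hence $S^\star$ is a min $(A', B')$-weak separator in $\check G_L$ of the same size and with the same non-terminal part. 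Applying the $(\check T_L, c)$-\roc guarantee for $(\check \calS_L, \check C_L)$ then yields either a covering in $\check C_L \subseteq \check C_L \cup \check C_R$ or a splitter/reducer in $\check \calS_L \subseteq \check \calS_L \cup \{S\} \cup \check \calS_R$, as required. The case $S^\star \cap L = \emptyset$ is symmetric.

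The main obstacle is to verify that the transfer $G \leftrightarrow \check G_L$ preserves the combinatorial invariants needed for the definition of a reducing-or-covering pair, in particular brittleness: a min $(A,B)$-weak separator of $G$ achieving $|S^\star - T| = \mu^T_G(A,B)$ must correspond to a min $(A',B')$-weak separator of $\check G_L$ achieving $\mu^{\check T_L}_{\check G_L}(A', B')$. This is exactly where $S \subseteq T$ is essential: the only new edges introduced going from $G[L \cup S]$ to $\check G_L$ are clique edges among $S$, all of whose endpoints are terminals, so they neither increase the number of min weak separators nor alter the non-terminal content of any separator. Thus brittleness and reducibility of non-terminal parts transfer verbatim, and a splitter/reducer of $S^\star$ inside $\check G_L$ is also a splitter/reducer inside $G$. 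Finally, for covering we note $\check C_L \subseteq V(\check G_L) \subseteq V(G)$, so a min $(A',B')$-weak separator covered by $\check C_L$ in $\check G_L$ remains a min $(A,B)$-weak separator covered by $\check C_L \cup \check C_R$ in $G$. This completes the verification of all three clauses in \Cref{def:roc sspair} and proves the lemma.
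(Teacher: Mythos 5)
Your overall skeleton mirrors the paper's intent correctly: since $S\subseteq T$, the min Steiner cut $S$ has no non‑terminal vertices, so the ``$S$ reduces'' case is vacuous, ``$S$ splits'' handles the crossing case (no submodularity/uncrossing is needed — if $S^\star$ meets both $L$ and $R$ then $S$ is trivially an $(x,y)$-separator for $x,y\in S^\star$), and the plan is to push the analysis into $\check G_L$ or $\check G_R$. However, the central technical content — showing that there exist $A',B'\subseteq \check T_L$ with $\mu_{\check G_L}(A',B')=\mu_G(A,B)$ such that (i) $S^\star$ is a min $(A',B')$-weak separator in $\check G_L$, (ii) every min $(A',B')$-weak separator in $\check G_L$ is a min $(A,B)$-weak separator in $G$, and (iii) $\check T_L$-brittleness transfers to $T$-brittleness — is exactly the content of Claims~\ref{claim:S' min A'B' weak sep}–\ref{claim:TL brittle T brittle} in the proof of \Cref{lem:closure recursion lemma}, and your proposal asserts rather than proves it. In particular, the sentence ``the fact that $S\subseteq T$ guarantees $A',B'\subseteq \check T_L$; hence $S^\star$ is a min $(A',B')$-weak separator in $\check G_L$'' is a non-sequitur: containment of $A',B'$ in $\check T_L$ does not by itself yield either the upper or lower bound on $\mu_{\check G_L}(A',B')$.

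There is also a concrete structural gap in the choice of $A',B'$. The paper's argument for \Cref{lem:closure recursion lemma} uses the concrete sets $A'=(A-R)\cup(S\cap S^\star)\cup\{t_R\}$ and $B'=(B-R)\cup(S\cap S^\star)$, and the presence of $t_R$ (and of $S\cap S^\star$ in \emph{both} sets) is essential: in the case $a\in A\cap R$, $b\in B-R$ of Claim~\ref{claim:weak A'B' in G_L is weak AB in G}, the closed path $\cl(G,P,\hat R)$ begins at some $s\in S$ which may lie in neither $A-R$ nor $S\cap S^\star$, and the paper rescues this by prepending $t_R$. In $\check G_L$ there is no $t_R$, so some substitute is needed. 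Your replacement — proxy vertices ``exactly as in the swapping argument of \Cref{lem:swapping Si}'' — is ambiguous here because that construction is defined relative to $G-S_{-i}$, and with $S^\star\subseteq L\cup S$ one has $S_{-i}=\emptyset$, so the proxy paths live in all of $G[S\cup R]$ and may pass through $S\cap S^\star$. With that relaxed definition the proxy sets $\partial A,\partial B$ can contain vertices on the ``wrong'' side of $(L',S^\star,R')$, which breaks the exhibition of $S^\star$ as an $(A',B')$-weak separator of $\check G_L$ (you need $A'-S^\star\subseteq L'$ and $B'-S^\star\subseteq R'$, and $\partial B\setminus S^\star$ can end up in $L'$). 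Conversely, if you tighten the proxy paths to avoid $S^\star$, the set $S\cap S^\star$ is no longer captured and the crossing argument fails. The fix the paper intends (following \Cref{lem:closure recursion lemma}) is to keep $S\cap S^\star$ explicitly in both $A'$ and $B'$ — which is legal only because $S\subseteq T$ makes $S\cap S^\star\subseteq\check T_L$ — and to show directly that the first $S$-vertex $s$ encountered by the closed path lands either in $S\cap S^\star$ or in $L'$ (with the latter case needing a short argument replacing the role of $t_R$). Finally, the brittleness transfer is asserted (``transfers verbatim'') but it is precisely the content of Claim~\ref{claim:TL brittle T brittle} and needs the chain $S''-\check T_L=S''-T$ together with $\mu^{\check T_L}_{\check G_L}(A',B')=\mu^T_G(A,B)$, neither of which is justified in the proposal.
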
 

The rest of this section is devoted to proving \Cref{lem:closure recursion lemma}.
\begin{proof}[Proof of \Cref{lem:closure recursion lemma}]
  We show that for every $A,B \subseteq T$ such that $\mu_G(A,B) \leq c$ either $C_L \cup C_R$ covers some min $(A,B)$-weak separator or $\calS_L \cup \{S\} \cup \calS_R$ splits or reduces non-terminal part of some min brittle $(A,B)$-weak separator. Let $A, B \subseteq T$ be arbitrary two subsets of terminal set such that $\mu_G(A,B) \leq c$. If one of the following conditions hold, then we are done.  
  \begin{enumerate}
      \item $C_L \cup C_R$ covers some min $(A,B)$-weak separator,
      \item  $\calS_L \cup \{S\} \cup \calS_R$ reduces non-terminal part of some min brittle $(A,B)$-weak separator, 
      \item $S$ is an $(A,B)$-weak separator, or
      \item $S$ splits some min brittle $(A,B)$-weak separator.
  \end{enumerate}
    Assuming none of the above holds, we prove that $\calS_L$ or $\calS_R$ splits some brittle min $(A,B)$-weak separator. Let $S'$ be a brittle min $(A,B)$-weak separator, and $(L',S',R')$ be a corresponding vertex cut  where $A \subseteq L'\cup S'$ and $B \subseteq S' \cup R'$. Since $S$ does not reduce the non-terminal part of $S'$, we have $S' \cap S \subseteq T$. Since $S$ does not split $S'$, we have $S' \subseteq L \cup S$ or $S' \subseteq S\cup R$. WLOG, we assume $S' \subseteq L \cup S$. The case for $S' \subseteq S \cup R$ is similar.   Since $S$ is a Steiner cut, there is a terminal in $R$. Since $S' \subseteq L \cup S$, the terminal in $R$ must be in either $L'$ or $R'$. WLOG, we assume that it is in $L'$. That is, $T \cap L' \cap R \neq \emptyset$. We summarize our assumption for $S'$. %
  \begin{assumption} \label{ass:sprime} We have the following assumption for $S'$. 
    \begin{enumerate}
        \item   $S' \subseteq L \cup S$, 
        \item   $S' \cap S \subseteq T$,  and
        \item   $T \cap L' \cap R \neq \emptyset$.%
    \end{enumerate} 
  \end{assumption}

  From the assumption, we can intuitively expect $S'$ to be some separator in $G_L$ with terminal set $T_L$ and also expect that $\calS_L$ will split $S'$. The goal now is to prove that $\calS_L$ splits some brittle min $(A,B)$-weak separator in $G$. Note that it may not be $S'$. Define $A' := (A - R) \cup (S\cap S') \cup \{t_R\}$ and $B' := (B - R) \cup (S \cap S')$ where $t_R$ is the unique vertex in $T_L \cap R$. Note that $A', B' \subseteq T_L$.   
  
  \begin{claim} \label{claim: a-r b-r non empty}
    Both $A-R$ and $B-R$ are non-empty.
  \end{claim}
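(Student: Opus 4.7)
The plan is to prove both nonemptiness statements by the same contradiction argument applied symmetrically, so I will describe it for $A-R$; the case $B-R\neq\emptyset$ follows by swapping the roles of $A$ and $B$ together with the roles of $L'$ and $R'$. Suppose, toward contradiction, that $A\subseteq R$.

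The first step I would carry out is to localize $A$ with respect to the cut $(L',S',R')$. From $A\subseteq L'\cup S'$ together with the first item of \Cref{ass:sprime}, namely $S'\subseteq L\cup S$, any vertex of $A\cap S'$ would lie in $R\cap(L\cup S)$; but $(L,S,R)$ is a vertex cut of $G$, so $R\cap(L\cup S)=\emptyset$. Hence $A\cap S'=\emptyset$, which forces $A\subseteq L'\cap R$. Note in passing that this step also gives the useful byproduct $S'\cap R=\emptyset$, again because $S'\subseteq L\cup S$ and $L\cap R=S\cap R=\emptyset$.

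The second step is to exploit the standing assumption (one of the four enumerated cases that we assumed does \emph{not} hold above the claim) that $S$ is not an $(A,B)$-weak separator. This produces an $(A,B)$-path $P$ in $G-S$. Since $P$ starts in $A\subseteq R$ and $G$ has no edge between $L$ and $R$, the whole path $P$ must stay inside $R$; so its $B$-endpoint $b$ lies in $B\cap R$. Combining $b\in B\subseteq S'\cup R'$ with $S'\cap R=\emptyset$ gives $b\in R'$, and the same containment $P\subseteq R$ together with $S'\cap R=\emptyset$ gives $P\cap S'=\emptyset$ as well.

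Putting these pieces together, $P$ becomes a path in $G-S'$ from some vertex of $A\subseteq L'$ to $b\in R'$, contradicting the fact that $(L',S',R')$ is a vertex cut of $G$; this proves $A-R\neq\emptyset$, and the symmetric argument gives $B-R\neq\emptyset$. I do not anticipate a real obstacle here: once $A\subseteq R$ is assumed, the containment $S'\subseteq L\cup S$ immediately makes $S'$ invisible inside $R$, so the only nontrivial ingredient is producing the path $P$, which is free from the already-excluded case that $S$ is an $(A,B)$-weak separator. The main thing to be careful about is simply that the argument only uses the first clause of \Cref{ass:sprime} and no property of $S'$ beyond its being a weak $(A,B)$-separator with $S'\subseteq L\cup S$.
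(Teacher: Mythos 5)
Your proof is correct and follows essentially the same route as the paper's: both arguments hinge on the facts that $S'\subseteq L\cup S$ forces $S'\cap R=\emptyset$, hence $A\subseteq L'\cap R$, and that $(L,S,R)$ and $(L',S',R')$ being vertex cuts confines any $(A,B)$-path avoiding $S$ to the region $R$. The paper phrases the conclusion as ``$S$ is an $(A,B)$-weak separator, contradiction,'' while you run the contrapositive by producing the forbidden path $P$ and deriving a violation of the cut $(L',S',R')$; these are the same argument up to direction of the final step.
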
  
  \begin{proof} Suppose $A - R$ is empty. Since $S' \subseteq L \cup S$, we have $S' \cap R = \emptyset$, and thus $A \subseteq L' \cap R$ and  $N_G(L' \cap R) \subseteq  S$. By definition of $S'$, we have $B \subseteq S'\cup R'$, and thus $B \cap (L' \cap R) = \emptyset$. Therefore, $S$ is an $(A,B)$-weak separator, a contradiction. The argument for $B -R$ is similar.  
  \end{proof}

  \begin{claim} \label{claim:weak A'B' in G_L is weak AB in G}
   Any $(A',B')$-weak separator in $G_L$ is an $(A,B)$-weak separator in $G$. Hence, $\mu_{G_L}(A',B') \geq \mu_G(A,B)$.
  \end{claim}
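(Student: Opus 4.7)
The plan is to argue the contrapositive: suppose $X \subseteq V(G_L)$ and there exists an $(A,B)$-path $P$ in $G - X$ from some $a \in A \setminus X$ to some $b \in B \setminus X$; we construct an $(A',B')$-path in $G_L - X$, contradicting the assumption that $X$ is an $(A',B')$-weak separator in $G_L$. The ``hence'' statement $\mu_{G_L}(A',B') \geq \mu_G(A,B)$ then follows by applying the main statement to a minimum $(A',B')$-weak separator in $G_L$.

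\textbf{Construction.} First I would convert $P$ into a walk $W$ in $G_L$ by replacing each maximal $\hat R$-segment $v_i,\ldots,v_j$ of $P$ with the clique edge $(v_{i-1},v_{j+1})$, which exists in $G_L$ because $v_{i-1},v_{j+1} \in N_G(\hat R)$ and $G_L$ contains a clique on $N_G(\hat R)$. This $W$ uses only vertices in $V(P) \setminus \hat R$, so $V(W) \cap X = \emptyset$ and $W$ avoids $t_R$ entirely. If both $a,b \notin R$ (equivalently $a,b \notin \hat R$, since $A,B \subseteq T$ are disjoint from $S-T$), then $W$ is already a walk from $a \in A-R \subseteq A'$ to $b \in B-R \subseteq B'$ in $G_L - X$, and we are done.

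\textbf{Endpoint in $R$.} Assume WLOG $a \in A \cap R$; the symmetric case is handled by the same argument applied to $b$. In $G_L$, $a$ collapses to $t_R$. If $t_R \notin X$, prepend $t_R$ to $W$ via the edge $(t_R,v)$ in $G_L$, where $v$ is the first vertex of $P$ not in $\hat R$; this edge exists since $v \in N_G(\hat R)$. Since $t_R \in A'$ and $t_R \notin X$, this produces an $(A'-X,B'-X)$-walk in $G_L - X$. If $t_R \in X$, I invoke the fact that $S'$ is an $(A,B)$-weak separator of $G$, hence of $G - X$ as well, so $P$ must contain a vertex $p \in S'$. By Assumptions 1 and 2, $S' \subseteq L \cup (S \cap T)$, so $p \in V(G_L) \setminus \{t_R\}$ and $p \notin X$. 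Taking the first such $p$ along $P$ and using the suffix of $P$ from $p$ to $b$, converted via clique shortcuts in $G_L$, gives a walk in $G_L - X$ whose endpoint $b \in B - R \subseteq B'$ (since we are in the WLOG case $b \notin R$). The remaining question is whether $p$ lies in $A'$.

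\textbf{Main obstacle.} The main obstacle is the sub-case $t_R \in X$ together with $p \in L \cap S'$ (rather than $p \in S \cap S' \subseteq A' \cap B'$, which would immediately yield the path). The sub-case $p \in S \cap S'$ is clean because $S \cap S' \subseteq A' \cap B'$ by construction. For $p \in L \cap S'$, I expect the resolution to leverage Assumption 3 ($T \cap L' \cap R \neq \emptyset$) together with the brittleness of $S'$: among all $(A,B)$-paths in $G - X$ one can choose one whose first $S'$-crossing lies in $S \cap S'$ whenever such a crossing exists, and otherwise a direct re-routing through the $L'$-terminal in $R$ (guaranteed by Assumption 3) produces a walk in $G_L - X$ starting from an alternative vertex of $A' \setminus X$. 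This delicate combinatorial re-routing is the technically hardest part of the proof; all other cases reduce to clean clique-shortcut arguments in $G_L$.
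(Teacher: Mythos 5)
Your proof has two genuine problems, one of which you acknowledge yourself.

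First, the line ``Assume WLOG $a \in A \cap R$; the symmetric case is handled by the same argument applied to $b$'' is not justified, and in fact is false. The sets $A'$ and $B'$ are \emph{not} symmetric with respect to $t_R$: by definition $A' = (A-R) \cup (S\cap S') \cup \{t_R\}$ contains $t_R$, but $B' = (B-R) \cup (S\cap S')$ does not. So when $b \in B \cap R$, the clique-shortcut walk in $G_L$ ends at some vertex of $N_G(\hat R)$, and appending $t_R$ at that end produces a walk terminating at $t_R \notin B'$ --- useless. The paper therefore handles $b \in B \cap R$ as a genuinely separate case (its Case~1), with an argument that does not resemble your ``prepend $t_R$'' trick at all.

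Second, the sub-case you flag as the ``main obstacle'' ($t_R \in X$, $p \in L \cap S'$) is a real gap, and you do not close it; you only conjecture that ``delicate combinatorial re-routing'' should work. The paper never encounters this obstacle because it does not case-split on whether $t_R$ lies in the separator. Instead, for the hard case it uses a structural fact about $P$ that you did not find: when $b \in B \cap R$, \Cref{ass:sprime}(3) gives a terminal of $T \cap L' \cap R$, and $b$ itself witnesses $T \cap R' \cap R \neq \emptyset$; combined with $S' \subseteq L \cup S$ and $S' \cap S \subseteq T$ (so that $S'$ restricted to $G[S\cup R]$ is just $S'\cap S$), one concludes that the $(A,B)$-path $P$ must pass through some vertex of $S \cap S'$. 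But $S \cap S' \subseteq A' \cap B'$, so that single vertex is itself an $(A',B')$-walk in $G_L - S''$ --- an immediate contradiction, with no need to reach either endpoint of the walk or to worry about $t_R$. This ``hit a vertex in $A' \cap B'$'' observation is the key idea your proposal is missing; without it, the case analysis you chose cannot be completed.
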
 
  \begin{proof}
   We first set up notations. For any path $P$ in $G$ and a vertex set $X  \subseteq V$, we
     denote $\cl(G,P,X)$ to be the path $P$ after applying vertex
     closure operations on $X$ in $G$. By definition of $G_L$, before replacing
     $\hat R$ with $t_R$ we add clique edges to the neighbors  of
     $\hat R$. Therefore, any path in $P$ in $G$ becomes $\cl(G,P,\hat
     R)$ in $G_L$. 

    Let $S''$ be any $(A',B')$-weak separator in $G_L$. Suppose there is an $(A,B)$-path $P$ in $G - S''$. Our goal is to show that $\cl(G,P, \hat R)$ is an $(A',B')$-path in $G_L - S''$, which is a contradiction. Let $a \in A$, and $b \in B$ be the starting and ending vertices of path $P$, respectively. We consider three cases.  
    \begin{enumerate}
        \item The first case is $b \in B \cap R$. By \Cref{ass:sprime}(3),  $T \cap L' \cap R$ and $T \cap R' \cap R$ are both non-empty. Since $(L,S,R)$ is a min
   Steiner cut, we have  $L' \cap S = R' \cap S = \emptyset$, and thus $P$ must contain a vertex in  $S \cap S'
   \subseteq T$. Therefore,  $\cl(G,P,\hat R)$ contains a vertex in $S \cap S'$ in $G_L - S''$, a
   contradiction. 
        \item The second case is $a \in A- R, b \in B- R$. In this case, $a \not \in \hat R$ and $b \not \in \hat R$ since $a,b \in T$ and $\hat R$ contains only vertices in $R$ and $S - T$. Therefore, $\cl(G,P,\hat R)$ starts with $a$ and ends with $b$, and thus it is an $(A',B')$-path in $G_L - S''$, a contradiction. 
        \item The final case is $a \in A \cap R, b \in B - R$.  In this case, $a \in \hat R$ and $b \not \in \hat R$, and thus $\cl(G,P,\hat R)$ starts with a vertex in $N_G(\hat R)$ and ends at $b$. By definition of $G_L$, there is an edge from $t_R$ to every vertex in $N_G(\hat R)$. Therefore, the path obtained by appending $t_R$ with $\cl(G,P,\hat R)$ is an $(t_R,B-R)$-path in $G_L - S''$, a contradiction. 
    \end{enumerate} \qedhere

  \end{proof}
 
  \begin{claim} \label{claim:S' min A'B' weak sep}
    $S'$ is a min $(A',B')$-weak separator in $G_L$. Hence, $\mu_{G_L}(A',B') \leq |S'| = \mu_G(A,B) \leq  c$.
  \end{claim}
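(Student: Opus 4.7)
The claim contains two statements: that $S'$ is an $(A',B')$-weak separator in $G_L$, and that it is minimum. Minimality is immediate from \Cref{claim:weak A'B' in G_L is weak AB in G}, which yields $\mu_{G_L}(A',B')\geq\mu_G(A,B)=|S'|$; hence once $S'$ is shown to be an $(A',B')$-weak separator at all it must be minimum of size exactly $|S'|=\mu_G(A,B)$.

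For the separator property I would argue by contradiction, assuming an $(A',B')$-path $P=v_0v_1\dots v_k$ in $G_L-S'$. Because $S\cap S'\subseteq S'$, the endpoints satisfy $v_0\in\{t_R\}\cup((A-R)\setminus S')$ and $v_k\in(B-R)\setminus S'$; and from $S'\subseteq L\cup S$ combined with $S'\cap S\subseteq T$ one gets $S'\cap\hat R=\emptyset$, so $t_R\notin S'$ and all of $\hat R$ avoids $S'$. I would then lift $P$ to a walk $W$ in $G$ by (a) keeping the vertices of $V\setminus\hat R$ unchanged, (b) replacing each clique edge on $N_G(\hat R)$ introduced when forming $G_L$ by a path through the corresponding component of $G[\hat R]$ (which exists because the cliques arise from closure on components, and such paths avoid $S'$ since $S'\cap\hat R=\emptyset$), and (c) if $v_0=t_R$, replacing $t_R$ by a $G$-neighbor $r_0\in\hat R$ of $v_1$. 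If $v_0\neq t_R$, then $W$ is an $(A,B)$-walk in $G-S'$, directly contradicting that $S'$ is an $(A,B)$-weak separator in $G$.

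When $v_0=t_R$, I would invoke the hypothesis $T\cap L'\cap R\neq\emptyset$ from \Cref{ass:sprime} to obtain a terminal $t^*\in\hat R\cap L'$, and then use the closure-based internal connectivity of $\hat R$ in $G$ to arrange the lift so that $W$ begins inside $L'$. Since $v_k\in(B-R)\setminus S'\subseteq R'$, the resulting walk $W$ would cross from $L'$ to $R'$ inside $G-S'$, contradicting that $(L',S',R')$ is a vertex cut in $G$. The hardest step is precisely this $v_0=t_R$ case: justifying that the lift of $t_R$ can be placed on the $L'$-side. It leans on the min Steiner cut structure of $(L,S,R)$ together with the WLOG positioning $T\cap L'\cap R\neq\emptyset$, which together ensure that the component of $G[\hat R]$ used in the lift reaches $L'$ and thus admits an $L'$-initial lift.
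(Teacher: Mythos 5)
Your minimality argument (via \Cref{claim:weak A'B' in G_L is weak AB in G}) matches the paper's. But for the separator property your lift-and-contradict route departs from the paper's proof and contains a real gap.

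The paper's argument is structural and avoids any lifting. It proves the inclusion $N_G(\hat R)\subseteq (L\cup S)\cap(L'\cup S')$ -- using the min Steiner cut property to get $S\cap R'=\emptyset$, together with \Cref{ass:sprime} -- and then observes that every edge introduced when forming $G_L$ (the clique on $N_G(\hat R)$ and the edges incident to $t_R$) has both endpoints in $(L'\cup S')\cup\{t_R\}$, so the cut $((L'\cap V(G_L))\cup\{t_R\},\,S',\,R'\cap V(G_L))$ survives in $G_L$ and $S'$ still separates $A'$ from $B'$.

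Your step (b) rests on the premise that ``the cliques arise from closure on components,'' but \Cref{def:left right graphs} adds one \emph{global} clique on $N_G(\hat R)$, not one clique per connected component of $G[\hat R]$. Moreover $\hat R = R\cup(S-T)$ excludes $S\cap T$, so $G[\hat R]$ need not be connected: the argument in \Cref{pro:monotonicity GLGR} only yields reachability inside $G[S\cup R]$, and those paths may route through $S\cap T$, which $\hat R$ lacks. A clique edge $(u,v)$ of $G_L$ can therefore join $u,v\in N_G(\hat R)$ that attach to \emph{different} components of $G[\hat R]$, and then no $\hat R$-internal lift of $(u,v)$ exists; the walk $W$ cannot be built as described. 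The $v_0=t_R$ case you flag as hardest has the same root cause: to place the lifted start in $L'$ you would again need internal connectivity of $G[\hat R]$, and beyond that you would need $N_G(\hat R)\subseteq L'\cup S'$ to know that $v_1$ and hence $r_0$ land in $L'$. That inclusion -- and the step $S\cap R'=\emptyset$, which your sketch never invokes -- is precisely what the paper proves and is the load-bearing fact; your lifting approach does not supply an alternative derivation of it.
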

  \begin{proof}
   We claim that $N_G(\hat R) \subseteq (L\cup S) \cap (L'
  \cup S')$. If true, then  $S'$ is an $(A',B')$-weak separator in $G_L$ (since $B- R$ and $A-R$ are non-empty by \Cref{claim: a-r b-r non empty}), and thus 
 $\mu_{G_L}(A',B') \leq |S'| = \mu_G(A,B)$. We now prove the claim.  By \Cref{ass:sprime}(3),  $R \cap L' \cap T \neq \emptyset$. Since $(L,S,R)$ is a min Steiner cut in $G$, we have $S \cap R' = \emptyset$.   By \Cref{ass:sprime}(2),  $S' \subseteq L \cup
  S$, and thus we have $S' \cap R = \emptyset$.   By \Cref{ass:sprime}(1), $S \cap S'
  \subseteq T$. Since $\hat R = R \cup (S-T)$, the claim follows. 
  
  Next, we prove that $S'$ is
 a min $(A',B')$-weak separator in $G_L$. Let $S''$ be
  a min $(A',B')$-weak separator in $G_L$. By \Cref{claim:weak A'B' in G_L is weak AB in G}, we have $S''$
  is also an $(A,B)$-weak separator in $G$, and thus  $\mu_{G}(A,B) \leq
  |S''| = \mu_{G_L}(A',B')$. Therefore, $\mu_G(A,B) = \mu_{G_L}(A',B')
  = |S'|$. So, $S'$ is a min $(A',B')$-weak separator in $G_L$.
  \end{proof}
  
  \begin{claim} \label{claim:TL brittle T brittle}
 Any $T_L$-brittle min $(A',B')$-weak separator in $G_L$ is a $T$-brittle min $(A,B)$-weak separator in $G$. 
  \end{claim}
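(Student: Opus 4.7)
The proof should combine the two preceding claims with a direct comparison between the quantities $\mu^T_G(A,B)$ and $\mu^{T_L}_{G_L}(A',B')$. Fix a $T_L$-brittle min $(A',B')$-weak separator $S''$ in $G_L$. By \Cref{claim:weak A'B' in G_L is weak AB in G}, $S''$ is an $(A,B)$-weak separator in $G$, so $|S''| \ge \mu_G(A,B)$. Combining this with \Cref{claim:S' min A'B' weak sep}, which gives $\mu_{G_L}(A',B') = \mu_G(A,B) = |S'|$, I obtain $|S''| = \mu_{G_L}(A',B') = \mu_G(A,B)$, so $S''$ is a min $(A,B)$-weak separator in $G$.

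It remains to verify $|S'' - T| = \mu^T_G(A,B)$. First I would make the simple observation that $S'' \cap T = S'' \cap T_L$. Indeed, $V(G_L) = (L\cup S) \cup \{t_R\}$, and by the definition of $T_L$ together with the fact that $t_R$ was chosen inside $R\cap T \subseteq T$, the sets $T$ and $T_L$ coincide on $V(G_L)$. Hence $|S'' - T| = |S'' - T_L| = \mu^{T_L}_{G_L}(A',B')$, the last equality being $T_L$-brittleness.

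Thus the claim reduces to $\mu^{T_L}_{G_L}(A',B') = \mu^T_G(A,B)$, which I would prove by two inequalities. For $\le$, recall that $S'$ was chosen as a $T$-brittle min $(A,B)$-weak separator, so $|S' - T| = \mu^T_G(A,B)$. By \Cref{ass:sprime} we have $S' \subseteq L \cup S$ and $S' \cap S \subseteq T$, whence $S' \cap \{t_R\} = \emptyset$ and $S' \cap T_L = S' \cap T$. Since $S'$ is a min $(A',B')$-weak separator in $G_L$ by \Cref{claim:S' min A'B' weak sep},
\[
\mu^{T_L}_{G_L}(A',B') \le |S' - T_L| = |S' - T| = \mu^T_G(A,B).
\]
For $\ge$, take any $T_L$-brittle min $(A',B')$-weak separator $\tilde S$ in $G_L$; by the argument of the first two paragraphs applied to $\tilde S$ in place of $S''$, $\tilde S$ is a min $(A,B)$-weak separator in $G$ with $|\tilde S - T| = |\tilde S - T_L| = \mu^{T_L}_{G_L}(A',B')$, so by the definition of $\mu^T_G$,
\[
\mu^T_G(A,B) \le |\tilde S - T| = \mu^{T_L}_{G_L}(A',B').
\]
Combining the two inequalities, $\mu^{T_L}_{G_L}(A',B') = \mu^T_G(A,B)$, and therefore $|S'' - T| = \mu^T_G(A,B)$, i.e.\ $S''$ is $T$-brittle in $G$.

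There is no genuine obstacle here: the argument is bookkeeping that leverages the two previous claims plus the specific construction of $G_L$ and $T_L$ (in particular that $t_R$ was chosen to be a terminal). The only step that needs a little care is the identification $S'' \cap T = S'' \cap T_L$, which is what makes the ``brittleness'' count transfer faithfully between the two graphs, and the production of $S'$ as a witness that attains $\mu^T_G(A,B)$ on the $G_L$-side.
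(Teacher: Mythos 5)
Your proof is correct and follows essentially the same route as the paper's: establish $|S''| = \mu_G(A,B)$ from Claims \ref{claim:weak A'B' in G_L is weak AB in G} and \ref{claim:S' min A'B' weak sep}, then reduce the brittleness claim to the equality $\mu^{T_L}_{G_L}(A',B') = \mu^T_G(A,B)$ and prove each inequality by transferring $|\cdot - T| = |\cdot - T_L|$ between the two graphs, using $S'$ as the witness for $\le$ and a $T_L$-brittle witness for $\ge$. Your observation that $T_L = T \cap V(G_L)$ is a slightly cleaner way to organize the bookkeeping than the paper's separate verifications for $S'$ and $S''$, but it is the same underlying fact (minor nitpick: $V(G_L)$ is actually $L \cup (S\cap T) \cup \{t_R\}$, not $(L\cup S)\cup\{t_R\}$, since $S - T$ is absorbed into $t_R$ — but your conclusion holds either way).
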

  \begin{proof}
   Let $S''$ be a min $T_L$-brittle min $(A',B')$-weak separator in $G_L$. We first show that $S''$ is a min $(A,B)$-weak separator in $G$.   By \Cref{claim:weak A'B' in G_L is weak AB in G}, $S''$ is an $(A,B)$-weak separator in $G$. So, $|S''| \geq \mu_G(A,B)$. By \Cref{claim:S' min A'B' weak sep}, $|S''| = \mu_{G_L}(A',B') \leq \mu_G(A,B)$. Therefore, $|S''| = \mu_G(A,B)$. Next, we show that  $|S'' - T| = \mu_G^T(A,B)$. If true, then $S''$ is a $T$-brittle min $(A,B)$-weak separator in $G$.   Since $S'$ is a $T$-brittle weak $(A,B)$-weak separator in $G$, $|S'-T| = \mu_G(A,B)$. Since $S''$ is a $T_L$-brittle weak $(A',B')$-weak separator in $G_L$, $|S'' - T_L| = \mu_{G_L}(A',B')$. 
    
    To do so, we show that $S'-T_L = S' -T$ and $S'' - T_L = S'' - T$. Indeed, since $S' \subseteq L \cup S$ and $S' \cap S \subseteq T$, we have that $S'$ does not contain any terminal in $T \cap R$, and thus  $S' - T_L = S' - T$. Since $S'' \subseteq V(G_L) =  L \cup (S - T) \cup \{t_R\}$ where $t_R$ is the unique terminal in $T \cap R$ and  additional terminals from $T_L$ are all in $R - \{t_R\}$,  $S'' - T_L = S'' - T$.  
    
    It remains to prove that $\mu_{G_L}^T(A',B') = \mu_G^T(A,B)$. If true, then we have 
    \begin{align*}
        |S'' - T| = |S'' - T_L| = \mu_{G_L}^T(A',B') = \mu_G^T(A,B).  
    \end{align*} Since $S'$ is a min $(A',B')$-weak separator in $G_L$ (\Cref{claim:S' min A'B' weak sep}), we have 
  \begin{align*}
  \mu_{G_L}^T(A',B') \leq |S' - T_L|  = |S' - T| =  \mu_{G}^T(A,B).   
  \end{align*}
    Since $S''$ is a min $(A,B)$-separator in $G$, we have 
    \begin{align*}
          \mu_{G_L}^T(A',B') = |S'' - T_L| = |S'' - T| \geq \mu_G^T(A,B). 
    \end{align*}
  \end{proof}
   
  \begin{claim} \label{claim:S_L split brittle}
  $\calS_L$ splits some $T_L$-brittle min $(A',B')$-weak separator in $G_L$
  \end{claim}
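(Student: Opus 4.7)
The plan is to instantiate the $(T_L,c)$-reducing-or-covering property of $(\calS_L,C_L)$ at the terminal pair $(A',B')\subseteq T_L$ and eliminate two of the three alternatives in \Cref{def:roc sspair}, so that only the ``splitting'' alternative remains --- and that alternative is exactly what the claim asserts. By \Cref{claim:S' min A'B' weak sep} we have $\mu_{G_L}(A',B')\le c$, so \Cref{def:roc sspair} yields one of: (i) $C_L$ covers some min $(A',B')$-weak separator in $G_L$; (ii) $\calS_L$ reduces the non-terminal part of some $T_L$-brittle min $(A',B')$-weak separator; or (iii) $\calS_L$ splits some $T_L$-brittle min $(A',B')$-weak separator in $G_L$.

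To rule out (i), I would use \Cref{claim:weak A'B' in G_L is weak AB in G} together with \Cref{claim:S' min A'B' weak sep} to promote any min $(A',B')$-weak separator $S''$ in $G_L$ to a min $(A,B)$-weak separator in $G$ with the same underlying vertex set inside $V(G)$. Since $C_L\subseteq C_L\cup C_R$, this would mean $C_L\cup C_R$ covers a min $(A,B)$-weak separator in $G$, contradicting the standing assumption extracted at the top of the proof of \Cref{lem:closure recursion lemma}. To rule out (ii), I would apply \Cref{claim:TL brittle T brittle} to promote the reduced $T_L$-brittle separator $S''$ to a $T$-brittle min $(A,B)$-weak separator in $G$. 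Two identifications make this work: first, $V(G_L)\setminus T_L=L\setminus T$ (immediate from \Cref{def:left right graphs}), so any witness vertex in $S''\setminus T_L$ is also a non-terminal vertex in $S''\setminus T$; second, any separator in $\calS_L$ (viewed as a subset of $V(G)$) remains a separator of $G$ by the monotonicity of closure and contraction (\Cref{lem:monotonic vertex closure} and the analogous statement for contraction, since both operations only increase connectivity). Hence $\calS_L\cup\{S\}\cup\calS_R$ would reduce the non-terminal part of a $T$-brittle min $(A,B)$-weak separator in $G$, again contradicting the standing assumption. Only (iii) survives.

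The main obstacle is keeping the translation between $G_L$ and $G$ crisp: non-terminal vertices must remain non-terminal after the change of ambient graph, separators of $G_L$ must transfer to separators of $G$, and the minimality/brittleness attributes of weak separators must be preserved under the identification. Fortunately, all of these translations have been set up in advance --- \Cref{claim:weak A'B' in G_L is weak AB in G}, \Cref{claim:S' min A'B' weak sep}, and \Cref{claim:TL brittle T brittle} package exactly the minimality/brittleness transfers, while the identity $V(G_L)\setminus T_L=L\setminus T$ and the monotonicity of the closure/contraction that defines $G_L$ handle the rest. With these pieces in hand, the claim reduces to the three-way case analysis above.
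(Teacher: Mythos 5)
Your proposal is correct and follows the same route as the paper: instantiate the $(T_L,c)$-\roc property of $(\calS_L,C_L)$ at $(A',B')$, use \Cref{claim:S' min A'B' weak sep} to show $\mu_{G_L}(A',B')\le c$, then eliminate the ``covering'' case via \Cref{claim:weak A'B' in G_L is weak AB in G} and the ``reducing'' case via \Cref{claim:TL brittle T brittle}, each against the standing assumption, leaving only the ``splitting'' case. Your extra remarks (the identity $V(G_L)\setminus T_L=L\setminus T$ and the appeal to \Cref{pro:monotonicity GLGR} for why sets in $\calS_L$ remain separators of $G$) just make explicit details the paper leaves implicit.
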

  \begin{proof}
     The following facts imply that  $\calS_L$ must split some $T_L$-brittle min $(A',B')$-weak separator in $G_L$ as desired.
   \begin{itemize}
       \item $(\calS_L,C_L)$ is a $(T_L,c)$-reducing-or-covering pair in $G_L$,
       \item $\mu_{G_L}(A',B') \leq c$,  
       \item $C_L$ does not cover any min $(A',B')$-weak separator in $G_L$,
       \item $\calS_L$ does not reduce non-terminal part of any min $T_L$-brittle $(A',B')$-weak separator in $G_L$.
   \end{itemize}
    The first item follows from the definition. We next show the second item. By \Cref{claim:S' min A'B' weak sep}, we have $\mu_{G_L}(A',B') = |S'| \leq c$. We prove the third item.  Suppose $C_L$ covers a min $(A',B')$-weak separator $Y$ in $G_L$. By \Cref{claim:weak A'B' in G_L is weak AB in G}, $Y$ is also a min $(A,B)$-weak separator in $G$, contradicting that $C_L$ does not cover any min $(A,B)$-weak separator in $G$.  Finally, the last item follows  from \Cref{claim:TL brittle T brittle}.
  \end{proof}
   \Cref{claim:S_L split brittle} and \Cref{claim:TL brittle T brittle} imply that $\calS_L$ splits some $T$-brittle min $(A,B)$-weak separator in $G$ as desired. This completes the proof of \Cref{lem:closure recursion lemma}.
 \end{proof}

\subsection{Computing a $(T,c)$-\Roc \Ssetpair} \label{sec:alg reducing set} %

\Cref{lem:closure recursion lemma} naturally gives us a recursive
algorithm for computing $(T,c)$-reducing-or-covering pair as described in \Cref{alg:sparsify1}.  

\paragraph{Algorithm.} Let $G$ be the input graph and $T$ be its terminal set. If
$T$ is small enough, we can output $(\emptyset, C)$ where $C$ is  the union of a small min $(A,B)$-weak separator for each $A,B\subseteq T$.   Let $(L,S,R)$ be the min Steiner cut. If it does not exist or $|S| \geq c+1$, we are done.  We say that $(L,S,R)$ is \textit{balanced} if $\min\{ |L \cap T|, | R\cap
T | \} \geq 4c^2$, \textit{isolating} if $\min\{ |L\cap T| , |R\cap T|\} =
1$. Otherwise, $(L,S,R)$ is \textit{unbalanced non-isolating}. Given a
terminal $t$, $(L,S,R)$ is $t$-\textit{isolating} if $t \in L$ and $T
- \{t\} \subseteq S \cup R$. For each terminal $t \in T$, define $\isolate_G(t)$ to be the size
of minimum $t$-isolating separator in $G$
\begin{definition} A vertex cut $(L,S,R)$ is \textit{good}
  $t$-isolating if it is a $t$-isolating cut min Steiner
  cut and $\isolate_{G_R}(t) > \isolate_{G}(t)$  or $S \subseteq T$.
\end{definition}

For the general case, we apply \Cref{lem:closure recursion lemma}. For efficiency point of view, we will apply \Cref{lem:edge case of isolating cut} if $S \subseteq T$ and $(L,S,R)$ is $t$-isolating. The full algorithm is described in \Cref{alg:sparsify1}. %

\begin{algorithm}[H]
  \DontPrintSemicolon
  \KwIn{ A graph $G = (V,E)$, a terminal set $T \subseteq V$ and
    parameters $c >0$} 
  \KwOut{ A $(T,c)$-\roc \ssetpair in $G$.}
  \BlankLine
  \If{ $|T| \leq  4c^2$}
    { 
     Let $C$ be the union of a min-$(A,B)$-weak
        separator for all $A,B \subseteq T$ such that $\mu_G(A,B) \leq
        c$.  \label{line:base 4c2}\;
    \Return{ $(\{\},C)$}
    }
    \lIf{$G[T]$ \normalfont{is} a complete graph}{\Return{$(\{\},T).$}}
    Let $(L,S,R)$ be a min Steiner cut. \label{line:min steiner cut slow}\;
    \lIf{ $|S| \geq c+1$}{\Return{$(\{\},T).$}}
    \If{$(L,S,R)$ \normalfont{is} a $t$-isolating cut}{
        Replace $(L,S,R)$ with a good $t$-isolating
        cut. \label{line:maximal isolating cut}\;
      }
     
        Let $G_L,  G_R,  T_{L},  T_{R}$ be the left and right graphs, the terminal sets as defined in \Cref{def:left right graphs} \label{line: construct GL GR}\;
      \If{$S\subseteq T$ \normalfont{ and } $(L,S,R)$ is
        $t$-isolating} {
         Let $G_L,  G_R,  T_{L},  T_{R}$ be the strictly left and right graphs, the strict terminal sets as defined in \Cref{def:left right graphs} \label{line: construct GL GR check}\;

       }
      $(\calS_L, C_L) \gets \textsc{ROC}(G_L, T_{L} , c)$\;
      $(\calS_R, C_R) \gets \textsc{ROC}(G_R, T_R, c)$\;
      \Return{ $(\calS_L \cup \{S\} \cup \calS_R, C_L\cup C_R)$ }
\caption{\textsc{ROC}$(G,T, c)$}
\label{alg:sparsify1}
\end{algorithm}

\paragraph{Analysis.}  We  describe the recursion tree of \Cref{alg:sparsify1} and its structure. Each internal
node in the recursion tree can be represented as $(G,T)$ and its left
and its right children are $(G_L,T_{L})$ and $(G_R,T_{R})$ respectively where $G_L$ and $G_R$ are left and right graphs of $G$ with respect to the min Steiner cut $(L,S,R)$ found in the current subproblem $(G,T)$. If $S \subseteq T$ and $(L,S,R)$ is $t$-isolating, then we further close the unique terminal $t_R$ in $G_L$ and another unique terminal $t_L$ in $G_R$ to obtain strictly left and strictly right graphs respectively. 

We state the key properties of this algorithm into two lemmas. 
\begin{lemma} \label{lem:recursion tree: total number of internal nodes}
 Given a graph $G$ with terminal set $T$ and a parameter $c > 0$,  then over all the
 executions of \Cref{alg:sparsify1}, the number of balanced min Steiner cuts, isolating, and
 unbalanced non-isolating are at most $O(|T|/c),
   O(|T|c),$ and $O(|T|c)$, respectively.   
 \end{lemma}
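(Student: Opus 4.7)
The plan is to analyze the recursion tree $\mathcal{T}$ of $\textsc{ROC}$ via the terminal-mass potential $\tau(v)=|T_v|$. From \Cref{def:left right graphs} one checks directly that $|T_L|=|T\cap L|+|T\cap S|+1$ and $|T_R|=|T\cap S|+|T\cap R|+1$, so at every internal node $v$ with Steiner cut $(L,S,R)$ the mass-growth inequality
\[
\tau(v_L)+\tau(v_R)=\tau(v)+2|T\cap S|+2\le\tau(v)+2c+2
\]
holds. Telescoping over any antichain of $\mathcal{T}$ yields that the total mass at a frontier is at most $|T|+(2c+2)N'$, where $N'$ counts internal nodes strictly above it. All three bounds will be obtained by specializing this accounting to the relevant cut type.

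For the balanced count $N_B$, I would contract every edge of $\mathcal{T}$ that is not a child edge of a balanced cut, obtaining a binary tree $\mathcal{T}_B$ with $N_B$ internal nodes and $N_B+1$ leaves. Each leaf of $\mathcal{T}_B$ is either a base-case leaf of $\mathcal{T}$ or the root of a maximal subtree consisting only of isolating and unbalanced non-isolating cuts; crucially, its $\tau$-value is at least $4c^2$, because the balance condition at its $B$-parent forces $\min(\tau(v_L),\tau(v_R))\ge 4c^2$. Combining $4c^2(N_B+1)\le|T|+(2c+2)N_B$ yields $N_B=O(|T|/c^2)\subseteq O(|T|/c)$.

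For the isolating count $N_I$, I would fix a terminal $t\in T$ and bound the number of good $t$-isolating cuts encountered in subproblems whose terminal set contains $t$. The definition of a good $t$-isolating cut guarantees either $\isolate_{G_R}(t)>\isolate_G(t)$, or $S\subseteq T$, in which case \Cref{lem:edge case of isolating cut} is invoked and $t$ is closed on the $R$-side, removing it from the right terminal set. Since $\isolate(\cdot)\in\{1,\dots,c\}$ and strictly increases along every $R$-step in the $t$-containing subtree (with the $S\subseteq T$ case pruning one branch entirely), an amortized charging over that subtree shows $t$ is involved in at most $O(c)$ good isolating cuts; summing over $t\in T$ gives $N_I=O(|T|c)$.

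For the unbalanced non-isolating count $N_U$, the minority side always has between $2$ and $4c^2-1$ terminals, so the minority child satisfies $\tau\le 4c^2+c+1$ and by the mass inequality its entire subtree contributes at most $O(c^2)$ further cuts before reaching a base case. Along the majority chain through $\mathcal{T}$, $\tau$ strictly decreases by at least $|T\cap L_{\min}|-1\ge 1$ per unbalanced non-isolating step, capping any such chain at length $|T|$. The target bound $O(|T|c)$ follows by charging each $U$-cut to a terminal dropped from the majority side and arguing, using that every minority side contains at least two terminals, that each terminal is charged only $O(c)$ times in total. The main obstacle I anticipate is precisely this last step: the proxies $t_L,t_R$ from \Cref{def:left right graphs} cause a terminal to persist in multiple branches, so carefully avoiding double-counting---rather than settling for the naive $O(|T|c^2)$ bound obtained by combining the two preliminary estimates---is the delicate part of the analysis.
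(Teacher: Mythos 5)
Your mass-accounting for the balanced count contains a gap that is not fixable by the same argument. The telescoping inequality bounds the frontier mass by $|T|+(2c+2)N'$ where $N'$ counts \emph{all} internal nodes strictly above the frontier, but when you write $4c^2(N_B+1)\le|T|+(2c+2)N_B$ you have silently replaced $N'$ by $N_B$. These are not equal: a balanced cut can sit at the bottom of a long chain of isolating / unbalanced cuts (e.g.\ root balanced $\to$ chain of $k$ unbalanced cuts on one child $\to$ balanced cut $w$), in which case all $k$ unbalanced nodes lie strictly above the children of $w$ and contribute to $N'$ but not to $N_B$. The correct inequality is $4c^2(N_B+1)\le |T|+(2c+2)(N_B+N_I'+N_U')$ for some $N_I',N_U'\le N_I,N_U$, and plugging in $N_I,N_U=O(|T|c)$ only yields $N_B=O(|T|)$, not $O(|T|/c)$. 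A second gap sits in the isolating count: the isolate values $\isolate(t)$ increase monotonically along each \emph{branch}, but terminals in $S\cap T$ at a balanced cut are duplicated into both children, and those duplicates independently restart their $\isolate$-charge. Summing ``over $t\in T$'' ignores this branching; a single original terminal can appear in many leaf subproblems, so bounding the good-isolating count by $c|T|$ requires accounting for the duplication. You already flag the unbalanced case as incomplete, and it suffers from the same duplication obstacle.

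The paper avoids all three problems with a single \emph{monotone} potential $\Pi(\calG)=\sum_{(G,T)\in\calG}\big(\pi_G(T)+\psi_G(T)\big)$ where $\pi_G(T)=\max\{0,|T|-4c^2\}$ and $\psi_G(T)=\sum_{t\in T}\max\{0,c+1-\isolate_G(t)\}$, summed over current (non-base-case) leaves. The two crucial observations are: (a) at isolating and unbalanced non-isolating cuts the minority child has few terminals so it terminates (or effectively becomes a zero-potential base case), meaning terminal duplication only costs anything at balanced cuts; and (b) at balanced cuts the $\pi$-decrease is $\Theta(c^2)$, which dominates the $O(c^2)$ increase of $\psi$ caused by duplicating at most $|S\cap T|+2\le c+2$ terminals. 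The potential then drops by $\Omega(c^2)$ per balanced cut and by $\ge 1$ per isolating or unbalanced cut, so starting from $\Pi\le O(|T|c)$ one reads off all three bounds simultaneously. If you want to salvage your mass-based argument, the fix is essentially to redo it with $\pi=\max\{0,\tau-4c^2\}$ in place of $\tau$ (so that $\pi$ is absorbed rather than merely growing slowly), and to add the isolate-based term $\psi$ so the interaction between duplication at balanced cuts and the isolating count is handled globally rather than per terminal.
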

 
\begin{lemma} \label{lem:abstract reducing set size}
 \Cref{alg:sparsify1} returns a $(T,c)$-\roc \ssetpair $(\calS,C)$ such that $|\bigcup_{S \in \calS}S| = O(|T|c^2)$ and $|C| = O(|T| 2^{O(c^2)})$.  
\end{lemma}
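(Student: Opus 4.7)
My plan is to prove both parts by induction on the recursion tree of Algorithm~\ref{alg:sparsify1}, using the divide-and-conquer lemmas (\Cref{lem:closure recursion lemma,lem:edge case of isolating cut}) to propagate correctness and \Cref{lem:recursion tree: total number of internal nodes} to bound the tree size. For correctness at a base case: when $|T|\le 4c^2$, the set $C$ is by construction the union of a min $(A,B)$-weak separator for every pair with $\mu_G(A,B)\le c$, which directly satisfies Definition~\ref{def:roc sspair}. When $G[T]$ is complete or every min Steiner cut has size $>c$, the output is $(\{\},T)$; I argue that for every $A,B\subseteq T$ with $\mu_G(A,B)\le c$, some min weak separator lies inside $T$. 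Given any min weak separator $S''$, if neither $A$ nor $B$ is contained in $S''$, one finds $a\in A\setminus S''$ and $b\in B\setminus S''$, making $S''$ an $(a,b)$-separator of size $\le c$ between two terminals, contradicting either completeness of $G[T]$ (which forces the edge $ab$) or the absence of small Steiner cuts. Hence $A\subseteq S''$ or $B\subseteq S''$; the smaller of $A,B$ is thus itself a min weak separator contained in $T$. In the recursive case, the algorithm picks a min Steiner cut $(L,S,R)$ with $|S|\le c$ (possibly replaced by a good $t$-isolating cut, which is still a min Steiner cut), constructs $(G_L,T_L),(G_R,T_R)$ according to \Cref{def:left right graphs} (or their strict variants in the $S\subseteq T$, $t$-isolating branch), and recurses; combining the inductive hypothesis with \Cref{lem:closure recursion lemma} or \Cref{lem:edge case of isolating cut} yields the $(T,c)$-\roc property for $G$.

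For the separator-size bound, $\calS$ contains exactly one Steiner cut of size $\le c$ per internal node of the recursion tree. By \Cref{lem:recursion tree: total number of internal nodes} the number of internal nodes is $O(|T|/c)+O(|T|c)+O(|T|c)=O(|T|c)$, so $|\bigcup_{S\in\calS}S|\le c\cdot O(|T|c)=O(|T|c^2)$.

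For the covering-set size bound, I split contributions by leaf type. A leaf hitting the $|T|\le 4c^2$ base case contributes at most $c\cdot 2^{2|T|}=2^{O(c^2)}$ vertices, since $C$ is a union of at most $2^{2|T|}$ separators each of size $\le c$. A leaf hitting the other base cases contributes exactly $|T|$ vertices (namely $T$). The binary tree has $O(|T|c)$ leaves, so the first type contributes at most $O(|T|c)\cdot 2^{O(c^2)}=|T|\cdot 2^{O(c^2)}$ in total. For the second type I bound the sum of $|T|$ over leaves: at each internal node with Steiner cut $(L,S,R)$, the children's terminal sets satisfy $|T_L|+|T_R|=|T|+|T\cap S|+2\le |T|+c+2$, accounting for the two contracted terminals $t_L,t_R$ and the overlap in $S$; telescoping over the $O(|T|c)$ internal nodes yields total leaf-terminal count $O(|T|c^2)$. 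Combining, $|C|=O(|T|\cdot 2^{O(c^2)})$ as required. The main bookkeeping subtlety is the terminal-growth telescoping, which I would isolate as a short auxiliary claim about binary trees to keep the main induction uncluttered.
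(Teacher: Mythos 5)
Your proof is correct and follows the same strategy as the paper's: correctness by induction on the recursion tree via \Cref{lem:closure recursion lemma} and \Cref{lem:edge case of isolating cut}, and size bounds via the $O(|T|c)$ count of internal nodes from \Cref{lem:recursion tree: total number of internal nodes}. The one place where you are more careful than the paper is the bound on $|C|$. The paper's one-line estimate $O(|T|c\cdot c\cdot 3^{4c^2})$ only directly accounts for leaves hitting the $|T|\le 4c^2$ base case; it does not explicitly address leaves that terminate because $G[T]$ is complete or because every Steiner cut has size $>c$, and such a leaf can have $|T'|>4c^2$ (its contribution to $C$ is $|T'|$, not $c\cdot 2^{O(c^2)}$). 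Your telescoping argument, namely that $|T_L|+|T_R|\le |T|+|T\cap S|+2\le |T|+c+2$ at each internal node (and $\le |T|+c$ in the strictly-left/right branch), so $\sum_{\text{leaves}}|T'|=O(|T|c^2)$, fills this gap cleanly and gives $|C|=O(|T|\cdot 2^{O(c^2)})$. Your inline re-derivation of \Cref{pro:min Steiner cut} for the two non-trivial base cases (some min $(A,B)$-weak separator is $A$ or $B$, hence inside $T$) is also correct and matches the paper's reasoning.
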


We explain how to obtain $(T,c)$-\roc \psetpair from \Cref{alg:sparsify1} in \Cref{sec:keeping track of reducing sets}. We will describe fast implementation when the number of terminal is small in \Cref{sec:fast impl few termminals} and when the input graph is a vertex expander in \Cref{sec:fast impl expanders}. The final algorithm for computing $(T,c)$-\roc \psetpair is described in \Cref{sec:final reducing set}.  The rest of the section is devoted to proving \Cref{lem:recursion tree: total number of internal nodes} and \Cref{lem:abstract reducing set size}. We first start with useful properties of left and right graphs.

\begin{proposition} [Monotonicity of Left and Right
  Graphs] \label{pro:monotonicity GLGR}
  Every separator in $G_L,G_R, \check G_L, \check G_R$ (\Cref{def:left right graphs}) is a separator in
  $G$. %
\end{proposition}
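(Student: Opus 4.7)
The plan is to establish the result for $G_L$ (the argument for $G_R$ is symmetric) and then bootstrap to $\check G_L$ and $\check G_R$ via the monotonicity of vertex closure (\Cref{lem:monotonic vertex closure}) together with the transitivity of cut-recoverability that was remarked after \Cref{pro:recoverable imply monotone}. So the main content is to show that every separator of $G_L$ is a separator of $G$.

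Fix a vertex cut $(L',S',R')$ of $G_L$. Recall that, by the footnote in \Cref{def:left right graphs}, $G_L$ is $G$ after (i) adding a clique on $N_G(\hat R)$, (ii) making $t_R$ adjacent to every vertex of $N_G(\hat R)$, and (iii) deleting the vertices of $\hat R - \{t_R\}$. I will split into cases according to whether $t_R \in S'$.

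If $t_R \notin S'$, say $t_R \in L'$, I will show that $(L' \cup (\hat R - \{t_R\}),\, S',\, R')$ is a vertex cut of $G$. The partition condition is immediate since $V(G_L)$ together with $\hat R - \{t_R\}$ recovers $V(G)$. The only edges that could spoil the cut are $G$-edges incident to $\hat R - \{t_R\}$, but any such edge whose other endpoint is in $R'$ would force that endpoint to lie in $N_G(\hat R)$, which in $G_L$ is joined to $t_R \in L'$ by a direct edge—contradicting the assumption that $(L',S',R')$ is a cut of $G_L$.

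If $t_R \in S'$, I argue by contradiction: suppose some $v \in L'$ and $u \in R'$ are connected in $G - S'$ by a path $P$. Since $v, u \notin \hat R$ (as $t_R \in S'$ and $L', R' \subseteq V(G_L) - S'$), each maximal subpath of $P$ inside $\hat R - \{t_R\}$ has both endpoints adjacent (in $G$) to vertices of $V(G) - \hat R$ that therefore lie in $N_G(\hat R)$. By property (i), any two such vertices are joined by an edge in $G_L$, so I can repeatedly shortcut each $\hat R$-excursion of $P$ by a single clique edge in $G_L$. This produces a $v$–$u$ walk in $G_L - S'$, contradicting $(L',S',R')$ being a cut of $G_L$. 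Hence $S'$ separates $G$. The main obstacle is this second case—one must carefully check that every intermediate vertex used by the shortcut path still avoids $S'$, which follows because those vertices lie on the original $G - S'$ path. With $G_L$ handled, the symmetric argument gives $G_R$, and applying \Cref{lem:monotonic vertex closure} to the single closure of $t_R$ (resp.\ $t_L$) in $G_L$ (resp.\ $G_R$), together with transitivity of cut-recoverability, yields the claim for $\check G_L$ and $\check G_R$.
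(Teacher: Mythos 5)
Your proof is correct, but it takes a genuinely different route from the paper's. The paper's proof is more economical: it observes that $\cl(G,\tilde R)\subseteq G_L$, where $\tilde R=\hat R-\{t_R\}$ (the vertex sets agree, and every edge of the closure---whether an original $G$-edge inside $V(G_L)$ or a clique edge on $N_G(Y)$ for a component $Y$ of $G[\tilde R]$, noting $N_G(Y)\subseteq N_G(\hat R)\cup\{t_R\}$---is already present in $G_L$). Hence a separator of $G_L$ is automatically a separator of the sparser graph $\cl(G,\tilde R)$, which by \Cref{lem:monotonic vertex closure} is a separator of $G$. That is: the paper realizes $G_L$ as a supergraph of a closure of $G$ and delegates all the combinatorics to the already-proved monotonicity of vertex closure. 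Your approach redoes that combinatorics from scratch by casing on whether $t_R\in S'$: when $t_R\notin S'$ you explicitly construct the cut in $G$ by absorbing $\tilde R$ into the side of $t_R$, and when $t_R\in S'$ you shortcut $\hat R$-excursions of a hypothetical $L'$-$R'$ path via clique edges on $N_G(\hat R)$. Both are valid, and your bootstrap to $\check G_L,\check G_R$ via \Cref{lem:monotonic vertex closure} and transitivity of cut-recoverability matches the paper's logic. What you trade for the more elementary, self-contained argument is length; the paper's one-line reduction to \Cref{lem:monotonic vertex closure} is slicker. One small thing to make your Case~1 airtight: when dismissing edges not incident to $\hat R-\{t_R\}$, you should note explicitly that every $G$-edge with both endpoints in $V(G_L)$ (including edges incident to the original vertex $t_R$, whose $G$-neighbors outside $\hat R$ all lie in $N_G(\hat R)$) is also an edge of $G_L$, so the cut condition on $G_L$ already rules such edges out.
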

\begin{proof} 
 We prove the results for the left graph $G_L$. The proof for $G_R$
 is similar. Let $t_R$ be an arbitrary terminal in $R \cap T$. Since
 $(L,S,R)$ is a min Steiner cut in $G$, there must be a path from $t_R$
 to every vertex in $S$ in $G[S \cup R]$ (otherwise, a subset of $S$ is a Steiner cut,
 contradicting the minimality of $S$).  Let $\tilde R = \hat R -
 \{t_R\}$. Since there is a path from $t_R$ to every vertex in $S$ without
 using $L$, we have $\cl(G,\tilde R) \subseteq G_L$.  Next, we prove
 that every separator in  $G_L$ is a separator in $G$. Let $S'$ be a separator in
 $G_L$. Since $\cl(G, \tilde R) \subseteq G_L$, $S'$ is also a
 separator in $\cl(G, \tilde R)$. By monotonicity  of vertex closure operation (\Cref{lem:monotonic vertex closure}), $S'$ is also
a separator in $G$. %
\end{proof}

We are now ready to prove \Cref{lem:recursion tree: total number of internal nodes}.

\begin{proof} [Proof of \Cref{lem:recursion tree: total number of internal nodes}] We define the potential function that we use to keep track of progress
of the recursion tree. For each terminal $t \in T$, let $\psi_G(t) =  \max\{0, c +1 -
\isolate_G(t)\}$ if $G$ is not in the base case in the algorithm (i.e., it does not terminate at line 1, 2 and 4 of the algorithm), and
$\psi_G(t) = 0$ otherwise. Define $\psi_G(T) = \sum_{t \in T}\psi_G(t)$. Let $\pi_G(T) = \max\{0, |T| - 4c^2\}$.  At any time, we have a collection of
graphs and their terminal set  $\mathcal{G} = \{(G,T)\}$ at leaf nodes
(that are not necessarily the base case at the moment)  in the recursion tree. Initially, we have only one node which is the
input graph and its terminal set. We keep track of the progress via
the following potential function
$$\Pi(\calG) = \sum_{(G,T) \in \calG} \pi_G(T) + \psi_G(T).$$

Initially, $\calG$ contains only the input graph and its terminal
set. Thus, $\Pi(\calG) \leq |T| + |T|c = O(|T|c)$. At any time,
$\Pi(\calG) \geq 0$. At any time,  the recursion tree changes at a
leaf node that is not a base. Suppose the leaf node represents a graph
$G$ with $k$ terminals where $k \geq 4c^2$.   There are three cases.

\paragraph{Case 1: Balanced.}  We prove that the number of balanced
min Steiner cut is at most $O(|T_{\text{input}}|/c)$ where
$T_{\text{input}}$ is the terminal set at the root of the recursion
tree. In this case, $(G,T)$ creates
two subproblems $(G_L, T_L)$ and $(G_R, T_R)$  %
 with $k_1$, and $k_2$ terminals respectively where
$\min\{k_1, k_2\} \geq 4c^2$. By definition, the only duplicate
terminals happens at the min Steiner cut $S$ where $|S| \leq c$, and
thus we have $k_1 + k_2 \leq k+c$.  The change in the potential
$\Pi(G)$ happens at the term $\pi_G(T) + \psi_G(T)$ and its two
new subproblems $\pi_{G_L}(T_L), \pi_{G_R}(T_R), \psi_{G_L}(T_L), \psi_{G_R}(T_R)$. That is the
change due to $\pi$ is $$\pi_G(T) - \pi_{G_R}(T_{R}) - \pi_{G_L}(T_{L}) = (k- 4c^2) - ( (k_1 - 4c^2) + (k_2 -4c^2) ) \geq -c + 4c^2
\geq 3c^2,$$ and the change
due to $\psi$ is $$ \psi_G(T) - \psi_{G_L}(T_L) - \psi_{G_R}(T_R)
 \geq (k - k_1 -k_2 - 2) c \geq -2c^2.$$ Intuitively, $\psi$ increases
 because there are terminals in $S\cap T$ that are replicated. Each
 terminal can have $\psi$ value by at most $c$. Since  $|S \cap T|
 \leq c$, the total increases due to $\psi$ is at most $O(c^2)$. Therefore, the net drop in
 potential is at least $3c^2 - 2c^2 = c^2$.  Since the initial potential is at most
 $O(|T_{\text{input}}|c)$, the number of balanced min Steiner cuts is
 at most 
 $O(|T_{\text{input}}|/c)$.  
 
\paragraph{Case 2: Isolating.} In this case, we obtain a min Steiner
cut $(L,S,R)$ that is good $t$-isolating for some $t$ and $|S| \leq
c$. WLOG, $L
\cap T = \{t\}$. Thus, $G_L$ can have at most $c+2$ terminals. So, $G_L$
becomes a base case and the thus the potential for $G_L$ is
zero. Hence, the change of the potential function due to $\pi$
is 
$$ \pi(G,T) - \pi(G_L,T_L ) - \pi(G_R,T_R)
 = 0.$$  That is, there is no change in the potential due to
$\pi$. Next, we bound the change of potential due to $\psi$. Observe 
that $G_R$ contains the same number of terminals, and $t_L = t$ where $t_L$ is the unique terminal in $T \cap L$ in $G_R$. For each of 
terminal $t' \in T_R - \{t_L\}$, we have $\isolate_{G_R}(t') \geq
\isolate_G(t')$ because of monotonicity of left and right graphs of
$G$ (\Cref{pro:monotonicity GLGR}).  It remains to bound the change of
$\psi$ on $t = t_L$. There are two cases:  
\begin{itemize}
\item The first case is when $S \subseteq T$.  In this case, $t_L$ is
  closed because we recurse on strictly right graph $\check G_R$ of $G$. Therefore, the terminal $t_L$  disappears in the remaining graph.     Since $\isolate_G(t_L) > 0$
  and $t_L$ disappears in $G_R$,  the drop in potential due to $\psi$ is at
  least 1. 
\item Otherwise,  since $(L,S,R)$ is a good $t$-isolating cut in $G$, and $S
- T \neq \emptyset$,  we have $\isolate_{G_R}(t) > \isolate_G(t)$ and the change due to $\psi$ is
$$ \psi_G(T) - \psi_{G_L}(T_L \cup \{t_R\}) - \psi_{G_R}(T_R \cap
\{t_L\}) \geq 1. $$     
\end{itemize}

 Since the initial potential is at most
 $O(|T_{\text{input}}|c)$, the number of good $t$-isolating min Steiner cuts is
 at most   $O(|T_{\text{input}}|c)$.
 
\paragraph{Case 3: Unbalanced Non-isolating.} In this case, we obtain
a min Steiner cut $(L,S,R)$ such that $\min\{|L \cap T|, |R \cap T| \}
\in [2,4c^2-1]$.  We assume WLOG $|L \cap T| \leq |R\cap
T|$. So, the subproblem $(G_L,T_L)$ becomes a base case,
and thus $\pi_{G_L}(T_L) = \psi_{G_L}(T_L)
= 0$. Observe that the number of terminals $G_R$ is strictly smaller
than that of $G$. So, $\pi_G(T) - \pi_{G_R}(T_R) \geq
1$. By \Cref{pro:monotonicity GLGR} (monotonicity of left and right
graphs of $G$), we have
$\isolate_{G_R}(t') \geq \isolate_G(t')$ for all $t' \in T \cap (S
\cup R)$. Furthermore, $t_L$ corresponds to one of the terminal in $L \cap
T$, and thus $\isolate_{G_R}(t_L) \geq \isolate_G(t)$ for any $t \in L
\cap T$.  Therefore, $\psi_G(T) - \psi_{G_R}(T_R)  \geq
0.$ Since the initial potential is at most
 $O(|T_{\text{input}}|c)$ where $T_{\text{input}}$ is the original input terminal set, the number of unbalanced non-isolating min
 Steiner cuts is  at most   $O(|T_{\text{input}}|c)$.
\end{proof}

Next, observe that if the min Steiner cut is larger than $c$, then we are done. 
\begin{proposition} \label{pro:min Steiner cut}
  Let $(L,S,R)$ be a minimum Steiner cut. If $|S| \geq c+1$, then the terminal set $T$ is $(T,c)$-covering. 
\end{proposition}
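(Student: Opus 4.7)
My plan is to fix an arbitrary pair $A,B\subseteq T$ with $\mu_G(A,B)\le c$, take any minimum $(A,B)$-weak separator $S^*$ (so $|S^*|\le c$), and exhibit a minimum $(A,B)$-weak separator that is contained in $T$. The argument will split into two cases depending on whether $S^*$ ``absorbs'' one of the two terminal sets.

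In the first case I will handle $A\subseteq S^*$ (the case $B\subseteq S^*$ is symmetric). Here $A$ itself is trivially an $(A,B)$-weak separator, since removing $A$ removes every endpoint of every possible $(A,B)$-path on the $A$ side. Combined with $A\subseteq S^*$ and the minimality of $S^*$ this forces $|A|=|S^*|$, so $A$ is a minimum $(A,B)$-weak separator, and $A\subseteq T$ is precisely a minimum weak separator covered by $T$.

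In the remaining case both $A'=A\setminus S^*$ and $B'=B\setminus S^*$ are nonempty. I will form the partition $(L^*,S^*,R^*)$ of $V$ where $L^*$ is the union of the connected components of $G-S^*$ meeting $A'$ and $R^*=V\setminus(L^*\cup S^*)$. Since $S^*$ is an $(A,B)$-weak separator, there is no path in $G-S^*$ between $A'$ and $B'$, so $B'\subseteq R^*$ and $(L^*,S^*,R^*)$ is a bona fide vertex cut. Furthermore $L^*\cap T\supseteq A'\ne\emptyset$ and $R^*\cap T\supseteq B'\ne\emptyset$, so $(L^*,S^*,R^*)$ is a Steiner cut of size $|S^*|\le c$. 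This contradicts the hypothesis that the minimum Steiner cut has size at least $c+1$, so this case is vacuous.

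The only thing requiring a bit of care is the case analysis itself, since a priori a minimum weak separator need not lie in $T$ at all; the insight is that whenever it does \emph{not} absorb one of the sides, the underlying cut is actually a Steiner cut, and then the hypothesis $|S|\ge c+1$ on the minimum Steiner cut immediately rules it out. No further technical obstacle is anticipated.
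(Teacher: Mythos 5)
Your proof is correct and follows the same underlying idea as the paper's (terse) proof: if a weak $(A,B)$-separator $S^*$ of size $\le c$ fails to contain either $A$ or $B$, then it would induce a Steiner cut of size $\le c$, contradicting the hypothesis, so the trivial separator $A$ or $B$ (a subset of $T$) must be minimum. You have simply spelled out the case analysis and the construction of the induced Steiner cut more carefully than the paper's one-line argument.
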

\begin{proof}
Since min Steiner cut is at least $c+1$, every $(A,B)$-weak separator of size at most $c$ must be either $A$ or $B$. Since $A \cup B \subseteq T$, we have that  $T$ is a $(T,c)$-covering set. 
\end{proof}

We are now ready to prove \Cref{lem:abstract reducing set size}. 
\begin{proof}[Proof of \Cref{lem:abstract reducing set size}]
We use induction on the recursion tree from leaves to the
root. 
We consider the base case (leaf nodes). If $|T| \leq 4c^2$, then
clearly the union is of all min $(A,B)$-weak separator for all $A,B
\subseteq T$ is $(T,c)$-covering. If min Steiner cut is at least
$c+1$ or $G[T]$, then \Cref{pro:min Steiner cut} implies that $T$ is $(T,c)$-covering
set. For the internal nodes, it follows immediately from
\Cref{lem:closure recursion lemma} if $S \not \subseteq T$, and by \Cref{lem:edge case of isolating cut} otherwise. 

Next, we bound the sizes of $(T,c)$-\roc \ssetpair. The total size of all separators in $\calS$ is bounded by the total size of min Steiner cuts in the internal nodes of the recursion tree. By
\Cref{lem:recursion tree: total number of internal nodes}, the number
of internal nodes is $O(|T|c)$,  and thus, the number of leaves is also at
most $O(|T|c)$. Since each Steiner cut has size at most $c$, the total
contribution of the size due to internal nodes is at most $O(|T|c^2)$. The size of $C$ is bounded by the total size of cuts in the base cases. Since
each leave represents a base case, the total contribution due to leaf nodes are at most $O(|T|c \cdot c \cdot 3^{4c^2}) = O(|T|\cdot 2^{O(c^2)})$.  
\end{proof}

\subsection{A Partition from the Recursion Tree} \label{sec:keeping track of reducing sets}

 One possible way to obtain $(T,c)$-\roc \psetpair from a $(T,c)$-\roc \ssetpair is as follows:
\begin{observation} \label{obs:ez conversion}
  Given a $(T,c)$-reducing-or-covering \psetpair $(\mathcal{S},C)$, define $Z$ to be the union of all separators in $S$, and $X_1, \ldots, X_\ell$ to be connected components of $G - Z$. Then, $(\{Z,X_1,\ldots,X_{\ell}\},C)$ is $(T,c)$-\roc \psetpair.
\end{observation}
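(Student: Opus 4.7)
The plan is to verify in turn the two parts of \Cref{def:roc partition}: the structural requirements on $\Pi := \{Z, X_1, \ldots, X_\ell\}$, and, for each $A, B \subseteq T$ with $\mu_G(A,B) \le c$, the trichotomy of witnesses, obtained by converting the guarantee of the \ssetpair $(\mathcal{S}, C)$ (as written the statement says ``\psetpair'' for the hypothesis, but this must mean \ssetpair, since $\mathcal{S}$ is a set of separators).

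The structural part is immediate from the construction: the $X_i$'s are the connected components of $G - Z$, so $\Pi$ is a partition of $V$ and $N_G(X_i) \subseteq Z$ for every $i$, which makes $Z$ an $(X_i, X_j)$-separator for all $i \neq j$. For the main property, fix $A, B \subseteq T$ with $\mu_G(A,B) \le c$. The first case (where $C$ covers some min $(A,B)$-weak separator) gives condition~1 for free. In the second case, $\mathcal{S}$ reduces the non-terminal part of some brittle min weak separator $S$, so a non-terminal vertex $v \in S - T$ lies in some member of $\mathcal{S}$, hence in $Z$. Because $\{(S \cap X_i) - T\}_i$ together with $(S \cap Z) - T$ form pairwise disjoint subsets of $S - T$ and $|S - T| = \mu^T(A,B)$ by brittleness, for every $i$ we get $|(S \cap X_i) - T| \le \mu^T(A,B) - 1$, so $\Pi$ $T$-hits $S$ (condition~3).

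The main obstacle is the third case: some $S' \in \mathcal{S}$ is an $(x,y)$-separator with $x, y \in S$ for a brittle min $(A,B)$-weak separator $S$, and I must show $|S \cap N_G[X_i]| \le |S|-1$ for every $i$ (condition~2). Although $x, y \notin S'$ by the separator definition, $x$ or $y$ may still lie in $Z$ via some other separator in $\mathcal{S}$, so the naive argument (where $x \in X_i$, $y \in X_j$ with $i \ne j$ and $N_G(X_k) \subseteq Z$ prevents the other from touching $X_k$) only handles the subcase $x, y \notin Z$. The key lemma I will isolate is the following: if a vertex $u \notin S'$ is adjacent in $G$ to some vertex of $X_k$, then $\{u\} \cup X_k$ lies in a single connected component of $G - S'$, using $X_k \cap S' = \emptyset$ (since $S' \subseteq Z$) and the connectivity of $X_k$ in $G - Z \subseteq G - S'$. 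Applying this to both $x$ and $y$ and using that $S'$ separates them in $G$, no single $X_k$ can have both $x$ and $y$ in $N_G[X_k]$; combined with the easy observation that a non-$Z$ element of $X_j$ lies outside $N_G[X_k]$ for every $k \neq j$, this yields condition~2 and finishes the case analysis.
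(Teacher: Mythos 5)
Your proof is correct and reaches the same conclusion as the paper by converting the trichotomy for the \ssetpair $(\mathcal{S},C)$ into the trichotomy for the \psetpair $(\Pi,C)$, but you are more careful in two places. (You also correctly noticed the typo: the hypothesis should say ``\ssetpair'', not ``\psetpair''.) First, the paper's argument for the ``reduce'' case writes ``since $S$ is not split by $\mathcal{S}$, $S\subseteq N[X_i]$ for some $i$'' and then bounds $|(S\cap X_i)-T|$ for that single $i$ (with $S\cap X_j=\emptyset$ for $j\ne i$). That implication does not hold as stated: $S$ can fail to lie inside any single $N[X_i]$ while no individual $S'\in\mathcal{S}$ separates a pair of $S$, since the union $Z$ cuts more than any one of its members. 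The step is harmless in context --- if $\Pi$ splits $S$, condition~2 of \Cref{def:roc partition} holds and one is done anyway --- but the paper does not say so. Your argument sidesteps the issue entirely: $(S\cap Z)-T$ together with the sets $(S\cap X_i)-T$ partition $S-T$, the first is nonempty by the reducing hypothesis, and $|S-T|=\mu^T(A,B)$ by brittleness, so $|(S\cap X_i)-T|\le\mu^T(A,B)-1$ for \emph{every} $i$ directly, with no containment assumption. Second, you spell out why ``$\mathcal{S}$ splits $S$'' transfers to ``$\Pi$ splits $S$''; the paper simply declares that case ``done'' without justification. Your connectivity lemma is the right tool: each $X_k$ stays connected in $G-S'$ because $S'\subseteq Z$, so any vertex of $N[X_k]\setminus S'$ joins that component, and hence $S'$ cannot be an $(x,y)$-separator for two vertices $x,y\in N[X_k]$. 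The one small imprecision --- the lemma is phrased for $u$ adjacent to $X_k$, but the trivial subcase $u\in X_k$ should be mentioned too --- is harmless.
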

\begin{proof}
Fix $A,B$ such that $\mu(A,B) \leq c$. If $C$ covers some min $(A,B)$-weak separator, or $\mathcal{S}$ splits some brittle min $(A,B)$-weak separator, then we are done. We now assume otherwise. By \Cref{def:roc sspair}, there exists  a brittle min $(A,B)$-weak separator denoted as $S$ in $G$ whose non-terminal part is reduced by  $\mathcal{S}$. Since $S$ is not split by $\mathcal{S}$,  $S \subseteq N[X_i]$ for some $i$.  Since $S$ is reduced by $\mathcal{S}$, there is a non-terminal vertex in $S$ that is in $Z$. Therefore, $| (S\cap X_i) - T | \leq |S - T| -1 \leq \mu^T(A,B)-1$. 
\end{proof}
However, the total neighbors $\sum_i|N(X_i)|$ in \Cref{obs:ez conversion} can be too large (because of overlapping neighbors), and thus the condition 3 of \Cref{thm:reducing-or-covering set system} does not hold.  To handle this issue, we show  that all the base cases in the recursion of the algorithm for computing $(T,c)$-\roc \ssetpair (\Cref{sec:alg reducing set}) correspond to components $X_1, \ldots, X_{\ell}$ whose total neighbors is small.  We now make the statement precise.

\begin{definition}  \label{def:output partition}
Let $G^{\textrm{orig}}$ be the original input graph at the root of the recursion tree of \Cref{alg:sparsify1}. Given a recursion tree at the end of \Cref{alg:sparsify1}, we define $X_1, \ldots X_{\ell}$ where $\ell$ is the number of leaf nodes in the recursion tree as follows. For each leaf node $i$, we start with $X_i = V(G^{\textrm{orig}})$. Then, we move along the path from root to leaf $i$ and keep filtering the vertex set in the following sense. At the current node $v$, let  $(L_v,S_v,R_v)$ be a min Steiner cut obtained in the subproblem at $v$.  If we go left, then we set $X_i \gets X_i \cap L_v$. Otherwise, we set $X_i \gets X_i \cap R_v$. We repeat until we reach the leaf node. 
\end{definition} 

\begin{lemma}  \label{lem:output partition}
 Let $(\calS,C)$ be a $(T,c)$-\roc obtained from \Cref{alg:sparsify1} with $G^{\textrm{orig}}, T, c$ as inputs, and let $X_1,\ldots,X_\ell$ be vertex sets of $G^{\textrm{orig}}$ according to \Cref{def:output partition}. Let $Z = \bigcup_{S \in \calS} S$. Then, the \psetpair $((Z, X_1,\ldots, X_{\ell}),C)$ is $(T,c)$-\roc for $G^{\textrm{orig}}$ where
   \begin{itemize}
       \item $|Z| = O(|T|c^2)$,
       \item $|C| = O(|T|\cdot 2^{O(c^2)})$, and
       \item $\sum_i |N_{\Gorig}(X_i)| = O(|T|c^2)$. 
   \end{itemize}     
\end{lemma}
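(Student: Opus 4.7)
The plan is to establish three things in sequence: that $(\{Z, X_1, \ldots, X_\ell\}, C)$ is a valid $(T,c)$-\roc \psetpair for $\Gorig$, that $|Z|$ and $|C|$ are bounded, and finally the boundary bound $\sum_i |N_{\Gorig}(X_i)| = O(|T|c^2)$.

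First, I will verify the partition and separator properties. Distinct $X_i, X_j$ are disjoint since their leaves diverge at an LCA in the recursion tree where one gets intersected with $L$ and the other with $R$. Every vertex in $V(\Gorig) \setminus Z$ traces to exactly one leaf by following the cuts down the recursion: at each internal node a vertex either follows the $L$/$R$ choice or is contracted, and in the contracted case it must be a non-terminal of the current separator (hence in $Z$), because a surviving vertex on the opposite side of the cut cannot be adjacent to any $X_i$ (by the no-crossing-edges property of the vertex cut together with the fact that $\Gorig$-edges between surviving $V(G_v)$-vertices are preserved in $G_v$ by the construction of the left/right graphs). The same edge preservation shows any $\Gorig$-path from $X_i$ to $X_j$ must traverse the separator $S_u \subseteq Z$ at their LCA $u$. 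For the $(T,c)$-\roc property, I apply Lemma 6.6: for each $A,B \subseteq T$ with $\mu(A,B) \le c$, either $C$ covers a min weak separator (condition 1), or some $S \in \calS \subseteq Z$ splits a brittle min weak separator $S^*$ (placing two of its vertices into distinct $X_i$'s, giving the splitting condition $|S^* \cap N[X_i]| \le |S^*|-1$), or some $S \in \calS$ reduces the non-terminal part of $S^*$ (placing a non-terminal of $S^*$ into $Z$, hence outside every $X_i$, yielding $|(S^* \cap X_i) - T| \le |S^* - T|-1 = \mu^T(A,B)-1$).

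Second, the bounds $|Z| \le \sum_{S \in \calS} |S| = O(|T|c^2)$ and $|C| = O(|T| \cdot 2^{O(c^2)})$ follow directly from Lemma 6.6.

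The main obstacle is the boundary bound $\sum_i |N_{\Gorig}(X_i)| = O(|T|c^2)$. The plan is to first establish $N_{\Gorig}(X_i) \subseteq \bigcup_{v\text{ ancestor of } i} S_v$: for any $y \in N_{\Gorig}(X_i)$ with an adjacent $x \in X_i$, tracing both down the recursion, they must diverge at some ancestor $v$, and the cut's no-crossing-edges property combined with the edge-preservation into $G_v$ forces $y \in S_v$ (a contracted non-terminal remains in $S_v - T \subseteq S_v$). I then perform an amortized charging, assigning each pair $(i,y)$ to a specific internal node: for non-terminal $y$, this is the unique $v$ with $y \in S_v$, since contraction eliminates $y$ thereafter; for terminal $y$ (which may appear in multiple separators along the root-to-$i$ path), I charge to a canonical choice such as the lowest ancestor with $y \in S_v$. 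The key quantitative step, and the hardest part, is bounding the total charge per internal node to $O(c)$ on average, leveraging Lemma 6.5's classification (number of internal nodes $O(|T|c)$, separator sizes $\le c$). A naive per-leaf bound gives $O(c\, d_i)$ per leaf with $d_i$ the depth, which sums to as much as $O(|T|^2 c^3)$ in the worst case of long isolating chains; overcoming this requires exploiting the specific isolating/balanced/unbalanced-non-isolating structure of the tree to show that terminal vertices, while they may appear in many separators, contribute boundedly across leaves. This amortization tied tightly to the recursion tree structure from Lemma 6.5 is the crux of the argument.
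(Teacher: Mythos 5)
Your first two parts (the partition/separator properties and the transfer of the reducing-or-covering condition from $\calS$ to $\Pi$) are roughly on track, though the ``splits'' case needs more care than ``placing two vertices into distinct $X_i$'s'': you must show that for \emph{every} $i$, not both split vertices lie in $N_{\Gorig}[X_i]$, which the paper handles by choosing $S \in \calS$ to be the cut at the highest node where the brittle separator is split, then invoking the fact that $S$ separates $L$ from $R$ already in $\Gorig$ (Lemma~\ref{lem:S is LR sep}), so a vertex on the $R$-side cannot be adjacent to any $X_i \subseteq L$.

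The genuine gap is the boundary bound $\sum_i |N_{\Gorig}(X_i)| = O(|T|c^2)$, and you acknowledge it: your charging scheme (each pair $(i,y)$ with $y\in N_{\Gorig}(X_i)$ gets charged to an ancestor node) does not obviously give an $O(c)$ bound per internal node, and you leave the ``crux'' unresolved. Moreover, the route you sketch --- exploiting the balanced/isolating/unbalanced classification of Lemma~\ref{lem:recursion tree: total number of internal nodes} to tame terminal vertices that reappear along chains --- is not where the argument lives. The paper sidesteps the whole charging question with a simpler amortization: define the potential $\Pi(\calG) = \sum_{(G,T,c,X)\in\calG} |N_{\Gorig}(X)|$ over the \emph{current} collection of leaves at any point during the recursion, starting at $0$, and show each expansion of a leaf $X$ into $X_L = X\cap L$ and $X_R = X\cap R$ increases the potential by at most $O(c)$. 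The point is that $N_{\Gorig}(X_L), N_{\Gorig}(X_R) \subseteq N_{\Gorig}(X) \cup S$ (using, again, that $S$ is an $(L,R)$-separator in $\Gorig$), and a vertex can lie in \emph{both} new boundaries only if it is in $S$; hence the double-counting contributed by this split is at most $|S| \le c$, and the increase is $|N_{\Gorig}(X_L)|+|N_{\Gorig}(X_R)| - |N_{\Gorig}(X)| = O(c)$. Multiplying by the $O(|T|c)$ internal nodes gives $O(|T|c^2)$ directly, with no need to reason about how long a terminal survives in the recursion or how the tree is shaped. Your scheme of charging each boundary vertex occurrence to an ancestor accounts for the \emph{total accumulated} overlap all at once, which is exactly the quantity that can blow up over a long chain; the paper's argument accounts for \emph{new} overlap per split, which telescopes.
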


The rest of the section is devoted to proving \Cref{lem:output partition}. We will discuss efficient implementation in next subsections. 

 \paragraph{Analysis.} Our goal is to establish the following claims, which imply \Cref{lem:output partition}. 
 \begin{claim} \label{claim:zx partition}
  $(Z,X_1,\ldots, X_{\ell})$ forms a partition of $V(G^{\textrm{orig}})$ such that $Z$ is an $(X_i,X_j)$ for all $i \neq j$.
 \end{claim}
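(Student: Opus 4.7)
The plan is to verify three properties: (a) every $u \in V(\Gorig)$ belongs to $Z$ or to some $X_i$; (b) the sets $Z, X_1, \dots, X_\ell$ are pairwise disjoint; and (c) $Z$ is an $(X_i, X_j)$-separator in $\Gorig$ for any $i \neq j$.

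For (a) and (b) I would trace each $u \in V(\Gorig)$ down the recursion tree $\mathcal{T}$ of \Cref{alg:sparsify1}. Beginning at the root subproblem $\Gorig$, at every internal node $v$ whose subproblem graph still contains $u$, the min Steiner cut $(L_v, S_v, R_v)$ partitions $V(G_v)$, so $u$ falls in exactly one of the three parts. If $u \in S_v$ then $u \in Z$ and we stop; otherwise $u$ descends into the child indicated by $L_v$ versus $R_v$. Since the subproblem vertex set strictly shrinks as we descend, this walk terminates either in some $S_v \subseteq Z$ or at a unique leaf $i$; in the latter case every filtering step along the root-to-$i$ path is satisfied by $u$, so $u \in X_i$. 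This gives (a) and the $X_i$-side of (b). Disjointness $X_i \cap X_j = \emptyset$ for $i \neq j$ then follows by letting $v$ be the LCA of leaves $i, j$: the paths leave $v$ through opposite children, so $X_i \subseteq L_v$ and $X_j \subseteq R_v$, which are disjoint. Disjointness $Z \cap X_i = \emptyset$ follows because any $u \in S_v$ is excluded from $L_v \cup R_v$, and a case analysis on whether $v$ is an ancestor of leaf $i$ shows $u$ is always filtered out on the root-to-$i$ path.

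For (c), I would fix $i \neq j$, let $v$ be their LCA with min Steiner cut $(L_v, S_v, R_v)$ in $G_v$, and assume without loss of generality $X_i \subseteq L_v$ and $X_j \subseteq R_v$. Any hypothetical $(X_i, X_j)$-path $P$ in $\Gorig - Z$ should map, under the sequence of contractions producing $G_v$ from $\Gorig$, to an $(L_v, R_v)$-walk in $G_v$, since contractions merge vertices and add edges but never destroy paths. Since $(L_v, S_v, R_v)$ is a vertex cut in $G_v$, this walk must cross $S_v$; invoking \Cref{lem:monotonic vertex closure} together with \Cref{pro:monotonicity GLGR} to lift the separation back to $\Gorig$ with the expected side structure, I conclude that $P$ must pass through a vertex of $S_v \subseteq Z$, contradicting $P \subseteq \Gorig - Z$.

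The main obstacle is bookkeeping the representative vertices $t_R$ and $t_L$ introduced when forming $G_L$ and $G_R$. Such a vertex is an original terminal lying on one side of some earlier cut $(L_w, S_w, R_w)$, yet it appears as a vertex of the subproblem on the opposite side as well, where it is the contracted image of the entire far side. I must check that if such a representative vertex happens to land in a separator $S_v$ deep in the opposite subtree, the side-assignment tracked by the filtering in $\mathcal{T}$ together with the path-lifting argument remain consistent, so that the vertex is not simultaneously placed in $Z$ and in some $X_i$. The cleanest route will be to argue that each vertex's ``home'' side is fixed by the first ancestor cut that separates it, and all later appearances as a contracted representative still sit on that home side; this lets the LCA arguments for (b) and (c) go through without double counting.
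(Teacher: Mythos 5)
Your overall plan (LCA arguments for disjointness, and showing $Z$ separates pairs via the LCA's Steiner cut) matches the structure of the paper's proof. A few points of divergence are worth flagging.

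For property (c), the paper's argument is shorter and avoids the machinery you invoke: given a hypothetical $(X_i,X_j)$-path $P$ in $\Gorig - Z$, it takes the LCA $p$, notes $X_i \subseteq L_p$ and $X_j \subseteq R_p$ by construction, and applies \Cref{lem:S is LR sep} (which says directly that $S_p$ is an $(L_p,R_p)$-separator \emph{in $\Gorig$}) together with $S_p \subseteq Z$. Your ``lift the path forward through contractions to $G_v$'' route can be made to work, but it is a detour: \Cref{lem:S is LR sep} already pushes the separation \emph{back} to $\Gorig$, which is exactly the direction one wants, so there is no need to trace the image of $P$ under the $\Gorig \to G_v$ transformations and then argue about what the contracted walk crosses.

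The genuine gap is the one you yourself flag in the last paragraph and then leave open. Your case analysis for $Z \cap X_i = \emptyset$ dispatches the case where the node $v$ owning the separator $S_v$ lies on the root-to-$i$ path (there $u \in S_v$ is incompatible with $u \in L_v$ or $u \in R_v$), but the other case — $v$ off the root-to-$i$ path — is precisely where the representative vertices $t_L,t_R$ live. At the LCA $p$ of $v$ and leaf $i$ (say $i$ descends into $L_p$ and $v$ into $R_p$) a vertex $u \in S_v$ satisfies $u \in V(G_v) \subseteq V(G_{R_p}) = R_p \cup (S_p \cap T) \cup \{t_{L_p}\}$; intersecting with $L_p$ forces $u = t_{L_p}$, which is not automatically excluded from $X_i$. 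You note the ``home side'' idea as a possible fix but do not carry it out. As written, then, your argument for $Z \cap X_i = \emptyset$ is incomplete; the coverage argument (tracing each vertex down the tree) is correct and is actually a bit more explicit than what the paper states, but the disjointness of $Z$ from the $X_i$ is the part that needs the bookkeeping lemma you only sketch.
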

 
 \begin{claim} \label{claim:reducing covering set}
The \psetpair  $((Z,X_1,\ldots,X_{\ell}),C)$ is an $(T,c)$-\roc.%
 \end{claim}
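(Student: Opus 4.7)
The plan is to reduce \Cref{claim:reducing covering set} to the already-established fact that $(\calS, C)$ is a $(T,c)$-reducing-or-covering separators-set pair (\Cref{def:roc sspair}, supplied by \Cref{lem:abstract reducing set size}), by translating each of its three output cases into the corresponding case of \Cref{def:roc partition}. Fix $A, B \subseteq T$ with $\mu_\Gorig(A, B) \leq c$. If $C$ already covers some minimum $(A,B)$-weak separator, condition~1 of \Cref{def:roc partition} is immediate. Otherwise the hypothesis produces a brittle minimum $(A,B)$-weak separator $S$ together with a witness $S' \in \calS$ that either \emph{reduces} or \emph{splits} $S$ in the sense of \Cref{def:roc sspair}.

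In the reducing sub-case, fix $v \in S' \cap (S - T) \subseteq Z \cap (S - T)$. Brittleness forces $\mu_\Gorig^T(A, B) = |S - T| \geq 1$, and by \Cref{claim:zx partition} $v \in Z$ lies outside every $X_j$. If $\Pi$ already splits $S$ we are done; otherwise $S \subseteq N_\Gorig[X_i]$ for a unique $i$. Since $Z$ is an $(X_i, X_j)$-separator for all $i \neq j$, we have $N_\Gorig(X_i) \subseteq Z$ and hence $N_\Gorig[X_i] \subseteq X_i \cup Z$, which forces $S \cap X_j = \emptyset$ for $j \neq i$, giving $|(S \cap X_j) - T| = 0 \leq \mu_\Gorig^T(A, B) - 1$; for the distinguished index $i$, the inclusion $(S \cap X_i) - T \subseteq (S - T) \setminus \{v\}$ yields $|(S \cap X_i) - T| \leq |S - T| - 1 = \mu_\Gorig^T(A, B) - 1$. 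Thus $\Pi$ $T$-hits $S$, i.e.\ condition~3 of \Cref{def:roc partition} holds.

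In the splitting sub-case, $S'$ is an $(x,y)$-separator in $\Gorig$ with $x, y \in S$ (and $x, y \notin S'$), and I aim to show that $\Pi$ splits $S$ (condition~2 of \Cref{def:roc partition}). Writing $(L^{S'}, S', R^{S'})$ for the corresponding vertex cut of $\Gorig$ with $x \in L^{S'}$ and $y \in R^{S'}$, the structural fact I must establish is that for every $i$, $X_i$ lies entirely inside one of $L^{S'}$ or $R^{S'}$. Granting this, WLOG $X_i \subseteq L^{S'}$, so $N_\Gorig(X_i) \subseteq L^{S'} \cup S'$ and therefore $y \in R^{S'} \setminus N_\Gorig[X_i]$ witnesses $|S \cap N_\Gorig[X_i]| \leq |S| - 1$. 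I would prove the one-sidedness fact by induction on the recursion tree of \Cref{alg:sparsify1}: for a leaf $i$ in the subtree of the node $v$ where $S'$ was produced, \Cref{def:output partition} explicitly restricts $X_i$ to a single side of $S'$; for a leaf $i$ outside that subtree, the ancestor cut at which leaf $i$'s recursion diverged from $v$'s already separates $X_i$ from $V(G_v) \supseteq S'$, and \Cref{pro:monotonicity GLGR} lets us reinterpret $S'$ as a separator of $\Gorig$ so that the ``far'' side of $S'$ absorbs $X_i$ consistently.

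The main obstacle is precisely this one-sidedness claim for leaves lying outside $v$'s subtree, complicated by the fact that the recursion contracts parts of the graph into terminal proxies $t_L, t_R$ when building $G_L, G_R, \check G_L, \check G_R$, so a cut $S'$ that is clean inside $G_v$ may formally refer to such proxies rather than to genuine $\Gorig$-vertices. My plan to get around this is to carry an auxiliary invariant down the recursion stating that the $\Gorig$-vertices eventually reaching any given leaf remain confined to a single side of every $\calS$-cut appearing above it, and to check that both vertex-closure and contraction operations preserve this invariant using \Cref{lem:monotonic vertex closure} and \Cref{pro:monotonicity GLGR}.
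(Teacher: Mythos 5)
The ``covers'' and ``reduces'' cases of your proposal are handled correctly, and the reduces-implies-$T$-hits derivation is essentially the one the paper relies on (via \Cref{claim:zx partition} and the fact that the reduced non-terminal vertex lands in $Z$, outside every $X_j$).

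The gap is in the ``splits'' case. You aim to prove a one-sidedness fact: for every leaf $i$, $X_i$ lies entirely inside one of $L^{S'}$ or $R^{S'}$, where $(L^{S'},S',R^{S'})$ is a vertex cut of $\Gorig$ separating $x$ from $y$. This is stronger than what the paper proves and is not true in general. The only structural control you have on $X_i$ is that $N_{\Gorig}(X_i)\subseteq Z$ (from \Cref{claim:zx partition}), so $X_i$ is merely a \emph{union} of connected components of $\Gorig - Z$. Since $X_i$ need not be connected, its components can land on \emph{different} sides of the smaller separator $S'\subseteq Z$, and one-sidedness fails. Your proposed repair for leaves outside $v$'s subtree also breaks: the ancestor cut $S_p$ at the LCA $p$ does \emph{not} ``separate $X_i$ from $V(G_v)\supseteq S'$.'' By \Cref{def:left right graphs}, the right-child subgraph --- and hence $V(G_v)$ --- still contains all of $S_p\cap T$, and these vertices may be adjacent to $X_i\subseteq L_p$; $S_p$ only separates $L_p$ from $R_p$, not from itself. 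So both $x$ and $y$ could in principle sit in $S_p\cap T\cap N_{\Gorig}(X_i)$, and your ``far side absorbs $X_i$'' reasoning does not apply.

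The paper instead argues the weaker per-leaf statement that at most one of $x,y$ lies in $N_{\Gorig}[X_i]$: it fixes $x\in N_{\Gorig}[X_i]$, deduces $X_i\subseteq L_v$ (via \Cref{def:output partition} when $v$ is on leaf $i$'s path, and via the LCA-cut argument of \Cref{claim:zx partition} combined with \Cref{lem:S is LR sep} otherwise), and then concludes $y\notin N_{\Gorig}[X_i]$ because $S_v$ is an $(L_v,R_v)$-separator in $\Gorig$. This is precisely where your stronger ``all of $X_i$ is on one side of $S'$'' claim is not needed and not available; re-centering the argument on the two distinguished split vertices $x,y$ rather than on the entire cut $S'$ is the move you need, and the ``auxiliary invariant'' you gesture at would still have to be reformulated in these per-vertex terms to close the gap.
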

 
 \begin{claim} \label{claim:small total boundaries}
  $\sum_i |N_{\Gorig}(X_i)| = O(|T|c^2)$, $|Z| = O(|T|c^2)$ and $|C| = O(|T|\cdot 2^{O(c^2)})$. 
 \end{claim}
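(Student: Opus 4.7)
The plan is to prove the three subclaims in sequence. For \Cref{claim:zx partition}, I would induct on the recursion tree of \Cref{alg:sparsify1}: at each internal node $v$ with min Steiner cut $(L_v, S_v, R_v)$ of the subproblem graph $G_v$, the surviving original vertices split among the left subtree (those in $L_v$), the right subtree (those in $R_v$), and $Z$ (those in $S_v$). Iterating to the leaves, every vertex of $V(\Gorig)$ lands in $Z$ or in a unique $X_i$. For the separation property, for any two distinct leaves $i, j$ with least common ancestor $v^*$, \Cref{pro:monotonicity GLGR} lifts the cut $(L_{v^*}, S_{v^*}, R_{v^*})$ from $G_{v^*}$ back to $\Gorig$, placing $X_i$ and $X_j$ on opposite sides of the separator $S_{v^*} \subseteq Z$.

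For \Cref{claim:reducing covering set}, I would apply \Cref{obs:ez conversion} to the $(T,c)$-\roc \ssetpair $(\calS, C)$ guaranteed by \Cref{lem:abstract reducing set size}. A separator in $\calS$ splitting a brittle min $(A,B)$-weak separator $S$ forces the partition $\{X_1, \ldots, X_\ell\}$ to split $S$ (condition 2 of \Cref{def:roc partition}). A separator reducing the non-terminal part of $S$ places a non-terminal vertex of $S$ into $Z$, so once $S \subseteq N[X_i]$ for some $i$ (which holds when $S$ is not split), we obtain $|(S \cap X_i) - T| \leq \mu^T(A,B) - 1$ (condition 3).

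For \Cref{claim:small total boundaries}, the bounds $|Z| = O(|T|c^2)$ and $|C| = O(|T|\cdot 2^{O(c^2)})$ are immediate from \Cref{lem:abstract reducing set size}. For $\sum_i |N_{\Gorig}(X_i)| = O(|T|c^2)$, I first prove the containment $N_{\Gorig}(X_i) \subseteq \bigcup_{v \text{ on path to } i} S_v$ by showing that for any $u \in N_{\Gorig}(X_i)$, at the first step $v$ along the path where $u$ leaves the surviving set, $u$ must lie in $S_v$; otherwise $u$ is on the wrong side of $(L_v, S_v, R_v)$ in $G_v$, but since both $u$ and its witness neighbor $w \in X_i$ are non-contracted surviving original vertices at step $v$, the edge $(u, w) \in E(\Gorig)$ persists into $E(G_v)$ and contradicts the cut.

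Converting this containment into the tight bound is the main obstacle. I plan to charge each pair $(u, i)$ with $u \in N_{\Gorig}(X_i)$ to the pair $(u, v)$ where $v$ is the first removal of $u$ on the path to $i$. For non-terminal $u \in S_v - T$, the key trick is the clique-edge structure of the left graph in \Cref{def:left right graphs}: when contracting $\hat R_v$ into $t_R$, we first add a clique on $N_{G_v}(\hat R_v)$, so no subsequent cut in the left subtree can split this clique across two distinct $X_i$'s; since every $L_v$-side $\Gorig$-neighbor of $u$ lies in $N_{G_v}(\hat R_v) \cap L_v$, they all end up in a single $X_i$ of the left subtree, and symmetrically for the right, so each such $(u, v)$ contributes to at most two leaves. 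This gives $\sum_i |N_{\Gorig}(X_i) \cap (Z - T)| \leq 2|Z| = O(|T|c^2)$. For the terminal contribution, identifying each terminal $u \in N_{\Gorig}(X_i)$ with a terminal in $T_i$ and using the recurrence $|T_L| + |T_R| \leq |T_v| + |S_v \cap T| + 2$ together with the $O(|T|c)$ bound on internal nodes from \Cref{lem:recursion tree: total number of internal nodes} yields $\sum_i |T_i| = O(|T|c^2)$, which bounds the terminal contribution in the same order.
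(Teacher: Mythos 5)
Your bounds on $|Z|$ and $|C|$ are exactly the paper's (both are read off directly from \Cref{lem:abstract reducing set size}). For $\sum_i|N_{\Gorig}(X_i)|$, however, your route is genuinely different from the paper's, which uses a short potential-function argument: set $\Pi(\calG)=\sum_{\text{leaves}}|N_{\Gorig}(X)|$ and show that every internal cut $(L,S,R)$ raises it by only $O(c)$, via the two inclusions $N_{\Gorig}(X_L),N_{\Gorig}(X_R)\subseteq N_{\Gorig}(X)\cup S$ together with $N_{\Gorig}(X_L)\cap N_{\Gorig}(X_R)\subseteq S$ (since $S$ separates $X_L$ from $X_R$ in $\Gorig$ by \Cref{lem:S is LR sep}); then $O(|T|c)$ internal nodes give $O(|T|c^2)$. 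You instead push each boundary vertex $u\in N_{\Gorig}(X_i)$ to its first-removal node and then do a case analysis on whether $u$ is a terminal, invoking the added clique on $N_{G_v}(\hat R_v)$ for the non-terminal case. Your reasoning appears sound, and the clique-persistence observation is a nice structural fact; it shows that each non-terminal pair $(u,v)$ with $u\in S_v\setminus T$ is charged at most twice. The paper's route is shorter and avoids the terminal/non-terminal split entirely, at the cost of not exposing the clique structure.

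The one place where your write-up leaves a real gap is the terminal side. You ``identify'' each terminal $u\in N_{\Gorig}(X_i)$ with a terminal of $T_i$, which amounts to claiming $N_{\Gorig}(X_i)\cap T\subseteq T_i$. This is true, but it is not a definitional fact: it needs essentially the same edge-persistence argument you use at the first-removal node, applied inductively all the way from $v$ down to the leaf $i$ (using the witness $w\in X_i$ to rule out $u$ falling on the side opposite $w$ at every subsequent cut, hence $u$ always lands in the $L\cup S$ or $S\cup R$ part and so stays in the terminal set; one also has to check this survives the strictly-left/right closure of $t_R$ or $t_L$, which it does because $u\ne t_R,t_L$). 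Since you carry out exactly this kind of edge-persistence reasoning for the non-terminal case, the omission here is a gap in rigor rather than a flaw in strategy, but it is the load-bearing step of the terminal half and should be spelled out. Finally, a small nit: ``they all end up in a single $X_i$'' is slightly imprecise --- some may end up in $Z$; what you really use is that at most one leaf $i$ in the left subtree has $X_i$ meeting $N_{G_v}(\hat R_v)\cap L_v$.
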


 For the purpose of analysis, we also associate $X \subseteq V(G^{\textrm{orig}})$ to each node in the recursion tree. That is, the input to the subproblem is of the form $(G,T,c,X)$.  We call $X$ $\textit{core set}$ of the subproblem. %
 Recall that $G,T,c$ are the same as in the input for \Cref{alg:sparsify1}. We define $X$ for each node in the recursion tree. Initially, $X = V(G^{\textrm{orig}})$ at the root node. At the current node with $(G,T,c,X)$, we obtain a min Steiner cut $(L,S,R)$ of $G$ with terminal set $T$. According to \Cref{alg:sparsify1}, we recurse on left graph $G_L$ of $G$ with terminal set $T_L$ and right graph $G_R$ of $G$ with terminal set $T_R$ (if $(L,S,R)$ is a isolating Steiner cut and $S \subseteq T$, then we define $G_L$ to be strictly left graph and $G_R$ to be strictly right graph of $G$ instead).   In this analysis, we recurse left  with $(G_L,T_L,c,X_L)$ as inputs and right with $(G_R,T_R,c,X_R)$ as inputs where $X_L := X \cap L$ and $X_R := X \cap R$. Observe that $X_L \subseteq V(G_L)$ and $X_R \subseteq V(G_R)$, respectively.

 \begin{lemma} \label{lem:S is LR sep}
   If $(G,T,c,X)$ is the current node, and $(L,S,R)$ is the min Steiner cut found in the algorithm, then $S$ is an $(L,R)$-separator in $\Gorig$. 
 \end{lemma}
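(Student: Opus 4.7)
The cleanest way to get this is to strengthen the statement and induct on the depth of the recursion tree of \Cref{alg:sparsify1}. Specifically, I would prove:

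\emph{Strengthened claim.} For every node $v$ of the recursion tree with current graph $G_v$, and every two sets $A, B \subseteq V(G_v)$, any $(A,B)$-separator $S'$ in $G_v$ is also an $(A,B)$-separator in $\Gorig$.

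The lemma then follows by applying the strengthened claim at $v$ with $A := L$ and $B := R$: since $(L,S,R)$ is a vertex cut in $G_v$ with $L,R \subseteq V(G_v) \subseteq V(\Gorig)$ (an easy parallel induction, as each recursive step only removes vertices or identifies a set $\hat{R}$ into the already-existing terminal $t_R$), the set $S$ is an $(L,R)$-separator in $G_v$, hence in $\Gorig$.

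The base case of the induction ($v$ the root, $G_v = \Gorig$) is trivial. For the inductive step, suppose $v$ is a child of $v'$ whose graph was $G_{v'}$ and whose min Steiner cut was $(L',S',R')$, and assume WLOG that $G_v$ is the left graph $G_L'$ of $G_{v'}$ (the right case is symmetric, and the strictly-left/right cases are handled by one extra invocation of vertex-closure monotonicity). I would carry out the step in two moves. First, recall from the proof of \Cref{pro:monotonicity GLGR} that $\cl(G_{v'}, \tilde R') \subseteq G_L'$ as graphs on the same vertex set, where $\tilde R' = \hat R' \setminus \{t_R'\}$. Thus if $(A, S', B)$ is a vertex cut in $G_L'$ with $A, B \subseteq V(G_L')$, then removing edges preserves the cut, so $(A, S', B)$ is also a vertex cut in $\cl(G_{v'}, \tilde R')$. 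Second, apply \Cref{lem:monotonic vertex closure} iteratively over the vertices of $\tilde R'$: each closed vertex re-enters either the $A$-side or the $B$-side, so there exist $A_{\mathrm{add}}, B_{\mathrm{add}}$ partitioning $\tilde R'$ with $(A \cup A_{\mathrm{add}}, S', B \cup B_{\mathrm{add}})$ a vertex cut in $G_{v'}$. In particular $S'$ is an $(A,B)$-separator in $G_{v'}$, and applying the induction hypothesis at $v'$ shows it is an $(A,B)$-separator in $\Gorig$.

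The main (only) obstacle is bookkeeping: \Cref{pro:monotonicity GLGR} only asserts that separators lift, not that \emph{specific-pair} separators lift while keeping $A$ and $B$ on the same sides. The key observation that makes the strengthening go through is that the argument in \Cref{lem:monotonic vertex closure} never reassigns vertices of the original $A$ and $B$; the lifted cut only \emph{grows} each side by absorbing the uncontracted vertices, which is exactly what is needed to preserve the $(A,B)$-separation property. Once this is spelled out, both the left/right-graph case and the optional closure of $t_R$ (for the strictly-left graph $\check G_L'$) are covered by the same lemma, and the induction carries through without further complication.
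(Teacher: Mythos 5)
Your proof is correct and takes essentially the same route as the paper's: iterate the vertex-closure monotonicity (\Cref{lem:monotonic vertex closure}) along the root-to-node path, using the fact (visible in the proof of \Cref{lem:monotonic vertex closure}) that each uncontracted vertex is absorbed into one of the two existing sides, so a separated pair stays separated. The paper's own proof is three sentences: it cites \Cref{pro:monotonicity GLGR} to get that $S$ is a separator in $\Gorig$ and then asserts, without further justification, that ``$S$ must separate $L$ and $R$ in $\Gorig$ as well.'' You correctly note that \Cref{pro:monotonicity GLGR} as stated only asserts separator-lifting, not pair-preserving separator-lifting, and you close that gap with the explicit strengthened induction (the sides never lose the originally designated vertices). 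This is the same argument the paper intends, made fully precise rather than left implicit; nothing in your induction is superfluous, and the strictly-left/right case is handled correctly by one extra closure step as you indicate.
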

 \begin{proof}
   Observe that the graph $G$ in the recursion tree is obtained by a sequence of transformations in \Cref{def:left right graphs} starting from  $G^{\textrm{orig}}$. By \Cref{pro:monotonicity GLGR}, $S$ is also a separator in $G^{\textrm{orig}}$. Since $L$ and $R$ are on different sides of the separator $S$ in the graph $G$, $S$ must separate $L$ and $R$ in $\Gorig$ as well.%
 \end{proof}

We are now ready to prove each of the claims above. 
\begin{proof} [Proof of \Cref{claim:zx partition}] 
We prove that $(Z,X_1,\ldots,X_\ell)$ is a partition of $V(\Gorig)$. We show that $X_i \cap X_j = \emptyset$ for all $i\neq j$. Suppose there are $i$ and $j$ such that $X_i \cap X_j \neq \emptyset$. Let $v \in X_i \cap X_j$. Let $p$ be the LCA (longest common ancestor) node of leaf $i$ and leaf $j$ in the recursion tree. Let $X_p$ be the core set of the subproblem at $p$. So, $v \in X_p$. Since $p$ is not the leaf, we must obtain a min Steiner cut $(L,S,R)$ at $p$. Since $v \in X_i$ and $v \in X_j$, we have $v \in L \cap R$, so $L \cap R \neq \emptyset$, a contradiction.  Next, we prove that $Z \cap X_i = \emptyset$ for all $i$. By design, $X_i$ is obtained by filtering from $V(\Gorig)$ with either left or right part of min Steiner cuts along the path from root to leaf. Therefore, $X_i$ does not contain any vertex from min Steiner cuts.

It remains to prove that $Z$ is an $(X_i,X_j)$-separator in $\Gorig$ for all $i \neq j$.  Suppose there is an $(X_i,X_j)$-path $P$ in $\Gorig - Z$ for some $i,j$. Let $(L^*,S^*,R^*)$ be the min Steiner cut obtained at the LCA $p$ of leaf $i$ and leaf $j$. Since $S^* \subseteq Z$, $P$ is also an $(X_i,X_j)$-path in $\Gorig - S^*$. Finally, we prove that $S^*$ is an $(X_i,X_j)$-separator in $\Gorig$ which leads to a contradiction. By recursion tree and \Cref{def:output partition}, $X_i \subseteq L^*$ and $X_j \subseteq R^*$. By \Cref{lem:S is LR sep}, $S^*$ is $(L^*,R^*)$-separator in $\Gorig$. Since $X_i \subseteq L^*$ and $X_j \subseteq R^*$, $S^*$ is an $(X_i,X_j)$-separator in $\Gorig$. 

\end{proof}

\begin{proof} [Proof of \Cref{claim:reducing covering set}]
Fix $A,B \subseteq T$ such that $\mu_{\Gorig}(A,B) \leq c$. If $C$ covers some min $(A,B)$-separator or $\mathcal{S}$ reduces the non-terminal part of some brittle min $(A,B)$-weak separator, then we are done. Now, we assume that $\calS$ splits some brittle min $(A,B)$-weak separator $S'$. We prove that $|S' \cap N_{\Gorig}[X_i]| \leq |S'| -1$ for all $i$. To do so, we show that $N_{\Gorig}[X_i]$ does not cover $S'$ for all $i$. Since $\calS$ splits $S'$, there could be many separators in $\calS$ that splits $S'$. Let $S \in \calS$ be the min Steiner cut $(L,S,R)$ in one of the subproblem in the recursion tree where $x \in L$ and $y \in R$ (or symmetrically $y \in L$ and $x \in R$). Such an $S$ exists by selecting the first level in the recursion tree that $S'$ is split (if $S'$ is not split at root and its non-terminal part is never reduced, then $S'$ must be split at either left subtree or right subtree). Let $x \in N_{\Gorig}[X_i]$ for some $i$. We prove that $y \not \in N_{\Gorig}[X_i]$. Indeed, since $x$ and $y$ are on different side of $S$ in the recursion, we have $y \not \in X_i$. By \Cref{lem:S is LR sep}, $S$ is an $(L,R)$-separator in $\Gorig$. Since $X_i \subseteq L$ and $y \in R$, $y \not \in N_{\Gorig}(X_i)$. This completes the proof.  
\end{proof}

\begin{proof} [Proof of \Cref{claim:small total boundaries}]

The bounds for $Z$ and $C$ follow from \Cref{lem:abstract reducing set size}. 
We next bound the sum of total neighbors. At any time, we have a collection of graphs and their terminal set  $\mathcal{G} = \{(G,T,c,X)\}$ at leaf nodes (that are not necessarily the base case at the moment)  in the recursion tree. Initially, we have only one node which is the
input graph and its terminal set. We keep track of the progress via
the following potential function
$$\Pi(\calG) = \sum_{(G,T,c,X) \in \calG} |N_{\Gorig}(X)|.$$
Notice the neighbors of each set corresponds to those in the original input graph $\Gorig$. 

Initially, $\calG$ contains only the input graph and its terminal
set. Thus, $\Pi(\calG) = 0$. At any time,  the recursion tree changes at a
leaf node that is not a base case. We claim that the potential can increase at most $4c$ whenever a min Steiner cut is obtained at any subproblem. If this is the case, then at the end of the recursion we have $\Pi(\calG) = \sum_{(G,T,c,X) \in \calG} |N_{\Gorig}(X)| = \sum_{i \leq \ell}|N_{\Gorig}(X_i)| = O(kc^2)$ (because the number of internal nodes is at most $O(kc)$ by \Cref{lem:recursion tree: total number of internal nodes}). 

It remains to prove the claim. We set up notations and make observation.  At the current node with $(G,T,c,X)$, suppose we obtain a min Steiner cut $(L,S,R)$. 
The tree will create two children $(G_L,T_L,c,X_L)$ and $(G_R,T_R,c,X_R)$ as described above. Suppose $|N_{\Gorig}(X)| = k_0$. 
\begin{claim} \label{claim:neighbor split small}
 $|N_{\Gorig}(X_L)| + |N_{\Gorig}(X_R)| \leq k_0 + 3c$.  
\end{claim}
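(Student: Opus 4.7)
The plan is to bound $|N_{\Gorig}(X_L)| + |N_{\Gorig}(X_R)|$ via inclusion-exclusion on the union and intersection of the two neighborhoods, using \Cref{lem:S is LR sep} (that $S$ is an $(L,R)$-separator in $\Gorig$) as the only nontrivial ingredient. The key structural observation is that since $L\cap S = R \cap S = \emptyset$, the sets $X_L = X\cap L$ and $X_R = X\cap R$ are disjoint from $S$, that $X_L \cup X_R = X\setminus S$, and that $|X\cap S| \le |S| \le c$.

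First I would bound the intersection. Suppose $v \in N_{\Gorig}(X_L) \cap N_{\Gorig}(X_R)$ and, for contradiction, $v \notin S$. Then $v$ has an $\Gorig$-neighbor $u_L \in X_L \subseteq L$ and an $\Gorig$-neighbor $u_R \in X_R \subseteq R$, and none of $u_L, v, u_R$ lies in $S$. Hence the walk $u_L\,v\,u_R$ is an $(L,R)$-path in $\Gorig$ avoiding $S$, contradicting \Cref{lem:S is LR sep}. Therefore $N_{\Gorig}(X_L) \cap N_{\Gorig}(X_R) \subseteq S$, so $|N_{\Gorig}(X_L) \cap N_{\Gorig}(X_R)| \le c$.

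Next I would bound the union by a case analysis. Take any $v \in N_{\Gorig}(X_L) \cup N_{\Gorig}(X_R)$; without loss of generality $v \in N_{\Gorig}(X_L)$, so $v$ has an $\Gorig$-neighbor in $X_L \subseteq X$ and $v \notin X_L$. If $v \notin X$, then $v \in N_{\Gorig}(X)$ directly. If $v \in X$, then $v \in X_R \cup (X \cap S)$, but $v \in X_R$ would yield a length-one $(X_L, X_R)$-path in $\Gorig$ avoiding $S$, again contradicting \Cref{lem:S is LR sep}; hence $v \in X\cap S$. In either case, $v \in (X\cap S) \cup N_{\Gorig}(X)$, so $|N_{\Gorig}(X_L) \cup N_{\Gorig}(X_R)| \le |X\cap S| + |N_{\Gorig}(X)| \le c + k_0$.

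Combining the two estimates via $|A|+|B| = |A\cup B|+|A\cap B|$ gives
\[
|N_{\Gorig}(X_L)| + |N_{\Gorig}(X_R)| \le (k_0+c) + c = k_0 + 2c \le k_0 + 3c,
\]
proving the claim (with a spare additive $c$). The only subtle point is ensuring \Cref{lem:S is LR sep} is applied in the correct graph: $S$ is a Steiner cut of the \emph{current} recursion graph $G$, which differs from $\Gorig$ by closure/contraction operations, so one must explicitly invoke \Cref{lem:S is LR sep} to transfer the separator property back to $\Gorig$ before applying it to the $\Gorig$-edges under consideration. Once this is in place, everything is a routine counting argument.
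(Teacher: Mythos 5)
Your proof is correct and follows essentially the same route as the paper: bound the overlap $N_{\Gorig}(X_L)\cap N_{\Gorig}(X_R)$ by at most $|S|\le c$ using \Cref{lem:S is LR sep}, bound the union by $N_{\Gorig}(X)\cup S$ (you use the slightly sharper $N_{\Gorig}(X)\cup(X\cap S)$, which is the same numerically), and combine. You even obtain the marginally tighter constant $k_0+2c$ where the paper writes $k_0+3c$; both suffice for the claim, and your explicit invocation of \Cref{lem:S is LR sep} in every step where the separator property is needed in $\Gorig$ is exactly the right care to take.
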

\begin{proof}%
We first show that 
\begin{align} \label{eq:ngorigxlxr}
    N_{\Gorig}(X_L) \subseteq N_{\Gorig}(X) \cup S \textrm{ and } N_{\Gorig}(X_R) \subseteq N_{\Gorig}(X) \cup S.
\end{align}
 Indeed, since $X_L = X \cap L \subseteq X$, we have $N_{\Gorig}(X_L) \subseteq N_{\Gorig}(X) \cup X$. Since $(L,S,R)$ is a vertex cut in $G$ and $X \subseteq V(G)$, we have $N_{\Gorig}(X_L) \subseteq N_{\Gorig}(X) \cup S$. The proof for $N_{\Gorig}(X_R)$ is similar.  Second, $S$ is an $(X_L,X_R)$-separator in $G^{\textrm{orig}}$ because \Cref{lem:S is LR sep} implies that $S$ is $(L,R)$-separator in $\Gorig$ and $X_L \subseteq L$ and $X_R \subseteq R$.%
 
  Since $S$ is an  $(X_L,X_R)$-separator in $\Gorig$, we have that for every vertex $v$ in $\Gorig$ if $v \in N_{\Gorig}(X_L) \cap N_{\Gorig}(X_R)$, then $v \in S$. In particular, there are at most $|S| \leq c$ vertices in $N_{\Gorig}(X)$ that can be both neighbors of $X_L$ and $X_R$. By \Cref{eq:ngorigxlxr} and the fact that at most $|S| \leq c$ of $N_{\Gorig}(X)$ can be both neighbors of $X_L$ and $X_R$, we conclude that  $|N_{\Gorig}(X_L)| + |N_{\Gorig}(X_R)| \leq (|N_{\Gorig}(X)|+ |S|) +2|S|= k_0 + 3|S| \leq k_0 + 3c.$
 \end{proof}
Therefore, \Cref{claim:neighbor split small} imply that the potential can increase at most $3c$, and we are done.%
\end{proof}

\subsection{Fast Implementation for Few Terminals} \label{sec:fast impl few termminals}

We describe the implementation of \Cref{alg:sparsify1} that is fast when the number of terminal is small (e.g., $|T| = \log^{O(1)}(n)$).

\begin{lemma} \label{lem:alg1 sim slow}
 \Cref{alg:sparsify1} can be simulated by another algorithm that runs in $\ot(k^3mc +km2^{O(c^2)})$ time where $m$ is the number of edges and $k$ is the number of terminals of the input graph for \Cref{alg:sparsify1}. 
\end{lemma}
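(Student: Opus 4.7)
The plan is to implement each step of Algorithm~\ref{alg:sparsify1} using standard bounded max-flow primitives, and charge the total running time via the recursion-tree bookkeeping already established in Lemma~\ref{lem:recursion tree: total number of internal nodes} (which gives $O(kc)$ internal nodes and $O(kc)$ leaves, regardless of what happens inside a node). The algorithm I simulate will produce the same $(T,c)$-\roc \ssetpair, so correctness is inherited from Lemma~\ref{lem:abstract reducing set size}.

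At each internal node with current graph $G'$ and terminal set $T'$, I proceed as follows. First, apply Nagamochi--Ibaraki (Theorem~\ref{thm:nagamochi}) once to reduce $G'$ to arboricity $c$, so that we may assume $|E(G')|\le c\,|V(G')|$ while preserving all cuts of size $<c$. Second, compute a minimum Steiner cut: fix any $s\in T'$ and, for every other $t\in T'$, compute a minimum vertex $(s,t)$-cut of size at most $c$ by running at most $c$ bounded-depth augmenting paths on the vertex-splitting graph in $\ot(|E(G')|\,c)$ time; the smallest such cut is a minimum Steiner cut by a Gomory--Hu argument, costing $\ot(|T'|\cdot |E(G')|\cdot c)$ in total. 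Third, if the cut returned is $t$-isolating but not good, iteratively replace it by the minimum $t$-isolating cut of the current $G_R$; each such iteration either increases $\isolate(t)$ by at least $1$ or ends with $S\subseteq T$, and $\isolate(t)\le c$, so at most $c$ iterations suffice, costing $\ot(|E(G')|\,c^2)$. Finally, build $G_L,G_R,T_L,T_R$ explicitly from Definition~\ref{def:left right graphs}, contributing $O(c^2)$ new clique edges per side.

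For base cases, when $|T'|\le 4c^2$ we enumerate all $2^{O(c^2)}$ pairs $(A,B)$ with $A,B\subseteq T'$ and, for each, compute a minimum $(A,B)$-weak separator in $\ot(|E(G')|\,c)$ time via a single bounded max-flow with $A,B$ as the source/sink sides. This yields $\ot(|E(G')|\cdot 2^{O(c^2)})$ per leaf. Summing across the tree: the graph sizes stay within $O(m+kc^3)=\ot(m)$ since only $O(kc)$ splits occur and each contributes $O(c^2)$ new edges, and each subproblem's terminal count is bounded by $k+O(kc)=O(kc)$. Hence internal nodes contribute $O(kc)\cdot \ot(kc\cdot m\cdot c)=\ot(k^2mc^3)$, which is absorbed into $\ot(k^3mc)$ (as the interesting regime has $k\ge c$), and leaves contribute $O(kc)\cdot \ot(m)\cdot 2^{O(c^2)}=\ot(km\cdot 2^{O(c^2)})$.

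The main obstacle is the good-isolating adjustment in step three: we must guarantee that iterating on $G_R$ terminates quickly while still preserving the validity of the recursive decomposition. The key observation is that the value $\isolate_G(t)$ can only increase along the chain of replacements and is capped by the minimum Steiner cut size $\le c$, so $O(c)$ rounds suffice. A secondary subtlety is bounding the cumulative edge blow-up from clique insertions when forming $G_L$ and $G_R$; this is controlled by re-sparsifying via Theorem~\ref{thm:nagamochi} at the top of each recursive call, which keeps the per-node graph size at $\ot(m)$ throughout.
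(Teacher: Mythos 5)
Your plan breaks down at the step where you ``build $G_L,G_R,T_L,T_R$ explicitly from Definition~\ref{def:left right graphs}, contributing $O(c^2)$ new clique edges per side.'' This is incorrect: in Definition~\ref{def:left right graphs}, the clique added to form $G_L$ is on $N_G(\hat R)$ where $\hat R = R\cup(S-T)$, and $N_G(\hat R)$ can have $\Theta(n)$ vertices, so the explicit clique can have $\Theta(n^2)$ edges. You appear to be conflating this with the $c$-left/right graphs of Definition~\ref{def:cleftrightgraphs}, where the clique $K_R$ has exactly $c$ vertices; but Algorithm~\ref{alg:sparsify1} uses the former, not the latter. Consequently an explicit construction of $G_L$ at a single recursion node can cost $\Theta(n^2)$ time, and over $O(kc)$ internal nodes this is far beyond $\ot(k^3mc)$. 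Re-sparsifying with Theorem~\ref{thm:nagamochi} afterwards does not help: that sparsifier runs in time linear in the \emph{input} graph, so you would still pay the $\Theta(n^2)$ cost just to build the intermediate graph and feed it in.

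The paper closes this gap with a different idea entirely: lift the graph to a (mergable) hypergraph $H$ with $G=\textsf{Clique}(H)$, maintained throughout the recursion. Then ``add a clique on $N_G(\hat R)$ and contract $\hat R$'' becomes ``replace all hyperedges incident to $\hat R$ with a single merged hyperedge'' (Proposition~\ref{pro:hypergraph basic gl gr}), which is a union-find operation costing $\ot(|\hat E|)$ amortized (Lemma~\ref{lem:merging hypergraph}), and the hypergraph size never exceeds $m$. All primitives (min Steiner cut, min $(A,B)$-weak separator, good $t$-isolating cut) are then performed directly on the bipartite representation of $H$ via Observation~\ref{obs:hypergraph clique equiv} (Proposition~\ref{pro:hypergraph implementation}), which is what yields the stated bound. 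Your remaining bookkeeping (charging $O(kc)$ Steiner-cut computations, $O(c)$ rounds of good-isolating adjustment, and $2^{O(c^2)}$ weak-separator calls at base cases) matches the paper, but without the hypergraph representation of the clique operations the time bound does not go through.
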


We first describe the key challenge
for fast implementation.  Given $G$ with $n$ vertices and
$m$ edges, recall that $G_L$ ($G_R$) is constructed by adding clique
edges between neighbors of $\hat R$ ($\hat L$).  The key challenge is
that the size of the graph $G_L$ and $G_R$ can be $O(n^2)$ even if the
$G$ is sparse. Our approach is to simulate \Cref{alg:sparsify1} using hypergraphs
instead. The intuition is that each clique in the graph can be viewed as a
hyperedge in the corresponding hypergraph. That is, adding clique edges in $G$ can be viewed as
adding a hyperedge that is formed by the union of existing
hyperedges. We next formalize the intuition. 

\paragraph{Hypergraphs.} We use bipartite representation %
of a hypergraph $H=(V_{\mathsf{v}},V_{\mathsf{e}},E)$ where $V_{\mathsf{v}}$ is a set of vertices, and $V_{\mathsf{e}}$ is a
set of hyperedges, and the set of edges in bipartite graph $E$
represents vertex and hyperedge incidence in $H$. The size of
hypergraph $H$ is $|E|$. Given an hyperedge  $e \in V_{\mathsf{e}}$, we denote $N_{H}(e) \subseteq V_{\mathsf{v}}$ to be the set of vertices that are incident to $e$. %

\begin{definition}
  Given a hypergraph $H =(V_{\mathsf{v}},V_{\mathsf{e}},E)$,
  $\textsf{Clique}(H)$ is a graph $G = (V_{\mathsf{v}},E')$ where $E'$
  is obtained by adding clique edges of $N_{H}(e)$ for every hyperedge
  $e \in V_{\mathsf{e}}$.  We say that a graph $G$ \textit{represents}
  a hypergraph $H$ if $G = \textsf{Clique}(H)$.
\end{definition}

\begin{observation}\label{obs:hypergraph clique equiv}
  Let $H$ and $G$ be a hypergraph and a graph such that $G =
  \textsf{Clique}(H)$. Then, a vertex cut $(L,S,R)$ in $G$ is a vertex cut in $H$ and vice versa. 
 \end{observation}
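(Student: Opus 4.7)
The plan is to prove the equivalence by direct translation of the cut condition, exploiting the fact that $V(G) = V_{\mathsf{v}} = V(H)$ so the partition $(L,S,R)$ has the same meaning in both structures, and only the ``no crossing edge'' condition needs to be matched. In the hypergraph $H$, the natural notion is that $(L,S,R)$ is a vertex cut iff for every hyperedge $e \in V_{\mathsf{e}}$ either $N_H(e) \cap L = \emptyset$ or $N_H(e) \cap R = \emptyset$, that is, every hyperedge lies entirely in $L \cup S$ or entirely in $S \cup R$ (together with $L,R \neq \emptyset$).

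For the forward direction, I would assume $(L,S,R)$ is a vertex cut in $G$ and prove it is one in $H$ by contraposition. If some hyperedge $e$ has $u \in N_H(e) \cap L$ and $v \in N_H(e) \cap R$, then by construction of $\textsf{Clique}(H)$ the edge $(u,v)$ lies in $E'$, contradicting the absence of $L$-$R$ edges in $G$. For the reverse direction, I would take an arbitrary edge $(u,v) \in E'$ and note that, by definition of $\textsf{Clique}(H)$, there exists some hyperedge $e$ with $u,v \in N_H(e)$; since $(L,S,R)$ is a vertex cut in $H$, the set $N_H(e)$ lies entirely in $L \cup S$ or in $S \cup R$, so $u$ and $v$ cannot be split between $L$ and $R$. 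Non-emptiness of $L$ and $R$ transfers trivially since the vertex sets coincide.

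The hard part, if any, is really just fixing the right definition of ``vertex cut in a hypergraph'' in the bipartite representation, since the paper has not spelled this out before the observation. Once this is pinned down, the argument is a one-line translation in each direction. I would briefly note this convention at the start of the proof and then give the two short arguments above.
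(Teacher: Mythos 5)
Your proof is correct and follows essentially the same approach as the paper's: translate the ``no crossing edge'' condition between $G$ and $H$ via the construction of $\textsf{Clique}(H)$, arguing each direction by exhibiting the corresponding hyperedge or clique edge. Your observation that the paper never explicitly defines ``vertex cut in a hypergraph'' before this point is accurate and your proposed definition (every hyperedge lies entirely in $L\cup S$ or entirely in $S\cup R$) is exactly the one the paper uses implicitly.
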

 \begin{proof}
 If $(L,S,R)$ is a vertex cut in $G$, then  there is no edge between $L$ and $R$. Since $\textsf{Clique}(H) = G$, every hyperedge cannot be incident to both $L$ and $R$. Therefore, $(L,S,R)$ is a vertex cut in $H$. If $(L,S,R)$ is a vertex cut in $H$, then every hyperedge cannot be incident to both $L$ and $R$. Since $\textsf{Clique}(H) = G$, there is no edge that is incident to $L$ and $R$ in $G$, and thus $(L,S,R)$ is a vertex cut in $G$. \qedhere
 \end{proof}

\begin{proposition} [\Cref{line: construct GL GR}] \label{pro:hypergraph basic gl gr}
  Given a hypergraph $H$ of size $p$  such that $G = \textsf{Clique}(H)$, a
  terminal set $T$, and a min Steiner cut $(L,S,R)$, there is an
  algorithm that runs in $O(p)$ time and outputs two hypergraphs $H_L$
  and $H_R$ 
  such that $G_L = \textsf{Clique}(H_L)$ and $G_R =
  \textsf{Clique}(H_R)$, respectively where $G_L$ and $G_R$ are left and right graphs of $G$ with respect to $(L,S,R)$ (\Cref{def:left right graphs}). Furthermore, the size of each hypergraph  is  no more than that of $H$. %
\end{proposition}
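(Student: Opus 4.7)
The plan is to construct $H_L$ by a single ``merge'' of all hyperedges touching $\hat R = R \cup (S\setminus T)$; the construction of $H_R$ is symmetric. Let $E_{\hat R} \subseteq V_{\mathsf{e}}$ be the set of hyperedges that meet $\hat R$, and define the new hyperedge
\[
e^* \;=\; \Big(\bigcup_{e \in E_{\hat R}} N_H(e)\Big)\setminus \hat R \;\cup\; \{t_R\}.
\]
Take $H_L$ to be the hypergraph on vertex set $V_{\mathsf{v}}\setminus(\hat R\setminus\{t_R\})$ with hyperedge set $(V_{\mathsf{e}}\setminus E_{\hat R})\cup\{e^*\}$ (omitting $e^*$ if $E_{\hat R}=\emptyset$). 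The intuition is that collapsing the touched hyperedges into one realizes the clique addition on $N_G(\hat R)$ by \Cref{obs:hypergraph clique equiv}, while renaming the copies of $\hat R$ inside the merged hyperedge to $t_R$ realizes the contraction step of \Cref{def:left right graphs}.

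For correctness I would check $\textsf{Clique}(H_L) = G_L$ edgewise. Because $N_G(\hat R)\cap\hat R = \emptyset$, the union collapses to $e^* = N_G(\hat R)\cup\{t_R\}$, which produces exactly the clique on $N_G(\hat R)$ plus the star from $t_R$ to $N_G(\hat R)$ prescribed in \Cref{def:left right graphs}. For any pair $u,v \notin \hat R$ joined by an edge of $G$, the witnessing hyperedge is either preserved in $H_L$ (if it does not touch $\hat R$) or absorbed into $e^*$ (if it does), so $(u,v)$ is still present; conversely, any pair sharing a hyperedge in $H_L$ was either together in $H$ already or both land in $e^* = N_G(\hat R)\cup\{t_R\}$, and both cases correspond exactly to edges created in $G_L$. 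Vertices of $\hat R\setminus\{t_R\}$ disappear from both sides.

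For the running time and size, I would compute $E_{\hat R}$ by a single scan over the bipartite incidence list, marking a hyperedge as touched whenever an incidence $(v,e)$ with $v\in\hat R$ is seen, and then build $e^*$ by a second pass over the incidences of the marked hyperedges using a boolean indicator on $V_{\mathsf{v}}$ that is reset only on coordinates that were actually touched. This runs in $O(p)$ time overall. For the size, since every $e\in E_{\hat R}$ contributes at least one incidence inside $\hat R$,
\[
|e^*| \;\le\; \sum_{e\in E_{\hat R}} \bigl(|N_H(e)|-1\bigr) + 1 \;\le\; \sum_{e\in E_{\hat R}} |N_H(e)|,
\]
which is exactly the number of incidences removed by deleting the hyperedges of $E_{\hat R}$, so $|E(H_L)| \le |E(H)|$. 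The only real obstacle is that the naive graph-side operation would materialize a clique on $N_G(\hat R)$ of potentially quadratic size; the point of lifting to hypergraphs is precisely that this blow-up is replaced by a single linear-time hyperedge union, after which the rest of the argument is routine.
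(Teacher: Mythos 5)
Your construction is identical to the paper's: compute the set of hyperedges incident to $\hat R$, replace them all by a single merged hyperedge equal to the union of their neighborhoods minus $\hat R$ plus $t_R$, and delete the vertices of $\hat R\setminus\{t_R\}$. Your verification that this merged hyperedge realizes exactly the clique on $N_G(\hat R)\cup\{t_R\}$ prescribed by \Cref{def:left right graphs}, as well as the running-time and the ``size does not increase'' argument, all match what the paper intends (the paper merely asserts ``By design, $G_R=\textsf{Clique}(H_R)$'' without spelling out the edgewise check or the size bound, so your write-up is a bit more explicit there).
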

\begin{proof}
  We explain the construction of $H_R$. The construction of $H_L$ is
  similar. Let $\hat L = L \cup (S-T)$. We view $H$ as a bipartite
  graph, and construct $H_L$ as follows. Given $(L,S,R)$, we first
  compute $\hat E = N_H(\hat L)$ the set of hyperedges incident to $\hat
  L$. Then, we compute $N_H(\hat E)$ the set of vertices that are
  incident to $\hat E$. Then, we add a terminal $t_L$ to
  $V_{\textsf{v}}$, and add a new hyperedge $\hat e$  to
  $V_{\textsf{e}}$ where $\hat e$ is incident to every vertex in $\{t_L\} \cup N_H(\hat E)$. Observe that $\hat e$
  represents a clique between the neighbors of $\hat L$ in the original
  graph. Finally, we clean up by deleting all hyperedges in $\hat E$ and
  vertices in $\hat L$. By design, $G_R = \textsf{Clique}(H_R)$, and we can
  implement in $O(p)$ time.  
\end{proof}

We state three subroutines that implement \Cref{line:base 4c2,line:min
   steiner cut slow,line:maximal isolating cut} of \Cref{alg:sparsify1} using
 hypergraphs.
 
 \begin{proposition} \label{pro:hypergraph implementation}
   Let $H$ be a hypergraph of size $p$ and  $T$ be a set of $k$ terminals. There are
   the following deterministic algorithms whose proofs are almost the same as the graph version (where we can view hypergraphs as bipartite graphs).%
   \begin{enumerate}
   \item  (\Cref{line:base 4c2}) Given $H,T$, compute a min Steiner
     cut of size at most $c$ or certifies that min  Steiner cut is at
     least $c$ in $O(k^2 pc)$  time. 
   \item (\Cref{line:min steiner cut slow}) Given $H$ and $A,B \subseteq T$, compute a minimum $(A,B)$-weak separator of size at most
     $c$, or certifies that $\mu(A,B) \geq c+1$ in $O(pc)$ time. 
   \item (\Cref{line:maximal isolating cut}) Given $H$ and $t \in T$,  compute a maximal $t$-isolating
     cut of size at most $c$ (therefore, a good $t$-isolating cut), or correctly report that the
     size of $t$-isolating cut is at least $c+1$ in $O(pc)$ time. 
   \end{enumerate}
 \end{proposition}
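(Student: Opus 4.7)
The plan is to run vertex-capacitated max-flow directly on the bipartite representation of $H = (V_{\mathsf{v}}, V_{\mathsf{e}}, E)$ without ever materializing $\textsf{Clique}(H)$, whose size may be $\Theta(p^2)$. We treat each vertex $v \in V_{\mathsf{v}}$ as a node of unit capacity (splitting it into $v^{\text{in}} \to v^{\text{out}}$ with a unit-capacity edge) and each hyperedge node $e \in V_{\mathsf{e}}$ as a node of infinite capacity. By \Cref{obs:hypergraph clique equiv}, any finite-capacity minimum vertex cut in this flow network consists exclusively of $V_{\mathsf{v}}$-vertices (hyperedge nodes cannot appear in a finite cut) and is exactly a minimum vertex cut in $\textsf{Clique}(H)$. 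Since the bipartite graph has $|E| = p$ edges, a single BFS augmenting path runs in $O(p)$ time, and $c+1$ Ford--Fulkerson augmentations either certify that the relevant connectivity exceeds $c$ or produce a cut of size at most $c$, in total $O(pc)$ time.

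For item 2 (minimum $(A,B)$-weak separator), adjoin a super-source $s^*$ and super-sink $t^*$ connected by infinite-capacity edges to every vertex of $A$ and $B$ respectively, and leave unit capacities on all original vertices in $V_{\mathsf{v}}$ (including those in $A \cup B$, since a weak separator may cut through them). Running $c+1$ augmenting paths from $s^*$ to $t^*$ on the bipartite graph either returns the desired cut or certifies $\mu_H(A, B) \geq c+1$, in $O(pc)$ time. For item 3 (maximal $t$-isolating cut), we set the capacity of $t$ to infinity, connect a super-sink $t^*$ to every terminal in $T \setminus \{t\}$ by infinite-capacity edges, and leave unit capacities elsewhere (so that members of $T \setminus \{t\}$ may still appear in $S$). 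After $c+1$ augmenting paths we either certify that no $t$-isolating cut of size $\leq c$ exists, or obtain one; to enforce maximality of $L$ we take $L$ to be precisely the vertices of $V_{\mathsf{v}}$ reachable from $t$ in the residual graph, which by standard max-flow / min-cut duality is the unique minimum $t$-isolating cut with maximal $L$-side. Since any non-terminal vertex of $S$ could otherwise be moved into $L$, maximality implies $S \subseteq T$ or $\isolate_{G_R}(t) > \isolate_{G}(t)$, so the cut produced is automatically good in the sense of the paper.

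For item 1 (minimum Steiner cut), observe that the minimum Steiner cut equals $\min_{a, b \in T, a \neq b} \kappa_H(a, b)$, where $\kappa_H(a, b)$ denotes the minimum vertex separator disjoint from $\{a, b\}$; fixing a single source is insufficient, because if the optimal Steiner cut places our chosen source in $S$, no $(s, \cdot)$-cut of the required size exists. We therefore iterate over all $\binom{k}{2}$ unordered pairs, running for each pair the same bipartite max-flow as in items 2--3 but now with $a$ and $b$ assigned infinite capacity (so they may not themselves appear in $S$); each computation costs $O(pc)$ and the total is $O(k^2 pc)$. We return the minimum cut of size at most $c$ encountered, or, if no pair yields a cut of size at most $c$, certify that the minimum Steiner cut exceeds $c$. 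The only substantive subtlety across all three items is justifying that max-flow on the linear-sized bipartite representation correctly simulates max-flow on the potentially quadratic-sized $\textsf{Clique}(H)$; this is immediate once we note that finite-capacity min cuts in the bipartite network lie entirely in $V_{\mathsf{v}}$, so \Cref{obs:hypergraph clique equiv} transports them verbatim to $\textsf{Clique}(H)$.
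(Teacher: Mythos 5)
Your overall approach---running unit-vertex-capacitated max-flow directly on the bipartite representation of $H$ with infinite capacity on $V_{\mathsf{e}}$-nodes and reading the cut off via \Cref{obs:hypergraph clique equiv}---is exactly what the paper intends, and items 1 and 2, including the pairwise enumeration for the Steiner cut and the weak-separator source/sink gadget, are sound.

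Item 3 has a genuine error in the way you extract the cut. You correctly want the min $t$-isolating cut $(L,S,R)$ with \emph{maximal} $L$-side, but you propose taking $L$ to be the $V_{\mathsf{v}}$-vertices reachable from the source $t$ in the residual graph. Standard max-flow/min-cut duality gives the opposite orientation: the forward-reachable set from the source yields the \emph{minimal} source side, and to get the maximal $L$ you must instead take the complement of the set of nodes from which the sink $t^*$ is reachable in the residual graph. This distinction is not cosmetic. Take $G$ to be the path $t - v_1 - v_2 - u$ with $T = \{t,u\}$. The minimal-$L$ min $t$-isolating cut is $L=\{t\}$, $S=\{v_1\}$, $R=\{v_2,u\}$; here $\hat L = \{t,v_1\}$ and closing it gives $G_R = t - v_2 - u$, so $\isolate_{G_R}(t) = 1 = \isolate_G(t)$ while $S = \{v_1\}\not\subseteq T$, hence this cut is \emph{not} good and the potential decrease in Case~2 of \Cref{lem:recursion tree: total number of internal nodes} would fail. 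The maximal-$L$ cut $L=\{t,v_1\}$, $S=\{v_2\}$, $R=\{u\}$ does yield $\isolate_{G_R}(t) = \infty > 1$, as required. Finally, your one-line justification that "any non-terminal vertex of $S$ could otherwise be moved into $L$" is not literally true (moving $v\in S$ into $L$ is legal only if $N(v)\cap R = \emptyset$); the correct reason maximal $L$ is good is that if $\isolate_{G_R}(t)=\isolate_G(t)$ were attained by a $t$-isolating cut $(L',S',R')$ in $G_R$, then since $N_{G_R}(t_L)=N_G(\hat L)\subseteq L'\cup S'$, the cut $((L'-\{t_L\})\cup\hat L,\, S',\, R')$ is a $t$-isolating cut in $G$ with strictly larger $L$-side (as $\hat L \supsetneq L$ when $S\not\subseteq T$), contradicting maximality.
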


We are now ready to prove \Cref{lem:alg1 sim slow}.

 \begin{proof}[ Proof of \Cref{lem:alg1 sim slow}]

 We simulate \Cref{alg:sparsify1} by using a hypergraph $H$ instead of
 a graph $G$ as an input such that $G = \textsf{Clique}(H)$. We apply
 \Cref{pro:hypergraph implementation} to implement \Cref{line:base 4c2,line:min
   steiner cut slow,line:maximal isolating cut} of
 \Cref{alg:sparsify1}. Next, we describe how to implement \Cref{line: construct
   GL GR}. Given a min Steiner cut $(L,S,R)$, we 
 apply \Cref{pro:hypergraph basic gl gr} to construct $H_L$ and $H_R$
 such that  $G_L = \textsf{Clique}(H_L)$ and $G_R =
 \textsf{Clique}(H_R)$, and recurse.

We now analyze the modified algorithm based on hypergraphs. Consider the recursion tree of
the modified algorithm where we use hypergraphs instead of graphs as
inputs.  Each node in the recursion tree is slightly changed from
$(G,T)$ to $(H,T)$ where $H$ is a hypergraph and $G$. By induction on the depth
 of recursion tree and \Cref{pro:hypergraph basic gl gr}, we have that
 at any time $G = \textsf{Clique}(H)$. Furthermore, the size of
 hypergraphs is at most $m$, and thus the time to construct $H_L$ and
 $H_R$ according to \Cref{pro:hypergraph basic gl gr} is also $O(m)$.  Since $G
 = \textsf{Clique}(H)$ at any time, \Cref{lem:recursion tree: total
   number of internal nodes} implies that the number of  balanced min Steiner cuts, good $t$-isolating cuts, and unbalanced non-isolating cuts (all in hypergraph versions) are at most $O(k/c),
 O(kc)$ and $O(kc)$ respectively. Thus, the number of calls to min
 Steiner cuts is at most $O(kc)$, and the number of calls to maximal $t$-isolating Steiner cut is at most
 $O(kc)$. The number of calls to the base case is at most
 $O(kc)$.  Therefore, the running time is $O(kc (k^2 mc+ mc) + k c
 \cdot c m 2^{O(c^2)}) = O(k^3mc + km 2^{O(c^2)})$.
\end{proof}

\begin{corollary} \label{cor:fast impl few terminals}
Given a graph $G$ with terminal set $T$ and a parameter $c>0$, there is an algorithm that computes a  $(T,c)$-\roc \psetpair $((Z,X_1,\ldots,X_\ell),C)$  for $G$ such that  %
   \begin{itemize}
       \item $|Z| = O(|T|c^2)$,
       \item $|C| = O(|T|\cdot 2^{O(c^2)})$, and
       \item $\sum_i |N(X_i)| = O(|T|c^2)$. 
   \end{itemize}     
The algorithm takes $\ot(|T|^3mc + |T|m2^{O(c^2)})$ time.  
\end{corollary}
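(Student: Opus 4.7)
The plan is to obtain the corollary by assembling two pieces that have already been developed in this section: the hypergraph-based simulation of \textsc{ROC} (\Cref{alg:sparsify1}) from \Cref{lem:alg1 sim slow}, and the extraction of a \psetpair from its recursion tree described in \Cref{def:output partition} and analyzed in \Cref{lem:output partition}. So the algorithm will be: lift the input graph $G$ to a hypergraph $H$ with $G = \textsf{Clique}(H)$ (initially $H$ just has one singleton hyperedge per edge of $G$, so $|H| = O(m)$), then invoke the hypergraph-based simulation of \textsc{ROC} on $(H,T,c)$ while, along the way, book-keeping the core set $X$ at each recursion node as prescribed in \Cref{def:output partition}.

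More concretely, at the root node I would initialize $X := V(G)$, and whenever the simulation recurses on the left (respectively right) child after discovering a min Steiner cut $(L,S,R)$, I would pass down $X \cap L$ (respectively $X \cap R$) as the new core set. This bookkeeping costs at most $O(|V(G)|)$ time per recursion node, which is absorbed into the running time of \Cref{lem:alg1 sim slow}. When the recursion terminates, the $(T,c)$-\roc \ssetpair $(\calS,C)$ produced by \textsc{ROC} together with the collected core sets $X_1,\ldots,X_\ell$ at the leaves (one per leaf) yield the desired \psetpair: set $Z := \bigcup_{S \in \calS} S$ and output $((Z, X_1, \ldots, X_\ell), C)$.

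Correctness and the three size bounds are then immediate from \Cref{lem:output partition}, which already verifies that $(Z, X_1, \ldots, X_\ell)$ is a valid partition of $V(G)$ with $Z$ an $(X_i,X_j)$-separator for all $i \neq j$, that $((Z, X_1, \ldots, X_\ell), C)$ is $(T,c)$-\roc, and that $|Z| = O(|T|c^2)$, $|C| = O(|T|\cdot 2^{O(c^2)})$, and $\sum_i |N(X_i)| = O(|T|c^2)$. For the running time, \Cref{lem:alg1 sim slow} directly gives $\ot(|T|^3 m c + |T| m\, 2^{O(c^2)})$ for the simulation itself, and the only additional cost is the core-set maintenance; since \Cref{lem:recursion tree: total number of internal nodes} bounds the number of internal nodes by $O(|T|c)$ and each filtering step costs at most $O(m)$, this overhead is subsumed.

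The only mildly subtle point, which is not really an obstacle but worth flagging, is that the core sets $X_i$ live in the \emph{original} input graph $V(G) = V(G^{\textrm{orig}})$, while the hypergraph simulation operates on transformed hypergraphs $H_L, H_R$ whose vertex sets are subsets of $V(G^{\textrm{orig}})$ together with the new terminals $t_L, t_R$ introduced by \Cref{def:left right graphs}. The intersections $X \cap L$ and $X \cap R$ refer to the original-graph vertices only, so the newly introduced terminals $t_L, t_R$ are never placed into any $X_i$; this is consistent with how \Cref{def:output partition} is formulated, and hence \Cref{lem:output partition} applies verbatim.
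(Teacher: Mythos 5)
Your proposal is correct and follows essentially the same route as the paper: invoke the hypergraph simulation of \Cref{lem:alg1 sim slow}, bookkeep the core sets along the recursion tree per \Cref{def:output partition}, and appeal to \Cref{lem:output partition} for correctness and all three size bounds, with the $O(n|T|c)$ bookkeeping overhead absorbed into the stated running time. The remark about $t_L, t_R$ is a reasonable sanity check but is not something the paper's proof needs to address separately, since those vertices are drawn from (or reintroduced into) $V(G^{\textrm{orig}})$ and \Cref{def:output partition} already filters only via $L_v$ and $R_v$.
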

\begin{proof} %
Let $(\calS,C)$ be a $(T,c)$-\roc obtained from \Cref{lem:alg1 sim slow} with $G,T,c$ as inputs. We compute $X_1,\ldots, X_{\ell}$ from the recursion tree of the algorithm in \Cref{lem:alg1 sim slow} according to \Cref{def:output partition}. This takes at most $O(n)$ time per internal node in the recursion tree to label every node according to the Steiner cut. Since there are $O(|T|c)$ internal nodes, it takes total $O(n|T|c)$ additional time to compute $X_1,\ldots, X_{\ell}$, which is subsumed by the running time \Cref{lem:alg1 sim slow}.
\end{proof}

\subsection{Fast Implementation for Expanders with Many Terminals} \label{sec:fast impl expanders}

This section is devoted to proving the following lemma. 
\begin{lemma} \label{lem:fast expander many terminals roc}
Given a $\phi$-vertex expander graph $G$ with terminal set $T$ and a parameter $c>0$, there is an algorithm that computes a  $(T,c)$-\roc \psetpair $((Z,X_1,\ldots,X_\ell),C)$  for $G$ such that %
   \begin{itemize}
       \item $|Z| = O(|T|c^2)$,
       \item $|C| = O(|T|\cdot 2^{O(c^2)})$, and
       \item $\sum_i |N(X_i)| = O(|T|c^2)$. 
   \end{itemize}     
The algorithm takes $\ot(mc\phi^{-1} + \phi^{-5} |T| 2^{O(c^2)})$ time. %
\end{lemma}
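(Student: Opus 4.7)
The plan is to simulate \Cref{alg:sparsify1} on $G$ using its hypergraph representation (as in \Cref{sec:fast impl few termminals}) while replacing the global Steiner-cut and good $t$-isolating-cut subroutines of \Cref{pro:hypergraph implementation} with local vertex-connectivity routines, exploiting the $\phi$-expander property. The key structural observation is that if $G$ is a $\phi$-vertex expander, then any vertex cut $(L,S,R)$ with $|S|\le c$ must satisfy $\min\{|L|,|R|\}\le c/\phi$, so every small cut can be discovered by exploring a ball of volume $O(c/\phi)$ around some vertex on its smaller side. Moreover, the same bound on volume still applies to all the subgraphs $G_L,G_R$ produced during recursion, because by \Cref{pro:monotonicity GLGR} every separator in those graphs is also a separator in the original (expander) graph $G$.

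First, I would maintain the evolving instance as a hypergraph $H$ with $G=\textsf{Clique}(H)$, and handle all the clique additions and closures coming from \Cref{def:left right graphs} and \Cref{obs:hypergraph clique equiv} implicitly using a union-find structure, so that each closure costs amortized $\ot(1)$ per affected incidence. Then, at each recursive call, to decide whether a min Steiner cut of size $<c$ exists I would run a $(c,\nu)$-LocalVC algorithm from every terminal currently in the hypergraph, with volume parameter $\nu=O(c/\phi)$; by the observation above, any small Steiner cut is detected by the LocalVC call from a terminal on its smaller side. Good $t$-isolating cuts are similarly obtained by running LocalVC from $t$ and then applying standard Ford--Fulkerson augmentations confined to the returned ball of volume $O(c/\phi)$. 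This implements \Cref{line:min steiner cut slow} and \Cref{line:maximal isolating cut} in time $\ot(|T'|\cdot c^{O(c)}/\phi)$ per call, where $|T'|$ is the current terminal set.

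The second ingredient is a \emph{cutoff} to the slow algorithm: whenever a subproblem has terminal set of size $|T'|\le\phi^{-1}$, I would invoke \Cref{cor:fast impl few terminals} directly, which produces the required $(T',c)$-\roc~\psetpair on that subgraph in time $\ot(|T'|^3m'c+|T'|m' 2^{O(c^2)})$. By \Cref{lem:recursion tree: total number of internal nodes} the divide-and-conquer tree of \Cref{alg:sparsify1} has at most $O(|T|c)$ internal nodes, and each isolating/unbalanced step strictly drops a terminal from the ``large'' side, so one can charge the cost of LocalVC calls across the whole recursion by a volume-amortization argument: the total volume explored is $\ot(m c/\phi)$, giving the first term of the target bound. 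The base-case invocations of \Cref{cor:fast impl few terminals} account, after summing over the $O(|T|c)$ leaves with $|T'|\le\phi^{-1}$ each, for the second term $\phi^{-5}|T|\,2^{O(c^2)}$ (with the extra $\phi^{-4}$ coming from the $|T'|^3$ factor of the slow algorithm and one further $\phi^{-1}$ from summing $m'$ across leaves via the expander volume bound). Finally, the outputs $(\calS,C)$ and the derived partition $(Z,X_1,\dots,X_\ell)$ from \Cref{def:output partition} inherit the size bounds from \Cref{lem:output partition} without change, since we have only changed the \emph{implementation} of the cut-finding steps, not the underlying recursion structure.

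The step I expect to be the main obstacle is the volume amortization argument: after several levels of recursion, the subgraphs $G_L$ and $G_R$ carry newly inserted ``super-hyperedges'' from closures and may no longer be expanders, so the LocalVC volume guarantee has to be re-established at every level. My plan for overcoming this is to argue that separators in $G_L,G_R$ lift to separators in the original expander $G$ via \Cref{pro:monotonicity GLGR}, so the bound $\min\{|L|,|R|\}\le c/\phi$ persists for every small cut encountered; the hypergraph representation and union-find then ensure that exploring such a ball in any descendant instance still costs only $\ot(c/\phi\cdot c^{O(c)})$, with the right bookkeeping ensuring each original edge of $G$ is charged at most $\ot(c/\phi)$ total times across the whole recursion.
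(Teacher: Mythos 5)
Your high-level architecture — simulate Algorithm~\ref{alg:sparsify1} on an implicit hypergraph, replace global Steiner-cut computations with LocalVC calls exploiting expansion, cut off to Corollary~\ref{cor:fast impl few terminals} when terminals become few, and read off the partition via Definition~\ref{def:output partition} — does match the paper's plan. But three of the load-bearing details are either wrong or missing, and together they break the claimed running time. First, the volume parameter you pass to LocalVC is off: in a $\phi$-expander the small side $L$ of a cut of size $<c$ has $|L|\le c/\phi$ \emph{vertices}, but LocalVC needs a bound on $\vol(L)$, which is $O(c^2\phi^{-2})$ (indeed the paper uses $\nu=(c+1)(c\phi^{-1})^2$ in Lemma~\ref{lem:hypergraph local cut}); setting $\nu=O(c/\phi)$ would cause LocalVC to falsely report $\bot$ on cuts whose small side has moderate degree. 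Second, you propose running LocalVC from \emph{every} terminal at \emph{each} recursive call. The right spine of the recursion tree has $\Theta(|T|c)$ nodes and most of them still carry $\Theta(|T|)$ terminals, so your scheme costs $\tilde{\Omega}(|T|^2 c\cdot t_{\mathrm{local}})$, which is quadratic in $|T|$ and far above the target $\tilde O(mc\phi^{-1}+\phi^{-5}|T|2^{O(c^2)})$. The paper avoids this by maintaining monotone lower bounds $\ell(t)\le\mathrm{local}_G(t)$ in a priority queue and recomputing LocalVC only from the argmin terminal, so the total number of LocalVC invocations across the whole recursion is $O(|T|c)$, not $O(|T|^2 c)$.

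Third, your plan for the ``main obstacle'' does not actually close it. You argue that $\min\{|L|,|R|\}\le c/\phi$ persists for descendant instances via Proposition~\ref{pro:monotonicity GLGR}, but that cardinality bound alone does not control the \emph{volume} of $L$ in the evolving hypergraph $H_R$, which is what LocalVC's running time depends on. The paper introduces two invariants you never state: the hypergraph stays a $(c,\phi)$-\emph{low-volume expander} (every small-side $L$ has $\vol_H(L)\le(2c/\phi)^2$), and it stays \emph{degree-dominated} by the original graph ($H\le_{\deg}G^{\mathrm{orig}}$). Both are re-established at each split by Lemma~\ref{lem:fast GLGR hypergraph}, and degree-domination is precisely what lets the $\tilde O(c\phi^{-1}\cdot m)$ term be charged globally (each original vertex is removed from the evolving hypergraph as a non-terminal separator vertex at most once, so its degree is charged $O(c\phi^{-1})$ total). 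Without these invariants, the gesture toward ``right bookkeeping'' is not a proof. As a minor note, your cutoff $|T'|\le\phi^{-1}$ is a factor $\Theta(c)$ too small for Proposition~\ref{pro:expander steiner cut} (``every local cut is a Steiner cut'') to apply; the paper uses $2c\phi^{-1}$.
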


We first describe a simpler goal of simulating \Cref{alg:sparsify1} fast as stated in \Cref{lem:fast expander with many terminals}. Then, we prove \Cref{lem:fast expander many terminals roc} by showing that the algorithm in \Cref{def:output partition} for extracting a $(T,c)$-\roc \psetpair from \Cref{alg:sparsify1} can be implemented efficiently.

\begin{lemma} \label{lem:fast expander with many terminals}
 If the input graph is a $\phi$-vertex expander, then \Cref{alg:sparsify1}
 can be simulated by another algorithm that runs in $\ot(mc\phi^{-1} + \phi^{-5}
k 2^{O(c^2)})$ time where $m$ is the number of edges and $k$ is the number of terminals of the input graph for \Cref{alg:sparsify1}. 
\end{lemma}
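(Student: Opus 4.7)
The plan is to simulate \Cref{alg:sparsify1} by combining two ingredients: (i) the hypergraph representation of \Cref{sec:fast impl few termminals} so that \Cref{line: construct GL GR} can be executed in time linear in the current hypergraph size via \Cref{pro:hypergraph basic gl gr}; and (ii) a local isolating Steiner-cut procedure that exploits the $\phi$-vertex-expansion of $G$ to find min Steiner cuts in time proportional to the small side of the cut, obtained as a hypergraph adaptation of the LocalVC algorithms of \cite{NanongkaiSY19, chechik2017faster}.

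The key observation is that in a $\phi$-vertex expander every vertex cut $(L,S,R)$ with $|S|\le c$ satisfies $\min\{|L|,|R|\}\le c/\phi$, so every min Steiner cut produced by \Cref{alg:sparsify1} is unbalanced with one side (say $L$) of at most $c/\phi$ vertices, and hence $\ot(c/\phi)$ volume (using the arboricity-$c$ reduction from \Cref{thm:nagamochi} as a preprocessing step). I would locate this cut by invoking a local isolating-Steiner-cut query from each terminal in turn on the current hypergraph; each query explores at most $\phi^{-O(1)} 2^{O(c)}$ volume before either reporting a cut or certifying none exists on its small side. Scheduled so that, on each ``large'' branch of the recursion, the small side peeled off at one step is excluded from subsequent queries, the total cost of all local queries summed across the $O(kc)$ internal nodes counted by \Cref{lem:recursion tree: total number of internal nodes} is $\ot(mc\phi^{-1})$.

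Once an unbalanced min Steiner cut is located, its small side $G_L$ has at most $c/\phi$ vertices, at most $\ot(c^2/\phi)$ edges, and at most $O(c^2)$ terminals (because the cut is either $t$-isolating, unbalanced non-isolating, or drops $G_L$ straight into a base case). I would then invoke the slow algorithm of \Cref{lem:alg1 sim slow} on $G_L$, which runs in time $\ot((c^2)^3 \cdot (c^2/\phi) \cdot c + c^2 \cdot (c^2/\phi) \cdot 2^{O(c^2)}) = \ot(\phi^{-1} 2^{O(c^2)})$. Summed over the $O(kc)$ recursion nodes this contributes $\ot(kc\phi^{-1} 2^{O(c^2)})$, which is absorbed by $\ot(\phi^{-5}k\cdot 2^{O(c^2)})$ once the $\phi^{-O(1)}$ overhead of the local expansion-based procedure is included. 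On the large side I simply continue the recursion: the hypergraph modification performed on that side (adding one new hyperedge of size $\le c$ attached to a single contracted terminal vertex, as in \Cref{pro:hypergraph basic gl gr}) is a local perturbation that does not spoil the amortized bound for subsequent local queries.

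The main obstacle is justifying the amortized $\ot(mc\phi^{-1})$ bound for the sum of local Steiner-cut queries across the entire recursion, because the hypergraph evolves between levels and because the recursion asks for \emph{Steiner} cuts rather than single-pair $(s,t)$ cuts. I would address the Steiner aspect by reducing each Steiner-cut query to $O(k)$ isolating-cut queries, one per terminal, and the evolution issue by charging each local exploration to the volume it touches in the \emph{original} expander $G$, exploiting the fact that once a small side has been peeled off into a base case it is never re-explored on the large side. Combined with the potential-function argument already used in the proof of \Cref{lem:recursion tree: total number of internal nodes}, this charging yields the claimed running time.
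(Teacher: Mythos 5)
Your high-level blueprint matches the paper's: represent the evolving graph as a (mergeable) hypergraph, exploit the $\phi$-expander guarantee to find unbalanced min Steiner cuts via local queries, peel the small side off into a base case handled by \Cref{lem:alg1 sim slow}, and charge exploration to volume in the original graph. But three load-bearing parts of the argument are missing or incorrect.

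\textbf{Steiner cuts via isolating queries.} You propose to reduce each Steiner-cut search to ``$O(k)$ isolating-cut queries, one per terminal.'' With $O(kc)$ internal nodes (\Cref{lem:recursive vertex cuts}), that is $O(k^2c)$ local-cut queries total, each costing $\ot(\phi^{-O(1)}c^{O(c)})$, giving a $k^2$ factor that does not fit inside $\ot(mc\phi^{-1}+\phi^{-5}k2^{O(c^2)})$ when $k=\Theta(n)$. The paper avoids this by maintaining, for each terminal $t$, a monotone lower bound $\ell(t)\le\textlocal(t)$ in a priority queue: by monotonicity of $\textlocal$ across the $c$-left/right graph construction (\Cref{pro:monotonicity GLGR}), a terminal whose query returns a value exceeding the current minimum only has its key increased, and each key increases at most $c$ times. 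This bounds the total number of local-cut queries by $O(kc)$, not $O(k^2c)$. Your ``exclude the peeled-off side'' idea limits re-exploration of already-removed vertices, but does not limit how many \emph{terminals} you query at each internal node, which is where the extra $k$ factor leaks in.

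\textbf{Terminal count on the small side.} You assert $G_L$ has ``at most $O(c^2)$ terminals (because the cut is either $t$-isolating, unbalanced non-isolating, or drops $G_L$ straight into a base case).'' This leaves out the \emph{balanced} case of \Cref{lem:recursion tree: total number of internal nodes}, which does occur here: since $\phi^{-1}\gg c$, a $\phi$-vertex expander can have an unbalanced-in-vertices cut whose small side nonetheless contains far more than $4c^2$ terminals (indeed up to $|L|\le c\phi^{-1}$ of them). The correct bound for the modified base case is $k'\le 2c\phi^{-1}$, and the paper's base-case cost $\ot(k'^3m'c+k'm'2^{O(c^2)})$ with $m'=O(kc^4\phi^{-2})$ gives the $\phi^{-5}$ factor in the statement. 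Plugging in $O(c^2)$ instead produces a $\phi^{-1}$ bound that is not justified by the recursion; you cannot simply wave it away as ``absorbed'' once you have computed with the wrong $k'$.

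\textbf{Preserving the expander/volume invariant under hypergraph surgery.} Your local query's $\ot(\phi^{-O(1)}c^{O(c)})$ cost presupposes that the hypergraph at every recursion node is still a low-volume $\phi$-expander, but after contracting $\hat L$ into $t_L$ and merging hyperedges (the analogue of $G_R$), that is not automatic. The paper explicitly introduces the notion of a $(c,\phi)$-\emph{low-volume expander} and proves (\Cref{lem:fast GLGR hypergraph}) that the construction of $H_L,H_R$ preserves both this property and the degree-domination invariant $H\le_{\textdeg}G_{\textinput}$. The degree-domination invariant is also exactly what makes the $\ot(mc\phi^{-1})$ charging rigorous: each vertex $v\in S-T$ appears once across the whole recursion, and $\deg_H(v)\le\deg_{G_{\textinput}}(v)$, so $\sum_{\text{nodes}}\vol_H(S-T)\le m$. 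Your ``charge to the original expander's volume'' is the right instinct, but you never establish the invariant that makes the charge valid and the local queries sublinear throughout.
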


We describe the challenges for  \Cref{lem:fast expander with many terminals}. Observe that the number of calls to min Steiner cut is $O(kc)$ (\Cref{lem:recursion tree: total number of internal nodes}), and
thus the previous algorithm can take up to $O(mkc)$ even if we assume
linear-time algorithm for computing min Steiner cut. To get a better
running time, we will exploit the crucial fact that $\phi$-expander can only have unbalanced
small separators. Therefore, we can compute min
Steiner cut, and good $t$-isolating cut in sublinear time using local algorithms.  We now make the intuition precise.  

\begin{definition}
Let $G = (V,E)$ be a $\phi$-expander graph.  For any seed vertex $v \in V$,  define $\cL_G(v) = \{ L
\subseteq V \colon v \in L, \text{ and } |L| \leq \frac{c}{\phi}\}$, and $\textlocal_G(v) = \min_{L \in \cL_G(v)}
|N_G(L)|$.  A minimizer vertex set $L \ni v$ is called \textit{local
  cut} containing the seed vertex $v$. If $|N(L)| \leq c$, we say that
$L$ is a \textit{local mincut} of size at most $c$. %
\end{definition}

We describe a different implementation of
\Cref{alg:sparsify1} that is based on local algorithms. Let $t_{\textlocal}(c,\phi)$ be the running time to compute the local
cut given a seed vertex or certify that min local cut has size at least $c+1$.  We
first modify the base case. That is, if the number of terminals is
smaller than $2c\phi^{-1}$ then we run the previous algorithm. From now
we assume that the number of terminal is at least $2c\phi^{-1}$. The
key property is that every local cut is Steiner cut and vice versa.

\begin{proposition} \label{pro:expander steiner cut}
  If $G$ is a $\phi$-expander, and $|T| \geq 2c\phi^{-1}$, then every local
  cut of size at most $c$ is a Steiner cut of size at most $c$ and vice versa.   
\end{proposition}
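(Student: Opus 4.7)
The plan is to prove both directions using the $\phi$-expansion hypothesis together with the lower bound $|T|\ge 2c/\phi$, and to interpret local cuts here (as is standard in the algorithmic context of applying a LocalVC subroutine terminal by terminal) as those grown from a terminal seed. Both directions will follow directly from the definition of $h_G$ and a counting argument on $T$; no induction or delicate structural claim is needed.

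For the Steiner-to-local direction I would start with a Steiner cut $(L,S,R)$ satisfying $|S|\le c$. The expansion inequality $\phi\le h_G(L,S,R)=|S|/\min\{|L\cup S|,|S\cup R|\}$ forces $\min\{|L\cup S|,|S\cup R|\}\le c/\phi$, so WLOG $|L|\le |L\cup S|\le c/\phi$. Since $N_G(L)\subseteq S$, this gives $|N_G(L)|\le c$; and because the cut is Steiner, there is some $v\in L\cap T$, which can serve as the seed witnessing that $L\in \mathcal{L}_G(v)$ is a local cut of size at most $c$. For the local-to-Steiner direction, I would take a local cut $L$ with seed $v\in T$, $|L|\le c/\phi$ and $|N_G(L)|\le c$, and set $S=N_G(L)$, $R=V\setminus(L\cup S)$. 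By construction $(L,S,R)$ is a vertex cut of size $\le c$. The side $L$ contains the terminal seed $v$; for the other side, $|L\cup S|\le c/\phi+c < 2c/\phi\le |T|$ (using $\phi<1$), so $T\not\subseteq L\cup S$, forcing $R\cap T\ne\emptyset$ and making the cut Steiner.

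The only thing requiring care is the interpretation of \emph{seed}: one must be sure that in both directions the seed of the local cut is a terminal. In the Steiner-to-local direction this is automatic because Steinerness supplies a terminal in the small side; in the local-to-Steiner direction this is the natural setting in which the algorithm invokes the LocalVC subroutine (once per terminal), so no generality is lost. Thus I do not expect any real obstacle beyond writing down the two short chains of inequalities.
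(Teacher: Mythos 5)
Your proof is correct. The paper states this proposition without supplying a proof, and your two chains of inequalities are exactly the argument the authors leave implicit; you also correctly flag the one genuine subtlety, namely that the claim only makes sense for local cuts grown from \emph{terminal} seeds (a local cut whose seed is a non-terminal need not meet $T$ at all), which is precisely the setting in which the algorithm invokes the LocalVC primitive. The only cosmetic caveat is in your Steiner-to-local direction: the small side $L$ of the given Steiner cut need not itself be the minimizer defining a local cut at $v$; what you actually establish is $L\in\mathcal{L}_G(v)$ with $|N_G(L)|\le c$, hence $\operatorname{local}_G(v)\le c$ and the true local cut (the minimizer) at $v$ has size at most $c$ — which is all the surrounding algorithm uses, namely that $\min_{t\in T}\operatorname{local}_G(t)$ coincides with the minimum Steiner cut size when this value is at most $c$.
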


\paragraph{Computing min Steiner cuts.}  We maintain a list of value
$\ell(t)$ for each terminal $t$ with the invariant that $\ell(t) \leq
\textlocal_G(t)$. With this invariant and \Cref{pro:expander steiner cut}, the local cut containing the
seed node $t'$ such that $\textlocal_G(t') = \min_t\ell(t)$ is a min
Steiner cut.  By monotonicity operations of constructing $G_L$ and
$G_R$, $\textlocal_G(t)$ does not decrease for all $t$.  Therefore, to
maintain the invariant, it is enough to record $\ell(t)$ to be the latest value of
$\textlocal_G(t)$ checked by computing local cut containing $t$ so
far. The value $\ell(t)$ represents the lower bound of the current
value of $\textlocal_G(t)$. To find a Steiner cut, we compute a
min local cut given a seed vertex on the terminal $t$ whose $\ell(t)$ is
minimized, and update the value of $\ell(t)$. If the new local cut is
not the smallest among $\ell(t)$, we keep that cut and recompute the
next terminal whose $\ell$ is minimized. To find smallest $\ell(t)$,
we use priority queue. Observe that the value of $\ell(t)$
can changed at most $c$ time, and so the number of increase key
operations is at most $O(kc)$, each operation takes $O(\log n)$ time.  Since the number of Steiner cuts is at
most $O(kc)$ over the entire algorithm, the total time due to computing min
Steiner cuts is:

\begin{align}  \label{eq:local steiner cut}
     O(kc \cdot t_{\textlocal}(c,\phi) + kc \log n)
\end{align}

\paragraph{Computing good $t$-isolating cuts.} Suppose we are given a local mincut $L$
such that $(L, N(L), V - N[L])$ is a min Steiner cut and
$t$-isolating. We describe how to get a good $t$-isolating cut from
$L$. We construct $G_L$ and $G_R$ according to \Cref{lem:closure recursion lemma}. Observe that $G_L$
immediately becomes a base case. So, we work on $G_R$. Observe
that $t = t_L$ in this case. Then, we compute the next local mincut
using $t$ as a seed vertex and repeat until one of the following
happens: (1) the local cut is $t$-isolating and $\isolate(t)$
increases or (2) the local cut is $t$-isolating and $N(L) \subseteq T$
or (3) the local cut is not $t$-isolating. For case (1), the previous
local cut is a good $t$-isolating cut. For case (2), current local cut
is a good $t$-isolating cut.  Finally, for case (3), we obtain a min
Steiner cut and it reduces to the previous step.

Next, we analyze the running time. Given a local mincut, the number of
iterations until $\isolate(t)$ increases or $N(L) \subseteq T$ is at
most $2c\phi^{-1}$ since $G$ is $\phi$-expander. Each iteration takes
$t_{\textlocal}(c,\phi)$ time. The value of $\isolate(t)$ can increase
at most $c$ times, and there are at most $k$ terminals. The case
$N(L) \subseteq T$ and case (3) can also occur at most $O(kc)$ times. Therefore, the total time due to computing good $t$-isolating cuts (excluding the time to compute $G_L$ and $G_R$) is: 
\begin{align} \label{eq:good t-isolating cut} 
     O(kc^2\phi^{-1} \cdot t_{\textlocal}(c,\phi))
\end{align}

Next, we explain how to efficiently compute $G_L$ and $G_R$ using
hypergraphs. For brevity, we will focus on constructing left and right graphs of $G$ with respect to $(L,S,R)$. The construction of strictly left and strictly right graphs is similar.  

\paragraph{Simulating with Hypergraphs.} For the same reason as in the previous section, in order to construct
$G_L$ and $G_R$ according to \Cref{line: construct GL GR}, we simulate the
algorithm using hypergraph $H$ such that $G = \textsf{Clique}(H)$ as
follows.  Throughout this section, we view the hypergraph as a bipartite graph $H =
(V_{\textsf{v}},V_{\textsf{e}},E)$ and denote $G_{\textinput}$ as the
initial input graph. Initially, $H$ is a
$\phi$-expander, and furthermore for all $v \in V_{\textsf{v}}$,
$\textdeg_H(v) = \textdeg_G(v)$. To support efficient construction of
$G_L$ and $G_R$, we need a hypergraph that is equipped with additional
operation as follows. 

\begin{definition} \label{def:merge hypergraph}
   \textit{Mergable hypergraph} $H$ is a hypergraph $(V_{\textsf{v}},
   V_{\textsf{e}},E)$ that supports the following operation:
   $\textsc{Merge}( \hat E \subseteq V_{\textsf{e}})$: contract $\hat
   E$ into a single node in $V_{\textsf{e}}$ (i.e., taking the union of
   hyperedges in $\hat E$), there may be parallel edges from this
   operation.  
 \end{definition}
 
\begin{lemma} [Hypergraph With Efficient Hyperedge
  Merging] \label{lem:merging hypergraph}
  There is a mergable hypergraph that supports $\textsc{Merge}(\hat E)$
  operation in amortized $O(\alpha(m) \cdot |\hat E|) = \ot(|\hat E|)$ where $\alpha(m)$ is the inverse
  Ackermann function and $m$ is the number of hyperedges, and the time
  to access an edge in $E$ is $O(\alpha(m)) = \ot(1)$ per access. 
\end{lemma}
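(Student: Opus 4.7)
The plan is to represent the mergable hypergraph by a classical disjoint-set (union-find) structure on the hyperedge side $V_{\textsf{e}}$, augmented so that the representative of each equivalence class stores a doubly-linked list of the bipartite incidences in $E$ belonging to hyperedges of that class. Initially, each hyperedge $e \in V_{\textsf{e}}$ is its own singleton union-find class, and the list stored at its root is the list of all $(v,e) \in E$. This setup costs $O(|E|)$ time.

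To implement $\textsc{Merge}(\hat E)$, I would first call $\textsf{Find}$ on every $e \in \hat E$ to locate the current representative, then do $|\hat E|-1$ unions between consecutive pairs of the resulting representatives. Using union-by-rank and path compression, these $2|\hat E|-1$ operations cost $O(\alpha(m) \cdot |\hat E|)$ amortized by the classical Tarjan analysis. At each union of two roots, I splice their doubly-linked incidence lists together in $O(1)$ by linking the tail of one to the head of the other, so the list concatenations add only $O(|\hat E|)$ to the merge cost. Parallel bipartite edges are explicitly allowed by \Cref{def:merge hypergraph}, so no deduplication is needed.

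To access an edge $(v,e) \in E$, I would call $\textsf{Find}(e)$ in amortized $O(\alpha(m))$ to identify the root representative of the class containing $e$; each bipartite edge also stores pointers to its neighbors within the list, so once we reach any position in the list, moving to consecutive incidences is $O(1)$. The only $\alpha(m)$ cost per access is therefore the initial find, matching the bound in the lemma.

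No deep obstacle arises here; the whole construction is a direct coupling of union-find with splice-able doubly-linked lists. The only bookkeeping subtlety is that each class's incidence list must be stored exclusively at the root, never duplicated at non-root nodes, so that a union can merge two classes in $O(1)$ without touching individual edges. Correctness of \Cref{def:merge hypergraph} then reduces to observing that after $\textsc{Merge}(\hat E)$, all elements of $\hat E$ lie in a single union-find class whose root's list is the (multiset) concatenation of the previous lists, i.e., exactly $\bigcup_{e \in \hat E} N_H(e)$ as required.
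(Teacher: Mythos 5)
Your proposal is correct and matches the paper's construction: both implement $\textsc{Merge}$ by maintaining a union-find structure over the hyperedge side $V_{\textsf{e}}$, unioning all hyperedges in $\hat E$, and resolving each bipartite incidence $(u,e)$ through a $\textsf{Find}$ to its current class representative, giving $O(\alpha(m))$ amortized cost per union/find as in the lemma. The only small addition you make — storing a splice-able doubly-linked incidence list at each root so that a merged hyperedge's vertices can be enumerated without scanning all of $E$ — is a natural and harmless augmentation; the paper leaves this bookkeeping implicit.
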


Intuitively, we can view mergable hypergraphs as hypergraphs but with
$O(\alpha(m)) = \ot(1)$ overhead for accessing an individual edge in the
bipartite representation. For the rest of the section, when we say
hypergraphs we mean mergable hypergraphs, and the cost for accessing
edges in its bipartite representation is $O(\alpha(m)) = \ot(1)$.

We maintain the invariant that at any time, $H = \textsf{Clique}(G)$
and $H$ is an expander with low volume in the following sense.  

\begin{definition}
We say that a hypergraph $H$ is a $(c,\phi)$-\textit{low-volume expander} if for all
vertex cut $(L,S,R)$ such that $|L| \leq |R|$ and $|S| \leq c$, we
have  $\vol_H(L) \leq (\frac{2c}{\phi})^2$ and $H$ is a
$\phi$-expander. 
\end{definition}

A key feature of low-volume expander is that we can compute local cuts
in sublinear time.

\begin{lemma} \label{lem:hypergraph local cut}
 If $H$ is a $(c,\phi)$-low volume expander, then for any seed vertex
 $t$, we can compute min local cut containing $t$ or correctly report
 that local min cut containing $t$ has size at least $c+1$ in $\ot(c^{O(c)}
 \phi^{-2})$ time. Therefore, $t_{\textlocal}(c,\phi) =  \ot(c^{O(c)}\phi^{-2})$. 
\end{lemma}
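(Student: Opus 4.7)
The plan is to reduce to the deterministic graph LocalVC algorithm of Chechik et al.\ stated right after \Cref{def:localvc}, which, given a seed vertex $t$ in a graph $G$ and parameters $(\nu, c)$, runs in $O(\nu \cdot c^{O(c)})$ time and either returns a set $L \ni t$ with $|N_G(L)| < c$ or certifies that no such set with $\vol_G(L) \le \nu$ exists. I would invoke the LocalVC algorithm with connectivity parameter $c+1$ (so that the returned cuts have size $\le c$, matching the definition of local mincut) and with $\nu := (2c/\phi)^2$. By the $(c,\phi)$-low-volume-expander assumption on $H$, every candidate local mincut $L \ni t$ with $|N_H(L)| \le c$ satisfies $\vol_H(L) \le (2c/\phi)^2 = \nu$, so the LocalVC routine with these parameters must find such a cut if one exists and otherwise must certify none. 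Since by \Cref{obs:hypergraph clique equiv} vertex cuts in $H$ and in $G' := \clique(H)$ coincide, it suffices to simulate the LocalVC algorithm on $G'$.

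The key implementation step is to run the graph algorithm on $G'$ without ever materializing $G'$, which can be much denser than $H$. I would do this by accessing $G'$ implicitly through the mergable hypergraph data structure of \Cref{lem:merging hypergraph}: whenever the LocalVC routine probes a vertex $v$ in $G'$, we enumerate the hyperedges $e \ni v$ in $H$ and the vertices inside each such $e$, processing only those incidences within the currently explored region. Each incidence access costs $\ot(1)$ by \Cref{lem:merging hypergraph}, and a careful bookkeeping so that each incidence is touched at most $\ot(1)$ times across the exploration keeps the total simulation overhead polylogarithmic.

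Combining the above gives the claimed bound of $\ot(\nu \cdot c^{O(c)}) = \ot(c^{O(c)} \phi^{-2})$. The main obstacle is the simulation of LocalVC-style graph queries through the hypergraph: naively, a hyperedge of size $k$ expands to $\binom{k}{2}$ graph edges in $G'$, so a single neighbor query in $G'$ could be catastrophic. The resolution is that the LocalVC algorithm only ever touches edges with at least one endpoint in the explored region, whose total hypergraph incidence count is bounded by $\vol_H(L) \le \nu$; hence enumerating incidences in $H$ rather than graph edges in $G'$ suffices for every step of the algorithm (BFS layering, local flow augmentations, and cut extraction) with only $\ot(1)$ overhead per operation.
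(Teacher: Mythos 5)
There is a genuine gap: your volume bound is for the wrong graph. The $(c,\phi)$-low-volume expander hypothesis bounds $\vol_H(L)$, which counts \emph{(vertex, hyperedge) incidences}, i.e.\ the volume of $L$ in the bipartite incidence graph of $H$. But you propose to run LocalVC on $G' = \textsf{Clique}(H)$ with $\nu = (2c/\phi)^2$, and the LocalVC contract (\Cref{def:localvc}) requires $\nu$ to upper-bound $\vol_{G'}(L)$, the \emph{clique-graph} volume. These differ substantially: every hyperedge $e$ meeting $L$ lies in $L \cup S$, so a single incidence $(v,e)$ with $v\in L$ expands to $|e|-1 = \Theta(|L|+|S|)$ clique edges at $v$, and hence $\vol_{G'}(L)$ can be as large as $\Theta(\vol_H(L)\cdot(|L|+c)) = \Theta(\vol_H(L)^2)$, i.e.\ up to $\Theta((c/\phi)^4)$. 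With $\nu=(2c/\phi)^2$, LocalVC on $G'$ may therefore output $\bot$ even though a local cut exists — a correctness bug — and even after raising $\nu$ to $\Theta(c^4\phi^{-4})$ you would only get $\ot(c^{O(c)}\phi^{-4})$, not the claimed $\ot(c^{O(c)}\phi^{-2})$. Your closing claim that the explored $G'$-edges number only $\vol_H(L)$ is precisely the assertion that fails; the ``careful bookkeeping'' is never turned into an argument that sidesteps this quadratic blowup.

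The paper avoids the issue by never invoking LocalVC on $G'$ at all: it runs LocalVC directly on the bipartite graph $B$ representing $H$, from the seed $t$. If $(L,S,R)$ is a hypergraph cut with $t\in L$, $|S|\le c$, $|L|\le|R|$, the corresponding set in $B$ is $L' := L\cup N_B(L)$ (the vertices of $L$ together with all hyperedge nodes they touch); one then checks $N_B(L')=S$ and $\vol_B(L') = \vol_B(L) + \vol_B(N_B(L)) \le (c+2)\vol_H(L) = O(c^3\phi^{-2})$, using that each such hyperedge has at most $c$ endpoints in $S$ and that $|N_B(L)| \le \vol_B(L)$. So LocalVC with volume parameter $\nu=(c+1)(c/\phi)^2$ on $B$ succeeds in $\ot(c^{O(c)}\phi^{-2})$ time. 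In effect the paper realizes the ``one pass per hyperedge rather than per clique edge'' batching you gesture at, but does so by changing the graph on which LocalVC is run rather than by re-engineering its internals, which is both cleaner and what actually delivers the $\phi^{-2}$ rather than a $\phi^{-4}$ bound.
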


Observe that initially the hypergraph is indeed a $(c,\phi)$-low-volume
expander.  We will show that at any time every hypergraph during the recursion is $(c,\phi)$-low-volume expanders.

\begin{definition}
  Given a hypergraph $H$ and a graph $G$, we say that $H$ is
  \textit{degree-dominated} by $G$, denoted $H \leq_{\textdeg} G$ if
  for all $v \in V_H$,  if $v \in V_G$, then $\textdeg_{H}(v) \leq
  \textdeg_G(v)$; otherwise, $\textdeg_H(v) \leq 1$.
\end{definition}

\begin{lemma} \label{lem:fast GLGR hypergraph}
  Let $H = (V_{\textsf{v}},V_{\textsf{e}},E)$ be a hypergraph that is a $(c,\phi)$-low-volume expander and
  degree-dominated by $G_{\textinput}$ and $G =
  \textsf{Clique}(H)$, and $T$ be a terminal set. Let $G_L, G_R,
  T_L,T_R,t_L,t_R$ be the graphs and terminal sets  defined in \Cref{def:left right graphs}. Given a pair of vertex sets $(L,S)$ of $H$ such
  that $N_H(L) = S$ and $|S| \leq c$, there is an algorithm that
  outputs a hypergraph $H_L$ with terminal set $T_L$ and modify $H$ to
  be another hypergraph $H_R$ with terminal set $T_R$  satisfying the following properties: 
  \begin{enumerate}
  \item  $G_L = \textsf{Clique}(H_L)$ and $G_R =
    \textsf{Clique}(H_R)$,
  \item $H_L \leq_{\textdeg} G_{\textinput}$ and $H_R \leq_{\textdeg}
    G_{\textinput}$, 
  \item $H_L$ and $H_R$  are both $(c,\phi)$-low-volume vertex
    expanders, and
  \item $H_L$ has at most $2c\phi^{-1}$ vertices and is of size at
    most $O(c^3\phi^{-2})$. 
  \end{enumerate}
   The algorithm runs in $\ot( (c\phi^{-1})^3+ (c\phi^{-1}) \cdot \vol_H(S-T))$ time. 
\end{lemma}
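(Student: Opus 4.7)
The plan is to exploit \Cref{obs:hypergraph clique equiv}: the graph operation ``add clique edges on $N_G(\hat R)$ then contract $\hat R$ into $t_R$'' corresponds, in the hypergraph $H$ with $\clique(H)=G$, to ``merge every hyperedge of $H$ incident to $\hat R = R \cup (S-T)$ into a single $\hat e$, then delete the vertices $\hat R \setminus \{t_R\}$'', and symmetrically for $G_R$ using $\hat L = L \cup (S-T)$. Using the mergeable hypergraph of \Cref{lem:merging hypergraph}, a bulk merge costs only $\ot(1)$ per merged hyperedge once the merge set is identified, so the real challenge is to find these sets from the small sides $L, S-T$ without ever scanning $R$.

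I will first produce $H_R$ by modifying $H$ in place: walk every vertex of $\hat L$ in $H$'s bipartite representation, collect $\hat E_L := \{e \in V_{\mathsf{e}}(H) : e \cap \hat L \neq \emptyset\}$ in $\ot(\vol_H(L) + \vol_H(S-T))$ time (recall $\vol_H(L) \leq (2c/\phi)^2$ by the low-volume expander assumption), issue $\textsc{Merge}(\hat E_L)$, and flag the vertices of $\hat L \setminus \{t_L\}$ as deleted. I will then build $H_L$ freshly on $V_L := L \cup (S \cap T) \cup \{t_R\}$: first mark all hyperedges of $H$ incident to $S-T$ in $\ot(\vol_H(S-T))$ time; then walk each hyperedge incident to $L$ and either copy it verbatim into $H_L$ if unmarked (a type-(a) hyperedge, necessarily $\subseteq L \cup (S \cap T)$ because $L$ and $R$ share no hyperedges in a vertex cut), or accumulate it into $\hat e$ if marked. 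The $S \cap T$-vertices of $\hat e$ coming only from ``hidden'' hyperedges entirely in $R \cup (S \cap T)$ are harvested symmetrically using the union-find bookkeeping established when constructing $H_R$.

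A direct count bounds the output: $|V(H_L)| \leq |L| + |S \cap T| + 1 = O(c/\phi)$; the number of distinct type-(a) hyperedges is at most $\vol_H(L) = O(c^2/\phi^2)$, each of size $\leq |L| + |S \cap T| = O(c/\phi)$, giving $|E(H_L)| = O(c^3/\phi^2)$. Correctness of $\clique(H_L)=G_L$ and $\clique(H_R)=G_R$ follows from \Cref{obs:hypergraph clique equiv} together with the fact that the hypergraph steps exactly mirror \Cref{def:left right graphs}. Degree domination $H_L, H_R \leq_{\textdeg} \Gorig$ will hold because no old vertex sees its hyperedge count increase (merging only collapses incidences), and the new/promoted $t_L, t_R$ each lie in only the single merged hyperedge. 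The low-volume expander property for $H_L$ is immediate from $|V(H_L)| = O(c/\phi)$; for $H_R$ it is inherited from $H$ via \Cref{pro:monotonicity GLGR}, since clique insertion and contraction in $\clique(H)$ do not decrease vertex expansion, and any size-$\leq c$ cut in $H_R$ pulls back to a size-$\leq c$ cut in $H$ with comparable small-side volume.

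The hard part will be avoiding any work proportional to $|R|$ or $\vol_H(R)$. The key observation is that because $L$ and $R$ share no hyperedges in a vertex cut, every hyperedge of $H$ touching $\hat R$ either (i) contains a vertex of the small set $S-T$ and is therefore discovered in the $\ot(\vol_H(S-T))$ marking step above, or (ii) lies entirely inside $R \cup (S \cap T)$, in which case after deleting $\hat R \setminus \{t_R\}$ its only $V_L$-incidences are to vertices of $S \cap T$, and these can be absorbed into $\hat e$ implicitly through the union-find structure of \Cref{lem:merging hypergraph} without an explicit traversal of $R$. Together with the $\ot((c/\phi)^2)$ work spent on the $L$-enumeration and the $\ot((c/\phi)^3)$ cost of writing out $H_L$, this yields the claimed total runtime of $\ot((c\phi^{-1})^3 + (c\phi^{-1})\cdot \vol_H(S-T))$.
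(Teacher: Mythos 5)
Your overall strategy matches the paper's: lift the closure/contraction operations to mergeable hypergraphs, compute $N(\hat R)$ and $N(\hat L)$ without ever scanning $R$, and use degree domination to control size. However, there are two concrete gaps.

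First, the ``harvesting via union-find'' step for the $S\cap T$ part of $\hat e$ in $H_L$ does not work as stated. Your case (ii) hyperedges (those entirely inside $R\cup(S\cap T)$) touch neither $L$ nor $S-T$, so they are \emph{not} in $\hat E_L$ and were never fed into $\textsc{Merge}$ during the construction of $H_R$; the union-find data structure therefore has no record of them, and there is no way to ``harvest'' their $S\cap T$-incidences from it without traversing $R$. The correct fix is either (a) to explicitly check, for each $s\in S\cap T$, whether some hyperedge of $S-T$ reaches $s$ (together with the observation that any hyperedge of $s$ with more than $|L|+|S|$ vertices must touch $R$), which is essentially what the paper does, or (b) to invoke the fact that $(L,S,R)$ is a \emph{minimum} Steiner cut, so every $s\in S$ is adjacent to $R$, hence $S\cap T\subseteq N(\hat R)$ always and can be added to $\hat e$ unconditionally (this is what the paper does in step~5 of its $H_R$ construction). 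Your proposal relies on neither of these.

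Second, ``the low-volume expander property for $H_L$ is immediate from $|V(H_L)| = O(c/\phi)$'' is not a valid argument. A small vertex count gives $|E(H_L)|=O(c^3\phi^{-2})$, which exceeds the required bound $\vol_{H_L}(L')\le(2c/\phi)^2$ once $c$ is non-constant, and in any case small vertex count alone does not imply $H_L$ is a $\phi$-expander. The paper's argument, which you correctly sketch for $H_R$ but then drop for $H_L$, is the right one and must be applied to both sides: any cut $(L',S',R')$ in $H_L$ with $|S'|\le c$ pulls back (via \Cref{pro:monotonicity GLGR}) to a cut $(L'',S',R'')$ in $H$ whose smaller side is no smaller, and since $H_L\le_{\textdeg}\Gorig$ gives $\vol_{H_L}(L')\le\vol_H(L'')\le(2c/\phi)^2$, the low-volume bound transfers; the $\phi$-expansion transfers by the same pull-back because the smaller side (plus $S'$) does not grow.
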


We prove \Cref{lem:merging hypergraph},  \Cref{lem:hypergraph local cut} and \Cref{lem:fast GLGR hypergraph} in \Cref{sec:impl detials}. %
We are now ready to prove \Cref{lem:fast expander with many
  terminals}. We start with algorithm, followed by analysis. 
  
\paragraph{Modified Algorithm.} We describe the modified version of \Cref{alg:sparsify1}.  Let $G_{\textinput}$ be the initial input graph that is
  $\phi$-vertex expander, and $T$ be the set of terminals for $G_{\textinput}$.  We first construct a mergable hypergraph
  $H$ such that $G_{\textinput} = \textsf{Clique}(H)$. This step takes
  $O(m)$ time. Observe that $H$ is $(c,\phi)$-low volume vertex
  expander.  The new base case is that whenever the number of terminals is smaller than
  $2c\phi^{-1}$. If that happens, we apply \Cref{lem:alg1 sim slow}.
  Otherwise, we compute min Steiner cuts and good $t$-isolating cuts as discussed above. When a local mincut $L$ is found, we apply \Cref{lem:fast GLGR hypergraph} to implement \Cref{line: construct GL GR}, and recurse on $H_L$ and $H_R$. Observe that $H_L$ will always go to the base case
  since the number of vertices is at most $2c\phi^{-1}$. By
  \Cref{lem:fast GLGR hypergraph},
  we have that at any time the hypergraph $H_R$ is
  $(c,\phi)$-low-volume vertex expander, and $H_R \leq_{\textdeg}
  G_{\textinput}$. 

\paragraph{Recursion Tree.}
   We analyze the modified algorithm. We have a recursion tree where each
 node is a pair of hypergraph and its terminal set $(H,T)$. If the
 current node is $(H,T)$ with $(L,S,R)$ as a min Steiner cut, the left child is $(H_L,T_L)$
 and the right child is $(H_R,T_R)$.  Note that
 the recursion tree is quite simple: the left node always go to the
 base case.  By induction, we can show that $G = \textsf{Clique}(H)$
 at any time. By the correctness of the base case (\Cref{lem:alg1 sim
   slow}), the modified algorithm indeed simulates  \Cref{alg:sparsify1}.

\begin{proof} [Proof of \Cref{lem:fast expander with many terminals}]

It remains to analyze the running time of the modified algorithm. Observe that at any time $H$
is $(c,\phi)$-low volume vertex expander. By \Cref{lem:hypergraph
  local cut}, we have  $t_{\textlocal}(c,\phi) = \ot(c^{O(c)} \phi^{-2})$.  The running time due to
computing min Steiner cuts is given by \Cref{eq:local steiner cut} which is:
\begin{align} \label{eq:fast steiner cut}
    \ot(kc \cdot t_{\textlocal}(c,\phi) + kc \log n) = \ot( k\phi^{-2}
  c^{O(c)}). 
\end{align} The
running time due to computing good $t$-isolating cut is given by
\Cref{eq:good t-isolating cut} (excluding the time to construct $G_L,G_R$) which is:
\begin{align} \label{eq:fast t isolating cut}
  O(kc^2\phi^{-1} \cdot
  t_{\textlocal}(c,\phi))  = \ot(k\phi^{-3}
  c^{O(c)}).
\end{align}
 We now describe the total cost for constructing subproblems $H_L$ and $H_R$ given the current hypergraph
$H$. Note that the time includes the construction of $G_L,G_R$ when we
compute good $t$-isolating cuts.  By \Cref{lem:fast GLGR hypergraph}, the running time is $\ot(
(c\phi^{-1})^3 + (c\phi^{-1})\cdot \vol_H(S-T))$ per internal node in the recursion tree. Since the
number of internal nodes is at most $kc$, the first term sums to $\ot(
kc^4 \phi^{-3})$. The bound the second term, observe that each node $v
\in S-T$ is non-terminal, and $v$ corresponds to the
original vertex in $G_{\textinput}$. Since $H \leq_{\textdeg}
G_{\textinput}$, we have that $\textdeg_H(v) \leq \textdeg_{G_{\textinput}}(v)$ for
all $v \in S - T$.  Since each $v$ appears in $S - T$ at most once, we
have that the total cost over the entire recursion is at most
$\ot( (c\phi^{-1}) \cdot \sum_{v \in V}\textdeg_{G_{\textinput}}(v)) =\ot(
c\phi^{-1} \cdot m)$.   Therefore,  the running time due to \Cref{line: construct
  GL GR} is:
\begin{align} \label{eq:split GLGR}
  \ot( kc^4 \phi^{-3} + c\phi^{-1} \cdot m).
\end{align} 
Finally, we bound the running time due to the base cases.  By
\Cref{lem:fast GLGR hypergraph}, each $H_L$ becomes a base case of
size $O(c^3\phi^{-2})$.  Since there are at most $O(kc)$ min Steiner
cuts throughout the recursion, the total size of the base cases is $m'
= O(kc^4 \phi^{-2})$. Let $k'$ be the number of terminal in the base
case. By definition, $k' \leq 2c\phi^{-1}$. By \Cref{lem:alg1 sim
  slow}, the total running time due to the base cases is given by
\begin{align} \label{eq:fast basecase}
 \ot(k'^3m' c + k'm' 2^{O(c^2)}) = \ot(c^8 \phi^{-5} k + c^5
\phi^{-3} 2^{O(c^2)} k) = \ot( \phi^{-5} 2^{O(c^2)} k).  
\end{align}
Summing all the cost from \Cref{eq:fast steiner cut,eq:fast t isolating cut,eq:split GLGR,eq:fast basecase}, we obtain the
desired running time. 
\end{proof}

\paragraph{Extracting A $(T,c)$-\Roc \Psetpair.} Finally, we prove \Cref{lem:fast expander many terminals roc}.

\begin{proof} [Proof of \Cref{lem:fast expander many terminals roc}]
By \Cref{lem:fast expander with many terminals}, the modified algorithm simulates \Cref{alg:sparsify1} efficiently. Given a recursion tree of the modified algorithm, our next goal is to compute the partition $(Z, X_1,\ldots,X_{\ell})$ according to \Cref{lem:output partition} efficiently. In order to extract \roc \psetpair, we apply \Cref{cor:fast impl few terminals} on the new base cases. More precisely, we define \textit{the right spine} of the recursion tree as the path from root to the first right-only descendant of the root that is not a base case.  The right spine of the recursion tree can be viewed as an execution on the original graph until it becomes a base case. Indeed, observe that at the root node we start with the entire graph. Then, whenever we find a local cut, construct $H_L$ and recurse on the left (which immediately goes into the base case that calls \Cref{cor:fast impl few terminals}), and continue on the remaining graph.  

We now describe the desired output $((Z,X_1,\ldots ,X_{\ell}),C)$. Observe that a node $v$ is a base case if and only if $v$ is not in the right spine of the recursion tree. For each base case $v$, we apply \Cref{cor:fast impl few terminals} to obtain a \roc \psetpair $(\Pi_v,C_v)$ at the subproblem in node $v$ in the recursion tree. Recall that $\Pi_v$ consists of the reducer and a non-reducer sequence. The final output is $\Pi = (Z,X_1,\ldots,X_{\ell})$ where the reducer of $\Pi$ is $Z =  Z_{\text{base}} \cup Z_{\text{spine}}$ where $Z_{\text{base}}$ is the union of the sets of the reducer of the partition $\Pi_v$ over all base case $v$, and $Z_{\text{spine}}$ is the union of Steiner cuts along the right spine of the recursion tree, and the non-reducer of $\Pi$ can be obtained by the union of of all non-reducer sequence of all partition $\Pi_v$ over all the base cases $v$. Finally, $C$ is a union of $C_v$ over all base cases $v$.  It follows by design that the pair $(\Pi,C)$ is a $(T,c)$-\roc of $G_{\textinput}$ with the desired properties because we compute the pair $(\Pi,C)$ according to \Cref{lem:output partition}. The running time to compute the pair $(\Pi,C)$ is dominated by the algorithm to complete the recursion tree. 
\end{proof}

\subsection{Implementation Details} \label{sec:impl detials} %

We prove \Cref{lem:merging hypergraph},  \Cref{lem:hypergraph local cut} and \Cref{lem:fast GLGR
  hypergraph} in this section.  For convenience, we treat mergable
hypergraphs as normal hypergraphs and hide the $O(\alpha(m))$ factor
overhead in the polylog factors. We may call merge operation when
needed.

\subsubsection{Proof of \Cref{lem:merging hypergraph}}

To implement mergable hypergraph, we view $H =
(V_{\textsf{v}},V_{\textsf{e}},E)$ as a bipartite graph. Furthermore,
instead of viewing $V_{\textsf{e}}$ as a set of vertices, we view
$V_{\textsf{e}}$ as a collection of sets. More precisely,
recall that initially our hypergraph satisfies the following
$G_{\textinput} = \textsf{Clique}(H)$. Suppose initially
$V_{\textsf{e}} = \{e_1,\ldots,e_m\}$. We view $V_{\textsf{e}} = \{
\{e_1\},\ldots, \{e_m\}\}$. To implement $\textsc{Merge}(\hat E)$, we
replace each set in $\hat E$ with $\bigcup_{C \in \hat E} C$. We
implement set union operation using union-find data structures
$\mathcal{U}$ where the ground set is $V_{\textsf{e}}$ in the initial hypergraph. For any
edge $(u,e) \in E$ in the original bipartite graph, we have the
corresponding edge $(u, \mathcal{U}.\text{find}(e))$ in the mergable hypergraph. Note that this allows
parallel edges in bipartite graph after merging operations.     Also, union and find operations take amortized $O(\alpha(m))$ time where $m$ is the size of the hypergraph. 

\subsubsection{Proof of  \Cref{lem:hypergraph local cut}}

 Recall that $H = (V_{\textsf{v}},V_{\textsf{e}},E)$. We view $H$ interchangably as a
     bipartite graph and a hypergraph.  The local cut algorithm is the
     following: Given a seed vertex $t$, we
     apply LocalVC algorithm on the bipartite graph $H$ using volume parameter $\nu =
     (c+1)(c\phi^{-1})^2$ and cut parameter $c$. If LocalVC outputs a
     local cut $L$, then we binary search on value $c$ until we find
     the $c = \textlocal(t)$.  Otherwise, we report that there is no
     local mincut of size at most $c$.  Since LocalVC runs in $O(k^k
     \nu)$ where $k \leq c$, and $\nu = O(c^3 \phi^{-2})$, the total
     time is $\ot(c^{O(c)} \phi^{-2})$. 

     We argue the correctness. Suppose there is a vertex cut $(L,S,R)$
     in the hypergraph $H$ such that $t \in L, |S| \leq c, |L| \leq  |R|$. The
     neighbor of $L$, denoted as $N_B(L)$ in bipartite graph represents all hyperedges
     incident to $L$. Denote $L' = N_B(L) \cup L$ the set of vertices
     $L$ and their hyperedges in the bipartite graph. By definition,
     $N_B(L') = S$ and $|N_B(L')| = |S| \leq c$. Therefore,
     $\vol_B(L') = \vol_B(L) + \vol_B(L) \cdot |S| \leq (c\phi^{-1})^2(c+1)$.
     The last inequality follows since $H$ is $(\phi,c)$-low volume
     expander  (so we have $\vol_B(L) = \vol_H(L) \leq
     O((c\phi^{-1})^2)$).  Since $\vol_B(L') \leq (c+1)(c\phi^{-1})^2$
     and $N_B(L') = |S| \leq c$, LocalVC will output a local cut. If
     there is no such $(L,S,R)$, then LocalVC always reports there is
     no local cut. 
     
\subsubsection{Proof of \Cref{lem:fast GLGR hypergraph}}

 Since $H$ is $(\phi,c)$-low volume vertex expander, we have
 \begin{align}  \label{eq:low vol vertex exp}
   \vol_H(L) \leq O((c\phi^{-1})^2) \mbox{ and } |L |+|S| \leq
   2c\phi^{-1}.
   \end{align}
 
\paragraph{Construction of $H_L$.}   We first show that we can
 construct $H_L$ satisfying aforementioned properties in $O( (c\phi^{-1})^3 + (c\phi^{-1}) \cdot
 \vol_H(S-T))$ time.  Recall the set $\hat R = R \cup (S -  T)$ as defined in \Cref{lem:closure recursion lemma}. Our algorithm is to identify neighborhood
 of $\hat R$ in $L \cup (S - T)$, denoted as $N(\hat R)$, without reading the entire set of
 $\hat R$.  Let $H'_L$ be the hypergraph induced by $H[L \cup
 S]$.

Next, we describe the algorithm that compute $N(\hat R)$. Observe that if $v \in L$, then any hyperedge
 containing $v$ must incident to at most $|L| +|S|$
 vertices. Therefore, we can check for every vertex $v \in L$ if $v$ is a neighbor of $\hat R$ in total $O(\vol_H(L) \cdot (|L|+|S|))
 \overset{(\ref{eq:low vol vertex exp})}{=}O((c\phi^{-1})^3)$ time.  Next, observe that for every hyperedge $e$ containing a vertex
 $v$ in $S$, if $e$ is incident to more than $2c\phi^{-1}  \overset{(\ref{eq:low vol vertex exp})}{\geq} |L|+|S|$
 vertices, $v$ must be incident to $\hat R$. So, we need to check only
 hyperedges incident to $S - T$ that incident at most $O(c\phi^{-1})$
 vertices. This takes $O(\vol_H(S-T) \cdot c\phi^{-1})$ time.  

 Once $N(\hat R)$ is found, we modify $H'_L$ as follows. We
 remove $S - T$, remove  hyperedges incident to $N(\hat R)$, and add a vertex $t_R$ and a new hyperedge to that incident to every vertex in $N(\hat R) \cup \{t_R\}$. The modified
 hypergraph corresponds to $H_L$ where $G_L =
 \textsf{Clique}(H_L)$. The number of vertices in $H_L$ is
 at most $|L|+|S|  \overset{(\ref{eq:low vol vertex exp})}{\leq}
 2c\phi^{-1}$. Furthermore, every old hyperedge in $G_L$ must be
 incident to $L \cup (S-T)$. Let $E'$ be the set of hyperedges that
 incident to $L$. Since $|S| \leq c$, the total size of $E'$ is at
 most $O(c\cdot \vol_H(L))$.  Therefore, the size of $G_L$ is at most $O(\vol_H(L)
 + \vol_H(E')) = O(c \cdot (c\phi^{-1})^2) = O(c^3 \phi^{-2})$. Also, observe that the degree of each vertex does not
 increase. The terminal $t_R$ has degree exactly 1. Combining with
 the fact that $H \leq_{\textdeg} G_{\textinput}$, we conclude that
 $H_L \leq_{\textdeg} G_{\textinput}$.  
 
 \paragraph{Construction of $H_R$.}  We next describe the construction
 of $H_R$ in $\ot((c\phi^{-1})^2 + \vol_H(S-T))$ time. Recall the set
 $\hat L = L \cup (S -  T)$ as defined in \Cref{lem:closure recursion
   lemma}. Let $E_1$ be the set of hyperedges incident to $S - T$, and $E_2$ be the set of
 hyperedges incident to vertices only in $\hat L$. We now describe the construction.  
 \begin{enumerate}
 \item Compute $\hat E = E_1 - E_2$. This step takes
   $\ot(\vol_H(L)+\vol_H(S-T))  \overset{(\ref{eq:low vol vertex
       exp})}{=}\ot( (c\phi^{-1})^2 + \vol_H(S-T))$ time. 
 \item Apply $\textsc{Merge}(\hat E)$ operation and obtain a new
   hyperedge $\hat e$ ($\hat e$ is added to $V_{\textsf{e}}$). By \Cref{lem:merging hypergraph}, this
   step takes $\ot(|\hat E|) = \ot(\vol_H(S-T))$ time.
 \item Remove $\hat L$ from $H$. This step takes $\ot(\vol_H(L)) \overset{(\ref{eq:low vol vertex
       exp})}{=} \ot( (c\phi^{-1})^2)$ time. 
 \item Add a  terminal $t_L$ and add $(t_L, \hat e)$ edge to $E$.
 \item Add for all $s \in S\cap T$, add $(s, \hat e)$ edge to $E$. 
 \end{enumerate}

  By design, every vertex in $N(\hat L)$ is incident to the
  hyperedge $\hat e$. Combining with the fact that $G =
  \textsf{Clique}(H)$, we have $G_R = \textsf{Clique}(H_R)$. We next
  show that $H_R \leq_{\textdeg} G_{\textinput}$. By \Cref{def:merge hypergraph},
  $\textsc{Merge}$ operation does not reduce the degree in step 2.  Observe that the 
  vertex $t_L$ has degree exactly 1 by step 3. The
  only potential increase in degree (by one) is the vertices in $S \cap T$ in
  step 4. However, since $S$ is a min Steiner cut, every node $v \in
  S$ must have an hyperedge that is incident to $L$. Since we remove
  $\hat L$ in step 3, the degree must be reduced by at least
  1. Therefore, the degree of $v$ does not increase at the end of step
  5. Combining with the fact that $H \leq_{\textdeg} G_{\textinput}$ we
  conclude that $H_R \leq_{\textdeg} G_{\textinput}$. 

  \paragraph{$H_L$ and $H_R$ are both $(c,\phi)$-low volume vertex
    expander.} We now prove the last remaining property for $H_L$ and
  $H_R$. We prove for $H_L$ (the case $H_R$ is similar).  Let
  $(L,S,R)$ be a vertex cut in $H_L$ where $|S| \leq c$. By
  monotonicity property (\Cref{pro:monotonicity GLGR}), we
  have that $S$ must be a separator in $H$.  By similar argument in
  \Cref{pro:monotonicity GLGR}, $H_L$ and $H_R$ are $\phi$-vertex
  expanders.  Therefore, there is a
  vertex cut $(L',S,R')$ in $H$ such that $L'$ becomes $L$ and $R'$
  becomes $R$ after transforming $H$ to $H_L$. Since the degree of
  each vertex does not increase, we have $\vol_{H_L}(L) \leq
  \vol_{H}(L')$.  Since $H$ is a $(c,\phi)$-low volume expander, we
  also have $\vol_H(L') \leq (c\phi^{-1})^2$. Therefore,   $\vol_{H_L}(L) \leq
  \vol_{H}(L') \leq (c\phi^{-1})^2$. We conclude that $H_L$ is a
  $(c,\phi)$-low volume vertex expander. 

\subsection{The Final Algorithm} \label{sec:final reducing set}

  This section is devoted to proving \Cref{thm:reducing-or-covering set system}.
  We start with an important tool: vertex expander decomposition. 
  
\begin{theorem}[\cite{LongS22}] \label{thm:vertex expander decomp}  
	There is a deterministic algorithm that, given a graph $G=(V,E)$
	with $n$ vertices and $m$ edges and a parameter $\phi\in(0,1/10\log n)$,
	computes in $m^{1+o(1)}/\phi$ time a partition $\{Z,X_{1},\dots,X_{\ell}\}$
	of $V$ and $\{Y_1,\dots,Y_\ell\}$ such that 
	\begin{itemize}
                       \item $Z = \bigcup_i (Y_i - X_i)$,
        \item for all $i$,  $N[X_i] \subseteq Y_i \subseteq X_i \cup Z$,
		\item for all $i$, $G[Y_i]$ is a $\phi$-vertex expander, and
		\item $\sum_{i}|Y_i - X_i|\le\phi n^{1+o(1)}$.
	\end{itemize}
\end{theorem}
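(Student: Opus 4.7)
I would prove \Cref{thm:vertex expander decomp} by a standard recursive framework driven by the balanced-cut-or-expander primitive \Cref{lem:balancedorexpander}. Maintain a current ``active'' vertex set $V'\subseteq V$ (initially $V'=V$) together with a boundary record that remembers which vertices of $V\setminus V'$ have previously been placed in $Z$. At each step, invoke $\textsc{BalancedCutOrExpander}(G[V'],\phibar,r)$ with $\phibar=\phi\cdot\log^{O(r^{5})}m$ and $r$ chosen so that the final approximation factor is subpolynomial (e.g.\ $r=\Theta(\log^{1/6}n)$). Three cases arise: (i) the subroutine certifies that $G[V']$ is a $\phi$-vertex expander, in which case we declare a new piece $X_i\subseteq V'$ and set $Y_i=X_i\cup(\text{boundary of }X_i\text{ inside }Z)$; (ii) a balanced cut $(L,S,R)$ with $|L\cup S|,|R\cup S|\ge |V'|/3$ is returned, and we add $S$ to $Z$ and recurse on both $L\cup S$ and $R\cup S$; (iii) an unbalanced cut is returned with $G[R\cup S]$ already a $\phi$-vertex expander, and we carve off $X_i=R$, $Y_i=R\cup S$, push $S$ into $Z$, and recurse only on $L\cup S$.

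\textbf{Key steps and analysis.} The correctness claim that $G[Y_i]$ is itself a $\phi$-expander will follow directly from the guarantees of \Cref{lem:balancedorexpander} in case (iii), and from the certification in case (i). The invariant $N[X_i]\subseteq Y_i\subseteq X_i\cup Z$ is maintained by always bundling into $Y_i$ the separator vertices that were cut away from $X_i$ in prior recursion steps. For recursion depth, in case (ii) each side is at most $(2/3)|V'|+\phibar|V'|\le0.7|V'|$, and in case (iii) the recursing side has at most $(1/2+\phibar)|V'|\le0.6|V'|$ vertices, giving depth $O(\log n)$. For the $|Z|$ bound, every cut taken is $\phibar$-sparse, so each level of the recursion tree adds total separator size at most $\phibar\cdot n$; summing over $O(\log n)$ levels and expanding $\phibar=\phi\cdot\log^{O(r^{5})}m$ yields $|Z|=\sum_i|Y_i-X_i|\le\phi\cdot n^{1+o(1)}$, which matches the claimed bound once $r$ is tuned to make $\log^{O(r^{5})}m=n^{o(1)}$. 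The running-time analysis is also standard: each invocation of $\textsc{BalancedCutOrExpander}$ on a subproblem with $m'$ edges costs $(m')^{1+o(1)}/\phi$ time, and summing geometrically across the $O(\log n)$ levels gives total work $m^{1+o(1)}/\phi$.

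\textbf{Main obstacle.} The delicate point is ensuring that after recursion terminates, each declared expander piece $G[Y_i]$ is genuinely a $\phi$-vertex expander rather than just an ``expander modulo re-attaching boundary''. In case (iii), $\textsc{BalancedCutOrExpander}$ certifies that $G[R\cup S]$ is an expander at the moment of discovery, but later recursion steps on the other side of $S$ may pull additional vertices adjacent to $R\cup S$ into $Z$; we need these to not destroy the expansion already established on $Y_i=R\cup S$. This requires a trimming/boundary-linking argument (as in standard expander decomposition): whenever a new cut separator is added to $Z$, its interaction with previously certified pieces is charged against their $\phi$-slack, and any vertex whose removal would violate expansion is preemptively moved into $Z$. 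Making this bookkeeping compatible across the recursion while preserving both the $\phi n^{1+o(1)}$ separator bound and the $m^{1+o(1)}/\phi$ runtime is the technical heart of the argument, and is what justifies citing the result from \cite{LongS22} rather than re-deriving it.
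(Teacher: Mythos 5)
The theorem you are proving is imported verbatim from \cite{LongS22}; the paper does not supply a proof, so there is no internal argument to compare your attempt against. That said, your recursive framework built on \Cref{lem:balancedorexpander} is the standard skeleton for vertex expander decomposition, and I expect it matches the approach of \cite{LongS22} at a high level. Your quantitative accounting — recursion depth $O(\log n)$, total separator growth $(1+O(\phibar))^{O(\log n)}$, and tuning $r$ so that $\log^{O(r^{5})}m=n^{o(1)}$ — is essentially right.

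However, two points are glossed over in a way that leaves real gaps. First, after a balanced cut $(L,S,R)$ you recurse on both $L\cup S$ and $R\cup S$, so the separator $S$ is active in both branches; you never say how $S$ is excluded from every $X_i$ (only then is $\{Z,X_1,\dots,X_\ell\}$ actually a partition) or how $Z=\bigcup_i(Y_i- X_i)$ is enforced, and the bookkeeping needed to guarantee both simultaneously with $N[X_i]\subseteq Y_i$ is exactly where the construction becomes delicate. Second, in the paragraph on the main obstacle you attribute the difficulty to \emph{later} recursion steps pulling neighbors of $R\cup S$ into $Z$. That is not where the problem lies: once $(L,S,R)$ is a vertex cut of $G[V']$, no vertex of $L$ (or of any future separator inside $L\cup S$) is adjacent to $R$, so later steps cannot add to $N_G(R)$. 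The genuine issue is the opposite direction: when $V'\subsetneq V$, the set $N_G(R)$ may contain vertices that were removed by \emph{earlier} cuts higher in the recursion tree, hence lie in $Z$ but not in $R\cup S$. Those must be in $Y_i$ to satisfy $N[X_i]\subseteq Y_i$, and then the certification ``$G[R\cup S]$ is a $\phi$-expander'' does not directly imply that the strictly larger graph $G[Y_i]$ is one, since adding boundary vertices and their edges can introduce new sparse cuts. Resolving this — via boundary-linked recursion where the active set always carries enough of the ancestral separator, or via a trimming argument that charges against the expansion slack — is precisely the technical content of \cite{LongS22} that your sketch defers to. Given that the paper itself cites rather than proves the theorem, deferring is reasonable, but the proposal should be read as a plausible outline rather than a proof: the two items above are missing, and the one you do flag is mislocated.
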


The first item implies that $N(X_i) \subseteq Z$ for all $i$. That is, all $X_i$ are separated from each other by $Z$.
The set $Y_i$ is the set containing $N[X_i]$ such that $G[Y_i]$ is a vertex expander, but $Y_i$ is not too big in the sense that the ``additional'' part $Y_i \setminus X_i \subseteq S$ and also $\sum_i |Y_i \setminus X_i| \le \phi n^{1+o(1)}$. Note also that a vertex could be counted in $\sum_i |Y_i \setminus X_i|$ many times. That is, we actually bound the total number of ``occurrences'' of all vertices in $Y_i \setminus X_i$ overall $i$.

Observe that each $G[X_i]$ is possibly not connected. This is
unavoidable, for example, when $G$ is a star.

Our plan is to apply expander decomposition as a preprocessing step before applying \Cref{lem:fast expander many terminals roc}  on each expander. Once we obtain \roc \psetpair for each expander, we need to "compose" them. To do so, we need the notion of composibility of $(T,c)$-\roc \psetpair for the decomposition. We state it in a general form as follows. Recall the definitions of reducer and non-reducer sequence  in \Cref{def:roc partition}.

\begin{lemma} \label{lem:roc composable}
Let $(Z,X_1,\ldots,X_{\ell})$ be a partition of $V(G)$ such that $Z$ is $(X_i, X_j)$-separator for all distinct $i,j$. For $i \in [\ell]$, let $Y_i$ be a set of vertices such that $N_G[X_i] \subseteq Y_i \subseteq X_i\cup Z$. Let $G_i = G[Y_i]$ and $T_i = (T\cap X_i) \cup (Y_i - X_i)$.  For $i \in [\ell]$, let $(\Pi_i,C_i)$ be $(T_i,c)$-\roc \psetpair in $G_i$. We define   $\textsc{Compose} (Z, (\Pi_1, C_1), \ldots , (\Pi_{\ell},C_i))$ that outputs $((Z',X_1',\ldots, X'_{\ell'}),C')$ where each set in the output is defined as follows. 
\begin{itemize}
    \item $Z' = Z \cup \bigcup_{i \leq \ell}Z_i$ where $Z_i$ is the reducer of $\Pi_i$, 
    \item $C' = \bigcup_{i \leq \ell} C_i$, and 
    \item The sequence $X'_1, \ldots X'_{\ell'}$ is obtained as follows. We first subtract $Z$ from every set in the non-reducer sequences of the partition $\Pi_i$. Then, for all partition $\Pi_i$, we append all the resulting non-reducer sequences. 
\end{itemize}

 Then, $((Z',X_1',\ldots, X'_{\ell'}),C') =  $ \textsc{Compose}$(Z,(\Pi_1,C_1), \ldots, (\Pi_\ell,C_{\ell}))$ is a $(T,c)$-\roc \psetpair in $G$ where
   \begin{itemize}
       \item $|Z'| \leq  |Z| + \sum_i |Z_i|$ where $Z_i$ is the reducer of $\Pi_i$,
       \item $|C'| \leq \sum_{i} |C_i|$, and %
       \item $\sum_{i' \in [\ell']} |N_G(X'_{i'})| \leq \sum_{i \in [\ell]} \sum_j |N_{G_i}(Q_{i,j})| + \sum_{i \in [\ell]}|Y_i - X_i| $ where $Q_{i,j}$ is the $j$-th set in the non-reducing sequence of $\Pi_i$. %
   \end{itemize}  
\end{lemma}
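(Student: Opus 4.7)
The plan is to verify the two structural requirements for $((Z',X_1',\ldots,X_{\ell'}'), C')$ and then read off the size bounds. For the partition-with-separation property, observe that each new non-reducer $X'_{i'}$ equals $Q_{i,j}\setminus Z$ for some $i$ and some non-reducer $Q_{i,j}$ of $\Pi_i$. Since $Y_i \subseteq X_i \cup Z$, this gives $X'_{i'} \subseteq X_i$. Now any path in $G$ between distinct $X'_{i'} \subseteq X_i$ and $X'_{j'} \subseteq X_j$ either leaves $X_i$ (and, by the hypothesis that $Z$ separates the $X_i$'s, must cross $Z \subseteq Z'$), or stays inside $Y_i$, in which case it is a path in $G_i$ between two distinct non-reducer sets of $\Pi_i$ and must cross $Z_i \subseteq Z'$.

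For the reducing-or-covering property, fix $A,B\subseteq T$ with $\mu_G(A,B)\le c$ and pick any min $(A,B)$-weak separator $S$ with sides $L,R$. For each $i$, lift the terminals to $A_i := (A\cap Y_i)\cup(L\cap(Y_i\setminus X_i))$ and $B_i := (B\cap Y_i)\cup(R\cap(Y_i\setminus X_i))$; these lie in $T_i$ because $Y_i\setminus X_i\subseteq T_i$ and $A,B\subseteq T$. The restriction $S_i := S\cap Y_i$ is an $(A_i,B_i)$-weak separator in $G_i$: any $(A_i,B_i)$-path in $G_i - S_i$ would be extendable in $G$ through $L$ on the $A$-side and through $R$ on the $B$-side (the boundary endpoints in $A_i\cup B_i$ are vertices of $L\cup A$ and $R\cup B$ respectively) to an $(A,B)$-path avoiding $S$, contradicting the choice of $S$. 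Hence $\mu_{G_i}(A_i,B_i)\le c$, and the $(T_i,c)$-roc property of $(\Pi_i,C_i)$ produces a min $(A_i,B_i)$-weak separator $S^\star_i$ satisfying one of (a)~$C_i$ covers $S^\star_i$, (b)~$\Pi_i$ splits $S^\star_i$, or (c)~$\Pi_i$ $T_i$-hits $S^\star_i$. Choose the index $i^\star$ where the cut concentrates, e.g.\ maximizing $|S\cap X_{i^\star}|$, and use a swapping argument in the spirit of \Cref{lem:swapping Si} to show that $S':=(S\setminus Y_{i^\star})\cup S^\star_{i^\star}$ is still a min $(A,B)$-weak separator in $G$. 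We then transfer the local condition at $i^\star$ to a global condition on $S'$ with respect to $(\Pi',C')$: in case (a), $C'\supseteq C_{i^\star}$ already covers $S^\star_{i^\star}$ and the remainder $S'\setminus Y_{i^\star}\subseteq Z\subseteq Z'$ is handled through the ambient reducer; in cases (b) and (c), the splitting or $T_i$-hitting of $S^\star_{i^\star}$ by the non-reducers $Q_{i^\star,j}$ of $\Pi_{i^\star}$ translates into the same condition for $S'$ against the new non-reducers $Q_{i^\star,j}\setminus Z$ of $\Pi'$, since these sets differ only by elements of $Z\subseteq Z'$ and the other $X'_{j'}\subseteq X_j$ (with $j\ne i^\star$) are disjoint from $S^\star_{i^\star}\subseteq Y_{i^\star}$ modulo $Z$. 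The degenerate case $S\subseteq Z$ is handled directly: either some $X'_{i'}$ avoids containing $S$ in its closed neighborhood (splitting holds), or else $S$ lies in $N_G[X'_{i'}]\cap Z$ for a single component, which reduces to applying the local property in $G_i$ with the boundary terminals.

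The size bounds are immediate from the construction: $|Z'|\le|Z|+\sum_i|Z_i|$ and $|C'|\le\sum_i|C_i|$ follow by definition. For any new non-reducer $X'_{i'}=Q_{i,j}\setminus Z$, any $G$-neighbor of $X'_{i'}\subseteq X_i$ lying outside $Y_i$ must be in $N_G(X_i)\subseteq Z$, but $N_G[X_i]\subseteq Y_i$, so $N_G(X'_{i'}) \subseteq N_{G_i}(Q_{i,j}) \cup (Y_i\setminus X_i)$; summing over $j$ and then over $i$ gives $\sum_{i'}|N_G(X'_{i'})| \leq \sum_{i}\sum_j |N_{G_i}(Q_{i,j})| + \sum_i|Y_i\setminus X_i|$. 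The hardest step will be the faithful translation of condition (c) - $T_i$-hitting by $\Pi_i$ in $G_i$ - to $T$-hitting by $\Pi'$ in $G$, because $\mu^{T}_G(A,B)$ and $\mu^{T_i}_{G_i}(A_i,B_i)$ need not agree and the correspondence between non-terminal vertices of local and global min weak separators is delicate; we expect to resolve this by showing that the swapped separator $S'$ satisfies $|S'\setminus T| = \mu^T_G(A,B)$ using that $T_i\setminus T \subseteq Y_i\setminus X_i \subseteq Z\subseteq Z'$, and then verifying that hitting by $\Pi_{i^\star}$ accounts for all non-terminal content of $S'$ in each new non-reducer.
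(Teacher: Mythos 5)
Your proposal gets the size bounds and the partition-separation property right, and those parts track the paper's own argument. But the core of the lemma — the reducing-or-covering property — is where your approach both diverges from the paper and leaves real gaps that are not cosmetic.

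The paper's proof does not need a swapping argument. It first disposes of the easy alternatives (some min separator is covered by $C'$, some brittle min separator has a non-terminal vertex in $Z'$, or $Z$ separates two vertices of a brittle min separator $S$). After those exclusions, the brittle separator $S$ that remains is forced to satisfy $S \subseteq Y_i$ for a \emph{single} index $i$ with $S \cap Z \subseteq T$. From this point there is only one local subproblem $G_i$ to worry about, the whole of $S$ lives there, and the lifted pair $(A',B')$ (defined via reachability-based proxy vertices) satisfies $\mu_{G_i}(A',B') = \mu_G(A,B)$ and $\mu_{G_i}^{T_i}(A',B') = \mu_G^{T}(A,B)$. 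That exact equality is what makes the split/$T$-hit condition from $(\Pi_i,C_i)$ transfer back to $(\Pi',C')$ without any recombination. You never make this reduction, so your $S$ still straddles several $Y_i$'s, and you are then forced into the swap $(S\setminus Y_{i^\star})\cup S^\star_{i^\star}$ plus an index-selection heuristic, which you do not prove works — you only say you ``expect to resolve this.'' That is precisely the hard core of the lemma and it is missing.

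Several specific steps, as written, are wrong or unjustified. (1) You ``pick any min $(A,B)$-weak separator $S$''; the $T$-hit alternative requires a \emph{brittle} one ($|S-T|=\mu^T_G(A,B)$), and the identity $|S'\setminus T|=\mu^T_G(A,B)$ you appeal to later does not hold for arbitrary $S$. (2) Your lifted terminal sets $A_i,B_i$ are built from the cut sides $L,R$, not from reachability as in the paper's Claim \ref{claim:new terminal set} and the roc proof's proxy definition. With the cut-side definition it is unclear that an $(A_i,B_i)$-weak separator in $G_i$ lifts to an $(A,B)$-weak separator in $G$ — you need the reverse inclusion $\mu_{G_i}(A_i,B_i) \ge \mu_G(A,B)$, and extending a local path ``through $L$'' or ``through $R$'' presupposes connectivity of $L,R$ that a weak separator's sides need not have. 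Without that, $S_i := S\cap Y_i$ need not be a \emph{minimum} local separator, $|S^\star_{i^\star}|$ may be strictly less than $|S_{i^\star}|$, and the swapped set need not remain a separator of $G$. (3) The degenerate case $S\subseteq Z$ is waved at with ``either some $X'_{i'}$ avoids containing $S$ or else\dots''; but splitting is a universally quantified condition over all $X'_{i'}$, and this dichotomy is not established. The fix to all of these is to adopt the paper's preliminary case analysis — after the exclusions, you get $S\subseteq Y_i$ and $S\cap Z\subseteq T$ for a single $i$, which eliminates both the swap and the degenerate case, and makes the $T$-hit translation exact.
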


 We prove \Cref{lem:roc composable} in \Cref{sec:roc composable} since the proof is similar to that of \Cref{lem:closure recursion lemma}. 

We are now ready to prove \Cref{thm:reducing-or-covering set system}. %
We describe the algorithm and analysis. 

\paragraph{Algorithm.} We describe the algorithm in \Cref{alg:sparsifyfast}. Note that the input graph has arboricity $c$. First, we apply expander decomposition. For each expanders, we apply \Cref{alg:sparsify1} using terminal set in the boundary $Y_i - X_i$ and $T \cap X_i$. Since each graph is an expander, we can implement \Cref{alg:sparsify1} fast using \Cref{lem:fast expander many terminals roc}. Finally, we compose all the  \roc \psetpair to output the final $(T,c)$-\roc \psetpair in $G$.  

\begin{algorithm}[H]
  \DontPrintSemicolon
  \KwIn{ A graph $G = (V,E)$, a terminal set $T \subseteq V$ and parameters $c >0, \phi \in (0, 1/10\log |V|)$ where $G$ has arboricity $c$.} 
  \KwOut{ A $(T,c)$-\roc \psetpair in $G$.}
  \BlankLine
  Apply \Cref{thm:vertex expander decomp} using $G$ and $\phi$ to obtain $S,  X_1,\ldots,
  X_\ell$ and $Y_1, \ldots Y_\ell$. \;
  \For{$i \in [\ell]$}
  { Let $T_i = (T \cap X_i) \cup
    (Y_i - X_i)$. \;
    $(\Pi_i,C_i) = \textsc{ROC}(G[Y_i],T_i,c)$ \tcp*{\Cref{alg:sparsify1}}  
  }
  \Return{\normalfont{} \textsc{Compose}$(Z,(\Pi_1,C_1),\ldots, (\Pi_\ell,C_\ell))$  where \textsc{Compose} is defined in \Cref{lem:roc composable}.}%
\caption{\textsc{FastROC}$(G,T, c,\phi)$}
\label{alg:sparsifyfast}
\end{algorithm}

We denote $|T| = k$ and $m = |E|$. For $i \leq \ell$, denote $|T_i| = k_i$ and $|E(G[Y_i])| = m_i$. The total size of terminals is: 
\begin{align} \label{eq:sumki}
  \sum_i k_i = \sum_i  ( |T \cap X_i| + |Y_i - X_i|  ) = O(k + \phi n^{1+o(1)})
\end{align}  
Next, the total instances size  is:
\begin{align} \label{eq:summi}
  \sum_i m_i = \sum_{i}|E_G(Y_i,Y_i)| \leq \sum_i c|Y_i| =
  \sum_i(c|Y_i - X_i| + c|X_i|) =  O(c\phi n^{1+o(1)} + nc)
\end{align}  
The first inequality follows because $G$ has arboricity $c$ (i.e., for all $S \subseteq V(G)$, $|E(S,S)| \leq c|S|$).

\paragraph{Correctness.}  Let $(\Pi' = (Z',X'_1,\ldots,X'_{\ell'}), C')$ be the output of \Cref{alg:sparsifyfast}. For $i \leq \ell$, let $Z_i$ be the reducer of partition $\Pi_i$, and let $Q_{ij}$ be $j$-th set of the non-reducer sequence of $\Pi_i$. Since $(\Pi_i,C_i)$ is $(T_i,c)$-\roc for all $i$, \Cref{lem:roc composable} implies that $(\Pi',C')$ is $(T,c)$-\roc where the sizes of $|Z'|, |C'|, \sum_{i' \in [\ell']} |N_G(X'_{i'})|$ are stated in \Cref{lem:roc composable}. We now argue each one accordingly. First, we bound the size of $Z'$.
\begin{align*}
   |Z'| \leq |Z| + \sum_{i} |Z_i| \leq  \phi n^{1+o(1)} + O((\sum_i k_i)c^2) \overset{(\ref{eq:sumki})}{=} O((k+\phi n^{1+o(1)})c^2).
\end{align*}
The first inequality follows by \Cref{lem:roc composable}. For the second inequality, $|Z| \leq \phi n^{1+o(1)}$ because $Z = \bigcup_i (Y_i - X_i)$, and $\sum_{i}|Y_i - X_i|\le\phi n^{1+o(1)}$ (\Cref{thm:vertex expander decomp}), and the term $\sum_i|Z_i| \leq O(\sum_ik_ic^2)$ follows from \Cref{lem:fast expander many terminals roc}. Next, we bound the size of $C'$.
\begin{align*}
    |C'| \leq \sum_{i} |C_i| \leq (\sum_i k_i)2^{O(c^2)}   \overset{(\ref{eq:sumki})}{=}   O((k + \phi n^{1+o(1)}) \cdot
    2^{O(c^2)}) = O((\phi n^{1+o(1)} + k) \cdot 2^{O(c^2)}).
\end{align*}
The first inequality follows from \Cref{lem:roc composable}. The second inequality follows from \Cref{lem:fast expander many terminals roc}. Finally, we bound the total neighbors. 
\begin{align*}
    \sum_{i' \in [\ell']} |N_G(X'_{i'})| & \leq \sum_{i \in [\ell]} \sum_j |N_{G_i}(Q_{i,j})| + \sum_{i \in [\ell]}|Y_i - X_i|  \\
    & \leq O(\sum_{i}k_ic^2) + \phi n^{1+o(1)}\\ 
  & \overset{(\ref{eq:sumki})}{=} O((k+\phi n^{1+o(1)})c^2). 
\end{align*}
 The first inequality follows from \Cref{lem:roc composable}. The second inequality follows from \Cref{lem:fast expander many terminals roc}.

\paragraph{Running Time.} By \Cref{thm:vertex expander decomp}, the expander decomposition
algorithm takes $O(m^{1+o(1)}\phi^{-1})$ time and
guarantees that $G[Y_i]$ is a $\phi$-vertex expander for all
$i$. By \Cref{lem:fast expander with many terminals}, the
time to compute $R_i$ is $\ot(m_ic\phi^{-1} + \phi^{-5}k 2^{O(c^2)})$.
Therefore, the total running time is:
\begin{align*}
\ot(m^{1+o(1)}\phi^{-1} + (\sum_i m_i) \cdot c\phi^{-1} + \phi^{-5} (\sum_i
  k_i) \cdot 2^{O(c^2)})  \\
   \overset{(\ref{eq:sumki}\text{ and }\ref{eq:summi})}{=} \ot(m^{1+o(1)}c\phi^{-1} + (\phi^{-5}k + \phi^{-4}
                          n^{1+o(1)}) \cdot 2^{O(c^2)}) \\
  = \ot(m^{1+o(1)}\phi^{-4} \cdot 2^{O(c^2)}).
\end{align*}

\section{ Open Problems} \label{sec:open problems}

\paragraph{Deterministic Vertex Connectivity Algorithms.} It remains an outstanding open problem whether there exists a linear-time deterministic vertex connectivity algorithm as asked by Aho, Hopcroft, and Ullman in 1974 (\cite{AhoHU74} Problem 5.30).  Our results answer this question affirmatively up to a subpolynomial factor in the running time when the connectivity $c = o(\sqrt{ \log n})$.  
Can one improve the dependency on $c$ in our running time to $m^{1+o(1)}\poly(c)$? This would follow immediately by our framework if one can improve \Cref{thm:overview sparsifier} so that  a $(T,c)$-sparsifier of size $|T|\poly(c)$ can be obtained in $m^{1+o(1)}\poly(c)$ time.
For general connectivity $c$, even a deterministic algorithm with  $\tilde{O}(mn)$ time bound is still open.

How fast can we compute vertex connectivity in directed graphs? 
The algorithm by \cite{ForsterNYSY20} runs in $\tilde{O}(mc^2)$ time, which is near-linear for all $c = \polylog(n)$. When $c$ can be big, the algorithm by \cite{li2021vertex} takes $n^{2+o(1)}$ time when we apply the almost-linear-time max flow algorithm by \cite{chen2022maximum}. However, both of these algorithms are Monte-Carlo. In contrast, the fastest deterministic algorithm \cite{Gabow06} still requires in  $O(mn + m \cdot \min\{c^{5/2}, c \cdot n^{3/4}\})$ time.  No almost linear time deterministic algorithm is known even when we promise that $c=O(1)$.

\paragraph{Weighted Vertex Connectivity.}
Given a vertex-weighted graph, the \emph{weighted vertex connectivity} problem is to compute a vertex cut with minimum total weight. The state-of-the-art of this natural extension is by Henzinger Rao and Gabow \cite{HenzingerRG00} who gave a $\tilde{O}(mn)$-time Monte-Carlo algorithm and a $\ot(\kappa_0 mn)$-time deterministic algorithm where $\kappa_0$ is the unweighted vertex connectivity. It is an outstanding open problem whether this bound can be improved.

\paragraph{Mimicking Networks.} Our algorithm for mimicking network is of independent interest. An immediate open problem is to compute a mimicking network of size $k \poly(c)$ in $m^{1+o(1)} \poly(c)$ time  where $k$ is the number of terminals. The same question holds for the edge version of the mimicking networks.  For the purpose of vertex connectivity algorithm in our framework, it is enough for the sparsifier to preserve only a separator of size at most $c$ that splits terminal set $T$ (instead of all possible pair $A,B \subseteq T$ such that $\mu_G(A,B) < c$). Using a weaker notion of sparsifier, is it possible to obtain such a sparsifier of size $k\poly(c)$ in $m^{1+o(1)} \poly(c)$ time?

\section*{Acknowledgement}  
This project has received funding from the European Research Council (ERC) under the European Union's Horizon 2020 research and innovation programme under grant agreement No 715672 and No 759557.  
 We thank Yu Gao, Jason Li, Danupon Nanongkai, and Richard Peng for discussion during the early stage of this project. We thank the 2021 HIM program Discrete Optimization during which part of this work was developed. We thank the anonymous reviewer for suggesting a simple proof of \Cref{lem:static neighborhood oracle}.

        \bibliographystyle{alpha}
        \bibliography{references,dp-refs} 

        \appendix

          \section{Omitted Proofs}

\subsection{Proof of \Cref{claim:boring induction recurrence}} \label{sec:boring induction recurrence}

We prove by induction on $c$ and $\cbar$.  We omit the subpolymonial factor for simplicity of the presentation. The base cases follow trivially.  Next, we want to prove that 
\begin{align} \label{eq:recurrence appendix}
s(n,k,c,\cbar) \leq  (k+n \phi)  (4+c+\cbar)^{3(c+\cbar)} \cdot (1+\cbar) \cdot  2^{(\tau  c^2+\cbar)}
\end{align}
Assuming that \Cref{eq:recurrence appendix} holds for parameters $(c-1,\cbar)$ and $(c,\cbar -1)$, we prove that \Cref{eq:recurrence appendix} holds for the parameter $(c,\cbar)$.
Observe that
\begin{align} \label{eq:sum kini}
  \sum_i (k_i+\phi n_i ) \overset{(\ref{eq:sum ki and sum ni})}{\leq} (k + n\phi) c^2 + \phi n + \phi(k+n\phi)c^2 \leq 3(k+n\phi)c^2
\end{align}
Applying the induction hypothesis on \Cref{eq:recurrence snkc}, we obtain: 
\begin{align*} 
  s(n,k,c,\cbar) &\leq (k+\phi n) 2^{\tau c^2} +\sum_i (k_i + \phi n_i)(3+c+\cbar)^{3(c+\cbar)-3} \cdot (1 + \cbar) \cdot 2^{(\tau   (c-1)^2+\cbar)}  \\
                 & + \sum_i (k_i + \phi n_i)  (3+c+\cbar)^{3(c+\cbar)-3} \cdot ((1+\cbar) - 1) \cdot 2^{(\tau   c^2+\cbar-1)} \\
                 & \overset{(\ref{eq:sum kini})}{\leq}  \sum_i (k_i + \phi n_i)(3+c+\cbar)^{3(c+\cbar)-3} \cdot (1 + \cbar) \cdot 2^{(\tau   (c-1)^2+\cbar)}  \\
                 & + \sum_i (k_i + \phi n_i)  (3+c+\cbar)^{3(c+\cbar)-3} \cdot (1+\cbar) \cdot 2^{(\tau   c^2+\cbar-1)} \\
                 & \leq  \sum_i (k_i + \phi n_i)  (3+c+\cbar)^{3(c+\cbar)-3} \cdot (1+\cbar) \cdot( 2^{(\tau   (c-1)^2+\cbar)} +  2^{(\tau   c^2+\cbar-1)}  ) \\
                 & \leq    \sum_i (k_i + \phi n_i)  (3+c+\cbar)^{3(c+\cbar)-3} \cdot (1+\cbar) \cdot 2^{(\tau   c^2+\cbar)}  \\
                &\overset{(\ref{eq:sum kini})}{\leq} (k+n \phi) (3c^2) (3+c+\cbar)^{3(c+\cbar)-3} \cdot (1+\cbar) \cdot 2^{(\tau   c^2+\cbar)}  \\
  & \leq (k+n \phi) (4+c+\cbar)^{3(c+\cbar)}\cdot (1+\cbar) \cdot 2^{(\tau   c^2+\cbar)} .
\end{align*}
Observe $2^{(\tau   (c-1)^2+\cbar)} +  2^{(\tau   c^2+\cbar-1)} \leq  2\cdot 2^{(\tau   c^2+\cbar-1)}=  2^{(\tau   c^2+\cbar)}$ since $\tau(c^2 -2c+1) + \cbar \leq \tau c^2 + \cbar -1$.

\subsection{Proof of \Cref{lem:roc composable}} \label{sec:roc composable}
 The fact that $|Z'| \leq  |Z| + \sum_i |Z_i|$ where $Z_i$ is the reducer of $\Pi_i$ and that $|C'| \leq \sum_{i} |C_i|$ follow by design. Next,  we show that $\sum_{i' \in [\ell']} |N_G(X'_{i'})| \leq \sum_{i \in [\ell]} \sum_j |N_{G_i}(Q_{i,j})| + \sum_{i \in [\ell]}|Y_i - X_i|$. 
Observe that, for all $i,j$, 
\begin{align} \label{eq:ngqijz}
    N_G(Q_{ij} -Z) = N_{G_i}(Q_{ij} - Y_i) \subseteq  N_{G_i}(Q_{ij})\cup (Y_i \cap Q_{ij}). 
\end{align}
Since $X'_{i'} = Q_{ij} - Z$ for some $i,j$, and $Q_{ij}$ are disjoint among those with the same $i$, we have $\sum_{i' \in [\ell']} |N_G(X'_{i'})| \leq \sum_{i \in [\ell]} \sum_j |N_{G_i}(Q_{i,j})| + \sum_{i \in [\ell]}|Y_i - X_i|$. 

Furthermore, \Cref{eq:ngqijz} implies that $N_G(Q_{ij} - Z) \subseteq Z_i \cup Z \subseteq Z'$ for all $i,j$. Therefore, $Z'$ is an $(X'_i,X'_j)$-separator in $G$ for all $i,j$.  It remains to prove the following claim.

We prove that $(\Pi' = (Z',X'_1,\ldots,X'_{\ell'}),C')$ is a $(T,c)$-\roc \psetpair in $G$. Fix a pair $A,B \subseteq T$ such that $\mu_G(A,B) \leq c$. If $C'$ cover some min $(A,B)$-weak separator, then we are done. If $Z'$ contains a non-terminal vertex of some brittle min $(A,B)$-weak separator, then we are also done. Assume otherwise. Let $S$ be a brittle min $(A,B)$-weak separator in $G$. If $Z$ is an $(x,y)$-separator for some distinct $x,y \in S$, then we are done with $(A,B)$. Assume otherwise. There is $i$ such that $S\cap X_i \neq \emptyset$.  WLOG, we assume that $S \subseteq X_i \cup Z$ and $S \cap Z \subseteq T$. If $S \not \subseteq Y_i$, then there are $x,y \in S$ such that $x \in X_i$ and $y \not \in Y_i$, and thus $\Pi'$ splits $S$ and we are done. Now, assume $S \subseteq Y_i$. 

Let $A_i = A \cap X_i, B_i = B\cap X_i$.  Denote $\partial X_i = Y_i - X_i$ as the set of
  \textit{boundary} vertices.   In $G_i$, we
    say that a boundary vertex $v \in \partial X_i$ is
    $A$-\textit{proxy} if $A - A_i \neq \emptyset$, and there is an
    $(A - A_i,v)$-path in $G$ that does not use $X_i$.
    Similarly, $v$ is $B$-\textit{proxy}  if $B - B_i \neq \emptyset$ and there is an $(B -
    B_i,v)$-path in $G$ that does not use $X_i$.  Let
    $\partial A_i$ be the set of $A$-proxy boundary vertices, and
    $\partial B_i$ be the set of $B$-proxy boundary vertices. Let
    $A' = \partial A_i \cup A_i$ and $B' = \partial B_i \cup
    B_i$. %

\begin{claim} \label{claim:sa'b'gi}
$S$ is an $(A',B')$-weak separator in $G_i$. 
\end{claim}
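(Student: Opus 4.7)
The plan is to mirror the proof of \Cref{claim:new terminal set} closely. I argue by contradiction: suppose there exists an $(A', B')$-path $P$ in $G_i - S$ with endpoints $a \in A'$ and $b \in B'$. The goal is to extend $P$ to an $(A, B)$-path in $G - S$, contradicting that $S$ is an $(A, B)$-weak separator in $G$.

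For the $a$-endpoint, I define $P_a$ as follows: if $a \in A_i$, take $P_a$ to be the trivial one-vertex path (and note $a \in A$); otherwise $a \in \partial A_i$, so by the defining property there is a path $P_a$ in $G$ from some $a' \in A \setminus A_i$ to $a$ that avoids $X_i$. Construct $P_b$ symmetrically. The concatenation $P_a \cdot P \cdot P_b$ is then a walk in $G$ between a vertex in $A$ and a vertex in $B$, which (since $P$ has positive length) gives a nontrivial $(A,B)$-walk, hence an $(A,B)$-path after discarding any backtracking.

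The crux is to verify that this walk avoids $S$. The subpath $P$ avoids $S$ by assumption. The paths $P_a$ and $P_b$ avoid $X_i$ by construction, and since $S \subseteq Y_i = X_i \cup (Y_i \setminus X_i) \subseteq X_i \cup Z$, their only possible intersection with $S$ lies in $S \cap Z$. The outer proof of \Cref{lem:roc composable} has already established that $Z'$, and hence $Z \subseteq Z'$, does not reduce the non-terminal part of $S$, so $S \cap Z \subseteq T$. The main remaining obstacle is to rule out that $P_a$ or $P_b$ passes through a terminal vertex in $S \cap T$. I plan to resolve this by choosing, from among the proxy paths guaranteed by the definition of $\partial A_i$, one that additionally avoids $S$ --- equivalently, a path in $G - S$ avoiding $X_i$, paralleling the role of the restriction to $G - S_{-i}$ in \Cref{claim:new terminal set} (which collapses in the current setting because $S \subseteq Y_i$ forces the analog of $S_{-i}$ to be empty). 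The existence of such a refined proxy path follows from the brittleness of $S$: if every route from $A \setminus A_i$ to $a$ avoiding $X_i$ were forced to pass through some $s \in S \cap T$, one could swap $s$ out of $S$ for a non-terminal vertex supplied by the path $P$ inside $G_i - S$, producing a min $(A, B)$-weak separator with strictly fewer non-terminal vertices than $|S \setminus T| = \mu^T_G(A, B)$, contradicting the brittleness of $S$.
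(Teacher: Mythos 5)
Your opening follows the paper's template for \Cref{claim:new terminal set}: extend a hypothetical $(A',B')$-path $P$ in $G_i - S$ to an $(A,B)$-walk in $G$ via proxy paths, and derive a contradiction with $S$ being an $(A,B)$-weak separator. You also correctly isolate the delicate point: the proxy paths are only guaranteed to lie in $G$ and to avoid $X_i$, so since $S \subseteq Y_i \subseteq X_i \cup Z$ they might still pass through $S \cap Z$, which consists of terminals under the standing assumptions of the outer proof. The paper's one-line proof (``similar to that in \Cref{claim:new terminal set}'') does not spell this out, so flagging it is reasonable.

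However, the brittleness argument you use to close the gap does not work, and this is a genuine flaw. If you remove a terminal $s \in S \cap T$ from $S$ and replace it with a non-terminal vertex taken from $P$, the resulting set $S'$ satisfies $|S' \setminus T| = |S \setminus T| + 1 = \mu^T_G(A,B) + 1$, i.e.~it has \emph{more} non-terminal vertices, not ``strictly fewer.'' That is fully consistent with $S$ being brittle: brittleness only says $|S \setminus T|$ is \emph{minimal} among min $(A,B)$-weak separators, so other min separators having more non-terminals is exactly what you expect, and no contradiction arises. Separately, you never verify that the swapped set $S'$ would still be an $(A,B)$-weak separator at all; removing $s$ may open up $(A,B)$-paths that the single vertex you add from $P$ does not block, so even the premise of the contradiction is unjustified. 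The paper's intended argument for this claim (whatever its exact form, by analogy with \Cref{claim:new terminal set}) does not invoke brittleness at this point, and the ad hoc replacement you introduced is what breaks. As written, the final step of your proof does not establish the claim.
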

\begin{proof}
The proof is similar to that in \Cref{claim:new terminal set}.
\end{proof}

\begin{claim} \label{claim:a'b'giabg}
An $(A',B')$-weak separator in $G_i$ is an $(A,B)$-weak separator in $G$.
\end{claim}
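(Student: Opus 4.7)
The plan is a contradiction argument: assume $S''\subseteq V(G_i)$ is an $(A',B')$-weak separator in $G_i$ but some $(A,B)$-path $P=(v_0,\dots,v_k)$ survives in $G-S''$, and extract from $P$ an $(A',B')$-path inside $G_i-S''$, contradicting the hypothesis. The starting observation is that $P$ must intersect $Y_i$: since we are in the branch where $S\subseteq Y_i$ is an $(A,B)$-weak separator in $G$, every $(A,B)$-path in $G$, and thus $P$, meets $S$. Let $j_1$ and $j_2$ be the smallest and largest indices with $v_{j_1},v_{j_2}\in Y_i$.

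Next, I would show $v_{j_1}\in A'$ and $v_{j_2}\in B'$ by the same dichotomy used in \Cref{claim:weak A'B' in G_L is weak AB in G}. If $j_1=0$, then $v_{j_1}=a\in A$, and either $a\in X_i$ (so $a\in A_i\subseteq A'$) or $a\in \partial X_i$ (so $a\in A\setminus A_i$ and the trivial path places $a\in \partial A_i\subseteq A'$). If $j_1>0$, the predecessor $v_{j_1-1}\notin Y_i$ combined with $N_G(X_i)\subseteq Y_i$ forces $v_{j_1}\in \partial X_i$ (if $v_{j_1}$ were in $X_i$, then $v_{j_1-1}$ would be a neighbor of $X_i$ inside $Y_i$, contradiction). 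The prefix $P[0,j_1]$ uses $v_0,\dots,v_{j_1-1}$ outside $Y_i\supseteq X_i$ and the endpoint $v_{j_1}\in \partial X_i\subseteq Y_i\setminus X_i$, so the whole prefix avoids $X_i$, witnessing $v_{j_1}\in \partial A_i\subseteq A'$. A symmetric argument on the suffix $P[j_2,k]$ gives $v_{j_2}\in B'$.

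The natural candidate for the desired $(A',B')$-path in $G_i-S''$ is then $P[j_1,j_2]$: it has the correct endpoints and avoids $S''$ because $P$ does. The main obstacle is that $P[j_1,j_2]$ need not stay inside $Y_i$; $P$ may exit and re-enter $Y_i$ in between. Each such excursion leaves $Y_i$ at some $v_{q_1}\in \partial X_i$, travels through vertices outside $Y_i$ (and hence outside $X_i$), and re-enters at some $v_{q_2}\in \partial X_i$; in particular $v_{q_1}$ and $v_{q_2}$ lie in the same connected component of $G-X_i$. I plan to exploit this by propagating proxy-membership along $P$: starting from $v_{j_1}\in A'$, if the first exit-point $v_{q_1}$ lies in $\partial B_i\subseteq B'$, then the initial sub-segment of $P$ up to $v_{q_1}$ already lies in $Y_i$ and is the desired $(A',B')$-path in $G_i-S''$; otherwise the same component of $G-X_i$ that contains $v_{q_1}$ also contains $v_{q_2}$, preserving $\partial A_i$-membership, and we iterate. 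Eventually we either catch a $\partial B_i$ excursion endpoint inside $Y_i$, or reach the last segment $P[\cdot,j_2]$ (entirely inside $Y_i$ by definition of $j_2$), whose endpoint $v_{j_2}\in B'$ closes the argument in the same way.

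I expect the third step to be the delicate part of the proof: rigorously verifying that the chain of excursions never ``stalls'' in a component of $G-X_i$ that meets neither $A\setminus A_i$ nor $B\setminus B_i$. This is where I would use the invariants carried over from the outer proof of \Cref{lem:roc composable}, specifically that $S$ is a brittle min $(A,B)$-weak separator contained in $Y_i$ with $S\cap Z\subseteq T$, which restricts how paths can re-enter $Y_i$ through non-terminal boundary vertices. Once that is handled, the extracted subpath of $P$ in $G_i-S''$ has one endpoint in $A'$ and one in $B'$, contradicting that $S''$ is an $(A',B')$-weak separator of $G_i$, and the claim follows.
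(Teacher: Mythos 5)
You correctly diagnose the subtlety that the paper's one-line justification --- ``every $(A,B)$-path in $G$ contains an $(A',B')$-subpath in $G_i$'' --- leaves unaddressed: the surviving path $P$ may leave $Y_i$ and re-enter, so $P[j_1,j_2]$ need not be a path in $G_i$. Your preliminary steps establishing $v_{j_1}\in A'$ and $v_{j_2}\in B'$ are sound, and they are exactly the arguments one would want for the endpoints.

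The iteration you then sketch, however, has a genuine hole that you partly acknowledge but do not close, and it is not evident that the tools you invoke can close it. In ``otherwise \dots preserving $\partial A_i$-membership'' you implicitly assume that the current exit-point of $P$ from $Y_i$ already lies in $\partial A_i$. That is not what you have established. The visit segment from the entry-point to the exit-point lives inside $Y_i$ but may pass through $X_i$ (and must, for instance, whenever $v_{j_1}\in A_i\subseteq X_i$), so the ``avoids $X_i$'' concatenation that underlies $\partial A_i$-membership does not carry across it. The exit vertex can therefore lie in neither $\partial A_i$ nor $\partial B_i$; the subsequent excursion only places the next entry vertex in the same --- possibly terminal-free --- component of $G-X_i$, which confers no proxy membership, and the iteration genuinely stalls. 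Your appeal to brittleness of $S$ and to $S\cap Z\subseteq T$ is a hope rather than an argument: neither property visibly constrains which component of $G-X_i$ a vertex of $\partial X_i$ belongs to, nor prevents $P$ from exiting into and returning from a component meeting no vertex of $A\setminus A_i$ or $B\setminus B_i$. Without ruling out this stall case (or strengthening the definitions of $\partial A_i,\partial B_i$ so that such exit vertices are forced to be proxies), the proof does not go through. This concern is not one you introduced --- the paper's own terse justification rests on the same unverified step --- but your proposal, as written, does not repair it.
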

\begin{proof}
Since $S$ is an $(A,B)$-separator in $G$, every $(A,B)$-path in $G$ must use $S$. Since $S \subseteq Y_i$, every $(A,B)$-path in $G$ contains an $(A',B')$-subpath in $G_i$. 
Let $S'$ be an $(A',B')$-weak separator in $G_i$.  Suppose $S'$ is not  an $(A,B)$-weak separator in $G$. There must be an $(A,B)$-path $P$ in $G$ that does not use $S'$. Therefore, $P$ contains a subpath from $A'$ to $B'$ in $G_i$ that does not use $S'$, a contradiction. 
\end{proof}

By \Cref{claim:sa'b'gi} and \Cref{claim:a'b'giabg}, $ \mu_{G_i}(A',B')  = \mu_G(A,B) \leq c$. Since $S \cap Z \subseteq T$, $S - T_i = S - T$. Hence, \Cref{claim:sa'b'gi} implies that  $\mu^T_{G_i}(A',B') \leq |S - T_i| = |S - T| = \mu^T_G(A,B)$. Since $S$ is a min $(A',B')$-weak separator in $G_i$ and $C_i$ does not cover any min $(A,B)$-weak separator in $G_i$, $\Pi_i$ either splits or $T_i$-hits some min $(A',B')$-weak separator $S'$ in $G_i$. By \Cref{claim:a'b'giabg} and the fact that $\mu_{G_i}(A',B') = \mu_G(A,B)$, $S'$ is a min $(A,B)$-weak separator in $G$.  We prove that $\Pi'$ either $T$-hits or splits $S'$ in $G$, and we are done. If $\Pi_i$ $T_i$-hits $S'$, then $|(Q_{ij} \cap S') - T_i| \leq \mu_{G_i}^T(A,B) -1  \leq \mu^T_G(A,B) -1$ for all $j$. Therefore, $|((Q_{ij} - Z) \cap S') - T| \leq \mu^T_G(A,B) -1$ for all $j$. Since $S' \subseteq Y_i$, $|((Q_{i'j} - Z) \cap S') - T| \leq \mu^T_G(A,B) -1$ for all $i' \neq i$ and for all $j$. Therefore, $\Pi'$ $T$-hits $S'$. If $\Pi_i$ splits $S'$ in $G_i$, then, for all $j$, $|S' \cap N_{G_i}[Q_{ij}]| \leq |S'|-1$, and thus $|S' \cap N_G[Q_{ij}-Z]| \leq |S'|-1$. Observe that $S' \subseteq Y_i$. Therefore, $\Pi'$ splits $S'$ in $G$.

\end{document}